\documentclass[12pt]{report}
% \usepackage{fontspec}
% \setmainfont{Times}

\usepackage{geometry}
\geometry{top=1in, left=1in, right=1in, bottom=0.5in}

\usepackage{utepcsthesis}
\usepackage{titling} % for \thetitle
\usepackage[x11names]{xcolor} % for \color

% Symbols and algorithms
\usepackage{amssymb,amsmath}
\usepackage{amsthm} % Theorems (just kidding proof)
\usepackage{multirow} % Multirow cells in tables
\usepackage{relsize} % \mathlarger
\usepackage[ruled]{algorithm2e}
\usepackage{multibib}
\usepackage{txfonts}

\usepackage{url}
\urlstyle{same}
\makeatletter
\g@addto@macro{\UrlBreaks}{\UrlOrds}
\makeatother

% Code
\usepackage{listings}

% Graphics
\usepackage{subcaption}
\usepackage{graphicx}
\usepackage{wrapfig}
\usepackage{adjustbox}
%\graphicspath{{../svg/} {../plot/}}

% Citations

% Theorems
\newtheorem{thm}{Theorem}

\newtheorem{cor}{Corollary}

\usepackage{titlesec}
\usepackage{indentfirst}

%\titleformat{\chapter}{\normalfont\bfseries\uppercase\centering}{CHAPTER \thechapter \space}{}{}
\titleformat{\section}{\normalfont\bfseries}{\thesection}{1em}{}
\titleformat{\subsection}{\normalfont\bfseries}{\thesubsection}{1em}{}
\titleformat{\subsubsection}{\normalfont\bfseries}{\thesubsubsection}{1em}{}
\titlespacing*{\section}{0pt}{0pt}{0pt}
\titlespacing*{\subsection}{0pt}{0pt}{0pt}
\titlespacing*{\subsubsection}{0pt}{0pt}{0pt}

\usepackage{setspace}
\usepackage{tocloft}

 % for spacing after each entry

   % Give chapters dots too
    % Change to normal font 

\cftsetindents{chapter}{0em}{5em}

% Call the other parts of the document 'part' and make the table of contents print them correctly

% Change spacing above chapters in toc, remembering that they are really sections
\setlength{\cftbeforesecskip}{0em}

% Adds space after ToC entries to make them appear double-spaced

%Add space after LoF entries to make them appear double-spaced

%Add space after LoT entries to make them appear double-spaced

%\renewcommand{\cftlottitlefont}{\vspace{-64pt}\normalfont\hfill}

\setlength{\cftbeforeloftitleskip}{0in}

\setlength{\cftbeforelottitleskip}{0in}
\renewcommand{\headrulewidth}{0pt}
\begin{document}

\title{
  REAL-TIME CONTROL OVER WIRELESS NETWORKS
}
\author{VENKATA PRASHANT MODEKURTHY}
\date{2020}

%\CommitteeChair{Dr. Abusayeed Saifullah, Ph.D. Advisor \hspace{0.305in} Date}
%\CommitteeMembers{Dr. Sanjay Madria,  Co-Advisor \hspace{0.605in} Date}
%				{Dr. Nathan Fisher \hspace*{1.428in} Date}
%                                 {Dr. Marco Brocanelli \hspace*{1.216in} Date}
%                                 {Dr. Zhishan Guo\hspace*{1.529in} Date}

\CommitteeChair{Advisor \hspace{2in} Date}
\CommitteeMembers{}{}{}{}

\maketitlepage

%\newpage
%\hspace{0pt}
%\vfill
%\begin{center}
%\copyright \textbf{ COPYRIGHT BY}
%
%\textbf{Venkata Prashant Modekurthy}
%
%\textbf{2020}
%
%\textbf{All Rights Reserved}
%\vfill
%\hspace{0pt}
%\end{center}

\newpage

\pagestyle{fancy} \chead{} \rhead{} \lhead{}
\pagenumbering{roman} \lfoot{}\cfoot{\thepage}\rfoot{}
\setcounter{page}{2}

\addcontentsline{toc}{chapter}{Dedication}
\chapter*{Dedication}
\vspace{30pt}

\begin{center}
To Venkata Sita Rama Murthy Modekurthy, Vijaya Kumari Modekurthy, Venkata Pavan Kumar Modekurthy, and Kalianne Kinsey Modekurthy.
\end{center}

\newpage

\addcontentsline{toc}{chapter}{Acknowledgements}
\chapter*{ACKNOWLEDGEMENTS}

\indent I would like to sincerely thank my advisor, Dr. Abusayeed Saifullah, for his guidance, cooperation, ideas and most-importantly time. I would not be graduating if not for his advice over the last 5 years. I extend my thanks to my co-advisor Dr. Sanjay Madria for his guidance and motivation. 

I would like to thank Dr. Nathan Fisher, Dr. Zhishan Guo, and Dr. Marco Brocanelli with whom I had wonderful opportunities for collaboration. These collaborations have made my Ph.D. experience productive and inspiring. I would also like to thank my lab mates from the CRI lab at Wayne State University and then Advanced Networking Lab at Missouri S\&T. My hearty gratitude for Dr. Corey Tessler and Dr. Amartya Sen for both the personal and professional time which lead to stimulating discussions and collaborations. I am greatly indebted to the the faculty members and staff at both Wayne State for their helping hands in every step of my Ph.D. and notably for making the transfer from Missouri S\&T effortless. Most importantly, I would like to thank my parents and wife for their perpetual support and love. 

I acknowledge the financial support from Dr. Abusayeed Saifullah and Dr. Sanjay Madria in the form of research assistantship through NSF grants, and from Department of Computer Science at Wayne State in the form of research assistantship from internal grants and teaching assistantship. Last but not least, I thank all committee members for their time, suggestions, and service. 
\newpage

\begin{singlespace}
\tableofcontents{\normalsize}

\newpage
\addcontentsline{toc}{chapter}{List of Tables}
\listoftables

\newpage
 \addcontentsline{toc}{chapter}{List of Figures}
\listoffigures

\end{singlespace}

\include{null}          % this to get Roman numeral page numbering correct
\pagestyle{fancy} 
\pagenumbering{arabic}  % Arabic page numbers
\chead{\thepage} \rhead{} \lhead{}
\lfoot{}\cfoot{}\rfoot{}
\setcounter{page}{1}    % Reset page numbering to 1

\renewcommand{\headrulewidth}{0pt}
%%\renewcommand{\footrulewidth}{0.4pt}
%%\chead{\fancyplain{}{\thepage}}

%\addboldline{toc}{Chapter} %Adds the word ``Chapter'' in the TOC.

% \pagestyle{fancy}
% \pagenumbering{arabic}  % Arabic page numbers
% \chead{\thepage} % \rhead{} \lhead{}
%\lfoot{}\cfoot{}\rfoot{}

\fancypagestyle{plain}{
  \fancyhf{}
  \fancyhead[C]{\thepage}
}

\chapter{Introduction}\label{ch:intro}
Industrial Internet-of-Things (IoT) evolved from industrial wireless standards like Wireless-HART \cite{WirelessHART2007_standard} and ISA100 \cite{isa100_standard} that facilitate low-power, flexible, and cost-efficient communication for a broad range of applications like process control \cite{songRTAS11}, smart manufacturing \cite{RTSS10paper}, smart grid \cite{gungor2011smart}, and data center power management \cite{capnet, capnetRTSS}. Industrial IoT enables closed-loop communication over wireless networks, where a sensor measures the state of a plant and delivers it to a controller. The controller generates control commands based on the measured state and then sends the control commands to an actuator through a wireless network. Industrial IoT requires a low end-to-end latency and reliable communication between devices to avoid catastrophic outcomes such as cause plant shutdown, accidents causing deaths, and economic/environmental impacts. To realize a predictable and reliable communication in a highly unreliable wireless environment, wireless standards use a centralized wireless stack design. In a centralized wireless stack design, a central manager generates graph routes and a communication schedule for a multi-channel time division multiple access communication (TDMA) based medium access control (MAC).

Although current wireless standards meet the reliability requirements, they are less suitable for large-scale deployments. Today, industrial IoT and cyber-physical systems are emerging in large-scale applications. Specifically, agricultural fields \cite{agri1, agri2, oregonFarming}   and oil/gas fields \cite{oilgas}  may extend over hundreds of square kms.  For example,  the East Texas Oil-field  extends over an area of $74\times 8$km$^2$   requiring tens of thousands of sensors for management \cite{texasof}. Emerson  is targeting to deploy 10,000 nodes for managing an oil-field in Texas \cite{WH10000, WirelessHART_Emerson}. To cover such large areas, we need highly scalable and energy-efficient protocols. 

Route and schedule dissemination by a central manager for a large network can be highly energy consuming, less scalable, and does not support frequent changes to networks or workloads. Distributing the scheduling and routing decision to the nodes or a set of nodes is known to be highly energy-efficient and scalable. However, there is less work on a distributed or decentralized wireless network stack design. Furthermore, designing a decentralized or a distributed wireless stack that ensures scalable, real-time (low latency), and reliable communication is a significant challenge. Moreover, a distributed wireless network induces an intricate problem involving the control application requiring a co-design of wireless network and control performance. A co-design of wireless network stack and control application has seen little progress and is highly challenging. 

In this dissertation, the above-mentioned challenges are addressed through the following contributions. 
\begin{itemize}
	\item A scalable and distributed routing algorithm for industrial IoT which generates graph routes with a high degree of redundancy while consuming less energy than the existing approach.
	\item A local and online scheduling algorithm that is scalable, energy-efficient, and supports network/workload dynamics while ensuring reliability and real-time performance.
	\item  An approach to minimize latency for in-band integration of multiple low-power wide area networks.
	\item A fast and efficient test of schedulability that determines if an application meets the real-time performance requirement for given network topology.
	\item A distributed scheduling and control co-design that balances the control performance requirement and real-time performance for industrial IoT.
\end{itemize}

This dissertation is organized as follows. Chapter \ref{chapter:distributedgraphrouting} describes the distributed routing for industrial IoT. Chapter \ref{ch:distributedhart} describes the local and online scheduling for industrial IoT. Chapter \ref{chapter:low_latency_integration} presents a low latency and low-power in-band integration of multiple networks. Chapter \ref{ch:utilization} describes the test for schedulability. Chapter \ref{ch:period} describes the distributed scheduling control co-design for industrial IoT. Chapter \ref{ch:conclusion} concludes this dissertation.

\chapter{Distributed Graph Routing for WirelessHART Networks}
\label{chapter:distributedgraphrouting}
Wireless Sensor and Actuator Network (WSAN) heralds an efficient  communication infrastructure for industrial process monitoring and control. Stability of process control requires a high degree of reliability on communication between sensors and actuators which is quite challenging in industrial environments. To make reliable and real-time communication in highly unreliable environments, industrial wireless standards such as WirelessHART adopt graph routing. In graph routing, each packet is scheduled on multiple time slots using multiple channels, on multiple links along multiple paths on a routing graph between a source and a destination. While high redundancy is crucial to reliable communication, determining and maintaining graph routing is challenging in terms of execution time and energy consumption for resource constrained WSAN. Existing graph routing algorithms use centralized approach, do not scale well in terms of these metrics, and are less suitable under network dynamics. To address these limitations, we propose the first distributed graph routing protocol for WirelessHART networks. Our distributed protocol is based on the Bellman-Ford Algorithm, and generates all routing graphs together using a single algorithm.
We prove that our proposed graph routing can include a path between a source and a destination with cost (in terms of hop-count) at most 3 times the optimal cost. We implemented our proposed routing algorithm on TinyOS and evaluated through experiments on TelosB motes and simulations using TOSSIM. The results show that it is scalable and consumes at least $82.4\%$ less energy and needs at least $66.1\%$ less time at the cost of 1kB of extra memory compared to the  state-of-the-art centralized approach for generating routing graphs.

\section{Introduction}

Wireless Sensor and Actuator Network (WSAN) provides an efficient communication infrastructure for a broad range of industrial control applications \cite{chen2011survey}. Reliability of communication in a WSAN has a high impact on the stability of critical control applications like process control \cite{songRTAS11}, smart manufacturing \cite{RTSS10paper}, smart grid \cite{gungor2011smart}, and data center power control \cite{RTSS15paper}. Packet losses in such applications may lead to highly unstable systems. Feedback control loops in a WSAN, therefore, impose stringent dependability requirements on communication between sensors and actuators. However, industrial environments make it difficult to meet these requirements because of frequent transmission failures due to channel noise, limited bandwidth, physical obstacles, multi-path fading, and interference from coexisting wireless devices \cite{RTSS10paper}.

Of late, WSAN has received a new impulse with the advent of industrial wireless standards such as WirelessHART \cite{WirelessHART2007_standard}. WirelessHART was specifically designed to operate in highly unreliable environments. With approximately 30 million HART devices installed across the world, it is predominantly being used worldwide for  process management \cite{reliability_wh}. To make reliable and real-time communication in highly unreliable environments, a key technique adopted by WirelessHART is reliable {\slshape graph routing} \cite{WirelessHART2007_standard}. In graph routing, a packet is scheduled on redundant time slots, on redundant links on multiple paths leading to its destination, and on multiple channels based on TDMA (Time Division Multiple Access) for enhanced end-to-end reliability. Packets from all sensor nodes are routed to the controller through an {\slshape uplink graph}. For each actuator, there is a {\slshape downlink graph} from the controller through which control messages are delivered to it. While graph routing with such a high degree of redundancy is crucial to reliable and real-time communication, determining and maintaining such routes is challenging in terms of run time and energy consumption at resource constrained WSAN devices. In this chapter, we address these challenges and study highly efficient and scalable graph routing for WirelessHART networks.  

While there are numerous routing algorithms for wireless networks \cite{abbasi2007survey}, graph routing for WSAN has been studied recently \cite{songRTAS11, wu2016maximizing}. These existing algorithms use centralized heuristics, which do not scale well with the number of nodes. Emerson \cite{WirelessHART_Emerson} and MOL \cite{WirelessHART_MOL} are targeting to deploy WSAN networks that span across thousands of nodes making scalability in WirelessHART networks a notable problem. Creating all routing graphs centrally and disseminating to all nodes is less suitable for large industrial WSANs, particularly in the presence of network dynamics.  Centralized heuristics, in principle, rely on a central manager to compute and disseminate the routes to all nodes in the network. As link quality changes in the wireless network due to network dynamics, the central manager has to re-compute and re-disseminate the routes to all the nodes. Frequent dissemination of routes is highly energy and bandwidth-consuming and can hinder critical control operations. 

To address the problems mentioned above, we propose the first fully distributed protocol for graph routing for WirelessHART networks in this chapter. A distributed algorithm obviates the need for a central manager to create and disseminate routes. While a distributed protocol is highly scalable and responsive to network dynamics, developing such a protocol is challenging as it must guarantee convergence and achieve near-optimal (in terms of routing cost) performance based on local computations only. 

We design the proposed distributed graph routing protocol by adapting the Bellman-Ford algorithm \cite{topdown} for WSANs. In the Bellman-Ford algorithm, each node maintains a routing table that contains the cost of forwarding a packet to a destination via all neighbors. That is, if a link to one of its neighbors fails, it can use alternate paths to make a transmission, which is the essence of graph routing. We leverage this feature of the Bellman-Ford algorithm to develop our distributed graph routing protocol.  However, as the routing table at a node contains costs to all other nodes, message size and energy required to converge pose a key challenge for low power WSAN. To handle this, we use node clustering where the routing table at a node needs to contain link cost information of only the nodes in its cluster and cluster heads of the other clusters and generate routes only to the required destinations. Another key feature of our protocol is that it can generate all routing graphs concurrently using a single algorithm, unlike existing approaches that use a separate algorithm for each.  When the nodes in a cluster are single-hop away from its cluster head, the least cost path on each routing graph yielded by our approach is guaranteed to have a cost (in terms of hop-count) at most 3 times the optimal cost. 

We implemented our proposed routing algorithm on TinyOS and evaluated through experiments on TelosB motes and TOSSIM simulations. We determine the effectiveness of our distributed routing in terms of scalability, energy, and under network dynamics. The simulation results demonstrate that our protocol consumes at least 82.4\% less energy and needs at least 66.1\% less time compared to the state-of-the-art centralized approach \cite{songRTAS11, wu2016maximizing} for generating routing graphs. 

In this chapter, Section \ref{ICDCN_routing_sec:related_work} presents related work. Section \ref{ICDCN_routing_sec:network_model} presents our network model. Section \ref{ICDCN_routing_sec:graphRoutingOverview} gives an overview of graph routing. Section \ref{ICDCN_routing_sec:routing_dist} describes our distributed routing protocol. Section \ref{ICDCN_routin_sec:bounds} analyzes approximation ratio of routing cost.  Section \ref{ICDCN_routing_sec:eval} presents our implementation and evaluation. Section \ref{ICDCN_routing_sec:Conclusion} concludes the chapter.
\section{Related Work}
\label{ICDCN_routing_sec:related_work}
Routing in wireless ad hoc and sensor networks has been studied extensively on issues like energy efficiency \cite{pantazis2013energy}, time-dependency \cite{hao2012routing}, hierarchy \cite{karaboga2012cluster, tyagi2013systematic} and reliability through multi-path routing \cite{marina2002ad, cai2015bee}. The multi-path routing protocols like AOMDV \cite{marina2002ad} typically provide an alternate path from the source to the destination, and do not provide an alternate path from every node in the route. More recently, Bee-Sensor-C \cite{cai2015bee} integrates clustering and agents to obtain multi-path routes. GDMR \cite{zhang2015load} presents a heuristic multi-path routing for optimizing network performance by load balancing traffic in IP networks. Two distributed multi-path routing algorithms were developed in \cite{geng2015algebra} to let nodes efficiently find next-hops for each destination and guarantee a few metrics of QoS routing. 

The above multi-path routing protocols provide a few either node-disjoint or link-disjoint paths between a source and a destination, thereby providing a certain degree of end-to-end reliability. In contrast, graph routing is particularly designed to achieve a high degree of reliability for real-time process monitoring and control applications in highly unreliable industrial environments. The WirelessHART standard mandates each node in a routing graph to have at least two neighbors that can relay a packet towards the destination.  Thus, a routing graph for a control loop can consist of an exponential (in the number of nodes) number of routes between its source and destination.  For example, assuming this minimum requirement of two neighbors for every node in a routing graph with $l$ nodes, there can be $2^{l}$ paths between a source and a destination. Graph routing provides a high degree of reliability, which cannot be generated by existing multi-path routing algorithms.  Note that the paths of the graph route may overlap each other. The path taken by a packet depends on the current link conditions.   

Real-time scheduling for process control based on graph routing has received considerable interest over the last five years \cite{RTSS10paper, zhang2013highly, RTSS15paper, ECRTSpaper, RoamingHART, songRTAS11}. Specifically, real-time scheduling \cite{RTSS10paper, zhang2013highly,duquennoy2015orchestra}, schedulability and end-to-end communication delay analysis \cite{ RTSS15paper}, scheduling-control co-design \cite{RTAS2012}, priority assignment \cite{ECRTSpaper}, and localization \cite{RoamingHART}, has been studied recently for WirelessHART networks based on graph routing.  However, all these works assume that routing in the network layer is given. While there are enormous routing algorithms for wireless networks \cite{adibi2006multipath}, graph routing for WirelessHART has been studied only recently \cite{songRTAS11, wu2016maximizing} and these algorithms use {\slshape centralized} heuristics to generate the uplink graph and each downlink graph separately at the expense of scalability. Furthermore, frequent dissemination of the routes to all nodes is highly energy consuming, particularly in the presence of network dynamics. In contrast, we propose a fully distributed protocol for graph routing for WirelessHART networks, thus obviating the need for centrally creating and disseminating the routes. 
\section{Network Model}
\label{ICDCN_routing_sec:network_model}
WirelessHART, designed on the top of IEEE 802.15.4, operates in the 2.4GHz band. It forms a multi-hop mesh network consisting of a Gateway, field devices, and access points. The {\slshape network manager} and controller are at the {\slshape Gateway}. {\slshape Field devices} are wirelessly networked sensors and actuators which are equipped with a {\slshape half-duplex} omnidirectional radio transceiver that cannot both transmit and receive at the same time and can receive from at most one sender at a time. {\slshape Access points} provide redundant paths between the wireless network and the Gateway. The network involves feedback control loops between sensors and actuators through the controllers.

\begin{figure}[t]
    \centering
    \begin{subfigure}[b]{0.3\textwidth}
        \includegraphics[width=\textwidth]{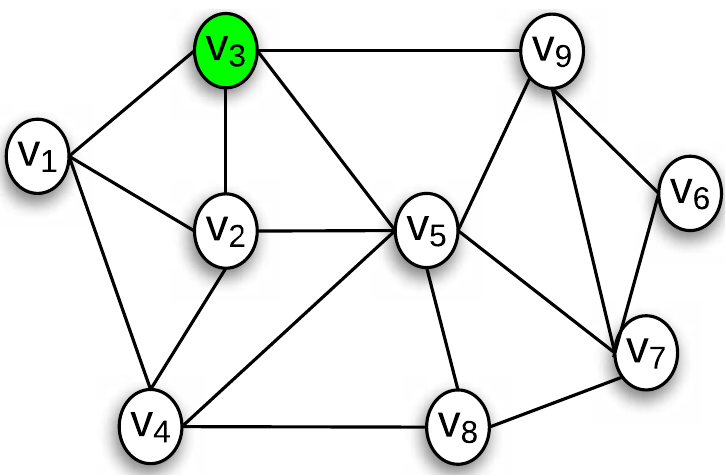}
        \caption{Network Topology}
        \label{ICDCN_routing_fig:originalGraph}
    \end{subfigure}
    \quad
    \begin{subfigure}[b]{0.3\textwidth}
        \includegraphics[width=\textwidth]{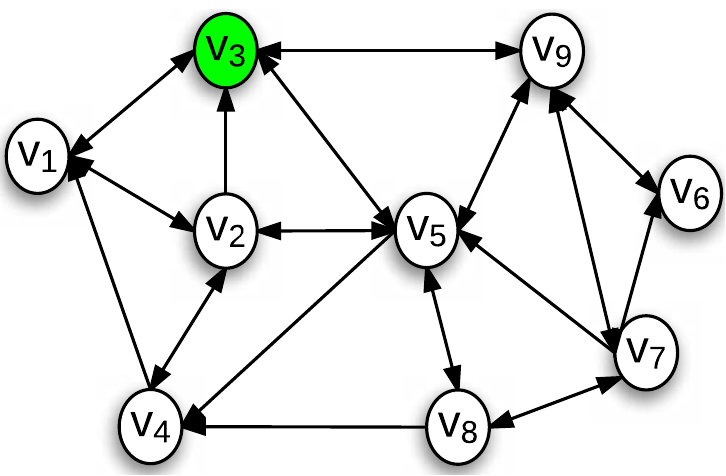}
        \caption{Uplink Graph}
        \label{ICDCN_routing_fig:uplink}
    \end{subfigure}
     \quad
    \begin{subfigure}[b]{0.26\textwidth}
        \includegraphics[width=\textwidth]{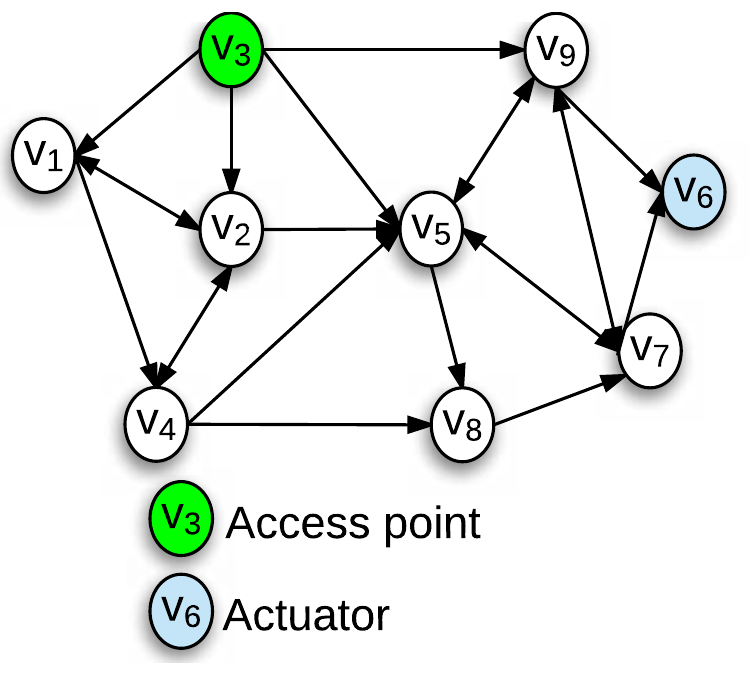}
        \caption{Downlink graph for $v_6$}
        \label{ICDCN_routing_fig:downlink}
    \end{subfigure}
   \caption{Example of Graph Routing}
   \label{ICDCN_routing_fig:example}
\end{figure}

Transmissions are scheduled based on a multi-channel TDMA (Time Division Multiple Access) protocol. The network uses the channels defined in IEEE $802.15.4$. Time is globally synchronized. Each time slot is of fixed length. Packet transmission and its acknowledgment transpire in one time slot. A transmission between a sender and its receiver can take place on a dedicated or a shared time slot. In a {\slshape dedicated slot}, only one sender is allowed to transmit to a receiver. In a { \slshape shared slot}, multiple senders can attempt to send to a common receiver. For enhanced reliability, the network adopts {\slshape graph routing} \cite{WirelessHART2007_standard}. Graph routing enables scheduling a packet on multiple links using multiple channels on multiple time slots through multiple paths to deliver a packet to a destination, thereby ensuring high reliability in unreliable environments.  

The network is represented as a graph $G = (V,E)$, where $V$ is the set of nodes and $E$ is the set of edges. We consider a set of destination nodes $D \subset V$ and a node $v \in D$ represents a process controller or an actuator node. We use $|V|, |E|$, and $|D|$ to denote the number of nodes, the number of edges, and the number of destination nodes in the network, respectively. We represent a control loop $C_a$ as $C_i = \{T_a, D_a, P_a\}$ where $T_a$, $D_a$ and $P_a$ represent the time period, deadline and priority of control loop $a$. In addition we represent $\tau$  as the length of superframe which can be obtained as $\tau = LCM(T_1, T_2, \cdots, T_n)$.

\section{An Overview of Graph Routing}
\label{ICDCN_routing_sec:graphRoutingOverview}

\begin{figure}[th]
\centering
\includegraphics[width=0.3\textwidth]{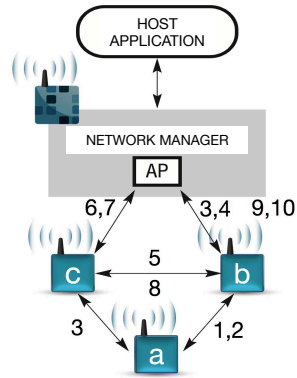}
\caption{Example of Scheduling in WirelessHART}
\label{ICDCN_routing_fig:scheduling}%
\end{figure} 

A {\em routing graph} in a WirelessHART network is a directed list of loop-free paths between a source and a destination. It is a directed acyclic graph from a source to a destination, and each node in the graph route except the destination has a minimum of 2 neighboring nodes. Redundancy in paths at each node in the routing graph circumvents temporary link or node failures allowing retransmission through redundant links/paths.  

Graph routing defines two types of routing graphs; uplink graphs and downlink graphs. Packets from all sensor nodes are routed to the Gateway through the {\em uplink graph}. For every actuator, there is a {\em downlink graph} from the Gateway through which the Gateway delivers control messages. In each routing graph, a node can have multiple neighbors to which it can transmit and retransmit a packet multiple times to be delivered to the corresponding destination. For a network shown in Fig. \ref{ICDCN_routing_fig:originalGraph}, an example of uplink graph routing and downlink graph for actuator $v_6$ are presented in Fig. \ref{ICDCN_routing_fig:uplink} and Fig. \ref{ICDCN_routing_fig:downlink} respectively.

In a routing graph, we define a primary path as the least cost from source to a destination. We consider all other paths in the routing graph as back-up paths. Typically, a sender transmits a packet to the destination through the primary path. Upon failure of a transmission on a primary path, it is delivered through a back-up path of the routing graph. In the uplink graph shown in Fig. \ref{ICDCN_routing_fig:uplink}, for sensor node $v_8$ if the primary path on uplink graph is computed to be $v_8 \rightarrow v_5 \rightarrow v_3$, then all other paths, like $v_8 \rightarrow v_5 \rightarrow v_9 \rightarrow v_3$, are back-up paths.

For a control loop scheduling in the uplink graph, the scheduler allocates two dedicated slots for each link on the primary path. A node on the primary path uses the second dedicated time slot to handle transmission failures on the first slot. Then, to handle transmission failures of both slots, the scheduler allocates a third shared slot on a separate path to handle another retry. It then schedules the links on the downlink graph similarly. Fig. \ref{ICDCN_routing_fig:scheduling} shows an example of transmission scheduling from node $a$ to an access point (AP) for one control loop in a network of 4 nodes. The labels on the link represent the time slot assigned by the scheduler. On the primary path $a\rightarrow b \rightarrow AP$, the scheduler allocates 2 dedicated time slots for each link, i.e.,  time slots 1 and 2 for link $a\rightarrow b$  and time slots 3 and 4 for link $b \rightarrow AP$. The scheduler allocates a shared slot 3 for the link $a\rightarrow c$ (of a back-up path) to handle transmission failures on the link $a\rightarrow b$. 
\section{Distributed Graph Routing}
\label{ICDCN_routing_sec:routing_dist}

In this section, we present our distributed graph routing protocol for WirelessHART networks. Our protocol is highly scalable, and adaptable to network dynamics when compared to existing solutions that rely on a central manager to create and disseminate routes to all nodes in the network. Furthermore, our approach is energy efficient for resource-constrained WSAN nodes. Our algorithm can converge quickly since our approach requires a lesser number of message communications than existing approaches.

Our proposed distributed graph routing is an adaptation of {\bf\slshape Distributed Bellman-Ford} algorithm \cite{topdown,perkins1994highly} for WirelessHART networks. Distributed Bellman-Ford algorithm \cite{topdown,perkins1994highly} has been used widely for distributed routing in the Internet. In the Bellman-Ford algorithm, a node maintains a routing table that contains the cost of routing a packet to all nodes through its neighbors.  A node broadcasts its routing table whenever there is a change, and neighboring nodes update their routing tables accordingly. As a result, all nodes maintain a least-cost path to all nodes, in the face of constant changes to link qualities.

The Bellman-Ford algorithm has two advantages for use in WSAN. Firstly, it can efficiently maintain and update routing tables. This feature is helpful in the face of long term link/node failures. Secondly, a node knows all possible neighbors that can deliver a packet to a destination.  That is, it keeps track of multiple paths to a destination, which is the essence of graph routing.  This feature is helpful in the face of short term link/node failures, where a node can use the available alternate paths to transmit a packet. Due to these advantages, our proposed approach depends on the Bellman-Ford algorithm to generate graph routes.  In the proposed approach, we consider the least-cost path obtained from the Bellman-Ford algorithm as the primary path of the graph route and all other paths as back-up paths. 

Since the routing table at a node contains costs to all other nodes, the energy required to converge to a solution and message size for communicating the routing table pose a {\bf key challenge} for low power and low bandwidth WSAN nodes. Furthermore, the Bellman-Ford algorithm does not scale well with the number of nodes. It requires messages in the order of $O(|V| \times |E|)$ to converge to an optimal solution. To address these challenges, we propose a novel distributed graph routing protocol based on the Bellman-Ford algorithm for WirelessHART networks, as described in the following sub-section. ributed graph routing protocol based on the Bellman-Ford algorithm for WirelessHART networks as described in the following sub-section.

\subsection{Protocol Description} \label{ICDCN_routing_sec:proposedGraphRouting}
To address the challenges on energy overhead for WirelessHART networks, we propose to adopt the distributed Bellman-Ford algorithm through node clustering. That is, we propose to divide the nodes into clusters $C_1, C_2, \cdots, C_q$. $CH_i$ denotes the cluster head of cluster $C_i$. There exist many algorithms for distributed clustering in wireless sensor networks such as \cite{abbasi2007survey, leach2002}, and our proposed approach works with any clustering algorithm. For the sake of simplicity, we assume that the network is clustered during network initialization.
There exist many algorithms for distributed clustering in wireless sensor networks such as \cite{abbasi2007survey, leach2002} and our proposed approach works with any clustering algorithm. For the sake of simplicity, we assume that the network is clustered during network initialization. 

After clustering the network, we use the Bellman-Ford algorithm to generate routes to nodes within the same cluster and all cluster heads at each node. That is, each node in the network will have the least-cost path to all nodes within its cluster, $v \in C_i$, and to all other nodes through their respective cluster heads, $CH_j$ $where$ $j \neq i$.  Thus, creating $k$ clusters reduces the energy, memory, and convergence time requirement by a factor of $O(k)$ when compared to the Bellman-Ford algorithm. After the convergence of the Bellman-Ford algorithm, cluster-heads broadcast the information of nodes within its cluster to all nodes.  For a packet with its destination in the same cluster, a node forwards the packet on the routing graph to the destination (since it is aware of routes to all nodes within the cluster). For a packet with a destination ($v_j$) in a different cluster ($C_i$), a node forwards the packet on the routing graph to $CH_i$ (cluster head of the destination node's cluster).  Since nodes in a cluster have routes to all other nodes within the same cluster, upon entering the cluster $C_i$, the packet will be forwarded to the destination node $v_j$, which may or may not pass through the cluster head $CH_i$. 

To further reduce the energy usage, nodes use the Bellman-Ford algorithm to maintain routes and cluster head information of the destination nodes (i.e., controller and actuator nodes). Under the proposed approach, the number of message communications in the network and the energy consumption of the nodes to maintain the routes is in the order of $O(|D| \times |E|)$, where $|D| < |V|$. Routes and cluster head information to nodes other than the destination nodes are not generated as intermediate nodes can not process the information in the packet.  In section  \ref{ICDCN_routin_sec:convergence}, we analyze the convergence and optimality property while considering just destination nodes. In the proposed method, we adopt DSDV \cite{perkins1994highly} to avoid generating loops in the system. Our approach can support any other technique like Poison Reverse \cite{topdown}, but we choose DSDV for the sake of simplicity. 

\begin{algorithm}[!th]
\footnotesize
	\caption{Distributed Graph Routing Algorithm} \label{ICDCN_routin_alg:Routing}
    \SetAlgoLined 
    \SetKwInOut{Input}{Input}
    \SetKwInOut{Output}{Output}
    \DontPrintSemicolon
    \Input{destinationNode, nodeId}
    \Output{routeTable, clusterHeadInfo}
    
    Initially,
    \textit{routeTable} $\leftarrow$ NULL \; 
    \textit{clusterHeadInfo} $\leftarrow$ NULL\;
    \;
    
   clusterHead = generateClusters()\;
   
    \If{destinationNode}{
    	send routeTable\;
    }    
    \If{nodeId = clusterHead} {
        send routeTable\;
            sleep until algorithm converges\;
            send clusterHeadInfo\;
    }
    
    \If{received message} {
    \For{row in routeTable} {
    	\If{row is for destination node within same cluster \textbf{or} row is for cluster head}
        {
        	update routeTable \;
        	send  routeTable\;
        }
   }
    }

    \If{received message \textbf{and} contains clusterHeadInfo}{
    	update clusterHeadInfo\;
       send clusterHeadInfo\;
   } 
  
\end{algorithm}

The pseudo-code of our proposed distributed graph routing algorithm is presented in Algorithm \ref{ICDCN_routin_alg:Routing}. Each node executes this algorithm to construct routing graphs and cluster head information. The algorithm takes a boolean value if the node is destination node or not ($destinationNode$) and ID of a node ($nodeId$) as input and generates routing table and cluster head information as output. Given the inputs, nodes first run a clustering algorithm (Leach in our case) to generate and identify cluster heads. After clustering, destination nodes and cluster heads initiate the Bellman-Ford algorithm by sending routing tables. If the received message at a node contains a route to a cluster head, then the node updates the cost in its routing table with the received new cost value. Upon updating the routing table with new cost value, each node re-broadcasts the routing table to all other nodes. If a node receives a message from a destination node, it updates the cost only for destination nodes within its cluster. If the destination is from a different cluster, then the routing table update message is ignored and is not re-broadcasted. Once the Bellman-Ford algorithm converges, cluster heads broadcast IDs of destination nodes within its cluster to all nodes in the network, thereby allowing all nodes to generate both graph routing and cluster head information.

{\bf  Route Maintenance.}  Nodes in the network can use management superframes and periodic maintenance communication (that WirelessHART devices already do) to periodically update routes with recent link quality information. Since the graph route provides redundant paths for packet delivery, routes may not need frequent reconstructions, and hence reconstruction during management superframes can be sufficient. While distributed graph routing is most suitable for distributed schedulers, our proposed distributed graph routing is also applicable for centralized schedulers. For centralized schedulers, the central manager can collect routes through management superframes and periodic maintenance communication.

\subsection{Example}\label{ICDCN_routing_sec:constructingGraphRoute}
Fig. \ref{ICDCN_routing_fig:WHRouting} illustrates the proposed distributed graph routing for a simple topology with two clusters $C_1$ (with $CH_1=v_2$) and $C_2$ (with $CH_2=v_7$). It shows the final routing table at node $v_1$,$v_4$ and $v_8$ created through our distributed approach. Each row in the table gives cost to a destination node or cluster head through each of the neighboring nodes. Route to destination node $v_3$ is only present at nodes $v_1$ and $v_4$ as $v_3$ is within their clusters. However, nodes $v_5$ and $v_8$ do not contain entries to node $v_3$ as $v_3$ is not in the same cluster as nodes $v_5$ and $v_8$. 

\begin{figure}[ht]
\centering
\includegraphics[width = 0.65\textwidth]{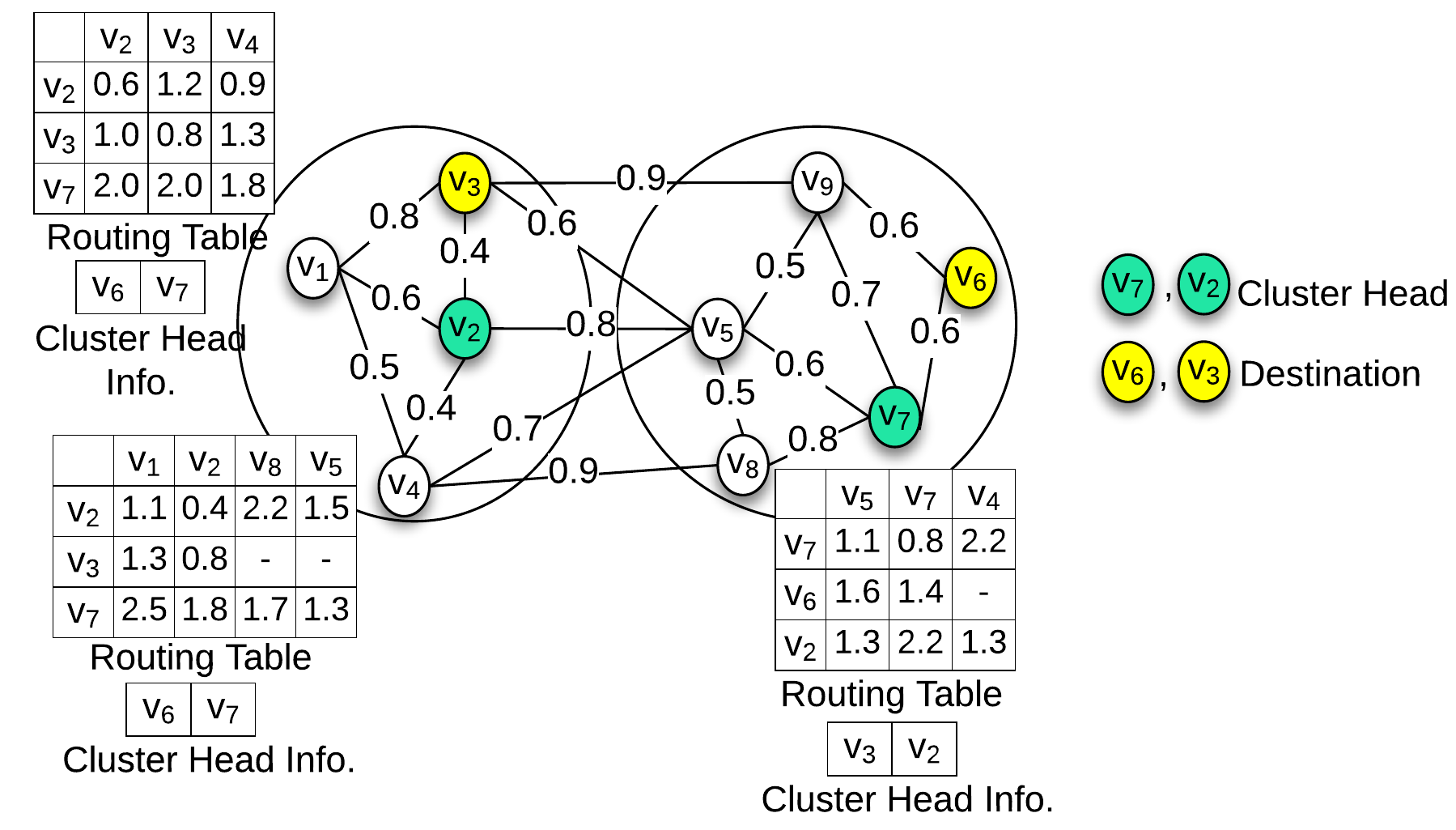}%
\caption{Example of Routing Tables Used in Distributed Graph Routing}
\label{ICDCN_routing_fig:WHRouting}%
\end{figure}

\begin{figure}[ht]
\centering
\includegraphics[width=0.55\textwidth]{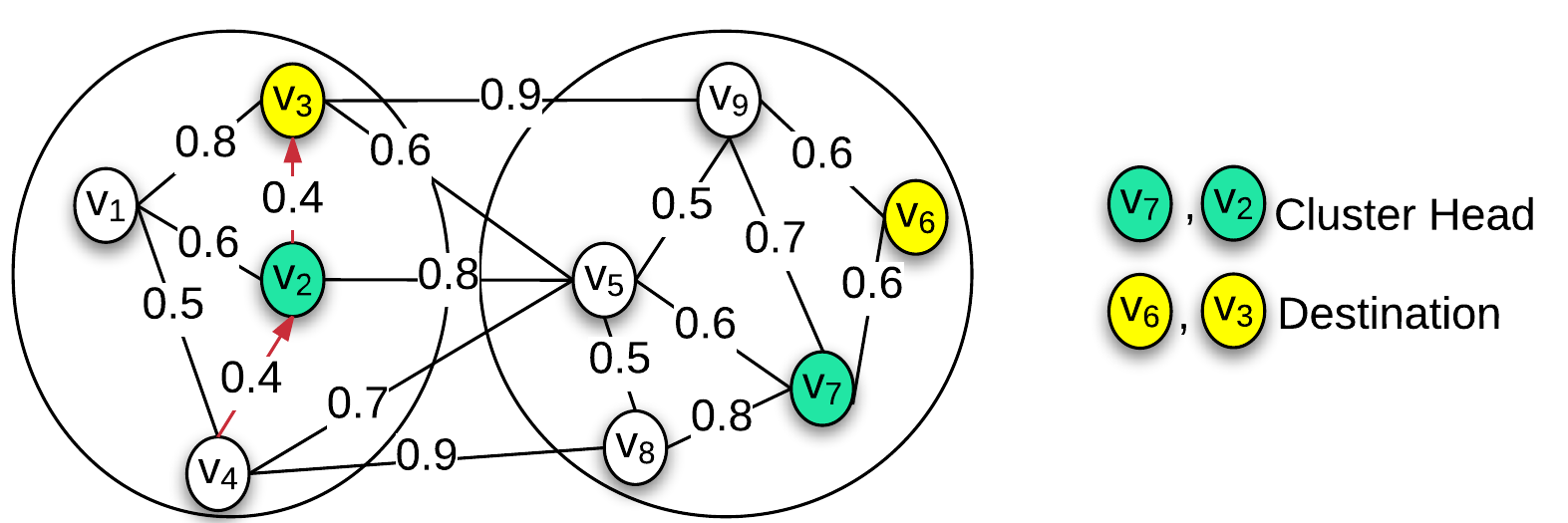}
\vspace{-0.1in}
\caption{Example of Intra-Cluster Routing Under Distributed Graph Routing}
%\vspace{-0.1in}
\label{ICDCN_routing_fig:WHRoutingExample1}
\end{figure}

\begin{figure}[ht]
\centering
\includegraphics[width=0.55\textwidth]{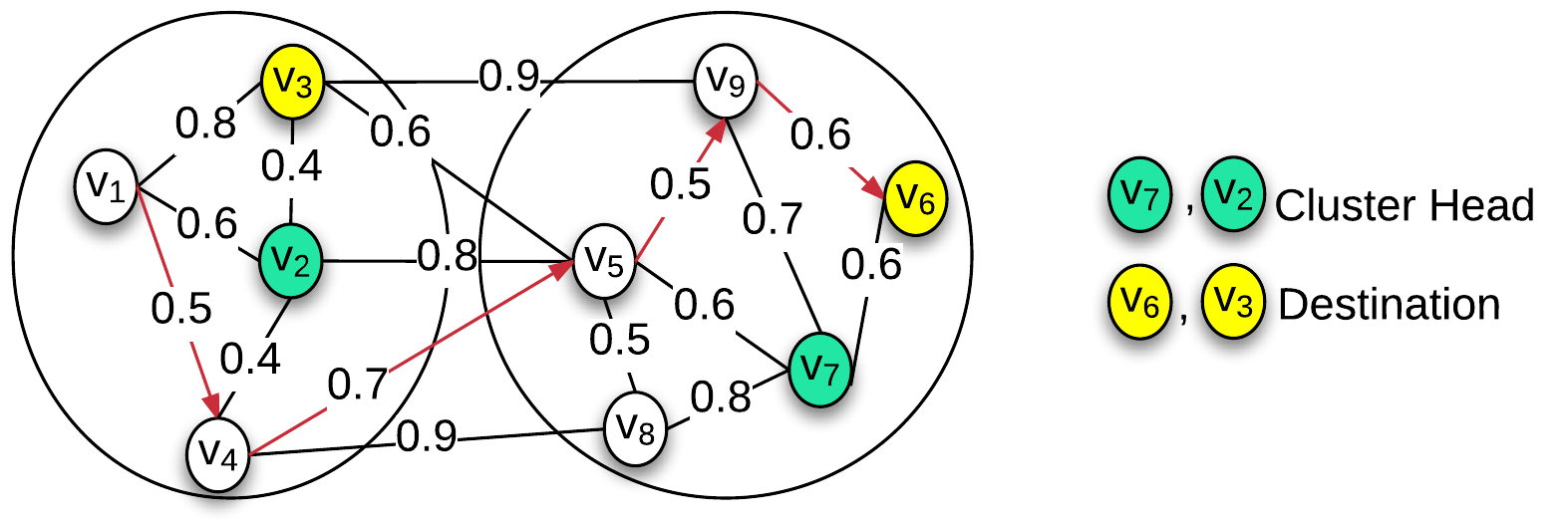}
\caption{Example of Inter-Cluster Routing Under Distributed Graph Routing}
\label{ICDCN_routing_fig:WHRoutingExample2}
\end{figure}

Consider $v_3$, as shown in Fig. \ref{ICDCN_routing_fig:WHRouting}, as the destination for packet at $v_4$. From its routing table, $v_4$ is aware of an optimal route through $v_2$ hence $v_4$ forwards the packet to $v_2$. If the link to $v_2$ or node $v_2$ fails, node $v_4$ can also forward the packet to $v_1$. Similarly, both nodes $v_1$ and $v_2$ can forward the packet through an optimal path to $v_3$ and in case of failure, they can transmit on an alternate path. In this example, least cost path (or primary) from source to destination is $v_4 \rightarrow v_2 \rightarrow v_3$ while other listed paths are back-up paths.
Consider $v_6$ as the destination for a packet originating at $v_1$. Since $v_6$ and $v_1$ are in different clusters, node $v_1$ does not have a route to $v_6$ in its routing table. However, from cluster head information stored at node $v_1$, it can identify the cluster head for node $v_6$ is $v_7$. From its routing table, node $v_1$ is aware of a path to $v_7$ and forwards the packet to $v_4$ and in case of failure of transmission it can forward on an alternate path through $v_2$ or $v_3$. Similarly node $v_4$ forwards packet on an optimal path to node $v_5$. Node $v_5$ however is aware of an optimal path to $v_6$ through node $v_9$ and chooses to forward the packet to node $v_9$. Node $v_9$ then forwards the packet on an optimal path to node $v_6$. In this example, the primary path of routing graph from $v_1$ to $v_6$ is $v_1 \rightarrow v_4 \rightarrow v_5 \rightarrow v_9 \rightarrow v_6$.

\section{Theoretical Analysis}
This section presents the convergence and sub-optimality of the proposed approach and the performance analysis of the routes generated by the proposed algorithm when compared to an optimal solution generated by a centralized scheduler. 
Notations used in this section are defined in Table \ref{ICDCN_routing_table:notations}. 

\begin{table}
	\centering
	\begin{tabular}{ ccc }
		symbol & & Description \\ 
		\hline
		\hline
		V && set of nodes \\
		D && set of Destination nodes \\
		$C_i$ && Cluster $i$ \\
		$CH_i$ && Cluster head of $C_i$ \\
		$CH(v_j)$ && Cluster head for node $v_j$ \\
		$|C(v_j)|$ && Number of nodes in cluster for which cluster head is $CH(v_j)$ \\ 
		R($v_i$,$v_j$) && Optimal cost in number of hops from node $v_i$ to $v_j$ \\
		L($v_i$,$v_j$) && sub-Optimal cost in number of hops from node $v_i$ to $v_j$ \\
		
	\end{tabular}
	\vspace{0.05in}
	\caption{Notations Used in Chapter \ref{chapter:distributedgraphrouting}}
	\label{ICDCN_routing_table:notations}
	%\vspace{-0.3in}
\end{table}

\subsection{Convergence and Sub-optimality}
\label{ICDCN_routin_sec:convergence}
In the proposed distributed graph routing approach, we limit the Bellman-Ford algorithm to compute optimal routes to known destination nodes within each cluster. In this section, we show that limiting the number of nodes does not impact convergence and optimality of routes between a source and a destination.  We then present a discussion on the convergence and optimality for both the route generation phase and the update phase. During the route generation phase, optimal routes to a single source converge in $|V| - 1$ steps in \cite{cormen}. We extend this proof to multiple destinations to prove that the Bellman-Ford algorithm converges and generates optimal routes to a set of destination nodes.

The Bellman-Ford algorithm is known to generate all possible paths from a source to a destination. These paths obey the following properties: 1) they are acyclic, and 2) they are directed. Thus, the routes are analogous to a directed acyclic graph rooted at the source. Suppose an optimal path of the graph route is given as $v_1 \rightarrow v_2 \rightarrow \cdots v_k$, where $v_1$ and $v_k$ are source and destination nodes, respectively. Under a link failure between nodes $v_i$ and $v_j$, node $v_i$ updates the cost to $v_j$ as $\infty$ and computes the optimal path (local to $v_i$) to the destination. After the update, node $v_i$ re-broadcasts its routing table to all of its neighbors. Similarly, node $v_{i-1}$ updates its routing table and computes the new optimal path (local to $v_{i-1}$) to the destination. Each node in the network similarly updates its routing table and computes the optimal path local to itself. Since all paths in the directed acyclic graph are rooted at node $v_1$ and the local optimal cost is the lowest cost among all other paths, the locally optimal path at $v_1$ is the globally optimal path. Therefore, the optimality of the Bellman-Ford algorithm under generation and update phase is preserved when only destination nodes are considered.

\subsection{Performance Analysis}
\label{ICDCN_routin_sec:bounds}
For a source and destination pair, cost on the primary path of a routing graph generated by our distributed graph routing may not be optimal. This section presents the approximation ratio of primary path cost obtained by the distributed graph routing when compared to the optimal cost on the primary path of a routing graph. We consider the cost on a path for a pair of source and destination nodes as the number of hops between them. Note that the cost obtained on the primary path depends on the clustering algorithm used to cluster the nodes. Besides, cost also depends on the position of the nodes in the network. If $v_i$ and $v_j$ are in the same cluster, then our distributed graph routing generates the least cost path from $v_i$ to $v_j$, which gives the approximation ratio as 1. When $v_i$ and $v_j$ are in different clusters, $v_i$ forwards packets on a path to the cluster head ($CH(v_j)$) of $v_j$. If the shortest path from $v_i$ to cluster head $CH(v_j)$ passes through $v_j$, then the packet reaches the destination in the optimal cost as it is sent to a destination within the cluster. However, the cost is maximum when the shortest path to cluster head $CH(v_j)$ does not pass through destination $v_j$. We derive an approximation ratio for hop-count on the primary path in Theorem \ref{ICDCN_routing_thm:approximationRatio}.

\begin{thm} \label{ICDCN_routing_thm:approximationRatio}
Let $L(v_i, v_j)$ be the cost of the primary path of the routing graph between a source $v_i$ and a destination $v_j$ generated by our distributed algorithm, when cost is considered as the number of hops between $v_i$ and $v_j$. Then $L(v_i, v_j)$  is at most $R(v_i,v_j) + |CH(v_j)|  - 2$, where $CH(v_j)$ is the cluster head of $v_j$, $R(v_i,v_j)$ is the optimal path cost between $v_i$ and $v_j$, and $R(v_i,CH(v_j))$ is the optimal path cost between $v_i$ and $CH(v_j)$.
\end{thm}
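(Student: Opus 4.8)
The plan is to split into two cases according to whether $v_i$ and $v_j$ lie in the same cluster, and in the cross-cluster case to decompose the primary path into the portion that travels toward $CH(v_j)$ and the portion that travels inside $v_j$'s cluster to reach $v_j$. First I would dispose of the same-cluster case: since the Bellman--Ford computation is run to completion within each cluster (as established in Section~\ref{ICDCN_routin_sec:convergence}), the primary path equals the least-cost path, so $L(v_i,v_j)=R(v_i,v_j)$ and the claimed inequality holds trivially because the added term is nonnegative. This lets me concentrate all the work on the case where $v_i$ and $v_j$ sit in different clusters, which is the only place the routing is genuinely suboptimal.

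For the cross-cluster case I would invoke the protocol's forwarding rule: every node outside $v_j$'s cluster forwards along its least-cost route toward the cluster head $CH(v_j)$, and the first time the packet reaches a node $u$ inside that cluster it switches to the intra-cluster least-cost route to $v_j$. Thus the primary path has the form (least-cost path from $v_i$ to the entry node $u$) followed by (intra-cluster least-cost path from $u$ to $v_j$), giving $L(v_i,v_j)=R(v_i,u)+R(u,v_j)$, where, because $u$ lies on the shortest route toward $CH(v_j)$, we also have $R(v_i,u)\le R(v_i,CH(v_j))$. The worst case, as noted just before the theorem, is when the route to $CH(v_j)$ does not already pass through $v_j$, so I would upper-bound the first segment by the full cost $R(v_i,CH(v_j))$ of reaching the cluster head and the second segment by the in-cluster hop distance from $u$ to $v_j$, which is at most the number of nodes $|C(v_j)|$ in $v_j$'s cluster minus one.

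To convert this into a bound in terms of $R(v_i,v_j)$ I would apply the triangle inequality for hop-distance, $R(v_i,CH(v_j))\le R(v_i,v_j)+R(v_j,CH(v_j))$, and bound the residual $R(v_j,CH(v_j))$ by the same in-cluster diameter. Collecting the two segments and substituting these estimates is intended to yield the stated bound $L(v_i,v_j)\le R(v_i,v_j)+|CH(v_j)|-2$, with the constant $-2$ emerging from the two cluster-boundary nodes ($v_j$ itself and $CH(v_j)$) that must not be double-counted when chaining the distances.

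The hard part will be pinning down this additive constant, because a naive application of the triangle inequality charges the within-cluster detour twice---once inside the bound on $R(v_i,CH(v_j))$ and once inside the bound on $R(u,v_j)$---which would produce a coefficient of $2$ on the cluster-size term rather than the claimed coefficient of $1$. I expect the remedy is to charge the hops that carry the packet toward $u$ against the intra-cluster hops that the optimal $v_i\to v_j$ path must itself spend after entering the cluster, so that the two within-cluster contributions share hops rather than stack. Making this sharing argument rigorous---showing $R(v_i,u)+R(u,v_j)-R(v_i,v_j)\le |CH(v_j)|-2$ no matter where the two routes enter the cluster---is the crux, and the single-hop clustering specialization (where every in-cluster distance equals $1$) should then collapse the bound to $R(v_i,v_j)+2\le 3\,R(v_i,v_j)$, recovering the promised factor-$3$ guarantee.
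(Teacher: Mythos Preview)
Your overall skeleton---split by same/different cluster, then in the cross-cluster case bound the primary path by a detour through $CH(v_j)$ and apply the triangle inequality $R(v_i,CH(v_j))\le R(v_i,v_j)+R(v_j,CH(v_j))$---is exactly the paper's structure. But you have misidentified where the constant comes from, and the ``sharing argument'' you plan is not what closes the gap.

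The paper does \emph{not} try to avoid charging the within-cluster detour twice. It simply bounds $L(v_i,v_j)\le R(v_i,CH(v_j))+R(CH(v_j),v_j)\le R(v_i,v_j)+2R(v_j,CH(v_j))$ and then lives with the factor of $2$. The step you are missing is a structural assumption on the cluster: because graph routing requires every node to have at least two forwarding neighbors, the nodes inside a cluster are assumed to form a $2$-connected subgraph, and a $2$-connected graph on $|C(v_j)|$ vertices has diameter at most $\frac{|C(v_j)|}{2}-1$. Substituting this diameter bound for $R(v_j,CH(v_j))$ gives $2\bigl(\tfrac{|C(v_j)|}{2}-1\bigr)=|C(v_j)|-2$, which is precisely the additive term in the theorem. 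So the ``$-2$'' is not a boundary-node correction at all; it is $2\times(-1)$ coming from the $2$-connected diameter bound. Your introduction of the entry node $u$ is a more faithful model of the protocol, but the paper dispenses with it by upper-bounding the primary path by the full route through $CH(v_j)$; once you have the $2$-connectivity diameter bound, that cruder estimate is already tight enough.
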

\begin{proof}
	When $v_i$ and $v_j$ are in the same cluster, the hop-count on the primary path is the same as that of an optimal path between them. When $v_i$ and $v_j$ are in different clusters, the hop-count on primary path from $v_i$ to $CH(v_j)$ is less than the sum of hop-count on path from $v_i$ to $v_j$ and the hop-count on path from $v_j$ to $CH(v_j)$, that is $R(v_i,CH(v_j)) \le R(v_i,v_j) + R(v_j,CH(v_j))$. This is due to the fact that, the Bellman-Ford algorithm generates a least cost path from a source and destination. Therefore, the maximum value of hop-count from $v_i$ to $CH(v_j)$ is $R(v_i,CH(v_j)) = R(v_i,v_j) + R(v_j,CH(v_j))$. Hence,  the hop-count on the primary path of the routing graph between $v_i$ and $v_j$  
	\begin{equation}
	L(v_i, v_j) \le R(v_i,v_j) + (2 \times R(v_j,CH(v_j))).
	\label{ICDCN_routing_eq_subOptimality1}
	\end{equation}
	
	The maximum distance between any two nodes in a cluster is considered as the {\em diameter} of that cluster. Assuming that all nodes can generate a graph route, i.e. nodes inside a cluster can form a 2-connected graph, the maximum value of diameter can be expressed as $R(v_j, CH(v_j)) = \frac{|C(v_j)|}{2} - 1$. Therefore, Equation (\ref{ICDCN_routing_eq_subOptimality1}) can be expressed as 
	\begin{equation}
	L(v_i, v_j) \le R(v_i,v_j) + |C(v_j)|  - 2. \qedhere
	\label{ICDCN_routing_eq_subOptimality2}
	\end{equation} 
	%\qedsymbol
\end{proof}

Since hop-count is dependent on the clustering algorithm, we consider all nodes in a cluster to be one hop away from their cluster head. Under this assumption, corollary \ref{ICDCN_routing_cor:apprximationRatio} shows that the cost in terms of hop-count on the primary path is at most 3 times the optimal cost.

\begin{cor} \label{ICDCN_routing_cor:apprximationRatio}
The cost on the primary path, $L(v_i,v_j)$, is three times the optimal cost, i.e. $L(v_i,v_j) = 3 \times R(v_i,v_j)$,  when cost is considered as hop-count and clustering algorithm generates clusters with a one-hop radius, where $L(v_i,v_j)$ is the cost on primary path obtained by our distributed graph routing and $R(v_i,v_j)$ is the cost on optimal path between $v_i$ and $v_j$.
\end{cor}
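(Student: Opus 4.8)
The plan is to read off the corollary as the one-hop specialization of the inequality already proved in Theorem \ref{ICDCN_routing_thm:approximationRatio}. The natural starting point is Equation (\ref{ICDCN_routing_eq_subOptimality1}), namely $L(v_i, v_j) \le R(v_i, v_j) + 2 R(v_j, CH(v_j))$, which holds for an arbitrary clustering. The hypothesis that the clustering produces clusters of one-hop radius means precisely that every node lies a single hop from its cluster head, so $R(v_j, CH(v_j)) = 1$. Substituting this value collapses the additive penalty and yields the additive estimate $L(v_i, v_j) \le R(v_i, v_j) + 2$.

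It then remains to turn this additive estimate into the stated multiplicative factor of three. I would divide through by $R(v_i, v_j)$ to obtain $L(v_i, v_j)/R(v_i, v_j) \le 1 + 2/R(v_i, v_j)$ and note that the right-hand side is monotonically decreasing in $R(v_i, v_j)$, so the ratio is largest at the smallest admissible value of the optimal distance. Since the additive penalty is incurred only when $v_i$ and $v_j$ lie in distinct clusters, and two distinct nodes are at least one hop apart, the extremal case is $R(v_i, v_j) = 1$, attained when $v_i$ and $v_j$ are neighbors sitting on opposite sides of a cluster boundary. Evaluating the bound there gives $L(v_i, v_j) \le 1 + 2 = 3 = 3 R(v_i, v_j)$, which is the claimed worst-case factor; for a same-cluster pair the approximation ratio is exactly $1$, so three governs the overall worst case.

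The one point that deserves care—rather than a genuine obstacle—is justifying that the one-hop-radius hypothesis is the right way to force $R(v_j, CH(v_j))$ to equal one, and that this is consistent with the diameter relation $R(v_j, CH(v_j)) = |C(v_j)|/2 - 1$ invoked in the proof of Theorem \ref{ICDCN_routing_thm:approximationRatio}. I would reconcile the two by observing that the quantity actually appearing in Equation (\ref{ICDCN_routing_eq_subOptimality1}) is the node-to-cluster-head distance, i.e. the cluster radius, which the hypothesis pins at one independently of how many leaves the star-shaped cluster carries; the diameter expression is a separate estimate of the same distance under a fuller cluster. Once this identification is made, the substitution is immediate and the remainder of the argument is a one-line monotonicity observation, so no further machinery is needed.
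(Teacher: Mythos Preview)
Your proposal is correct and follows essentially the same route as the paper: start from Equation~(\ref{ICDCN_routing_eq_subOptimality1}), use the one-hop-radius hypothesis to bound the additive penalty by~$2$, and then observe that the ratio $L(v_i,v_j)/R(v_i,v_j)$ is maximized at $R(v_i,v_j)=1$, yielding the factor~$3$. The paper phrases the key step as bounding the ratio $R(v_j,CH(v_j))/R(v_i,v_j)$ (invoking the cluster diameter of~$2$) rather than substituting the radius directly, but the underlying arithmetic and the extremal case are identical; your version is, if anything, a cleaner reading of the same argument.
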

\begin{proof}
	Under the assumption that all nodes in a cluster are one hop away from a cluster head, $\frac{R(v_j,CH_k)}{R(v_i,v_j)} \leq 2$. Note that the maximum value of diameter of the cluster is $2$ and the fraction is maximum when $R(v_i, v_j) = 1$. Hence, $R(v_j,CH_k) = 2 \times R(v_i,v_j)$. Therefore,  from Equation (\ref{ICDCN_routing_eq_subOptimality2}), cost on primary path of routing graphs generated by our approach is at most $3$ times the cost on the optimal path. 
	%\qedsymbol
\end{proof}

\section{Evaluation}
\label{ICDCN_routing_sec:eval}
This section presents the evaluation of the proposed distributed graph routing protocol through experiments using TelosB \cite{energyTelosb} and simulations using Tossim \cite{tossim}  by employing a WirelessHART protocol stack implemented on TinyOS 2.1.2 \cite{sha2015implementation}. TelosB mote is equipped with a CC2420 radio which is compliant with 802.15.4. Note that WirelessHART adapts the physical layer of IEEE 802.15.4. We performed our evaluations to determine the effectiveness of the proposed routing in terms of scalability, energy, and network dynamics. We compared its performance against existing centralized graph routing approaches for WirelessHART.

\subsection{Experiment}
We implemented distributed graph routing protocol on TinyOS 2.2 \cite{tinyos} and evaluated on an indoor testbed using TelosB devices \cite{energyTelosb} for real experiments and TOSSIM \cite{tossim} for large scale simulations. Each TelsoB device is equipped with Chipcon CC2420 radios compliant with the IEEE 802.15.4 standard. Note that the physical layer of WirelessHART is based on 802.15.4 physical layer. For the network layer of implementation, we used LEACH\cite{leach2002} clustering algorithm, for the sake of simplicity, to randomly create cluster heads and request nodes to join a cluster. After the network is clustered, cluster heads and destination nodes initiate the Bellman-Ford algorithm using signal strength at the receiver as the cost between the two nodes. All nodes in the network generated routing graphs to all cluster heads and destination nodes within their cluster. After converging to a solution, each cluster head broadcasted a list of destination nodes that are in its cluster to all nodes in the network and each node updated its cluster head information accordingly. We assumed that all nodes compute the signal strength before the start of the execution of the distributed graph routing algorithm. The proposed distributed graph routing algorithm is executed using the default CSMA-CA MAC protocol in TinyOS 2.2. For lower layers of implementation, we use the implementation provided in  \cite{sha2015implementation}.

For the sake of simplicity of the experiment, we created a 3-hop network consisting of 11 nodes deployed in an office building. The network is deployed such that each node is in communication range with a minimum of 2 neighbors, and the average number of neighbors in the network is 3. We evaluated our protocol under a varying number of control loops and presented the result for an average of 10 experiments. For each experiment, we randomly selected a base-station and two cluster heads. The choice of a random selection stems from the fact that it represents a larger spectrum of scenarios. 

\begin{figure}
    \centering
    \begin{subfigure}[b]{0.35\textwidth}
        \includegraphics[width=\textwidth]{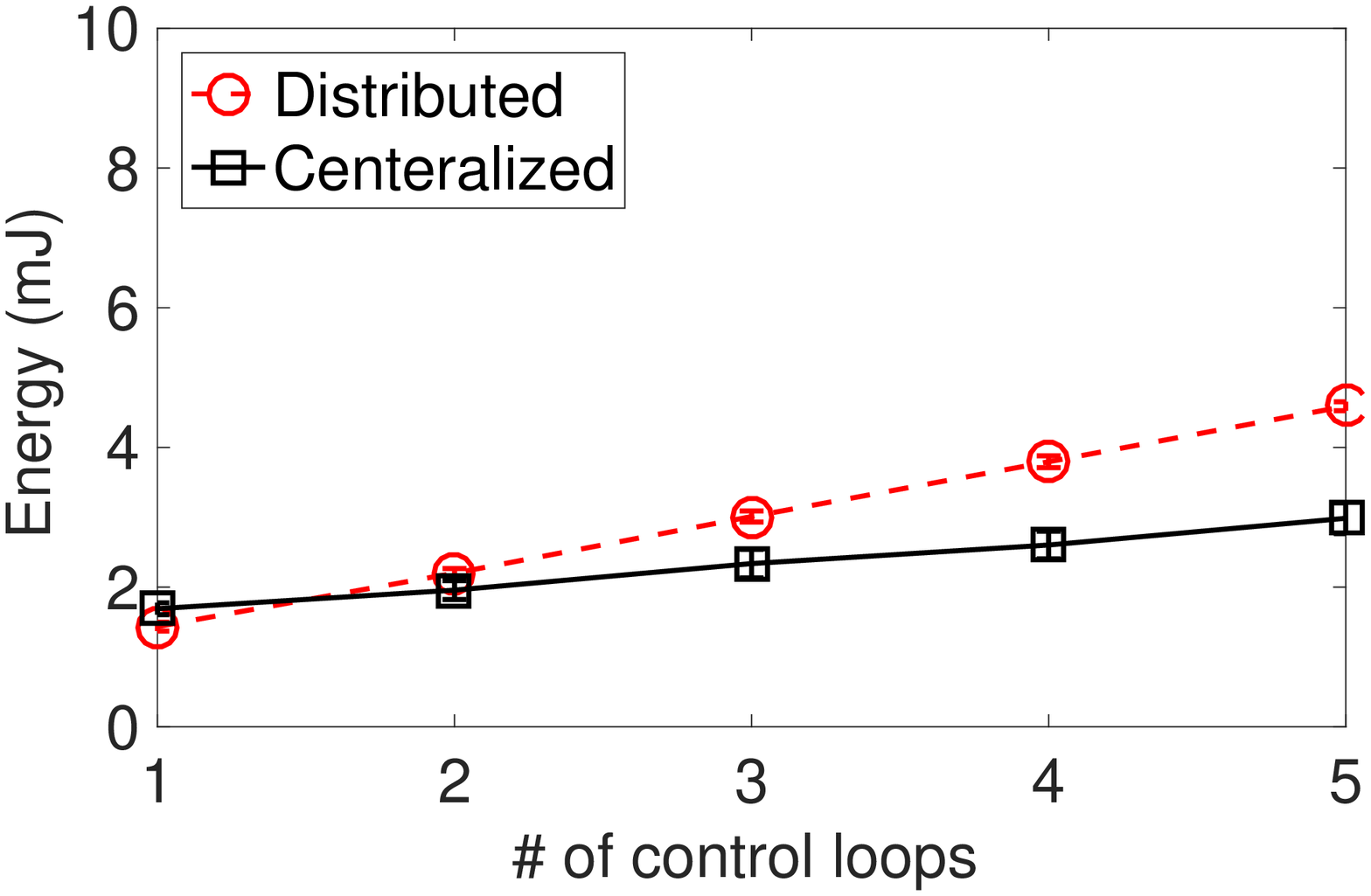}
        \caption{Energy consumption}
        \label{ICDCN_routing_fig:energy_exp}
    \end{subfigure}
    \quad
    \begin{subfigure}[b]{0.35\textwidth}
        \includegraphics[width=\textwidth]{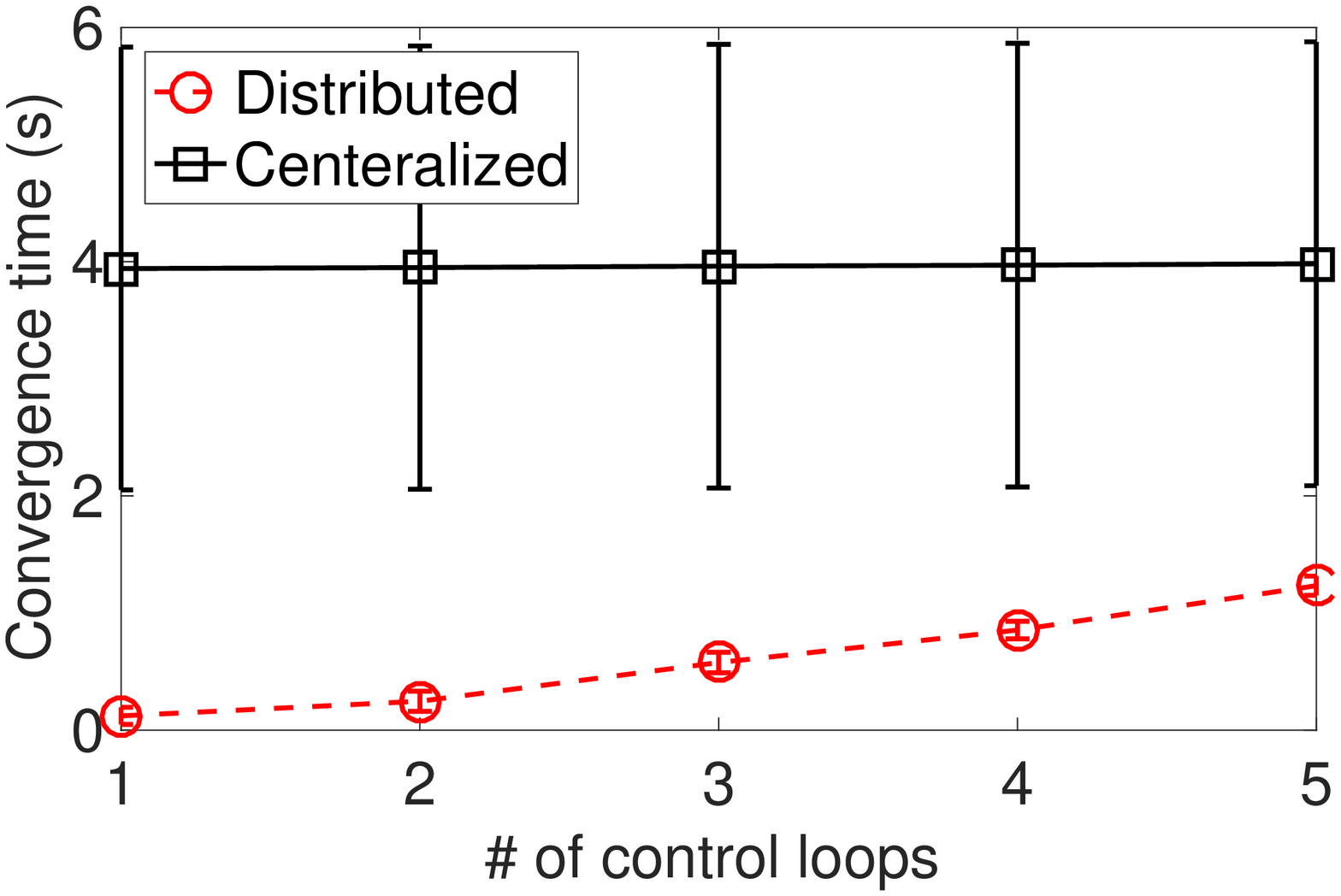}
        \caption{Convergence time}
        \label{ICDCN_routing_fig:time_exp}
    \end{subfigure}
    \caption{Experimental Evaluation of Distributed Graph Routing Under Varying Number of Clusters}
	\label{ICDCN_routing_fig:experiment}
\end{figure}

During the experiment (results are shown in Fig. \ref{ICDCN_routing_fig:energy_exp} and Fig. \ref{ICDCN_routing_fig:time_exp}), we have observed that the difference in the average number of messages communications per node between the two protocols to be small. However, we have observed that the proposed method consumes more energy rather than a centralized algorithm. Since the number of control loops is very small, the average number of message transmissions by the centralized algorithm is small. In the proposed method, each node needs to generate a route to two cluster heads and destination nodes within its cluster, which requires higher energy consumption. For large networks, centralized algorithms typically require more energy than the proposed approach, which can be observed in the simulation results. The distributed protocol can use simultaneous communication to minimize the convergence time, while a centralized algorithm has to wait until each message is passed to all nodes in the network.  In this experiment, we have also observed that memory consumption for all cases is approximately the same with a distributed algorithm consuming 4 bytes of more information(on average) when compared with a centralized algorithm \cite{songRTAS11}.

\subsection{Simulation Setup}
For large scale evaluations, we performed simulation on 148 nodes. We used the testbed model from \cite{sha2015implementation} of 74 nodes to generate the network. To scale with the number of nodes, we assumed all nodes were placed in a grid structure and replicated the grid. We added edges between neighboring grids to generate a connected graph. We used distributed vertex coloring algorithms to generate a schedule. We considered one percent of the node as access points, and nodes with the highest degree of neighbors were selected to be access points. Default value for parameters used in this simulation is given in Table \ref{ICDCN_routing_table:params}.

\begin{table}[h]
\centering
\begin{tabular}{ ccccc }
  Symbol && Description && Default Value \\
 \hline
 \hline
$|V|$ && \# of nodes && 148 \\
 $|D|$ && \% of control loops && 40 \\
 $\Phi$ && power level of transmission && $-5$ $dBm$ \\
 $q$ && \% of clusters && 5 \\
\end{tabular}
%\vspace{0.05in}
\caption{Parameters Values Used to Evaluate Distributed Graph Routing}
\label{ICDCN_routing_table:params}
%\vspace{-0.35in}
\end{table}

 \textbf{Metrics.} We used \textit{energy consumed} by each protocol to generate and dissipate routes to each node in the network as a metric for comparison. We observed that a packet transmission requires an average of 6ms. From the data sheet \cite{energyTelosb}, we have calculated that CC2420 radio requires $0.38mJ$ of energy to transmit a packet.  We used $0.38mJ$ of energy per packet transmission and the number of packet transmissions from simulation to compute the energy consumed at a node. We used \textit{convergence time} of the algorithm as another metric to evaluate the proposed approach. We define convergence time as the time difference between the deployment and generation of routing tables at all nodes in the network. The convergence time of the centralized algorithm includes time taken to perform network discovery, generate routes centrally, and broadcast routes. The Convergence time of the distributed routing algorithm includes time taken to discover neighbors and generate all routes. We also use average \textit{memory} required at each node and number of broken routes as metrics for comparison. We define the number of broken routes as the number of disconnected paths in the graph route.

\subsection{Simulation Results}
We used reliable graph routing (Centralized) \cite{songRTAS11} and shortest path graph routing (Energy-aware) \cite{wu2016maximizing} to evaluate our distributed graph routing(Distributed) protocol. This section presents the performance of each protocol in terms of energy and convergence time under the scalability of nodes, scalability of control loops, varying transmission power levels, and link failures. We used an average of $50$ random test cases for each parameter to obtain our results. For each test case, we randomly generate the sensor, actuator, and priority for control loops. 

\subsubsection{Performance under Varying Number of Nodes}

\begin{figure}[th]
    \centering
    \begin{subfigure}[b]{0.35\textwidth}
        \includegraphics[width=\textwidth]{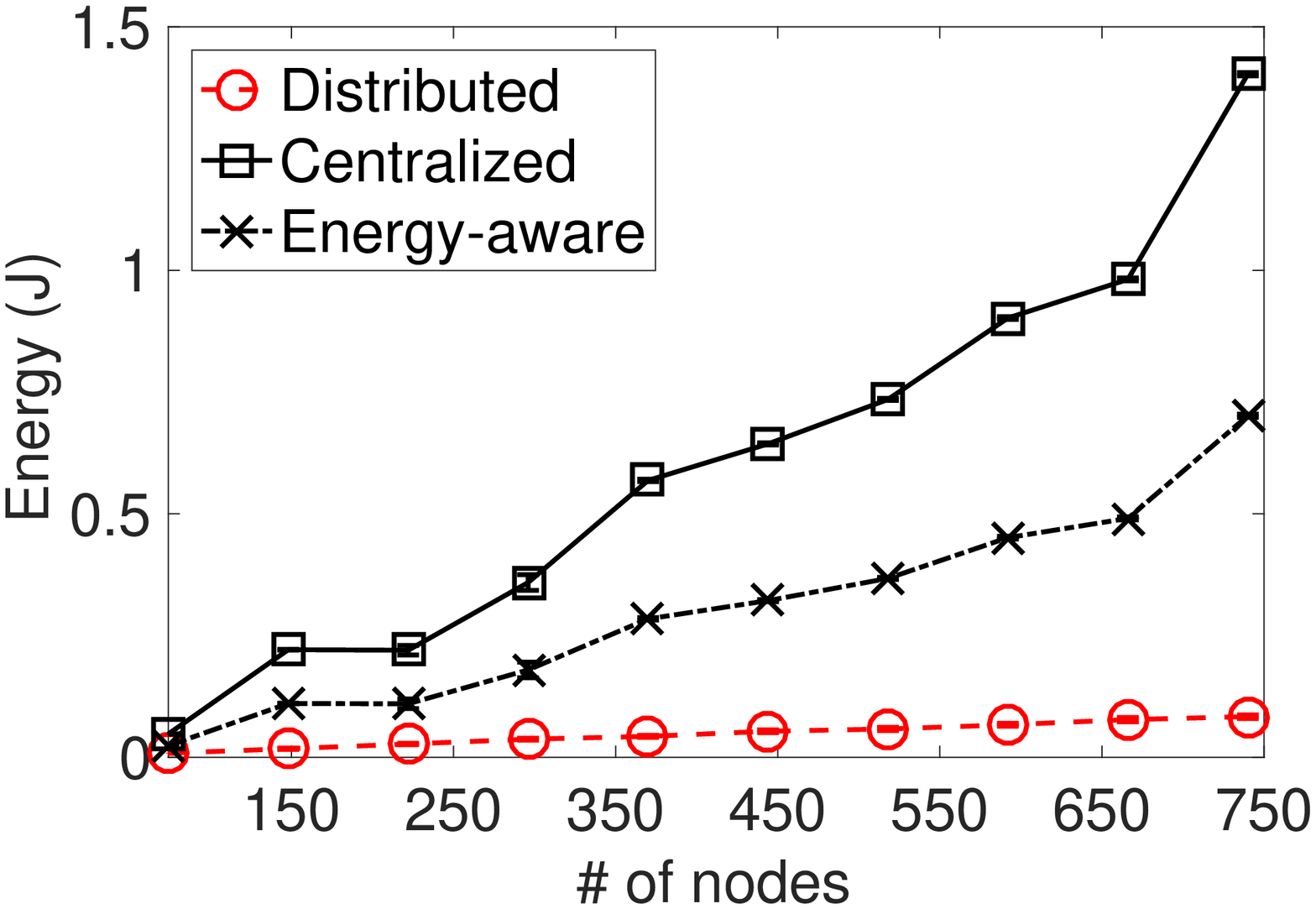}
        \caption{Energy consumption}
        \label{ICDCN_routing_fig:energy_noOfNodes}
    \end{subfigure}
    \quad
    \begin{subfigure}[b]{0.35\textwidth}
        \includegraphics[width=\textwidth]{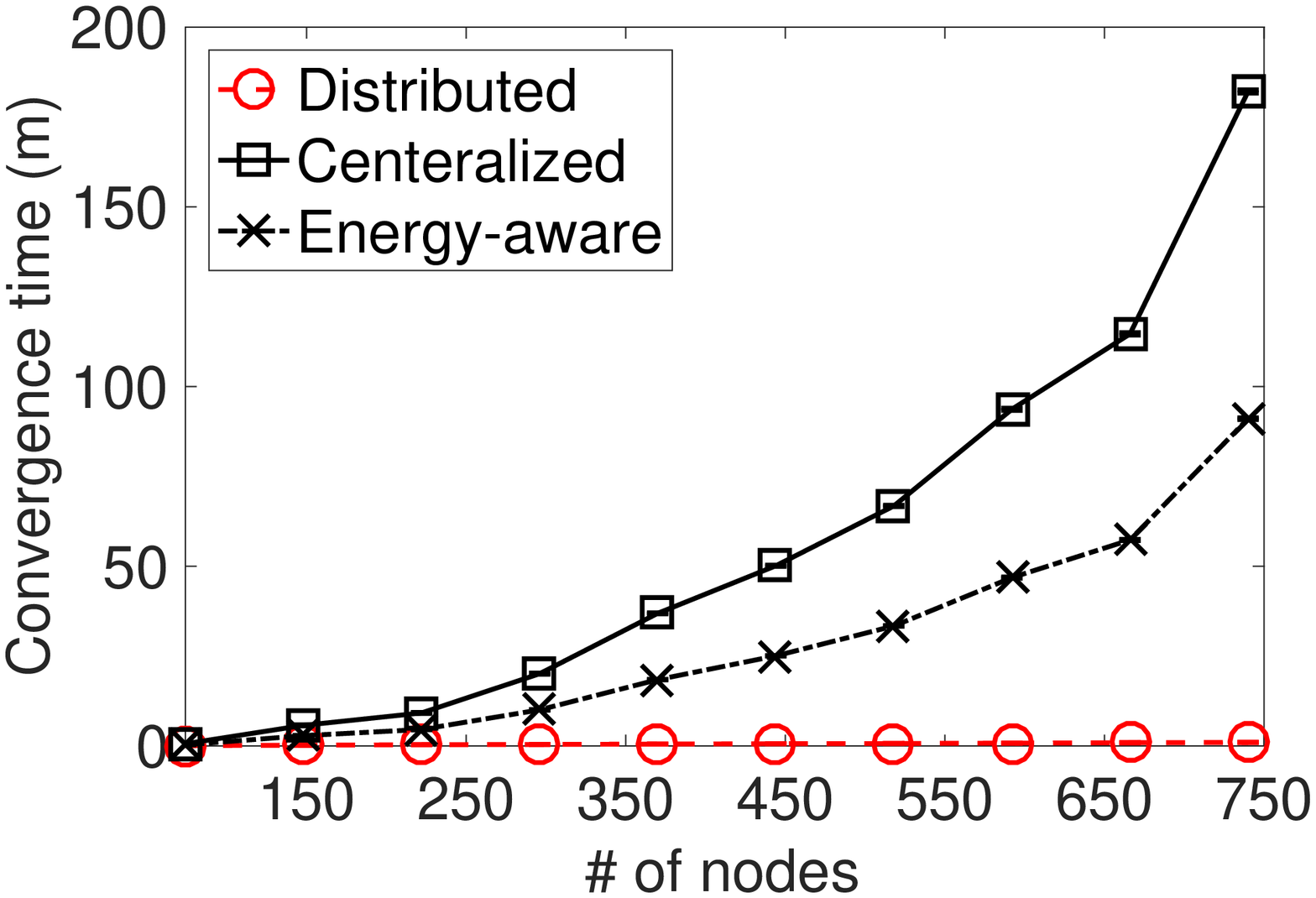}
        \caption{Convergence time}
        \label{ICDCN_routing_fig:time_noOfNodes}
    \end{subfigure}
     \quad
    \begin{subfigure}[b]{0.35\textwidth}
        \includegraphics[width=\textwidth]{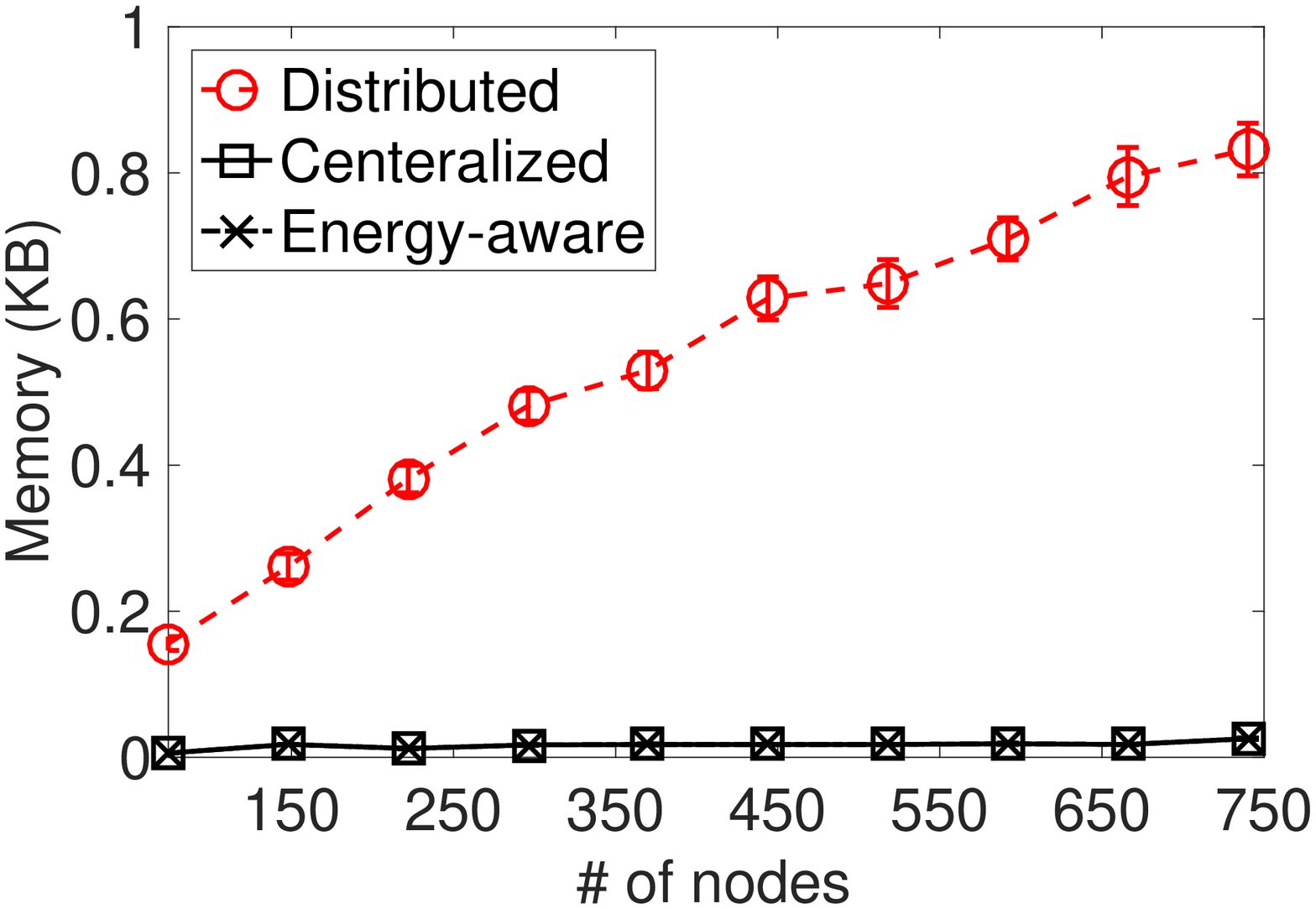}
        \caption{Memory consumption}
        \label{ICDCN_routing_fig:memory_noOfNodes}
    \end{subfigure}
   \caption{Performance of Distributed Graph Routing under Varying Number of Nodes}
   \label{ICDCN_routing_fig:noOfNodes}
\end{figure}

We now show the performance of the proposed distributed algorithm under the scalability of the number of nodes. In this simulation, we varied the number of nodes from $74$ to $740$ while keeping the density of the network constant. For a distributed algorithm, we observed that energy consumption is affected by an increase in the number of destination nodes and the number of cluster heads in the network. Since in this simulation percentage of destination and percentage of cluster heads are kept constant, the number of destination nodes and the number of cluster heads increase linearly with an increase in the number of nodes. Thus, there is a linear increase in energy consumption by a distributed algorithm. However, centralized algorithms need more energy as messages have to be propagated to all nodes before they reach their destination. Centralized algorithms theoretically need an exponential increase in energy consumption in the order of $|V|^2$, and our simulations verify the theoretical result. Moreover, our simulations (as shown in Fig. \ref{ICDCN_routing_fig:energy_noOfNodes}) show that our distributed algorithm consumes less energy than centralized or energy-aware algorithms.

We have observed that distributed graph routing saves a minimum of $65\%$ of energy consumption when compared to an energy-aware algorithm and $80\%$ of energy consumption when compared to the centralized algorithm. Similar to energy consumption, the execution time increases linearly by a factor of $|V|$ in distributed algorithm and $|V|^2$ in centralized algorithms as both are dependent on the number of message communications in the network. Thus, from Fig. \ref{ICDCN_routing_fig:time_noOfNodes} shows a similar trend as Fig. \ref{ICDCN_routing_fig:energy_noOfNodes} and we observed that distributed algorithm saves a minimum of $68\%\%$ in execution time. These results conclude that distributed algorithm is scalable under varying number of nodes in the network.

Fig. \ref{ICDCN_routing_fig:memory_noOfNodes} shows the average memory required per node in the network. The distributed algorithm generates all possible paths between a source and a destination, and each node maintains a route through all of its neighbors. Moreover, each node maintains cluster head information of all destination nodes in the network. Therefore, memory used at each node increases linearly with an increase in the number of nodes in the network. However, for centralized algorithms, only nodes that are a part of a graph route require memory.  Fig. \ref{ICDCN_routing_fig:memory_noOfNodes} shows that distributed algorithm consumes $800B$ additional memory than the centralized algorithms. Nevertheless, additional memory requirement posed by distributed routing is significantly lower than the available memory for the application program in WirelessHART or TelosB devices, which have a capacity of $16kB$ \cite{MemTeolsb}. These results show that the memory overhead of the proposed distributed algorithm is minimal when compared to the available memory at nodes.

\subsubsection{Performance under Varying Number of Control Loops}

\begin{figure}[th]
    \centering
    \begin{subfigure}[b]{0.35\textwidth}
        \includegraphics[width=\textwidth]{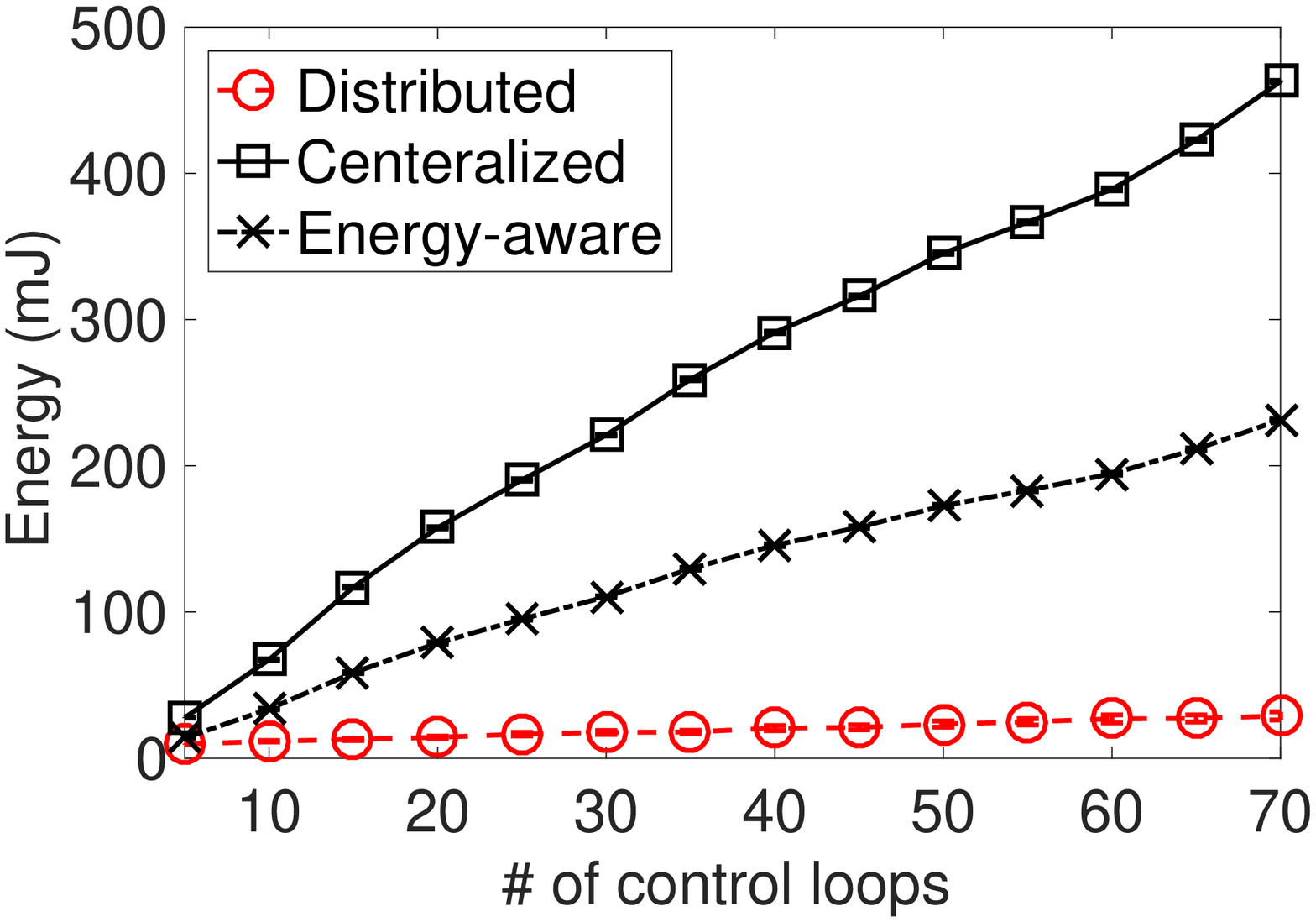}
        \caption{Energy consumption}
        \label{ICDCN_routing_fig:evergy_controlLoops}
    \end{subfigure}
    \quad
    \begin{subfigure}[b]{0.35\textwidth}
        \includegraphics[width=\textwidth]{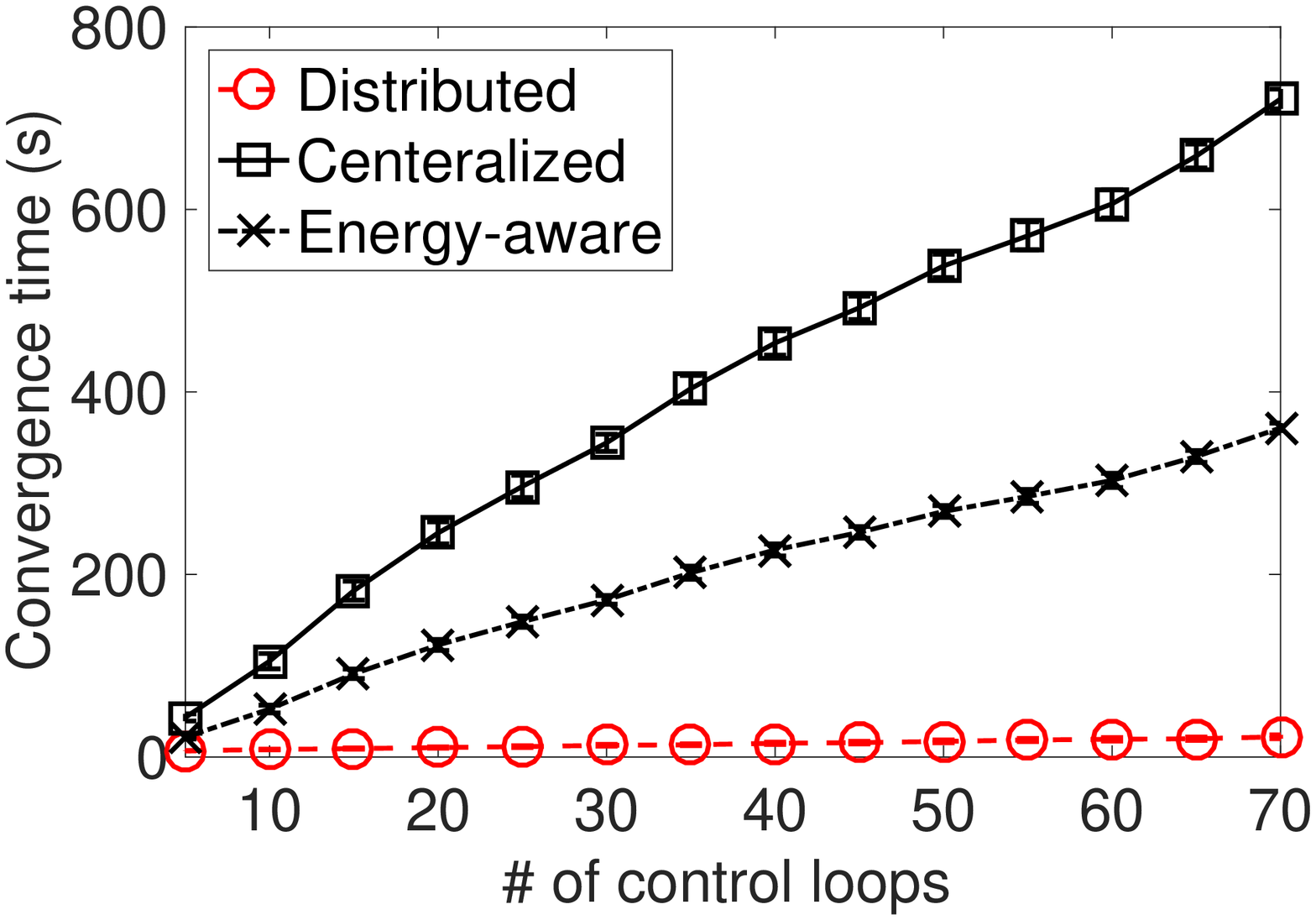}
        \caption{Convergence time}
        \label{ICDCN_routing_fig:time_controlLoops}
    \end{subfigure}
     \quad
    \begin{subfigure}[b]{0.35\textwidth}
        \includegraphics[width=\textwidth]{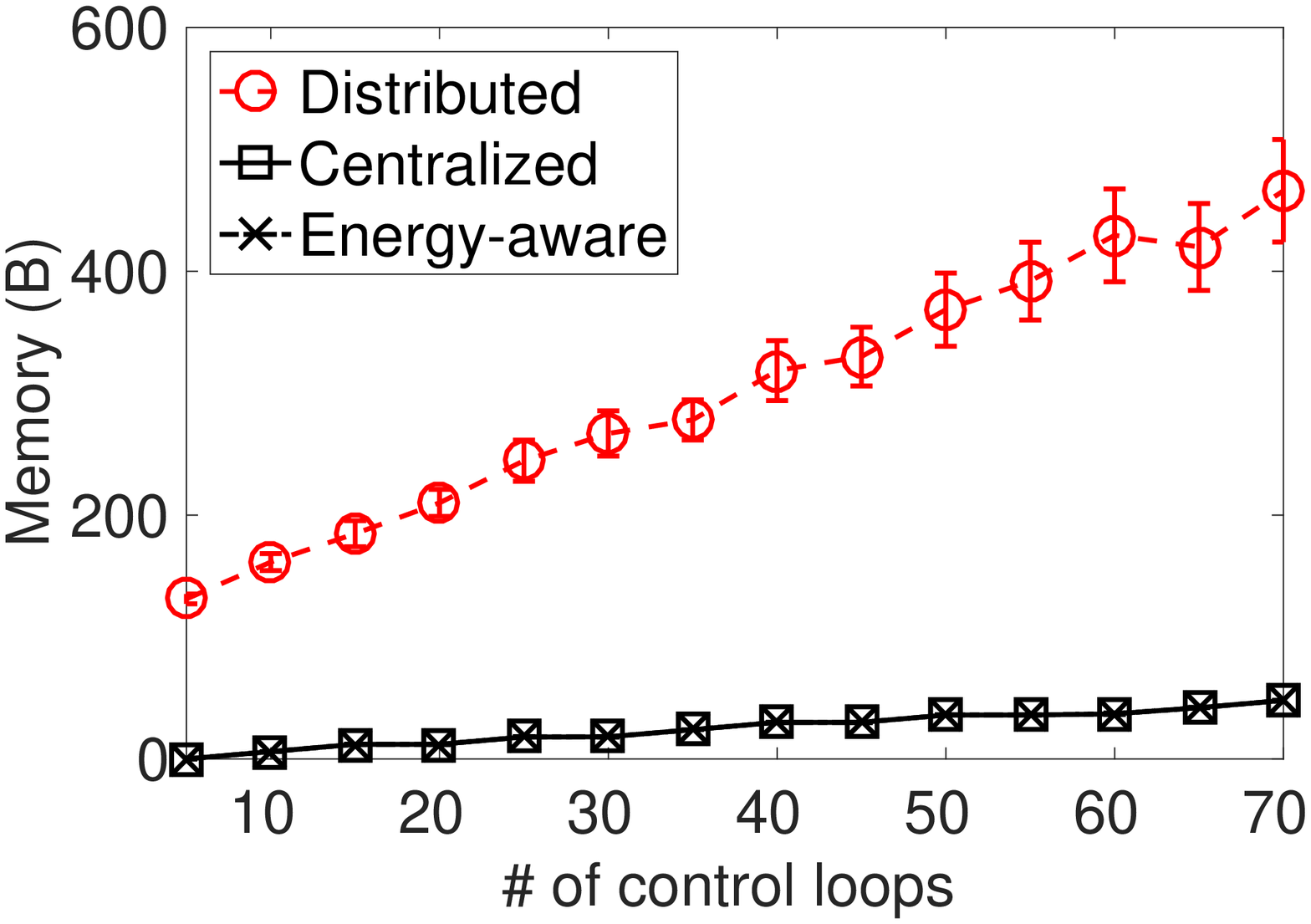}
        \caption{Memory consumption}
        \label{ICDCN_routing_fig:memory_contorlLoops}
    \end{subfigure}
   \caption{Performance of Distributed Graph Routing under varying Percentage of Control Loops}
   \label{ICDCN_routing_fig:controlLoops}
\end{figure}

This section presents the performance of the proposed approach under the scalability of control loops. We kept the number of nodes constant at $148$ and varied the number of control loops in the network from $5$ to $70$. For this simulation, the percentage of cluster heads and the number of nodes are kept constant. Thus, the energy consumption of a node is only dependent on the number of destinations in the network, which increases linearly with the number of control loops. However, this increase in the number of destinations only corresponds to a small increase (a maximum of $20mJ$ additional energy per node) in the average energy consumption since destination nodes are spread across different clusters. However, for the centralized and energy-aware approach, an increase in the number of control loops increases the number of routing tables at each node and thereby the number of broadcasts. This results in a sharp increase in energy consumption for the centralized algorithm, as shown in Fig. \ref{ICDCN_routing_fig:evergy_controlLoops}. We have observed that distributed graph routing saves a minimum of $40\%$ when compared to the energy-aware algorithm and $65\%$ when compared to the centralized algorithm. Fig. \ref{ICDCN_routing_fig:time_controlLoops} shows the convergence time required for distributed and centralized algorithms.  Similar to energy consumption, convergence time for the proposed approach is dependent on the number of cluster heads, which explains a steady increase in the execution time of the distributed algorithm when compared to a sharp increase in centralized algorithms. Fig. \ref{ICDCN_routing_fig:time_controlLoops} shows an average decrease of $40\%$ in execution time of distributed graph routing. These results conclude that distributed graph routing is scalable under a varying number of control loops.

The effect of the number of control loops on memory is shown in Fig. \ref{ICDCN_routing_fig:memory_contorlLoops}. At each node, the distributed graph routing algorithm generates routing graphs to all cluster heads in the network, while centralized generates routing graphs for the destination nodes. This accounts for the difference in memory consumption when the number of destination nodes is $10\%$ of the total number of nodes. As the number of destination nodes increase, the memory required by the distributed graph routing algorithm increases proportionally, since nodes running distributed algorithm require memory for destination nodes that are within in its cluster and cluster head information for destination nodes outside its cluster. Nevertheless, the centralized and energy-aware algorithm only consume memory for nodes that are a part of a routing graph. This accounts for the sharp increase in memory for the distributed algorithm but a steady increase in centralized algorithms. We observe that for $100\%$ source and destination nodes in the network, distributed graph routing consumes $1kB$ additional memory than centralized. This extra memory need is very less compared to available memory for the application program in WirelessHART or TelosB devices, which have a capacity of $16kB$ \cite{MemTeolsb}. These results conclude that distributed graph routing is scalable in terms of energy efficiency and execution time at the cost of small memory need for a varying number of source and destination nodes.

\subsubsection{Performance under Varying Power Level}

\begin{figure}[th]
    \centering
    \begin{subfigure}[b]{0.35\textwidth}
        \includegraphics[width=\textwidth]{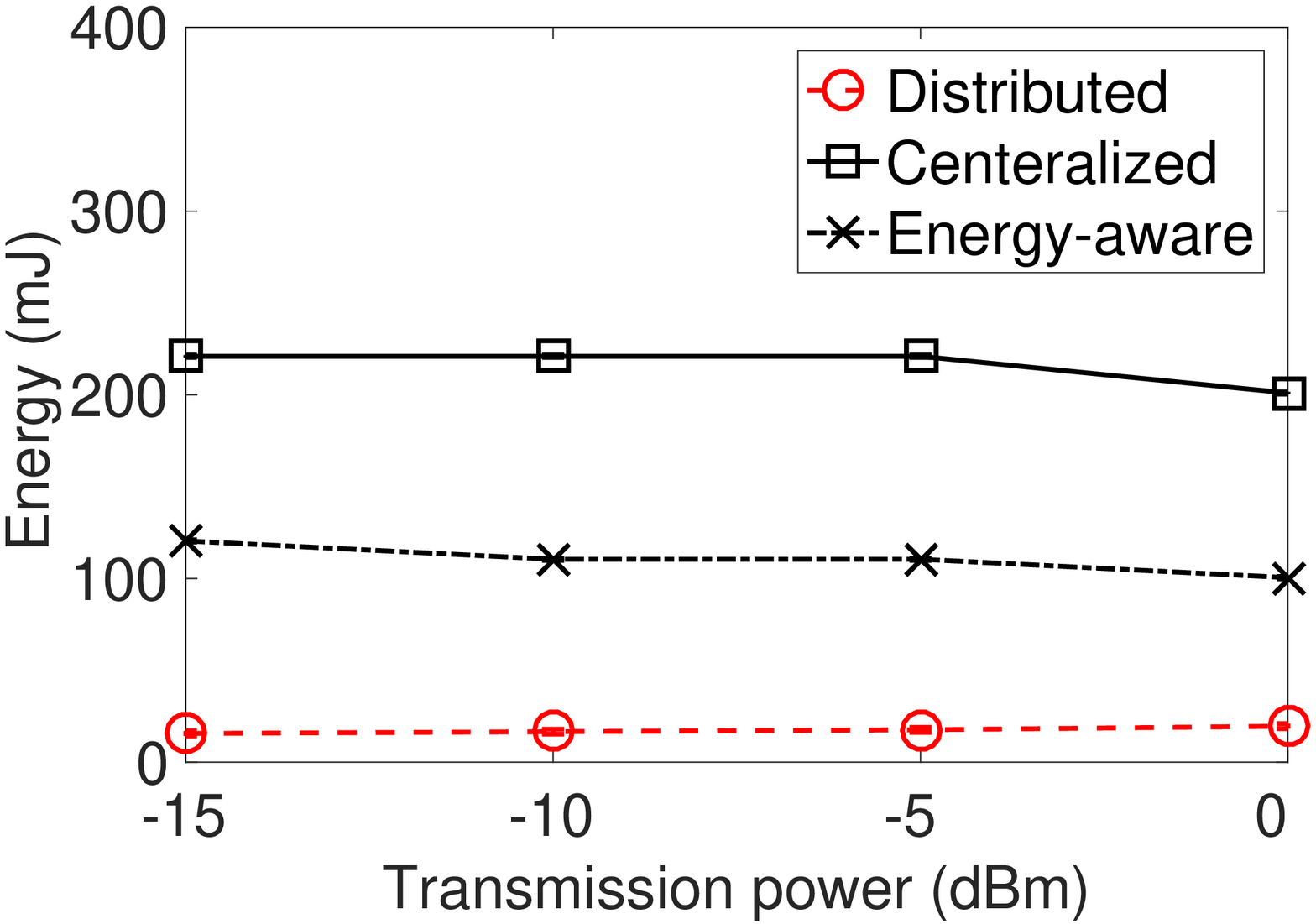}
        \caption{Energy consumption}
        \label{ICDCN_routing_fig:energy_power}
    \end{subfigure}
    \quad
    \begin{subfigure}[b]{0.35\textwidth}
        \includegraphics[width=\textwidth]{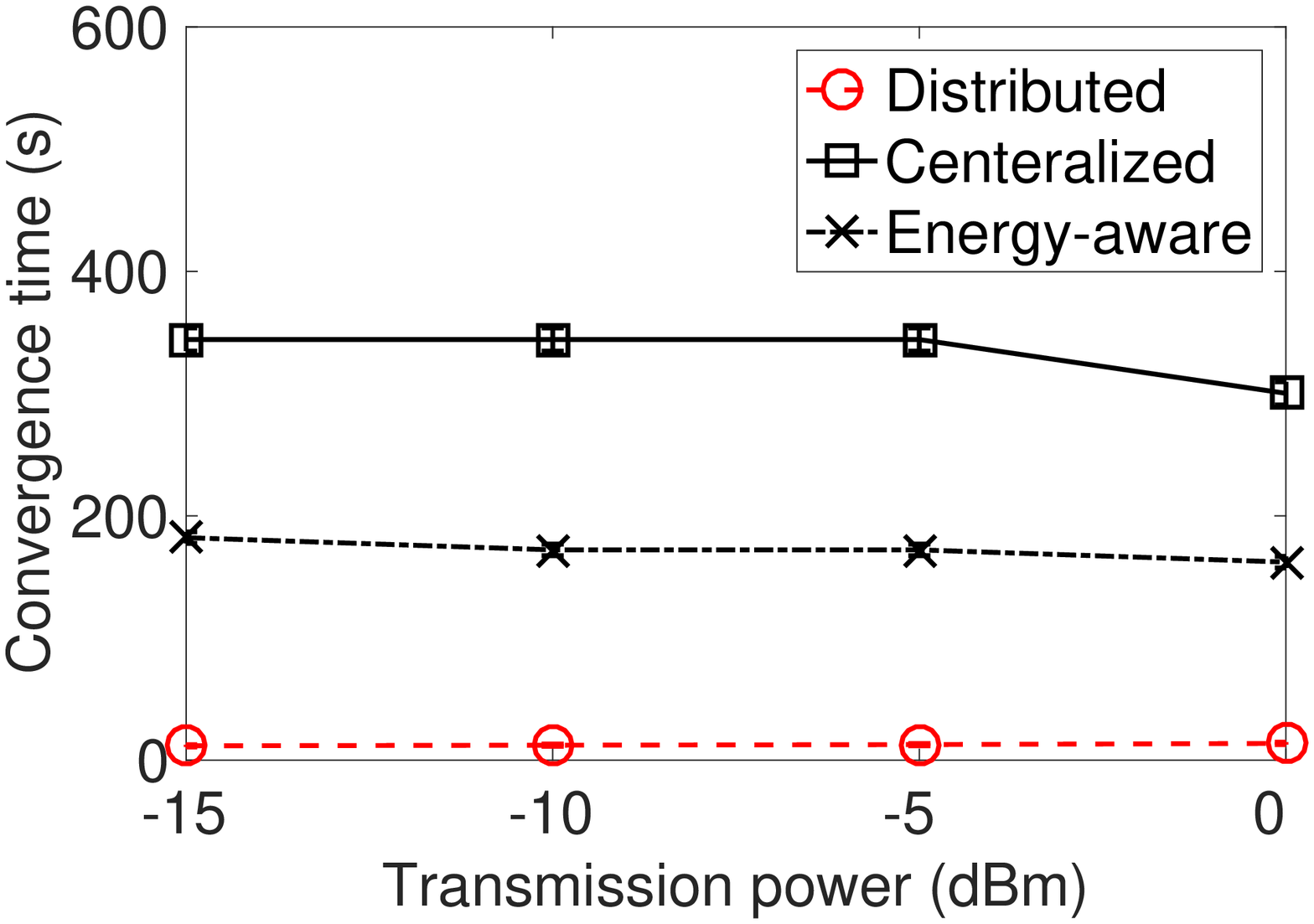}
        \caption{Convergence time}
        \label{ICDCN_routing_fig:time_power}
    \end{subfigure}
     \quad
    \begin{subfigure}[b]{0.35\textwidth}
        \includegraphics[width=\textwidth]{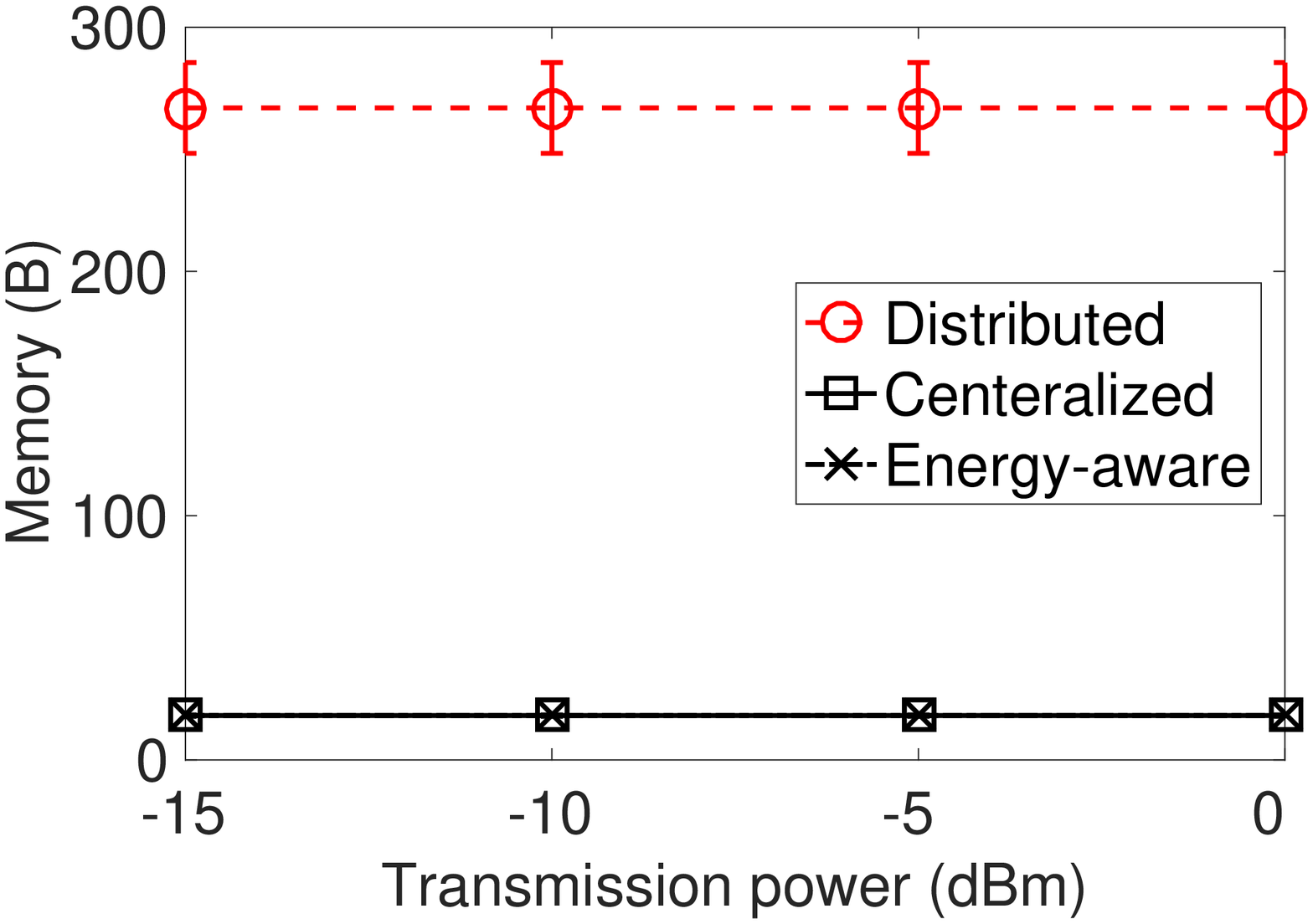}
        \caption{Memory consumption}
        \label{ICDCN_routing_fig:memory_power}
    \end{subfigure}
   \caption{Performance of Distributed Graph Routing under varying Transmission Power Levels}
   \label{ICDCN_routing_fig:VaryingPowerLevels}
\end{figure}

We have evaluated the performance of distributed graph routing under varying density in the network by varying transmission power levels. For this simulation, we considered 4 power levels $-15, -10, -5$ and $0 dBm$. Fig. \ref{ICDCN_routing_fig:energy_power} shows the energy consumption with increase in power levels. As expected, energy consumption increases linearly as the number of neighbors increase. However, energy consumption for centralized algorithms decreases due to the presence of new shorter paths that are included due to an increase in power level. This change in energy consumption for both centralized and distributed is around $100mJ$. Similar to energy consumption, the convergence time of distributed graph routing increases gradually with increasing power level for transmission due to an increase in the number of edges in the network, as shown in Fig. \ref{ICDCN_routing_fig:time_power}. However, for centralized algorithms, convergence time decreases as the number of messages in the route decrease. This decrease is minimal as the time required to collect the topology and compute the routes at the network manager is constant. These results conclude that the effect of network density on the distributed algorithm is negligible, and similar performance can be observed from centralized algorithms.

The effect of varying node density on memory is shown in Fig. \ref{ICDCN_routing_fig:memory_power}. With an increase in the number of neighbors, more paths can be generated from each node. Thus, memory consumption of the distributed algorithm increases linearly with an increase in power level or an increase in the number of neighbors. The effect of varying node density on memory is shown in Fig. \ref{ICDCN_routing_fig:memory_power}. With an increase in the number of neighbors, more paths can be generated from each node. Thus, memory consumption of the distributed algorithm increases linearly with increasing power level or increasing number of neighbors. However, for the centralized algorithms, memory consumption is fixed as the number of neighbors selected is always a maximum of $2$. These results show that despite the slight increase in the transmission power, the distributed algorithm performs much better than centralized in terms of energy and execution time at the cost of small additional memory need at nodes.

\subsubsection{Performance under Varying Number of Link Failures}

\begin{figure}[th]
    \centering
    \begin{subfigure}[b]{0.35\textwidth}
        \includegraphics[width=\textwidth]{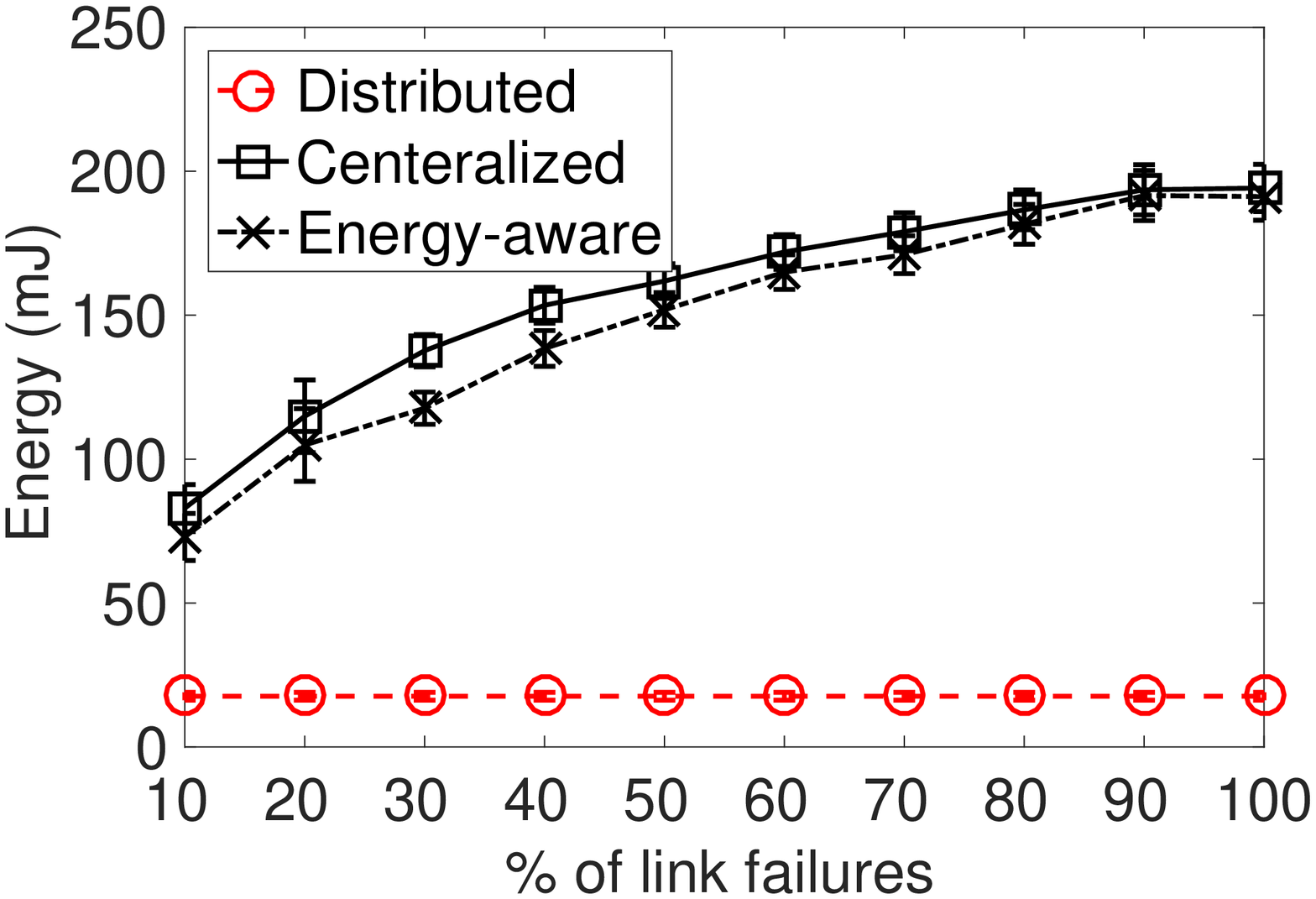}
        \caption{Energy consumption}
        \label{ICDCN_routing_fig:energy_linkQuality}
    \end{subfigure}
    \quad
    \begin{subfigure}[b]{0.35\textwidth}
        \includegraphics[width=\textwidth]{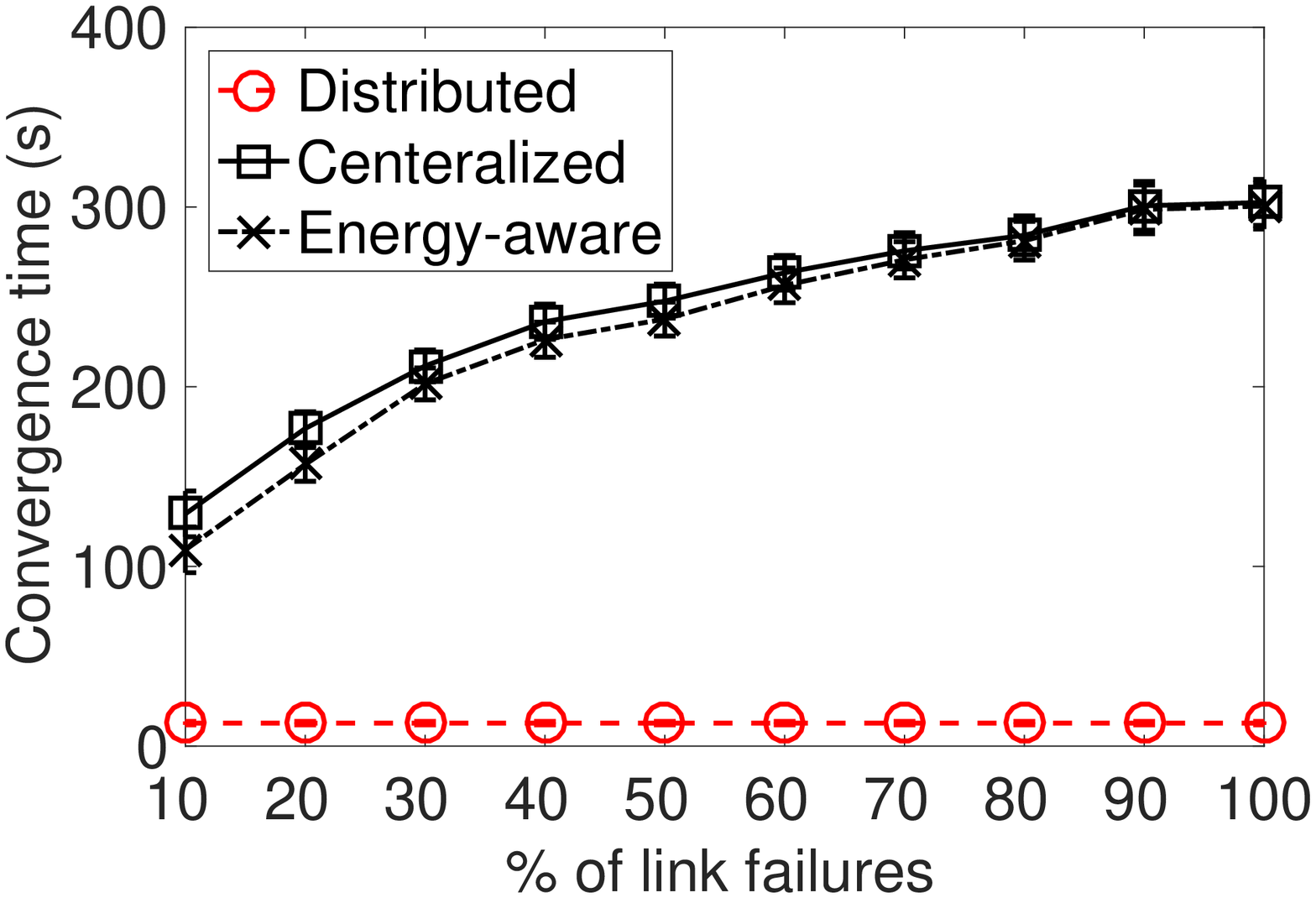}
        \caption{Convergence time}
        \label{ICDCN_routing_fig:time_linkQuality}
    \end{subfigure}
     \quad
    \begin{subfigure}[b]{0.35\textwidth}
        \includegraphics[width=\textwidth]{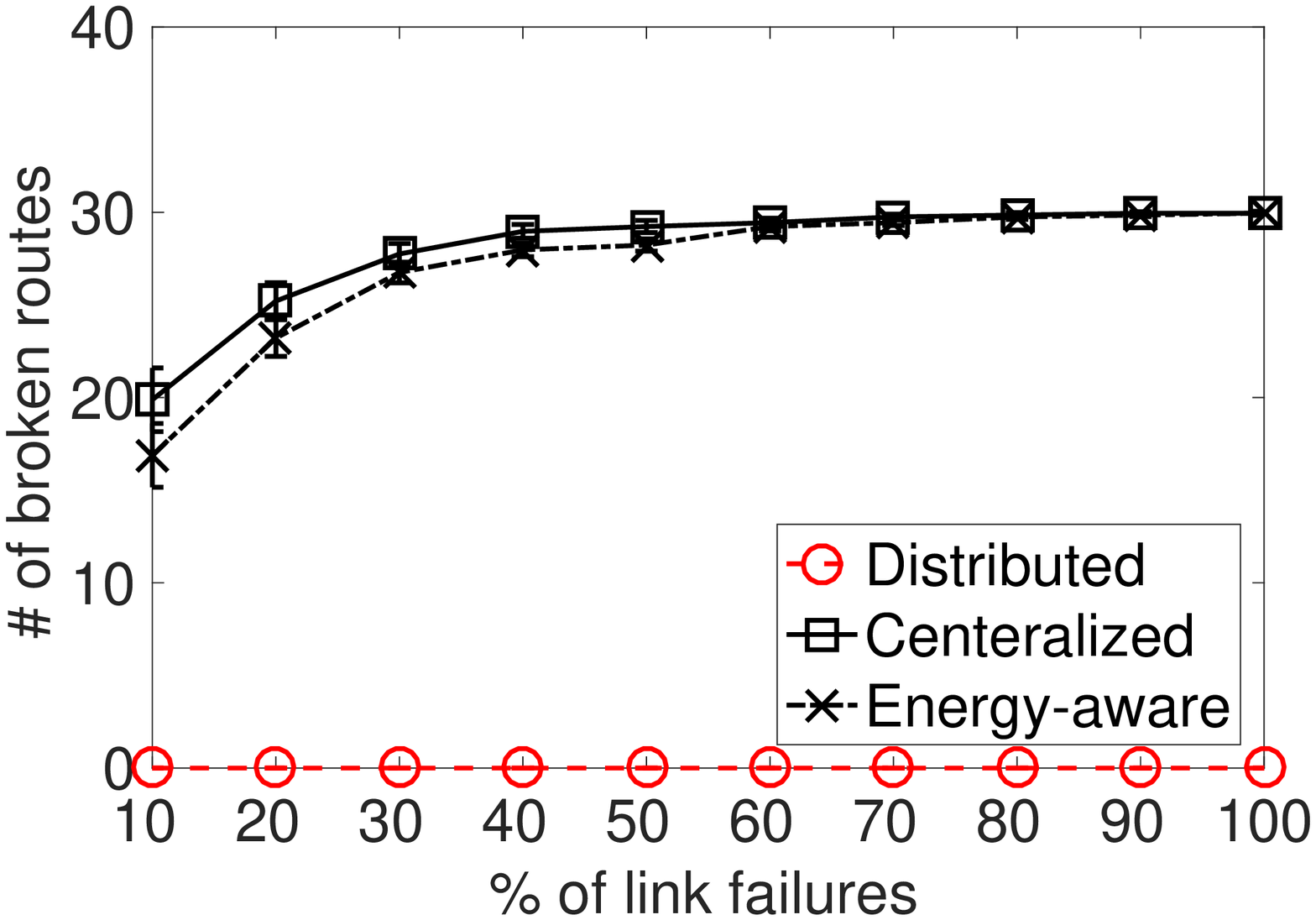}
        \caption{Reliability comparison}
        \label{ICDCN_routing_fig:routes_linkQuality}
    \end{subfigure}
   \caption{Performance of Distributed Graph Routing under varying Percentage of Link Failures}
   \label{ICDCN_routing_fig:linkQuality}
\end{figure}

The proposed distributed algorithm is naturally adaptive to network dynamics. For example, when links break, nodes recalculate routing tables and updates neighbors.  Under volatile link conditions, the node has the option to choose between using existing routes or recomputing the optimal route based on the duration of the failure. In this simulation, we have evaluated the performance of our distributed algorithm under link failures, resulting in the recomputation of routes, when all other parameters as constant. We measure the additional energy and time required to recompute the routes for the distributed algorithm and the centralized and energy-aware algorithms. We also measure the reliability of the protocols by comparing the number of broken routes that need to be recomputed. 

The effect of link failures on the number of broken routes is shown in Fig. \ref{ICDCN_routing_fig:routes_linkQuality}. For the centralized and energy-aware algorithm, the number of disconnected paths increases with the increase in link failures. This accounts for the sharp increase in the number of broken routes for the centralized and energy-aware algorithm. Since the distributed algorithm generates all possible paths from a source to destination, the number of broken routes remains almost constant. These results show that under link failure, distributed routing performs better in terms of reliability by offering multiple paths as back-up paths. 

Fig. \ref{ICDCN_routing_fig:energy_linkQuality} shows the average additional energy consumed at each node. A WirelessHART network manager will generate a new path to replace a broken path such that the reliability of packet transmission is maintained. Thus, additional energy consumed by a node is proportional to the number of paths that are broken in the network. This accounts for the sharp increase in additional energy consumption for the centralized and energy-aware approach. However, for the distributed algorithm, additional energy required was very less since the number of broken paths is very small in the distributed approach. We have observed that the distributed algorithm saves a minimum of $80\%$ energy when compared to the centralized and energy-aware algorithms. Similar to energy, convergence time is also dependent on the number of broken paths, and this value is high for centralized algorithms, as shown in Fig. \ref{ICDCN_routing_fig:time_linkQuality}. We observed a minimum of $80\%$ saving in convergence time.

\subsubsection{Performance under Varying Density of Node Deployment}

\begin{figure}
\centering
\includegraphics[width=0.35\textwidth]{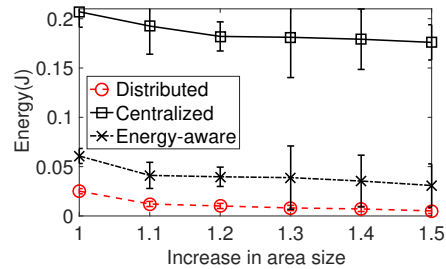}
%\vspace{-0.1in}
\caption{Performance of Distributed Graph Routing under varying Deployment Area}
\label{ICDCN_routing_fig:energy_area} 
%\vspace{-0.1in}

\end{figure}

We evaluated the performance of distributed graph routing under varying density for the same number of nodes by increasing the area. We used random placement in areas ranging from $1.1 - 1.5$ times of the original area to determine the performance of energy with a decrease in the density of nodes.  Fig. \ref{ICDCN_routing_fig:energy_area} shows a decrease in energy consumption with a decrease in the density of the network. In the distributed algorithm, energy decreases due to a decrease in the number of neighboring nodes.  However, in the centralized and energy-aware algorithms, the number of nodes in a routing graph remains almost the same as WirelessHART mandates that every node should have a minimum of two neighbors. Thus, there is a steady decrease in energy consumption for the distributed algorithm. For the centralized and energy-aware algorithms, energy consumption at a node remains the same.

\subsubsection{Performance under Varying Number of Clusters}

\begin{figure}[th]
    \centering
    \begin{subfigure}[b]{0.35\textwidth}
        \includegraphics[width=\textwidth]{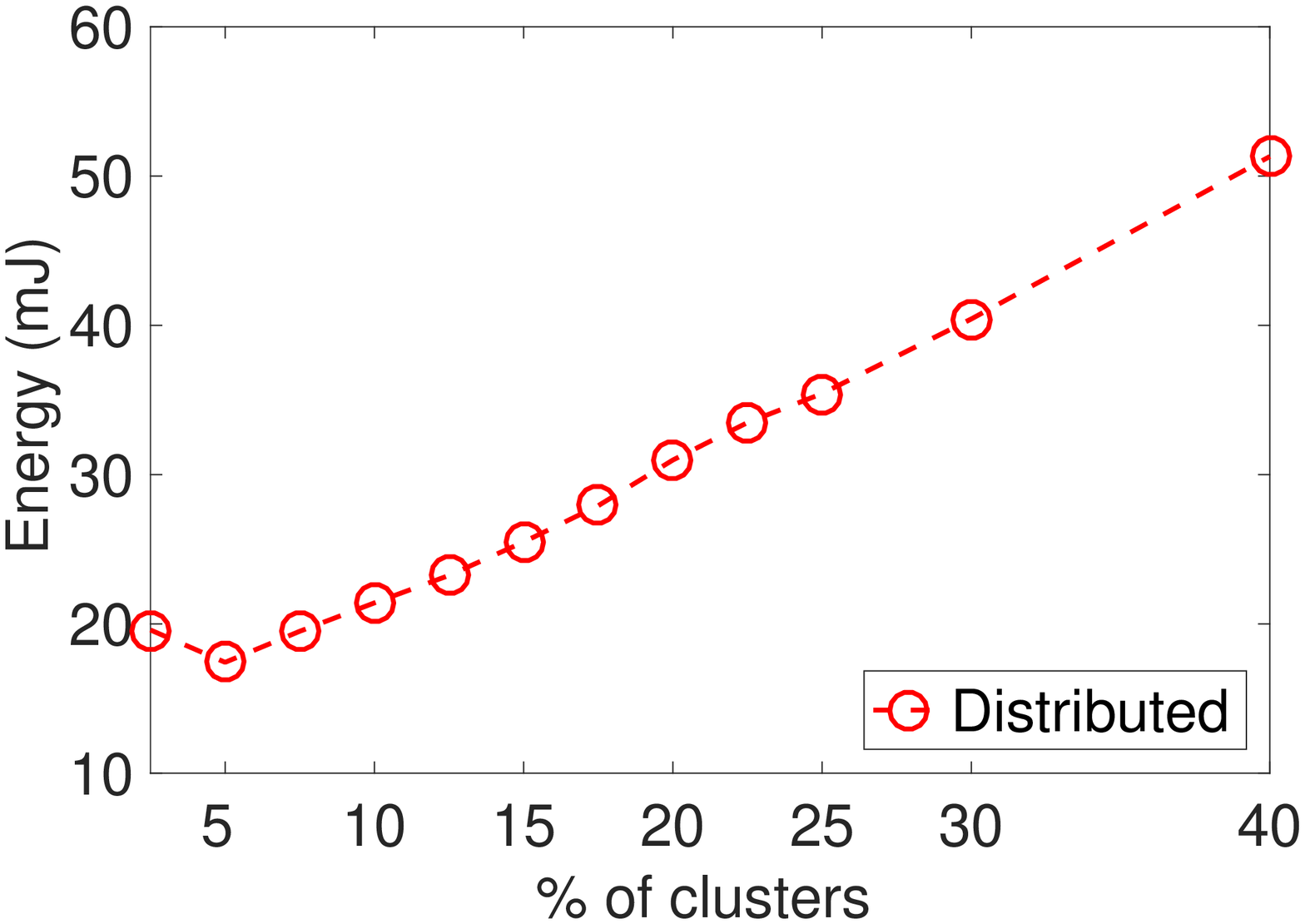}
        \caption{Energy consumption}
        \label{ICDCN_routing_fig:energy_cluster}
    \end{subfigure}
    \quad
    \begin{subfigure}[b]{0.35\textwidth}
        \includegraphics[width=\textwidth]{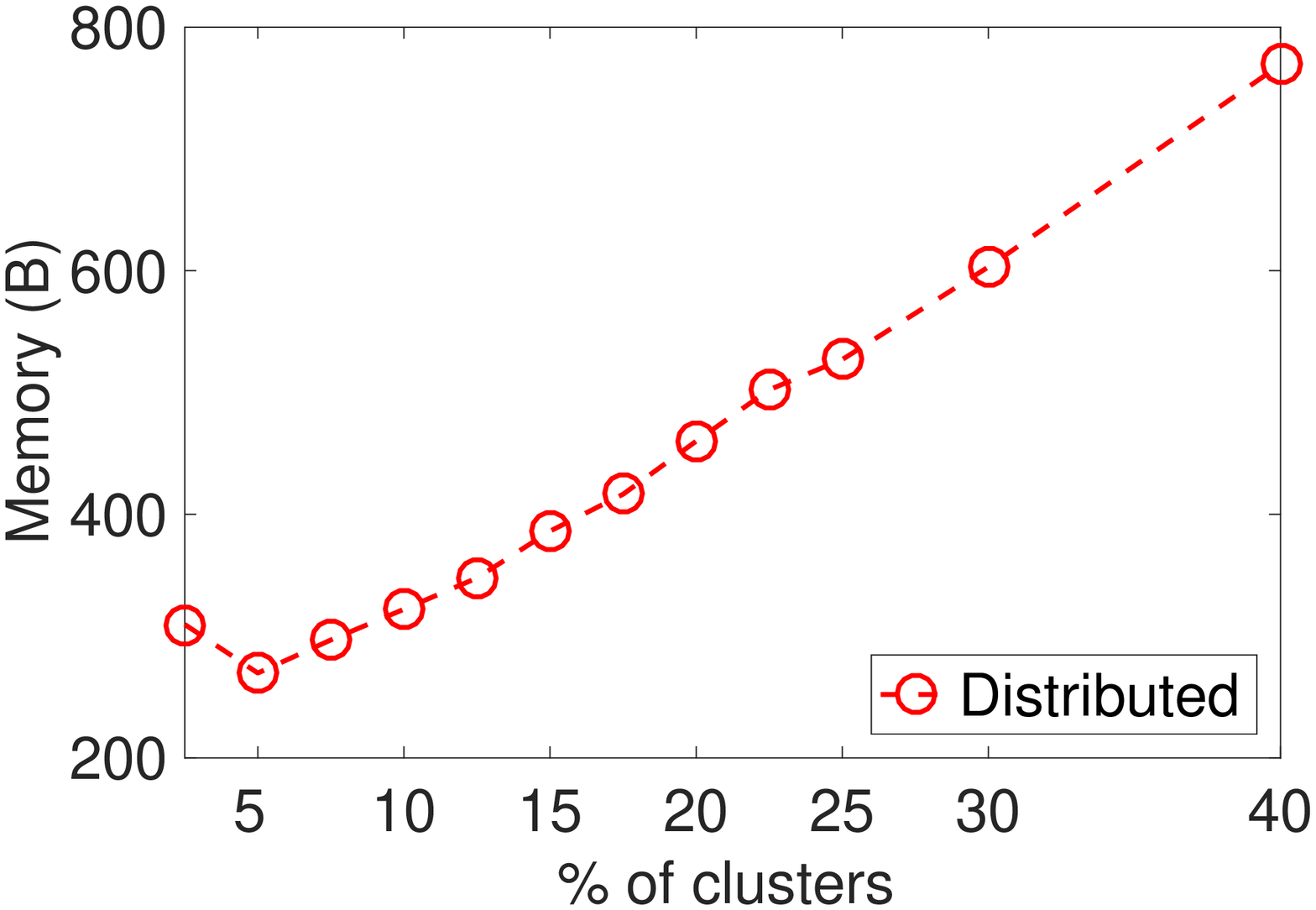}
        \caption{Convergence time}
        \label{ICDCN_routing_fig:memory_cluster}
    \end{subfigure}
   \caption{Performance of Distributed Graph Routing under varying Number of Clusters}
   \label{ICDCN_routing_fig:clusters}
\end{figure}

We evaluated the performance of distributed graph routing under a varying number of clusters. The memory consumption of a node depends on the number of routing graphs to cluster heads. Similarly, the average energy consumption of a node depends on the number of clusters. Fig. \ref{ICDCN_routing_fig:energy_cluster} and Fig. \ref{ICDCN_routing_fig:memory_cluster} show the energy and memory consumption for a varying number of cluster in the network. Initially, as the number of clusters increase, destination nodes are allocated to different clusters, thereby reducing the energy and memory consumed at each node. After the number of the cluster increases beyond $5\%$, energy, and memory consumption also increase in the network. We observe that distributed graph routing requires the least energy and memory when the number of clusters is around $5\%$ of total nodes in the network. This result gives an optimal cluster size for the clustering approach used for evaluation. 

\section{Summary}\label{ICDCN_routing_sec:Conclusion} 
This chapter presents the first distributed graph routing protocol for WirelessHART networks. The proposed approach is an adaptation of the Bellman-Ford algorithm, and it scales well in terms of execution time and energy consumption. We have implemented our protocol in TinyOS and evaluated its effectiveness through both experiments and TOSSIM simulations. Our simulation results using TOSSIM show that it consumes about 86.4\% less energy with 66.1\% reduced convergence time at the cost of 1kB of additional memory compared to the state-of-the-art centralized approaches, thereby, to our knowledge, demonstrating it as the first practical distributed reliable graph routing for WirelessHART networks.

\chapter{A Distributed Real-Time Scheduling System for Industrial Wireless Networks}
\label{ch:distributedhart}
		The concept of Industry 4.0 introduces the unification of industrial Internet-of-Things (IoT), cyber physical systems, and data-driven business modeling to improve production efficiency of the factories. To ensure high production efficiency, Industry 4.0 requires industrial IoT to be adaptable, scalable, real-time and reliable. Recent successful industrial wireless standards such as WirelessHART appeared as a feasible approach for such industrial IoT. For reliable and real-time communication in highly unreliable environments, they adopt a high degree of redundancy. While a high degree of redundancy is crucial to real-time control, it causes a huge waste of energy, bandwidth, and time under a centralized approach, and are therefore less suitable for scalability and handling network dynamics. To address these challenges, we propose DistributedHART - a distributed real-time scheduling system for WirelessHART networks. The essence of our approach is to adopt local (node-level) scheduling through a time window allocation among the nodes that allows each node to schedule its transmissions using a real-time scheduling policy locally and online. DistributedHART obviates the need of creating and disseminating a central global schedule in our approach, and thereby significantly reducing  resource usage and enhancing  the scalability.  To our knowledge, it is the first distributed real-time multi-channel scheduler for WirelessHART. We have implemented DistributedHART and experimented on a 130-node testbed. Our testbed experiments as well as simulations show at least $85\%$ less energy consumption in DistributedHART compared to existing centralized approach while ensuring similar schedulability.

	\section{Introduction}\label{hart_sec:introduction}
The concept of Industry 4.0 introduces the unification of industrial Internet-of-Things, cyber physical systems, and data-driven business modeling to improve production efficiency of the factories~\cite{iiot}. To ensure high production efficiency, Industry 4.0 requires industrial Internet-of-Things to be adaptable, scalable, real-time and reliable. Recent successful industrial wireless standards such as WirelessHART have shown their feasibility as a cost-efficient, real-time, and robust approach for industrial Internet-of-Things~\cite{distributedhart}. 

To make reliable and real-time communication in highly unreliable wireless environments, WirelessHART adopts a high degree of redundancy using a Time Division Multiple Access (TDMA) based Media Access Control (MAC) protocol. A time slot can be either {\em dedicated} (i.e., a time slot when at most one transmission is scheduled to a receiver) or {\em shared} (i.e., a time slot when multiple nodes may contend to send to a common receiver). To handle transmission failures, each node on a path from a sensor to an actuator is assigned two dedicated time slots and a third shared slot on a separate path for retransmission~\cite{WirelessHART2007_standard}. A network manager creates the transmission schedule {\bf centrally} and {in advance} for all nodes and then disseminates them. A centralized WirelessHART scheduler with high redundancy raises several practical challenges in achieving scalability as described below.

High level of redundancy in centralized algorithms~\cite{RTSS10paper, RTSS15paper} causes a huge waste of time and bandwidth, and hence is not scalable. For example, if the transmission of a packet along a particular link succeeds, all time slots (on the current link and redundant links) that were assigned to handle its failure remain unused. Similarly, if it fails along that particular link, all time slots that were assigned for its subsequent links to handle a successful transmission remain unused. Our experiments observed up to 70\% unused time slots in WirelessHART networks (see Section~\ref{hart_sec:ProposedMethod}). 

Furthermore, there can be events or emergencies that occur unpredictably or aperiodically. For example, a WirelessHART network in an oil-refinery may suddenly detect a safety valve displacement requiring immediate attention to avoid accidents. Existing solution handles emergencies by allocating time slots in the centrally created schedule and by stealing slots in the absence of emergencies~\cite{li2015incorporating}.  However, this approach leaves most of the slots of the periodic server unstolen, and hence unused. Thus, the network remains largely underutilized which affects the scalability of the system. 

Schedule dissemination in centralized algorithm consumes bandwidth, energy, and time, even for a smaller network or a smaller workload. Typically, hyper-period and length of the schedule increase exponentially with the increase in the number of flows or their periods, which hinders the scalability of the network. Note that, in general, periods can be non-harmonic to ensure stability or control performance~\cite{RTAS2012}. Furthermore, the mobility of nodes introduces discernible issues for a central scheduler due to the frequent changes to the network topology. In an industrial environment, moving objects like robotic arms or carts can affect link quality of nodes and change the topology of the network. Such frequent changes to the topology require frequent computation and re-dissemination of schedules. Nonetheless, the data-driven business model in Industry 4.0 introduces frequent changes to sampling rates, which also requires re-configuration and re-dissemination of schedules. Frequent re-dissemination of the schedule consumes high energy, time, and bandwidth.  Thus, fully centralized scheduling is less suitable for industrial Internet-of-Things, which considers the mobility of nodes and interfering objects. Besides, it is typically suitable for deterministic traffic patterns (like periodic traffic) arising from stationary nodes.

To address the above limitations, in this paper, we propose  a distributed real-time scheduling system for WirelessHART networks. Designing a distributed TDMA protocol with scheduling performance  close to a centralized one is highly challenging as the former has to achieve this without global knowledge. For a WirelessHART network, a distributed TDMA protocol also has to incorporate dedicated and shared slots in local scheduling.  We address these challenges by proposing DistributedHART. 
We make the following contributions in the paper. 
\begin{itemize}
\item We propose DistributedHART, the first {\bf \slshape distributed} real-time multi-channel scheduling for WirelessHART networks. DistributedHART adopts local (node-level) scheduling through a time window allocation among the nodes that allows each node to schedule its transmissions locally and online.   Thus, DistributedHART can handle any communication pattern (periodic or aperiodic) and any length of schedule. It obviates the need for creating and disseminating a global schedule.
\item We provide a schedulability test for DistributedHART that can be used to determine the real-time performance of a WirelessHART network with a high probability.
\item We have implemented DistributedHART in TinyOS~\cite{tinyos} for TelosB~\cite{energyTelosb} platform and performed experiments on a 130-node physical indoor testbed~\cite{testbedwsu} to show the effectiveness of DistributedHART. To consider more experimental scenarios, we also evaluated DistributedHART through simulations on TOSSIM~\cite{tossim} using the topology of another testbed~\cite{sha2015implementation}. In both experiments and simulations, we observe at least $85\%$ less energy consumption in  DistributedHART compared to existing centralized approach. 
\end{itemize}

DistributedHART enables local scheduling of packets at nodes through distributed time window allocation to the nodes. Thus, DistributedHART efficiently handles network and workload dynamics and obviates the need of creating a schedule centrally and disseminating it repeatedly. Furthermore, it significantly reduces the energy consumption of the devices in the network and provides scalability.

Section~\ref{hart_sec:RelatedWorks} reviews related work. Section~\ref{hart_sec:sysModel} describes the model. Section~\ref{hart_sec:ProposedMethod} describes the design of DistributedHART under the assumption that length of all time windows is constant and pre-determined. Section~\ref{hart_sec:Delay} presents the end-to-end delay analysis for DistributedHART with the same assumption. Section~\ref{hart_sec:NonUniform} describes non-uniform time window allocation for DistributedHART, and the changes to different protocols to ensure correct operation of DistributedHART. Section~\ref{hart_sec:NonUniform} also describes the changes to end-to-end latency due to non-uniform time window assignment. Section~\ref{hart_sec:latency} presents latency performance of DistributedHART and an algorithm to address the latency limitations. Sections~\ref{hart_sec:experiments} and~\ref{hart_sec:simulations} present experiments and simulations, respectively.  Section~\ref{hart_sec:Conclusion}  presents the conclusion and future work. 
	\section{Related Work} \label{hart_sec:RelatedWorks}

Existing work in \cite{Stankovic2003_Realtime} explored the real-time scheduling for wireless networks. 
CSMA/CA based real-time scheduling has been studied in \cite{Karenos2006_real, li2005scheduling, Wang2009_FlowbasedRealTimeMultichannel, Kanodia2001_Distributed, Lu2002_RAP, He2007robust}. In contrast, WirelessHART adopts a TDMA-based protocol to achieve predictable latency bounds.
TDMA-based real-time scheduling without multi-channel communication or multi-path graph routing was studied in \cite{dujovne20146tisch, zimmerling2017adaptive, Liu2006_JiTS,   GuRTSS, Mangharam2006_Voice, palattella20146tisch, terraneo2018tdmh}.   
Real-time scheduling for data collection in WirelessHART network under tree topology was studied in \cite{Soldati2009_WirelessHARTevacuation, Zhang2009_WirelessHARTRapid}.
Real-time routing was studied in \cite{songRTAS11, wu2016maximizing, modekurthy2018distributed, wu2018real, kunal2019adaptive}. Schedule modeling for a WirelessHART network was studied in \cite{upenn}. Priority assignment of packets in WirelessHART network was studied in \cite{ECRTSpaper}. Channel assignment to nodes in a WirelessHART network was studied in \cite{channelallocation, gunatilaka2017impacts}. Security vulnerabilities of channel hopping sequence for WirelessHART was studied in \cite{cheng2019cracking}. Schedulability analysis for industrial wireless networks was studied in \cite{RTAS11paper, TC, RTSS15paper, modekurthy2018utilization}. These works did not focus on the real-time scheduling of packets.

Existing work in \cite{RTSS10paper} showed that the real-time scheduling for flows in WirelessHART networks is NP-hard and proposed real-time scheduling policies for WirelessHART. A flexible retransmission policy for WirelessHART networks was proposed in \cite{brummet2018flexible}. Scheduling under multiple co-existing wirelessHART networks was studied in \cite{jin2017reliability}. Mobility aware real-time scheduling of packets was studied in \cite{dezfouli2016mobility, dezfouli2016real}. These papers adopt a fully centralized scheduler that creates a schedule in advance, and they rely on the current WirelessHART scheduling approach with high redundancy. Such an approach causes a huge waste of time, bandwidth, energy,  and memory, making it less suitable for dynamics and scalability. In this paper, we aim to address these limitations and propose an online and distributed real-time scheduling system for WirelessHART.

Orchestra \cite{duquennoy2015orchestra}, D$^2$-PaS \cite{song2018distributed, zhang2018fd, song2018distributed} and DiGS \cite{shidigs, shi2019distributed}  are the recent distributed scheduling approaches for a multi-hop wireless network. However, they have the following limitations. First, they only consider a single channel protocol while WirelessHART uses multiple channels. Second, they do not consider shared slots while WirelessHART adopts graph routing with both dedicated and shared slot transmissions. In Orchestra and DiGS, the end-to-end communication latency of a flow is in the order of the number of nodes in the network. Such a high latency is less suitable for real-time communications. Due to these limitations, Orchestra, D$^2$-PaS, and DiGS are less suitable for WirelessHART. In contrast, DistributedHART is a practical scheduling system for WirelessHART that considers multichannel and graph routing, which are highly critical for wireless control applications in unreliable environments and is not limited to sparse traffic.
	\section{Background and System Model}  \label{hart_sec:sysModel}
WirelessHART networks operate in the 2.4GHz band and are built based on the physical layer of IEEE 802.15.4. They form a multi-hop mesh topology of nodes - field devices, multiple access points, and a Gateway. The {\slshape field devices} are wirelessly networked sensors and actuators. Each node contains a {\slshape half-duplex} omnidirectional radio transceiver that cannot both transmit and receive a packet at the same time and can receive from at most one sender at a time. {\slshape Access points} provide redundant paths between the wireless network and the Gateway. The {\slshape network manager} and the controller remain at the {\slshape Gateway}. The network employs feedback control loops between sensors and actuators.  Sensors measure process variables and deliver to a controller. The controller sends control commands to the actuators  which then operate the control components 
to adjust the physical processes.

Transmissions in a WirelessHART network are scheduled based on a multi-channel TDMA protocol. The network employs the channels defined in IEEE 802.15.4.  In large networks spread over a wide area, two distant nodes (which do not interfere with each other) can use the same channel in parallel, i.e., we allow spatial re-use of channels. Time in the network is globally synchronized. A receiver acknowledges each transmission from a sender. Both, a transmission and its acknowledgment should happen in one 10ms time slot. A transmission time slot can be dedicated for a receiver and a sender, or shared between multiple senders and a receiver. In a {\slshape dedicated slot}, only one sender is allowed to transmit to a receiver. In a {\slshape shared slot}, multiple senders can attempt to send to a common receiver. To mitigate collisions in a shared slot, a WirelessHART network adopts the random back-off policy according to the standard.

For enhanced reliability, the network adopts {\slshape graph routing}~\cite{WirelessHART2007_standard}. A {\em routing graph} is a directed list of loop-free paths between a source and a destination. Each node in a routing graph has at least two neighbors that provide redundant paths to a destination. Graph routing allows to schedule a packet using multiple channels on multiple time slots to deliver a packet through multiple paths, thereby ensuring high reliability in highly unreliable environments. A routing graph consists of an uplink graph and multiple downlink graphs. An uplink graph connects all sensors to controllers while a downlink graph connects a controller to an actuator. 

We consider there are $n$ real-time flows in the system denoted by $ F = \{F_1, F_2, . . . , F_n\}$. The period and deadline of a flow $F_i$ are denoted by $T_i$ and $D_i$, respectively, where $D_i\le T_i$. Our system is applicable to fixed or dynamic priority assignment.  In practice, flows may be prioritized based on deadlines, periods, or criticalities of the loops. In this chapter, we use {\em flow} and {\em control loop} interchangeably.

Here we give an outline of the current centralized scheduling approach adopted in WirelessHART networks. For a control loop scheduling in the uplink graph, the network manager allocates two dedicated slots for each device on the primary path of a graph route starting from the source. The second dedicated slot on the same path handles retransmissions in case of transmission failures on the first dedicated slot.  Then, to handle failures of both transmissions along a link, it allocates a third shared slot on a separate path to handle another retry. The links in the downlink graph are scheduled similarly. 

In a centralized scheduling approach, a central manager creates a global schedule in advance, which is split into superframes. A  {\slshape superframe} is a series of time slots representing the communication pattern of a set of nodes and it repeats after the completion of the series. The manager disseminates the superframes among the nodes.

\begin{figure}[h]
	%\vspace{-0.15in}
	\centering
	\includegraphics[width=0.32\textwidth]{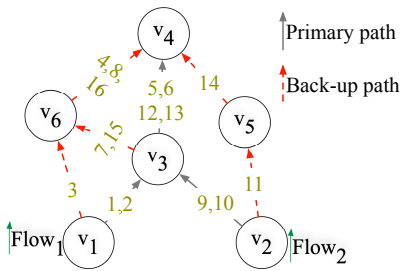}
	%\vspace{-0.20in} 
	\caption{Example of Scheduling in WirelessHART}
	\label{hart_fig:scheduling}
	%\vspace{-0.1in}
\end{figure}

Fig.~\ref{hart_fig:scheduling} shows an example of transmission scheduling of two flows $\text{Flow}_\text{1}$ and $\text{Flow}_\text{2}$ from node $\text{v}_\text{1}$ and $\text{v}_\text{1}$, respectively, to an access point $\text{v}_\text{4}$ in a network of $6$ nodes. In Fig.~\ref{hart_fig:scheduling}, the label on a link refers to its transmission time slot. The primary path for $\text{Flow}_\text{1}$ is $\text{v}_\text{1} \rightarrow \text{v}_\text{3} \rightarrow \text{v}_\text{6}$, and the primary path for $\text{Flow}_\text{2}$ is $\text{v}_\text{2} \rightarrow \text{v}_\text{3} \rightarrow \text{v}_\text{6}$. Fig.~\ref{hart_fig:scheduling} shows dedicated transmission links using a solid gray line and shared slots by a dashed red line. Each link on the primary path is allocated two dedicated transmission time slots. Link $\text{v}_\text{1} \rightarrow \text{v}_\text{3}$ and node $\text{v}_\text{2} \rightarrow \text{v}_\text{3}$ are scheduled to use slots 1,2 and 9,10, respectively, for dedicated transmission of packets. Similarly, link $\text{v}_\text{3} \rightarrow \text{v}_\text{4}$ is scheduled time slots $5,6$ to transmit packets of flow $\text{Flow}_\text{1}$ and time slots $12, 13$ to transmit packets of flow $\text{Flow}_\text{2}$. A similar approach is used to schedule packets on the shared links. Here, the length of the superframe is 16 time slots, i.e., the communication schedule repeats after every 16 time slots.

In this work,  our objective is to develop a real-time distributed scheduling system where each node can locally schedule its transmissions. Generating routes is not our focus. We generate routes using the distributed graph routing algorithm proposed in~\cite{modekurthy2018distributed}, however DistributedHART works with any graph routing algorithm.
	 \section{The Design of DistributedHART} \label{hart_sec:ProposedMethod}
In the existing centralized scheduling approaches, the network is largely underutilized. For example, in Fig. \ref{hart_fig:scheduling}, if the packets from both flows $\text{Flow}_\text{1}$ and $\text{Flow}_\text{2}$ are successful in the first attempt, 4 time slots (namely 1, 5, 9, and 10) are used out of 16 pre-allocated slots. Although the central manager assigns redundant slots for worst-case scenarios, it leaves 75\% of the slots  unusable under good network conditions. 

Since a transmission time slot is associated with a link and a flow, disseminating the schedule would require several messages. Furthermore, each message would require $20$ to $30$ms for dissemination through the network. In the event of network/workload dynamics, the schedule re-computation and re-dissemination can cause long delays and consume very high energy.
Fig. \ref{hart_fig:scheduling} shows an example of transmission scheduling from node $a$ to an access point (AP) for one control loop in a network of 4 nodes. When there is no transmission failure in the network, the packet will use slot 1 and 3 to reach $AP$. 

\begin{figure}[!h]
	\centering  
	\includegraphics[width=0.4\textwidth]{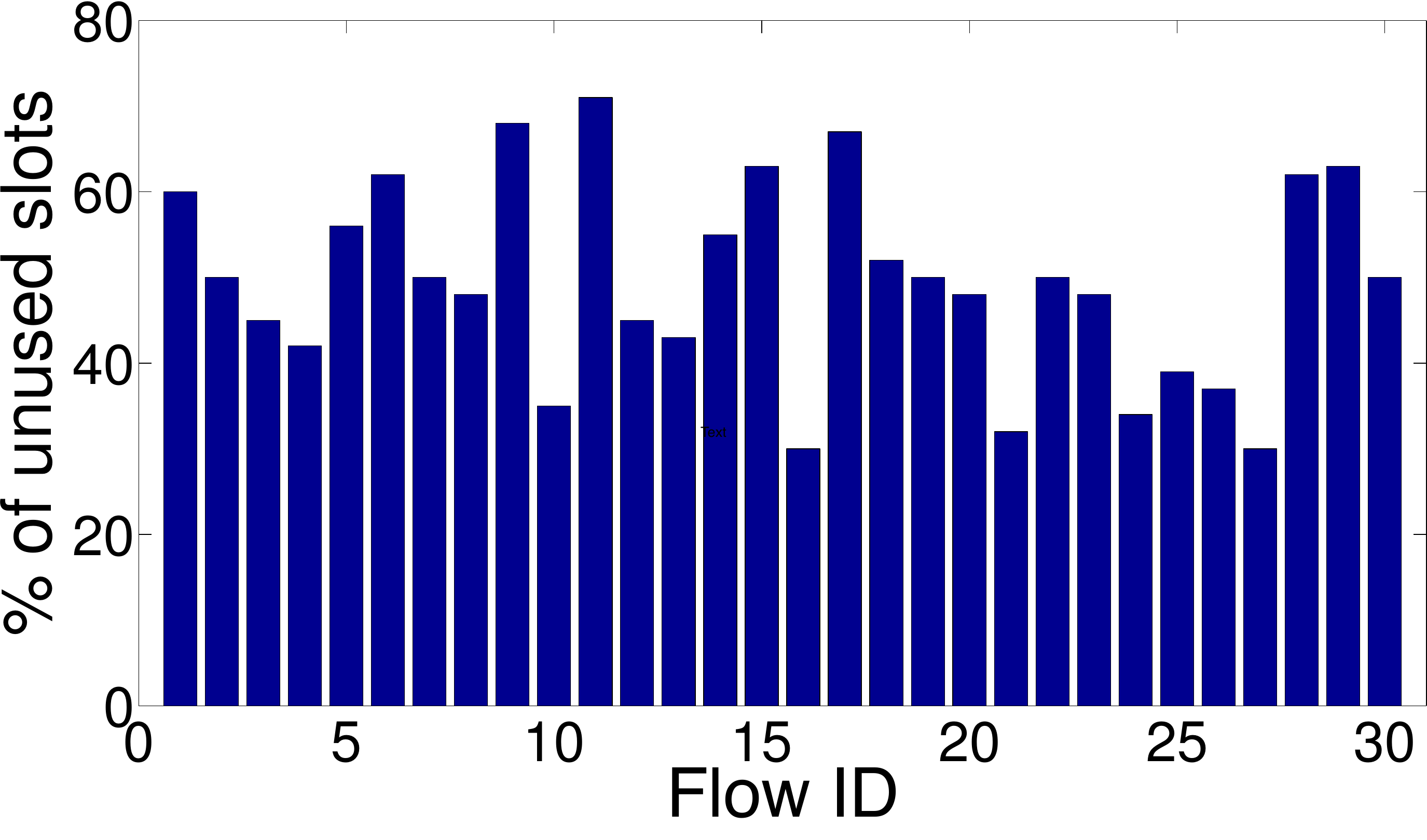}
   	\caption{Percentage of Unused Slots per Flow}
	\label{hart_fig:unused}
\end{figure}

In an experiment conducted on 30 flows on a testbed of 69 nodes, work in \cite{RTSS15paper} observed that 70\% of the slots were unused for a flow in a run. Thus, the network remains largely underutilized.  Although redundancy is crucial to real-time control for handling worst-case scenarios, such scheduling with high redundancy causes a huge waste of energy, bandwidth, time,  memory (to store schedule), and is less suitable for network/workload dynamics and scalability. 

To address these issues, we propose DistributedHART which offers a distributed scheduling system for WirelessHART networks. We describe the design of DistributedHART below.

\subsection{Distributed Scheduling}

The essence of our approach is to enable local and online scheduling at the nodes. To do so, in DistributedHART, we propose to assign time windows (a collection of time slots) to nodes rather than assigning transmission time slots to flows. In this Section, for the sake of simplicity in explanation, we use a uniform time window selection where each node selects a time window of length $w$ time slots. In DistributedHART, during a node's transmission time window, it locally selects and transmits an available packet, from its queue, for transmission based on a real-time scheduling policy.

In DistributedHART, a node locally chooses the number of redundant slots a packet needs for successful transmission to the next node. If a packet transmission is successful on the first attempt, a node can use the rest of the time slots in its window to transmit other packets in its queue (i.e., the number of redundant slots needed is $0$).  If a packet transmission is not successful in the first attempt, a node can re-transmit a packet on three time slots, as specified by the WirelessHART standard. Furthermore, if a node's queue is empty, it can re-transmit a packet on more than three time slots.  

In DistributedHART, transmission time windows repeat after a fixed interval. Thus, nodes in DistributedHART can periodically transmit all packets within its queue to their respective destinations. We define an {\slshape epoch}, as the interval after which the time windows repeat. Assuming that the network requires $\gamma$ unique time window allocations, the length of an epoch is given as $\gamma \times w$.

\begin{figure}[ht]
\centering
%\vspace{-0.1in}
\includegraphics[width=0.3\textwidth]{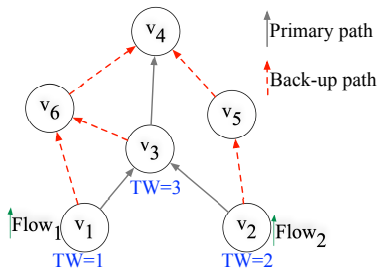}
%\vspace{-0.15in}
\caption{Example of Time Window Allocation in DistributedHART}
\label{hart_fig:exDistHart}
%\vspace{-0.25in}
\end{figure}

\begin{figure}[ht]
\centering
\includegraphics[width=0.49\textwidth]{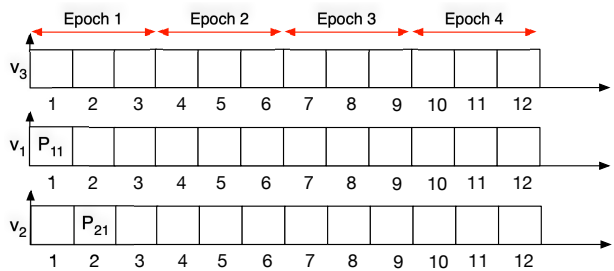}
%\vspace{-0.3in}
\caption{Example of Packet Scheduling in DistributedHART}
\label{hart_fig:exDistHart_scheduling}
%\vspace{-0.1in}
\end{figure}

Fig. \ref{hart_fig:exDistHart_scheduling} show a possible time window selection for nodes in the primary path of a flow for the network shown in Fig. \ref{hart_fig:exDistHart}. For this time window selection, $w = 1$ and $\gamma = 3$, and hence, the length of the epoch is 3. During time slot 1, node $\text{v}_\text{1}$ can transmit a packet of flow $\text{Flow}_{1}$ to node $\text{v}_\text{3}$. Similarly, during time slot 2, node $\text{v}_\text{2}$ can transmit a packet of flow $\text{Flow}_{2}$ to node $\text{v}_\text{3}$. If the packet transmission from $\text{v}_\text{2}$ to $\text{v}_\text{3}$ is not successful, $\text{v}_\text{2}$ can use time slot $5$ of epoch 2 to make a second dedicated transmission to $\text{v}_\text{2}$. If the packet transmission from $\text{v}_\text{2}$ to $\text{v}_\text{3}$ is successful, node $\text{v}_\text{3}$ has 2 packets in its queue, and $\text{v}_\text{3}$ can use any real-time scheduling policy to determine the next packet to transmit.

Since packet scheduling in not pre-determined in DistributedHART, a node can use any unused time slot within its transmission time window to transmit packet/s from new/existing flows. Thus, local and online scheduling in DistributedHART improves network utilization, scalability, and reliability. Furthermore, the local scheduling of packets (within a window) obviates the need for creating and distributing a schedule in advance. Thus, DistributedHART consumes less energy, memory, and bandwidth (even under frequent dynamics) when compared to centralized algorithms.

To ensure reliability in communication and minimize collisions in the network, DistributedHART generates a conflict-free transmission time window and channel allocation.
We consider that a set of transmissions on the same channel is  {\bf\slshape conflict-free} if the Signal-to-Noise plus Interference Ratio (SNIR) of all receivers exceeds a threshold. In such a model, we say that two nodes $a$ and $b$ are  conflict-free if both receivers can successfully receive a packer. Similarly, we say that two nodes $a$ and $b$ are  {\bf\slshape in conflict} if and only if simultaneous transmissions from $a$ and $b$ cause radio interference at a receiver. To minimize such conflicts, each node first collects an interference model of the network using Signal-to-Noise plus Interference Ratio (SNIR) such as the RID protocol \cite{zhou2005rid}.

Using the interference model, each node performs a receiver based channel allocation based on vertex coloring proposed in \cite{rhee2006drand}.
After channel allocation, each node performs time window allocation using distributed vertex coloring. In time window allocation, two nodes are assigned different time windows if they remain in conflict even after channel allocation. DistributedHART allows spatial reuse where many non-conflicting nodes will have the same time window and transmit simultaneously. 

After the time window allocation, each node is aware of its transmission time windows and all its neighbors transmission time windows. Thus, each node knows when to expect a packet and when to send a packet. This information is useful to determine when a node can go to sleep.

In DistributedHART, nodes execute distributed channel and time window allocation during network initialization and under some network dynamics (e.g., when routes are affected). Workload dynamics or some network dynamics (e.g., that does not affect routes) will not trigger these algorithms, keeping the overhead of DistributedHART low.

Note that, DistributedHART is a novel scheduling system for WirelessHART. DistributedHART proposes local and online scheduling of packets, where a node decides which packet to transmit within its transmission time window rather than a central manager. To generate nodes' transmission time window, DistributedHART uses existing algorithms such as RID (to generate a conflict graph) and DRAND (to perform vertex coloring on the conflict graph). Although these algorithms are important to the working for DistributedHART, they individually do not provide a real-time and reliable scheduling policy for industrial wireless networks, which is provided by DistributedHART. Furthermore, DistributedHART is compatible with any distributed conflict graph generation methods and distributed vertex coloring algorithms for the generation of transmission time windows.

 We describe channel and time window allocation, local scheduling policy, and online scheduling as a dedicated and shared slot in the following sections.  

\subsection*{Channel and Time Window Allocation:} 

The first step in channel and time window allocation is to generate an interference model. This interference model is used to generate two types of conflict graph a receiver conflict graph and a transmission conflict graph. On a receiver conflict graph, we perform a receiver based channel allocation. A channel allocation does not necessarily solve all conflicts. To remove all conflicts, we perform time window allocation on a transmitter conflict graph. This section describes these steps.

\textbf{Conflict Graph Generation. } The objective of generating a conflict graph is only to identify nodes that conflict with each other. DistributedHART uses RID protocol \cite{zhou2005rid}, which generates a conflict graph in a distributed approach without generating any tree structure. The first step in conflict graph generation is for each node to generate a physical interference model of the network using Signal-to-Noise plus Interference Ratio (SNIR). In a physical interference model, a node maintains the SNIR value of each transmitter in its neighborhood. Neighboring nodes exchange their physical interference model, to compute a conflict graph, as specified in the RID protocol. Note that both WirelessHART and DistributedHART network assume each node maintains a physical interference of the network to maintain neighbors and generate routes. Thus, the additional overhead in DistributedHART is to exchange the physical interference model with neighbors. Consequently, a conflict graph construction method is distributed and efficient in practice.

\textbf{Receiver Based Channel Allocation. } In a receiver based channel allocation, all nodes that receive a packet are assigned a channel. Any transmission to this node is made on the allocated channel. To maximize the number of simultaneous transmissions, two receiver nodes, where simultaneous transmission on the same channel interferes each other, are allocated different channels. An optimal channel allocation that maximizes the number of simultaneous transmissions is known to be NP-Hard \cite{ghosh2009multi}. To generate a channel assignment, we use a distributed heuristic where we use distributed vertex coloring on a receiver conflict graph. We define a \textit{receiver conflict graph} as a graph (over all nodes) in which two nodes are connected by an edge if and only if a packet transmission to one node interferes the other. We use DRAND \cite{rhee2006drand} to perform distributed vertex coloring.

\textbf{Time Window Allocation. } To remove all conflicts, we perform a time window allocation to each transmitting node. We assign time windows to nodes such that they can transmit to any of its neighbors without interfering (or being interfered by) other nodes. A node transmits a packet during its time window on the receiver's channel. To allocate time windows, we first represent all remaining conflicts using a transmitter conflict graph. In a \textit{transmitter conflict graph}, two transmitters have an edge if simultaneous transmissions by both transmitters can lead to a collision at one of the intended recipients.  We use distributed vertex coloring using DRAND \cite{rhee2006drand} on transmitter conflict graph to compute non-conflicting transmission windows at each transmitter node. 

\subsection*{Handling Dynamics:}
An industrial wireless network can be highly unpredictable, with frequent changes to the networks. In DistributedHART, when the packet transmission fails frequently, nodes use the management and maintenance cycle to generate the new physical interference model. The new physical interference model is reported to the central network manager. Nodes/network manager (in case of a centralized routing algorithm) compute/computes the new routes based on the collected physical interference model. Nodes exchange the collected physical interference model to generate new receiver and transmission conflict graphs. If the current channel and time window allocation generates a conflict-free transmission time window and channel allocation, then no action is taken. However, if some nodes conflict with each other with the current assignment, then these nodes execute the distributed vertex coloring algorithm to compute the conflict-free channel and transmission time windows.

In centralized routing, the nodes collect and maintain an interference model during management and maintenance cycles. The network manager collects the interference model to generate routes and schedules for nodes. The network manager then disseminates the scheduler to all the nodes.

Exchanging physical interference model between neighbors and executing distributed vertex coloring for channel and time window allocation is typically less energy consuming when compared to schedule re-dissemination in a large network. Thus, the overhead of generating a schedule in DistributedHART under network dynamics is smaller than that of existing centralized WirelessHART algorithms, and hence, handling network dynamics is a key advantage of DistributedHART.

\subsection{Scheduling Policy} 
DistributedHART can work with any type of priorities - fixed or dynamic.  To explain local scheduling policy for dynamic priority, we consider {\bf\slshape EDF (Earliest Deadline First)} as an example here. EDF assigns priorities dynamically to packets according to their absolute deadlines. 
Since, in our method, we adopt node-level scheduling, each node has to adopt EDF policy locally. Namely, among the packets that it has to transmit or forward, the one with the shortest absolute deadline will have the highest priority. 

\begin{figure}[ht]
    	\centering
	\includegraphics[width=0.49\textwidth]{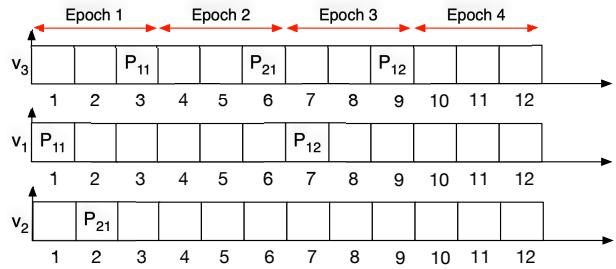}
	%\vspace{-0.3in}
	\caption{Example of Local Scheduling in DistributedHART} 
	\label{hart_fig:localScheduling}
	%\vspace{-0.1in}
\end{figure}

An example of local scheduling at each node is shown in Fig. \ref{hart_fig:localScheduling}, where $P_{11}, P_{12}$ represent the first and second packet of flow $F_1$, respectively. At time window $1$, $v_1$ transmits packet $P_{11}$ to node $v_3$. At time window $2$, $v_2$ transmits $P_{21}$ to node $v_3$. At time window $3$, node $v_3$ has packets $P_{11}$ and $P_{21}$ from flows $F_1$ and $F_2$. Based on the EDF policy, node $v_3$ selects packet $P_{11}$ for transmission on time window $3$ and packet $P_{21}$ remains in the until the next epoch.

To explain local scheduling policy for fixed priority scheduling, we consider {\bf\slshape Deadline Monotonic (DM)} policy as an example here. DM assigns priorities to flows according to their deadlines. The flow with the shortest deadline acquires the highest priority. If the deadline is equal to the period, the schedule generated by rate monotonic policy and DM are the same.

Since source nodes can update their sampling period/deadline and aperiodic events can occur (having their own deadlines), we do not rely on the network manager to assign priorities. Instead, the source node will append the period (or deadline) information in the packet.  Thus, every intermediate node will know the priorities of the packets at its buffer and schedule accordingly. Thus, the network can handle changes due to plant/workload dynamics locally, and the manager need not update the entire schedule. Management and diagnostic superframes can run in parallel with the highest priority.

\subsection{Online Scheduling as Dedicated and Shared Slots} 
A key \textbf{challenge} for DistributedHART is to incorporate both dedicated and shared slots. WirelessHART standard defines shared slot as a time slot, where many nodes transmit simultaneously to the same node. We adopt the following technique to handle shared and dedicated slots.  

A node can use a slot within its time window as a dedicated or a shared slot, and a node can also use time slots outside its time window as a shared slot.  A node $v_1$ starts the transmission at the beginning of the slot after channel setting if it intends to use it as a dedicated slot. If node $v_1$'s transmission on 2 dedicated slots fails, then it does not need to wait for an explicitly assigned shared slot. A shared slot can be either in its assigned window or outside it. In either case, it waits for some time $\theta$ in the slot during which it keeps sensing the channel.  If the channel is busy, then it concludes that the corresponding receiver is involved with a dedicated slot, and it leaves the slot. If it does not sense any busy channel within $\theta$ time, then it can use it as a shared slot. If the current slot is within its transmission window, then waiting for $\theta$  time allows other nodes to use it as shared. If the current slot is outside its transmission window, then waiting for $\theta$ allows it to know if other nodes are using it as a dedicated slot. In either case, the node makes a small random back-off before transmitting. 

In some cases, a hidden terminal $v_2$ may transmit to $v_3$ as dedicated slot without $v_1$ knowing it. In these scenarios, we need to ensure that $v_2$'s packet ($P_2$) is received correctly as it was transmitted in $v_2$'s dedicated slot. We enable {\bf capture effect} \cite{captureeffect}  of the radio at the receiver to avoid/handle collision between $P_2$ and $P_1$ (packet sent by $v_1$). 

\begin{figure}[ht]
	\centering
	%\vspace{-0.15in}
	\includegraphics[width=0.49\textwidth]{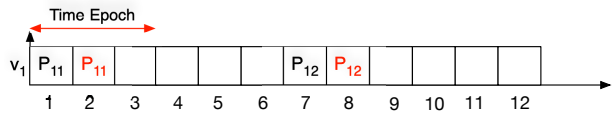}
	%\vspace{-0.15in}
	\caption{Example of Scheduling as Dedicated and Shared Slot}
	\label{hart_fig:sharedExample}
	%\vspace{-0.1in}
\end{figure}

An example of dedicated and shared slot scheduling at node $v_1$ (for network shown in Fig. \ref{hart_fig:exDistHart} with periods $T_1 = 12$ and $T_2 = 24$ slot) is shown in Fig. \ref{hart_fig:sharedExample}. If the transmission of packet $P_{11}$ fails on time window 1, then $v_1$ can use time slots in time window $2$ as shared slots even though they lie outside the transmission window of $v_1$.

\subsection*{Enabling Capture Effect} 

WirelessHART networks use IEEE 802.15.4 compliant radios \cite{WirelessHART2007_standard}. In such radios, during the header decoding (or synchronization), a node's radio searches for a preamble and a start frame delimiter with the strongest Received Signal Strength (RSS) \cite{WirelessHART2007_standard, dezfouli2014cama}.  After this, the radio generates an interrupt and locks to payload reception mode and does not search for preambles. Therefore,  {\bf\slshape capture effect} \cite{dezfouli2014cama} can recover the stronger packet if it comes before the radio locks to a weaker packet's payload reception mode, requiring no physical layer modification. Hence, our objective is to ensure that a receiver (node $v_3$) receives a packet transmission on a dedicated slot (packet $P_2$) before the packet transmission on a shared slot (packet $P_1$). Moreover, the strongest packet can be recovered if its RSS is higher (by 1--3dB based on modulation) than that of the other colliding signal/s.  Hence for successful reception of $P_2$, we adopt the following technique. When a node uses a slot as a dedicated slot, it will transmit immediately after the slot starts and will use the highest transmission (Tx)  power.  On the other hand, when a node uses a slot as a shared slot, it will transmit at a moderate Tx power to make the required RSS difference at the receiver. Also, a node transmits packets after $\theta$ time in a shared slot (while in a dedicated slot, it transmits packets in the very beginning of a slot). The transmission power difference and $\theta$ time difference ensures that the receiver's radio locks to the payload reception mode of $P_2$ and successfully receives $P_2$ even under collision.

\begin{figure}[h]
    \centering
	\includegraphics[width=0.32\textwidth]{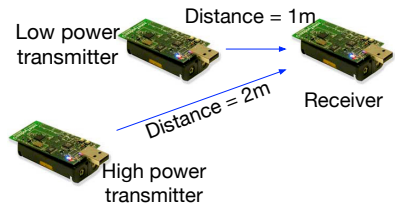}
	\caption{Capture Effect Experiment Setup}
	\label{hart_fig:captureEffectSetup}
\end{figure}

Existing work in \cite{whitehouse2005exploiting} demonstrates the effectiveness of capture effect for IEEE 802.15.4 based networks. Here, we experimentally determine values of $\theta$ (time difference) and Tx power difference to enable the capture effect. We use a setup consisting of 3 TelosB motes (that use radios based on 802.15.4), one receiver and two time synchronized transmitters as shown in Fig. \ref{hart_fig:captureEffectSetup}. We performed one experiment to determine $\theta$ and another to determine the power difference for enabling capture effect. For the first experiment, we used a transmission power of 0dBm for both the transmitters and varied $\theta$ and measured PRR (packet reception rate). We then found $3ms$ as a good value of $\theta$, and using this value, in the second experiment,  we decreased the power level of one transmitter while keeping the other transmitter's power at 0dBm. We then found $3dBm$ as a good value of Tx power difference and using this value (and $\theta = 3ms$), we performed another experiment to observe the performance of capture effect under varying distance. We varied the differences between the distances (from the receiver node) of the two transmitters by increasing the distance (from the receiver node) of the transmitter that used $0dBm$. Each node transmitted 1500 packets with a payload of 14 bytes at a period of 10ms on channel $15$. We present an aggregate result from 10 iterations. Due to the small variance in the obtained result, we limit the number of iterations to 10.

\begin{figure}[t]
    	\centering
	%\vspace{-0.1in}
	\begin{subfigure}[b]{0.35\textwidth}
		\includegraphics[width=\textwidth]{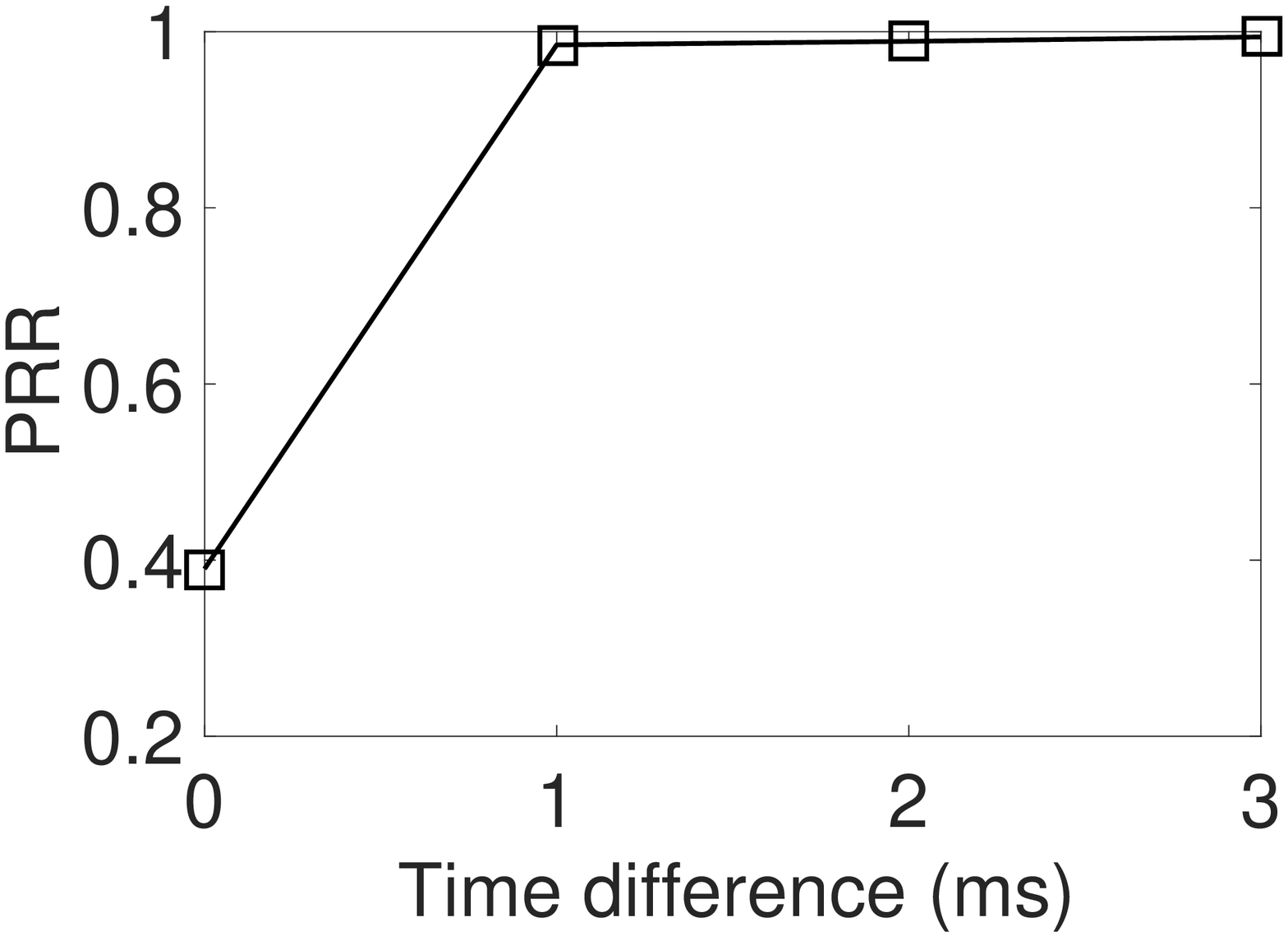}
		\caption{Time difference}
		\label{hart_fig:captureEffect_time}
	\end{subfigure}
	\quad
	\begin{subfigure}[b]{0.35\textwidth}
		\includegraphics[width=\textwidth]{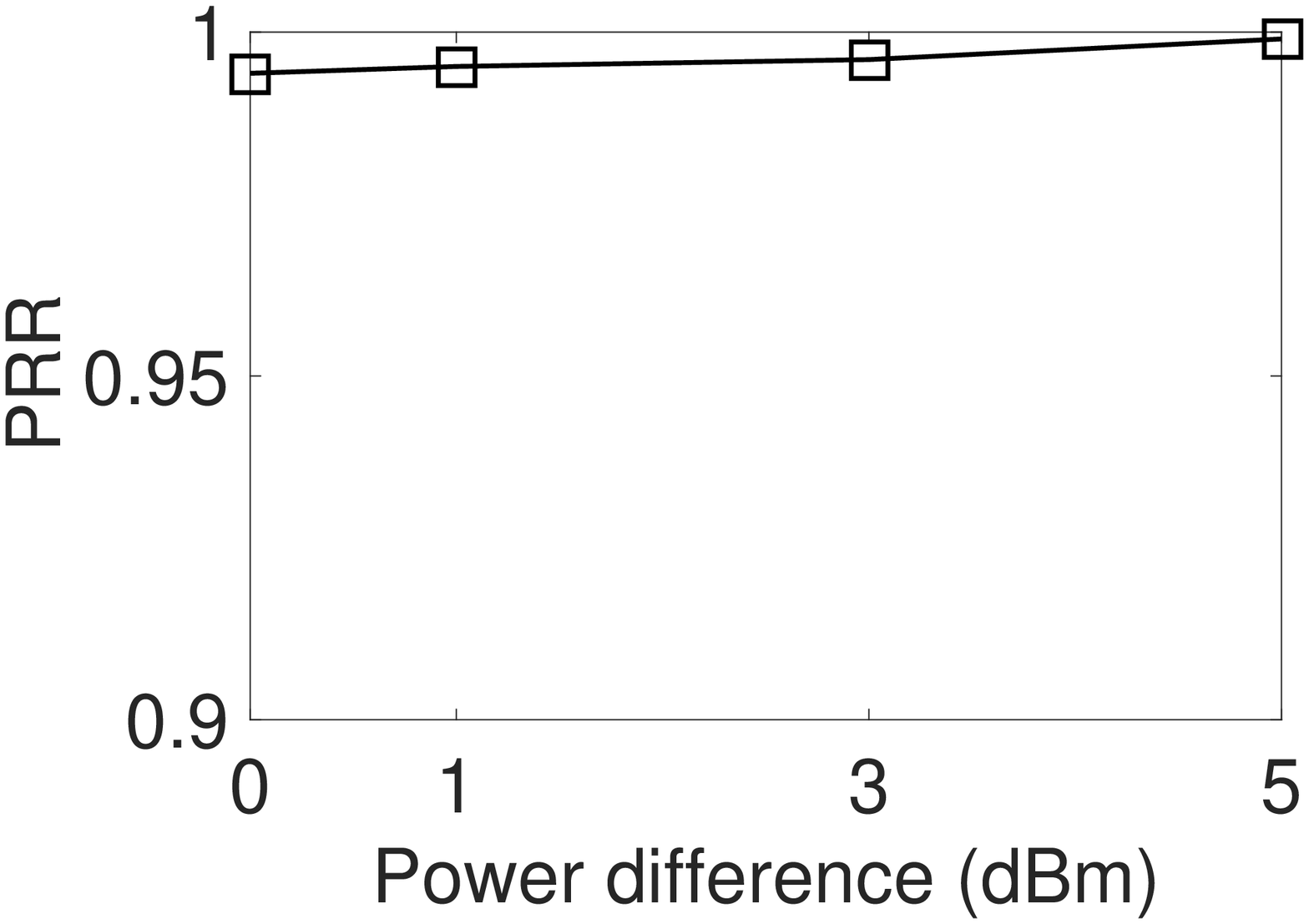}
		\caption{Power difference}
		\label{hart_fig:captureEffect_power}
	\end{subfigure}
       	\quad
	\begin{subfigure}[b]{0.35\textwidth}
		\includegraphics[width=\textwidth]{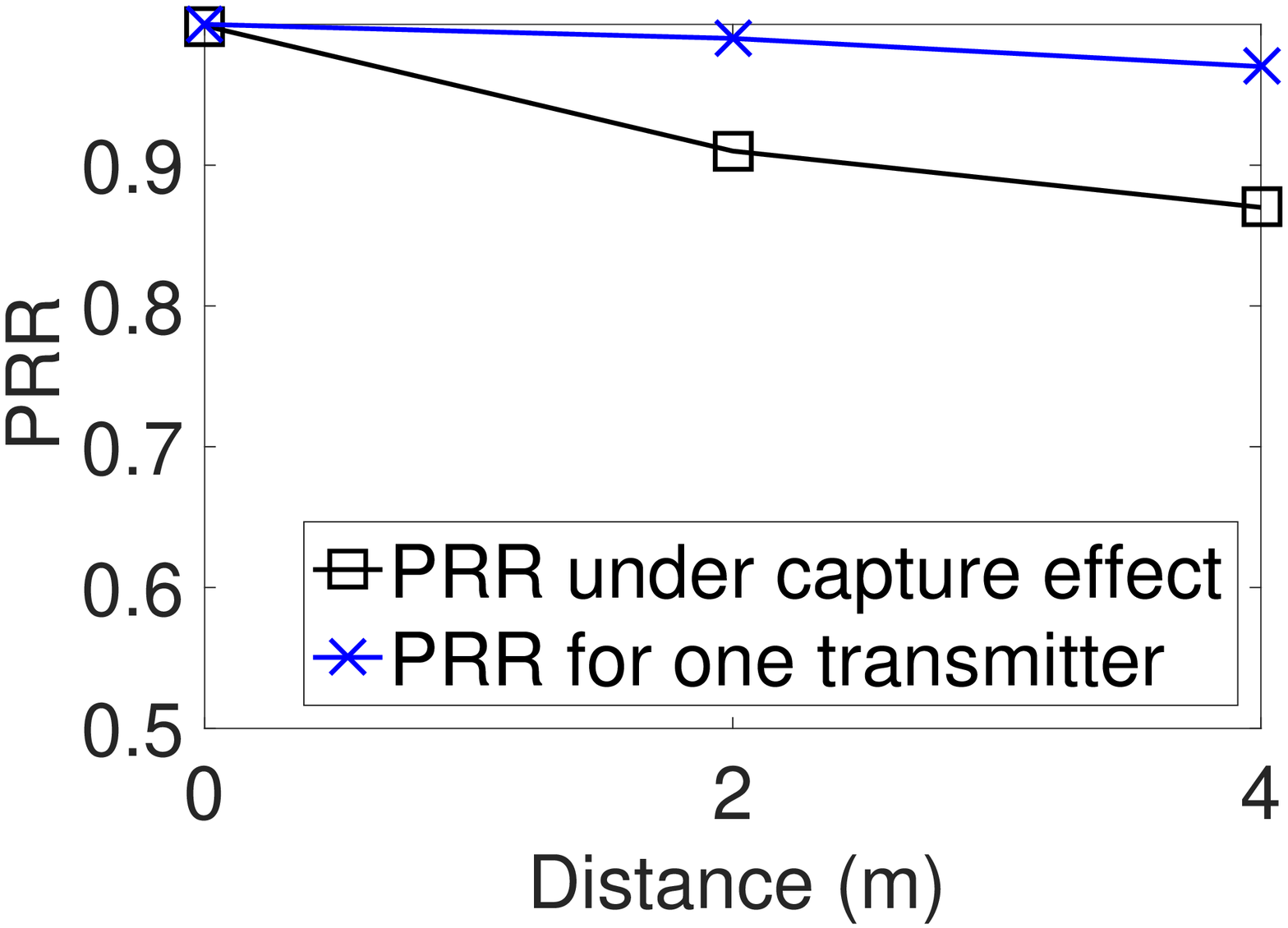}
		\caption{Distance}
		\label{hart_fig:captureEffect_distance}
	\end{subfigure}
	%\vspace{-0.1in}
	\caption{Probability of Successful Reception through Capture Effect with Varying Distance, Time and Power Differences between Two Transmitters}
	\label{hart_fig:EXPCaptureEffect}
	%\vspace{-0.2in}
\end{figure}

Fig. \ref{hart_fig:captureEffect_time} shows the average PRR under varying time difference. We observed a very small-time synchronization error between the two nodes which resulted in a PRR of $0.38$ when both nodes transmit a packet at the same time. However, when $\theta$ is increased, we observed that PRR of dedicated transmission significantly improved. This phenomenon was due to two factors 1) receiver's radio locked to the packet sent in a dedicated slot, and 2) there was small/no overlap between the transmission times (due to short packet lengths). Fig. \ref{hart_fig:captureEffect_power} shows the average PRR under varying power difference. As expected, with the increase in power difference, we observed an increase in PRR for a packet transmitted at higher power. Based on the results shown in Fig. \ref{hart_fig:EXPCaptureEffect}, we set the value of $\theta$ to $3ms$ and Tx power difference to $3dBm$. Note that, we do not change the time slot structure by delaying the transmission for 3ms. Rather, we capitalize on the remaining 7ms within the current time slot to successfully transmit a packet. Fig. \ref{hart_fig:captureEffect_distance} shows the average PRR under varying distance. We observed that the PRR through capture effect decreases with an increase in distance. However, the decrease in PRR (by 0.1) is very minimal given the distance. Note that, for this experiment, we used a pessimistic scenario where the high power transmitter is at a greater distance when compared to a low power transmitter and this may not be the case always. In DistributedHART, transmission power difference and $\theta$ can be adjusted based on the node placements, which is quite feasible as long as topology changes and/or mobility are not overwhelming.

	\section{End-to-end Delay Analysis for DistributedHART}
\label{hart_sec:Delay}

Here, we present a probabilistic end-to-end delay analysis for DistributedHART. In a probabilistic end-to-end delay analysis, the end-to-end delay experienced by a packet is a function of the required probability guarantees $\mathbb{P}_i$ and is given by $R_i(\mathbb{P}_i)$. Here, $\mathbb{P}_i$ represents the probability of a packet of flow $F_i$ experiencing a maximum delay of $R_i(\mathbb{P}_i)$. A network designer can determine a good value of $\mathbb{P}_i$ based on the application requirements. For the sake of simplicity in estimating the probability guarantees $\mathbb{P}_i$, we assume PRR of each link is independent of all other links in the network.

\begin{figure}[!h]
	%\vspace{-0.1in}
	\centering  
	\includegraphics[width=0.35\textwidth]{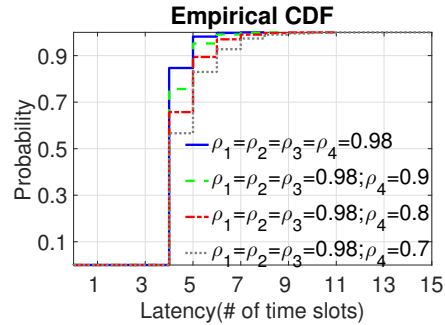}
	%\vspace{-0.15in}
   \caption{Latency of one Flow with 4 Hops under Different PRR}
	\label{hart_fig:latencyDem}
	%\vspace{-0.1in}
\end{figure}

Fig. \ref{hart_fig:latencyDem} shows the cumulative distribution function of latency observed from $10,000$ iterations by a packet of, a flow, on the primary path  In Fig. \ref{hart_fig:latencyDem}, $\rho_{a}$ represents the  Packet Reception Rate (PRR) at node $a$. When the PRR of each link is $0.98$, the worst-case delay observed was $8$ time slots, and the probability of a packet experiencing the worst-case delay was $0.001$. However, the probability of a packet experiencing a delay less than or equal to $5$ was $0.99$, i.e., in most cases, the maximum delay is $5$ time slots. A similar result can be observed for different PRR values. These results conclude that the worst-case delay experienced by a packet of a flow is very big, but the probability of a packet experiencing such a large delay is infinitesimal. This result motivates the use of a probabilistic end-to-end delay analysis technique on the primary path for DistributedHART. Thus, we only present the end-to-end delay analysis calculation along the primary path of each flow. Nevertheless, the analysis can be extended to a graph route by considering a packet experiences delay on all paths of the graph route.

In this section, we develop a delay bound analysis for DistributedHART where the nodes use DM scheduling policy. A similar approach can be used to develop a delay bound analysis for DistributedHART where the nodes use EDF scheduling policy.

To calculate the end-to-end delay analysis of a flow, the central network manager must first estimate the value of $\gamma$ prior to deployment. To estimate an upper bound on $\gamma$ (or the chromatic number), the network manager can use Brook's theorem \cite{reed1999strengthening}. From Brook's theorem, a safe upper bound on $\gamma$ is the sum of $1$ and maximum degree of a node. 

The next step in estimating end-to-end delay analysis of a flow is to estimate the delay experienced by a packet at a node $\upsilon$ on the primary path of flow $F_i$. At a node $\upsilon$, a packet experiences three sources of delay: (1) delay experienced between the arrival of a packet and the start of the first transmission time window ($\delta^{pre}_{\upsilon, i} (\mathbb{P}_{\upsilon, i})$), (2) number of time slots required to successfully transmit a packet ($c_{\upsilon, i}(\mathbb{P}_{\upsilon, i})$), and (3) delay caused by interrupting high priority flows ($\delta^{hp}_{\upsilon, i} (\mathbb{P}_{\upsilon, i})$). The total delay experienced by a node can be given by Equation (\ref{hart_eq:total_delay_initial}).
\begin{equation} \label{hart_eq:total_delay_initial}
\delta_{\upsilon, i} (\mathbb{P}_{\upsilon, i}) = \delta^{pre}_{\upsilon, i} (\mathbb{P}_{\upsilon, i}) + c_{\upsilon, i}(\mathbb{P}_{\upsilon, i}) + \delta^{hp}_{\upsilon, i} (\mathbb{P}_{\upsilon, i})
\end{equation}

The maximum delay experienced by a packet between its arrival at node and the start of the first transmission time window is $(\gamma - 1) w$. Since the transmission time window repeats after every $\gamma w$ time slots, the maximum delay a packet can experience is when the packet arrives immediately after the transmission time window. For example, the transmission time window of $\upsilon$ is time slots 3,4 of the epoch and $\gamma w = 10$. The maximum delay is experienced when a packet arrives at time slot 5 and waits till 13, i.e. 8 (which is given by $(\gamma - 1)w$) time slots.

To compute the number of time slots required to successfully transmit a packet, let $\rho_{\upsilon, i}$ be the probability of successful transmission of flow $F_i$ on a primary node $\upsilon$, and $\mathbb{P}_{\upsilon, i}$ be the probability of packet of $F_i$ experiencing a delay less than $\delta_{\upsilon}(F_i, \mathbb{P}_{\upsilon, i})$ at node $\upsilon$. The probability $\mathbb{P}_{\upsilon, i}$ can also be interpreted as the probability of successful transmission of packet from flow $F_i$ from node $\upsilon$. 
The probability of successful reception of a packet on two dedicated transmissions at node $\upsilon$ is expressed as $1 - (1- \rho_{\upsilon, i})^2$. Similarly, the probability of successful reception after $c$ transmissions is expressed as $1 - (1- \rho_{\upsilon, i})^c$.
For a given $\mathbb{P}_{\upsilon, i}$, the number of slots needed $c_{\upsilon, i}(\mathbb{P}_{\upsilon, i})$ to successfully transmit a packet can be computed by equating the probability of successful transmission after $c_{\upsilon, i}(\mathbb{P}_{\upsilon, i})$ transmissions to $\mathbb{P}_{\upsilon, i}$.

\begin{equation*} \begin{split}
\mathbb{P}_{\upsilon, i} &= 1 - (1- \rho_{\upsilon, i})^{c_{\upsilon, i}(\mathbb{P}_{\upsilon, i})} \\
\Rightarrow 1 -  \mathbb{P}_{\upsilon, i} &= (1- \rho_{\upsilon, i})^{c_{\upsilon, i}(\mathbb{P}_{\upsilon, i})} \\
\Rightarrow log(1 -  \mathbb{P}_{\upsilon, i}) &= c_{\upsilon, i}(\mathbb{P}_{\upsilon, i}) \times log (1- \rho_{\upsilon, i}).
\end{split}\end{equation*} 

The transmission requirement at node $\upsilon$ for flow $F_i$ for a probability requirement of $\mathbb{P}_{\upsilon, i}$ is given by Equation (\ref{hart_eq:c(p)}).
\begin{equation} \label{hart_eq:c(p)} c_{\upsilon, i}(\mathbb{P}_{\upsilon, i})  = \bigg\lceil \frac{log (1 - \mathbb{P}_{\upsilon, i})}{log (1 - \rho_{\upsilon, i})} \bigg\rceil \end{equation}

For example, let the requirement on the probability of successful transmission along a link be $0.99$ (i.e., $\mathbb{P}_{\upsilon, i} = 0.99$) and PRR ($\rho_{\upsilon, i}$) is $0.9$, then $c_{\upsilon, i}(\mathbb{P}_{\upsilon, i})$ is computed as shown below. $$c_{\upsilon, i}(\mathbb{P}_{\upsilon, i})  = \frac{log (1 - 0.99)}{log (1 - 0.9)} = \frac{log(0.01)}{log(0.1)} = \frac{-2}{-1} = 2$$

Computing the delay experienced by a packet due to a high priority packet in DistributedHART is similar to the response time computation of a task in a uniprocessor scheduler. In DistributedHART, for an epoch of length $\gamma w$, a node can only transmit a packet within the $w$ time slots of its transmission time windows. However, in a uniprocessor scheduling, a processor can execute tasks consecutively without halting.

Computing the delay experienced by a packet due to high priority packets at a node in DistributedHART is similar to the delay experienced by a task due to high priority tasks in a uniprocessor scheduler. A detailed description of the similarity between a packet transmission and task scheduling can be found in \cite{modekurthy2018utilization}. Thus, we use the delay computation of a task in a uniprocessor DM scheduling as the foundation for the $\delta^{hp}_{\upsilon, i} (\mathbb{P}_{\upsilon, i})$ computation in DistributedHART.

In a uniprocessor scheduling, a processor can execute tasks consecutively without halting. However, in DistributedHART, for an epoch of length $\gamma w$, a node can only transmit a packet within the $w$ time slots of its transmission time windows. That is, if a high priority packet takes the entire transmission time window, a packet has to wait until the next epoch to attempt a transmission. For example, consider a node $\upsilon$ has a transmission time window of two time slots 2, 3, and the length on epoch is ten times slots. Assume at time slot 2, if a node $\upsilon$ has two packets $P_1$ (with higher priority) and $P_2$ (with lower priority) in its queue. If the packet $P_1$ requires two time slots (2 and 3) for successful transmission,  then packet $P_2$ has to wait until time slot $12$ since the node does not have any other transmission time slots in between. Note that, during the epoch within which packet $P_2$ is successfully transmitted, packet $P_2$ is not delayed for $\gamma w$ time slots.

Accounting for the difference in architecture, the $\delta^{hp}_{\upsilon, i} (\mathbb{P}_{\upsilon, i})$ can be computed as shown in Equation (\ref{hart_eq:delay_hp}), where $HP_\upsilon (F_i)$ denotes the high priority flows of $F_i$.

\begin{equation} \label{hart_eq:delay_hp}
\begin{split}
 & \delta^{hp}_{\upsilon, i} (\mathbb{P}_{\upsilon, i}) = \\& \sum_{F_j \in HP_\upsilon (F_i)}  \Bigg\{\Bigg \lfloor \bigg \lceil \frac{\delta_{\upsilon}(F_i) - D_j}{T_j} \bigg \rceil \frac{C_{\upsilon, j}(\mathbb{P}_{\upsilon, j})}{w} \Bigg \rfloor (\gamma w  - 1) +  \\& \; \; \; \; \; \; \; \; \; \; \; \; \; \; \; \; \; \; \; \; \;   \bigg \lceil \frac{\delta_{\upsilon}(F_i) - D_j}{T_j} \bigg \rceil \frac{C_{\upsilon, j}(\mathbb{P}_{\upsilon, j})}{w} \Bigg\}
\end{split}
\end{equation}

Note that, $\Bigg \lfloor \bigg \lceil \frac{\delta_{\upsilon}(F_i) - D_j}{T_j} \bigg \rceil \frac{C_{\upsilon, j}(\mathbb{P}_{\upsilon, j})}{w} \Bigg \rfloor \gamma w$ denotes the delay of a packet in the all epochs excluding the last epoch, and $\bigg \lceil \frac{\delta_{\upsilon}(F_i) - D_j}{T_j} \bigg \rceil \frac{C_{\upsilon, j}(\mathbb{P}_{\upsilon, j})}{w} - \Bigg \lfloor \bigg \lceil \frac{\delta_{\upsilon}(F_i) - D_j}{T_j} \bigg \rceil \frac{C_{\upsilon, j}(\mathbb{P}_{\upsilon, j})}{w} \Bigg \rfloor$ represents the delay of a packet in the last epoch.

Here, $C_{\upsilon}(\mathbb{P}_{\upsilon, j})$ denotes the worse-case transmission requirement of flow $F_j$ that interferes $F_i$ at node $\upsilon$. For a fixed priority local scheduler, the value of $C_{\upsilon}(\mathbb{P}_{\upsilon, j})$ can be computed during the response time calculation for flow $F_j$. For a dynamic priority local scheduler, an estimate of $C_{\upsilon}(\mathbb{P}_{j})$ can be used.

Since nodes use a real-time scheduling policy to schedule packets locally, the worst case delay experienced by a packet at a node is independent of the delay experienced by the same packet at other nodes in the route. Thus, we express the total delay (experienced by a flow) as the sum of delays experienced at each node on the primary path. The total delay experienced by a packet at a node $\upsilon$ is shown in Equation (\ref{hart_eq:delayAtOneNode}).
\begin{equation}
 \label{hart_eq:delayAtOneNode} \begin{split}
	 &\delta_{\upsilon}(F_i, \mathbb{P}_{\upsilon, i}) =  (\gamma - 1) \times w + C_{\upsilon, i}(\mathbb{P}_{\upsilon, i}) + \\&  \sum_{F_j \in HP_\upsilon (F_i)}  \Bigg\{\Bigg \lfloor \bigg \lceil \frac{\delta_{\upsilon}(F_i) - D_j}{T_j} \bigg \rceil \frac{C_{\upsilon, j}(\mathbb{P}_{\upsilon, j})}{w} \Bigg \rfloor \times (\gamma  \times w  - 1)+ \\&  \; \; \; \; \; \; \; \; \; \; \; \; \; \; \; \; \; \; \; \; \;  \bigg \lceil \frac{\delta_{\upsilon}(F_i) - D_j}{T_j} \bigg \rceil \frac{C_{\upsilon, j}(\mathbb{P}_{\upsilon, j})}{w} \Bigg\}
	 \end{split}
\end{equation}

For implicit deadline flows, the end-to-end delay $R_i$ experienced by a control loop $F_i$ under DistributedHART with DM scheduling is given by Equation (\ref{hart_eq:totalDelay_uniform}) and  Equation (\ref{hart_eq:delayAtOneNode}) with a probability $\mathbb{P'}_{i} =  \prod_{\upsilon \in V_i} \mathbb{P}_{\upsilon, i}$.

\begin{equation}  \label{hart_eq:totalDelay_uniform} \begin{split}
	 R_i (\mathbb{P'}_i) = \sum_{\upsilon \in V_i} \delta_{\upsilon}(F_i)  \end{split} \end{equation}

The delay bound computation, described above, generates the end-to-end delay for a probability of  $\mathbb{P'}_i$. To compute the end-to-end delay for a selected value of $\mathbb{P}_i$, we first compute the end-to-end delay for all possible combinations of $\mathbb{P'}_i$ by generating all possible combinations of $\mathbb{P}_{\upsilon, i}$. From the list of all possible combinations, we select the maximum delay for a probability $\mathbb{P}_i$ as the end-to-end delay for flow $F_i$. All possible combinations of $\mathbb{P}_{\upsilon, i}$ is given by $\{\rho_{\upsilon, i}, 1 - (1- \rho_{\upsilon, i})^2\}$, this is because a node can use at most 2-time slots to make a transmission along the primary path. Since the possible combinations for each link are bounded by a fixed constant $2$, the running time of the delay bound calculation for one flow is $O(n^2)$, where $n$ is the number of nodes in the network.

	 \section{Non-Uniform Time Window Assignment in DistributedHART} \label{hart_sec:NonUniform}
The previous sections present the design and analysis of DistributedHART with uniform time window allocation. Intuitively, assigning a fixed number of time slots to each node can cause large delays at nodes, like the access point, which have high traffic flowing through them. Assigning non-uniform time windows, or more time slots per window, for these nodes can potentially reduce the delays experienced by the flows. In this section, we present the design of DistributedHART with non-uniform time window allocation and extend the end-to-end delay analysis for non-uniform time window allocation. Furthermore, we show a comparison between uniform and non-uniform time window allocation.

\subsection{Algorithm to Select Non-Uniform Time Window Length}
Here, we present an algorithm to select the number of time slots per window at a node. We initially assign $\alpha$ time slots to each node that has a primary path of a route passing through it. After the initial assignment, we add $\beta$ time slots to the window for ever $\phi$ units of rate through the node. In this approach, $\alpha$, $\beta$ and $\phi$ are design parameters.  We define $\sigma^\upsilon$ as the sum of rates through the node $\upsilon$ and is calculated as shown in Equation (\ref{hart_eq:sigma}),  where $V_i$ is the set of nodes in the route of $F_i$.
\begin{equation} \label{hart_eq:sigma}
\sigma^{\upsilon} = \sum_{F_i \textit{ if } \upsilon \in V_i} \frac{1}{T_i}
\end{equation}
Then, the length of transmission time window for a node $\upsilon$ is given by Equation (\ref{hart_eq:nonuniform}).
\begin{equation}\label{hart_eq:nonuniform}w_{\upsilon} = \alpha + \bigg\lceil \frac{\sigma^\upsilon}{\phi} \bigg\rceil \times \beta\end{equation}

\subsection{Non-Uniform Time Window Allocation: }
In a non-uniform time window assignment, we represent each time slot by a unique color. Each node then uses vertex multi-coloring to select multiple transmission time slots and locally define their transmission time window. For example, one node can select time slots $1$, $4$, and $7$ as its transmission time window, and another can select time slots $3$ and $7$ as its transmission time window.  Since the definition of a time window is local to a node, it avoids potential issues of synchronization between nodes about how each node defines its transmission time window.

For distributed vertex multi-coloring, we choose a variant of DRAND \cite{rhee2006drand} where each node selects multiple colors instead of one.  Although there exist many distributed vertex multi-coloring algorithms, we pick DRAND for its simplicity in implementation. 

One key issue with using DRAND is that it requires the maximum number of available colors (or $\gamma$) beforehand. The central manager can estimate $\gamma$ from the product of the maximum degree of a node and maximum window length of all nodes, similar to uniform time window assignment. However, this estimation is very loose as the central manager assumes each node selects a fixed/maximum window length.  Such a loose estimation can result in idle time slots within an epoch, which significantly increases the latency. 

To overcome this issue, we propose to use a leader election algorithm after DRAND to determine the leader or the node with the last used time window within the epoch. The leader can reduce the length of the epoch and re-broadcast the new epoch value to all nodes. Typically, the leader election is a polynomial time algorithm for a mesh network and needs to be executed only once, i.e., during network initialization. We represent the total number of unique time slots required for non-uniform time window allocation or length of epoch as $\hat{\gamma}_n$.

\subsection{End-to-End Delay Analysis for Non-Uniform Time Window Assignment}
To compute the end-to-end delay analysis for non-uniform time window assignment, we use a similar approach as described in Section \ref{hart_sec:Delay}. The network manager can estimate an upper bound on $\hat{\gamma}_n$, which facilitates the test of schedulability.

The total delay experienced by a flow $F_i$ given the probability requirement as $\mathbb{P}_{\upsilon, i}$ at node $\upsilon$ is given by Equation (\ref{hart_eq:nonUniformdelayAtOneNode}), where $C_{\upsilon, i}(\mathbb{P}_{\upsilon, i})$ is computed from Equation (\ref{hart_eq:c(p)}).
\begin{equation}
 \label{hart_eq:nonUniformdelayAtOneNode} \begin{split}
	&\delta_{\upsilon}(F_i, \mathbb{P}_{\upsilon, i}) =  (\gamma - 1) \times w + C_{\upsilon, i}(\mathbb{P}_{\upsilon, i}) + \\&  \sum_{F_j \in HP_\upsilon (F_i)}  \Bigg\{\Bigg \lfloor \bigg \lceil \frac{\delta_{\upsilon}(F_i) - D_j}{T_j} \bigg \rceil \frac{C_{\upsilon, j}(\mathbb{P}_{\upsilon, j})}{w} \Bigg \rfloor \times (\hat{\gamma}_n  - 1)+ \\&  \; \; \; \; \; \; \; \; \; \; \; \; \; \; \; \; \; \; \; \; \;  \bigg \lceil \frac{\delta_{\upsilon}(F_i) - D_j}{T_j} \bigg \rceil \frac{C_{\upsilon, j}(\mathbb{P}_{\upsilon, j})}{w} \Bigg\}
	 \end{split}
\end{equation}

For implicit deadline flows, the response time $R_i$ experienced by a control loop $F_i$ under DistributedHART with DM scheduling is given by Equation (\ref{hart_eq:totalDelay_nu}) and  Equation (\ref{hart_eq:nonUniformdelayAtOneNode}) with a probability $\mathbb{P'}_{i} =  \prod_{\upsilon \in V_i} \mathbb{P}_{\upsilon, i}$.
\begin{equation}  \label{hart_eq:totalDelay_nu} \begin{split}
	 R_i (\mathbb{P'}_i) = \sum_{\upsilon \in V_i} \delta_{\upsilon}(F_i)  \end{split} \end{equation}

\subsection{Performance Evaluation of Non-Uniform Time Window Assignment}
To evaluate the performance of DistributedHART with EDF local scheduling and non-uniform time window assignment, we performed simulation in TOSSIM. We used a $148$ node network topology and varied the number of control loops from $5$ to $35$. For further details about the simulation setup, please refer to Section \ref{hart_sec:simulations}. 

We present the aggregate result from 50 random test cases.  For each test case, we randomly selected sensor and actuators and assigned random harmonic periods in the range of $2^{11\sim13}$ time slots. To decrease the workload on the network, we doubled the range after adding every $10$ flows. For this simulation, we used the $1$ and $1$ for $\alpha$ and $\beta$, respectively. We selected $\phi = \frac{(\min{\frac{1}{T_i}} + \max{\frac{1}{T_i}}) \times 7}{8}$. For the test cases with $5$ control loops, the maximum rate was $8$ times the minimum rate, and hence we chose $\frac{7}{8}$ as the smallest increase to the non-uniform time window assignment.

\begin{figure}[h]
	\centering  
	\includegraphics[width=0.35\textwidth]{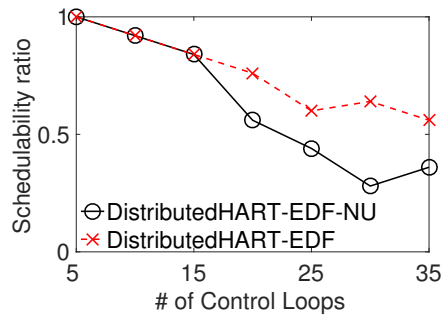}
	%\vspace{-0.15in}
   \caption{Performance Comparison between Non-Uniform and Uniform Time Window Assignment in DistributedHART}
	\label{hart_fig:nonuniformSched}
	%\vspace{-0.1in}
\end{figure}

Fig. \ref{hart_fig:nonuniformSched} shows the schedulability ratio comparison between non-uniform (DistributedHART-EDF-NU) and uniform time window assignment (DistributedHART-EDF). In these simulations, we have observed that for some test cases non-uniform time window assignment performs better and performs poorly in other test cases. We have also observed that when the number of control loops is $15$, some test cases schedulable under uniform time window assignment were not schedulable under non-uniform time window assignment, and vice versa. When the number of control loops is greater than $20$, we observed that the average schedulability ratio of non-uniform time window assignment was poor when compared to uniform time window assignment.  We observed that this result is due to the increase in the epoch $\hat{\gamma}_n$, which results in a significant increase in the end-to-end delay observed on a path. From this simulation result, we can conclude that for a smaller number of control loops, non-uniform time window assignment performs better in some cases and uniform time window assignment performs better under some other. For larger number of control loops uniform time window assignment performs better in more cases than non-uniform time window assignment.

\section{Latency under DistributedHART}
\label{hart_sec:latency}
To evaluate the performance of DistributedHART with EDF local scheduling and non-uniform time window assignment, we performed simulation in TOSSIM. We used a $148$ node network topology with $25$ flows. For further details about the topology, please refer to Section~\ref{hart_sec:simulations}. We assigned the same period of $4s$ for all the flows, and the same period assignment is for this simulation alone. For all other simulations and experiments, we use different but harmonic periods.  We choose to assign the same periods because it allows us to compare the latency of a flow with every other flow.

\begin{figure}[ht]
\centering
\includegraphics[width=0.35\textwidth]{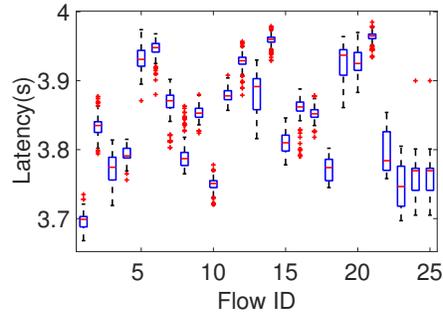}
	 %\vspace{-0.1in} 
   \caption{Latency under DistributedHART}
        \label{hart_fig:latency} 
	%\vspace{-0.05in} 
\end{figure}

\begin{figure}[ht]
\centering
\includegraphics[width=0.35\textwidth]{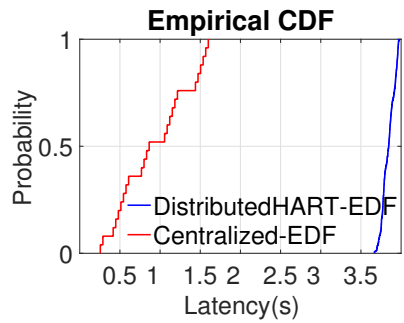}
	  %\vspace{-0.1in} 
   \caption{Latency Comparison between Centralized-DM and DistributedHART-DM}
 	\label{hart_fig:latencyCDF}
	%\vspace{-0.05in} 
\end{figure}

Fig. \ref{hart_fig:latency}shows the latency experienced by each flow in one test case under DistributedHART with EDF local scheduling. We observed an average latency of $3.825s$ for all flows. We observed that flow $14$ experienced the maximum latency in the network of $3.98s$. Thus,  Fig. \ref{hart_fig:latencyCDF} shows the cumulative distribution function of latency observed under DistributedHART-EDF to be $1$ at $3.98s$ while that of centralized is $1.7s$.  These results show that the latency experienced by a flow is high compared to the centralized algorithms.

In DistributedHART, DRAND facilitates time window selection through a random selection of colors. Intuitively, a random selection of colors induces long latencies, as shown in the following example. Assume a flow $F_i$ has a primary path $v_i -> v_j -> v_k$, where the source is $v_i$ and destination is $v_k$. For this flow, if the time window assignment for $v_i, v_j, \text{and} v_k$ is $1, 2, \text{and} 3$, respectively, then latency of the path is $ 3w $ (note that $w$ is the length of each window). On the contrary, if the time window assignment for $v_i, v_j, \text{and} v_k$ is $3, 2, \text{and} 1$, respectively, then at each node a packet has to wait for an epoch, i.e., $\gamma$ time units and latency along the path is $3w \times \gamma$. 

To improve the latency, we prioritize nodes during distributed time window selection such that the first node of the highest priority flow selects the first time slot. The centralized manager assigns a different probability of selecting a vertex to each node. The probability assignment is based on the priority of flows passing through the node and position of the node in the route. If a node has two or more flows passing through it, the probability is assigned based on the higher priority flow. We refer to this approach as DistributedHART-IL, which stands for DistributedHART with improved latency.  

We ran simulations to evaluate the performance of DistributedHART-IL. Fig. \ref{hart_fig:latency-IL} shows the latency values experienced by each flow for DistributedHART-IL. We ran simulations on TOSSIM on a $148$ node network with $25$ control loops with a $4$s period. We observed that DistributedHART-IL results in similar performance as DistributedHART. We observed the average latency is $3.99$s. To further evaluate the impact on schedulability, we ran more simulations by varying the number of flows in the network. Fig. \ref{hart_fig:latencySched} shows the comparison between DistributedHART-IL and DistributedHART under the schedulability ratio. We observed that the DistributedHART-IL performs better than DistributedHART. However, we have observed that some cases that were schedulable under DistributedHART were not schedulable under DistributedHART-IL. Although DistributedHART-IL offers better schedulability, DistributedHART is easy to implement and converges faster. From this simulation, we can conclude that both DistributedHART and DistributedHART-IL are good choices. Since both DistributedHART and DistributedHART-IL perform similar, we only evaluate DistributedHART through experiments and simulations, Section \ref{hart_sec:experiments} and Section \ref{hart_sec:simulations}, respectively.

\begin{figure}[ht]
\centering
\includegraphics[width=0.35\textwidth]{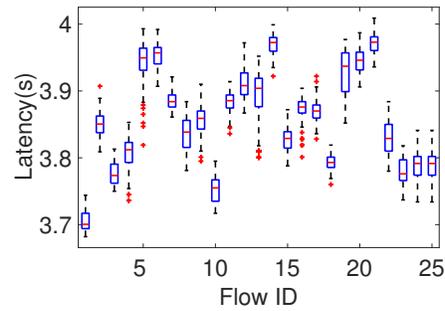}
	 %\vspace{-0.1in} 
   \caption{Latency under DistributedHART-IL}
	\label{hart_fig:latency-IL}
	%\vspace{-0.05in} 
\end{figure}

\begin{figure}[ht]
\centering
\includegraphics[width=0.35\textwidth]{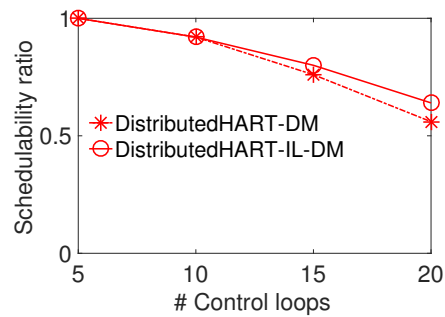}
	  %\vspace{-0.1in} 
   \caption{Comparison between DistributedHART-IL and DistributedHART}
	\label{hart_fig:latencySched}
	%\vspace{-0.05in} 
\end{figure}

When compared to centralized algorithm, both DistributedHART may perform poorly under latency. However, it achieves a similar schedulability ratio (which is a more important metric for real-time networks) as centralized algorithms in most of the cases. Furthermore, the centralized approach relies on global information to achieve low latency. Acquiring global information is challenging, and hence centralized algorithm is not scalable. However, DistributedHART uses local information to make scheduling decisions, thereby providing advantages like supporting any type of traffic (periodic/aperiodic), handling network-dynamics (which is frequent in industrial environments) and scalability (which is important for Industrial Internet of Things). Under network-dynamics, a centralized approach re-calculates and re-distributes schedules among nodes. Since network-dynamics are frequent in industrial environments, frequently calculating and re-distributing the schedules degrades the performance of centralized approach. Thus, over a long period of time (with frequent network-dynamics), the overall performance of the centralized approach can be worse than DistributedHART (even with long latencies).
	\section{Testbed  Experiments}
\label{hart_sec:experiments}
We implemented DistributedHART on TinyOS 2.2~\cite{tinyos} and evaluated on a 130 node testbed~\cite{testbedwsu} of TelosB mote for real experiments. TelsoB devices use Chipcon CC2420 radios, which are compliant with the IEEE 802.15.4 standard. Note that the physical layer of WirelessHART is similar to 802.15.4 physical layer. We deployed the nodes in one room of the Maccabees building. To create a multi-hop network, each node used a transmission power of $-28.7dBm$. The topology of the testbed is shown in~\cite{testbedwsu}. 
Our DistributedHART implementation consists of multi-channel TDMA MAC protocol. 

Time is divided into 10ms slots, and clocks are synchronized using the Flooding Time Synchronization Protocol (FTSP)~\cite{FTSP}.  We ran FTSP algorithm frequently to avoid issues with capture effect. For a fair comparison, we used centralized version of the graph routing algorithm proposed in~\cite{modekurthy2018distributed}. For simplicity in implementation and experimentation, network manager computed the channel and time window allocations and disseminated them using TinyOS dissemination protocol library. We then evaluated the online scheduling of DistributedHART. 

\subsection{ Evaluation Metrics}

We evaluated DistributedHART using four metrics 1) energy, 2) memory, 3) convergence time, and 4) schedulability performance and then compared the performance with centralized EDF~\cite{RTSS10paper} and DM~\cite{WirelessHART2007_standard} (with spatial re-use).

 {\em Schedulable ratio} is defined as the fraction of test cases that were schedulable among all cases. Each test case corresponds to a set of flows and is said to be {\em schedulable} if all packets from all flows met their deadlines (i.e., max latency $\le$ deadline). The deadlines of the flows were set equal to their periods. 
 \textit{Memory consumption} is the average memory consumed per node to store a schedule.
\textit{Convergence Time} is the average time taken for all nodes to obtain a schedule excluding the time taken to generate routes. 
\textit{Energy consumption} is the average energy consumed per node in Joules to generate/disseminate a schedule. In our testbed, we use USB cables to power the nodes and hence, we can not obtain the actual energy consumed. We use a product of the number of transmissions and energy consumed for each communication ($0.22mJ$, calculated from TelosB datasheer~\cite{energyTelosb} with $5.5ms$ Tx time) to estimate energy consumption per node. 
We used a CSMA/CA protocol to generate/disseminate schedule for DistributedHART and centralized algorithms. Thus, duty-cycle of operation (equal to convergence time) was very large and hence, was not a good metric for comparison.

\subsection{Results}

\begin{figure*}[t]
    	\centering
	%\vspace{-0.1in}
	\begin{subfigure}[b]{0.35\textwidth}
		\includegraphics[width=\textwidth]{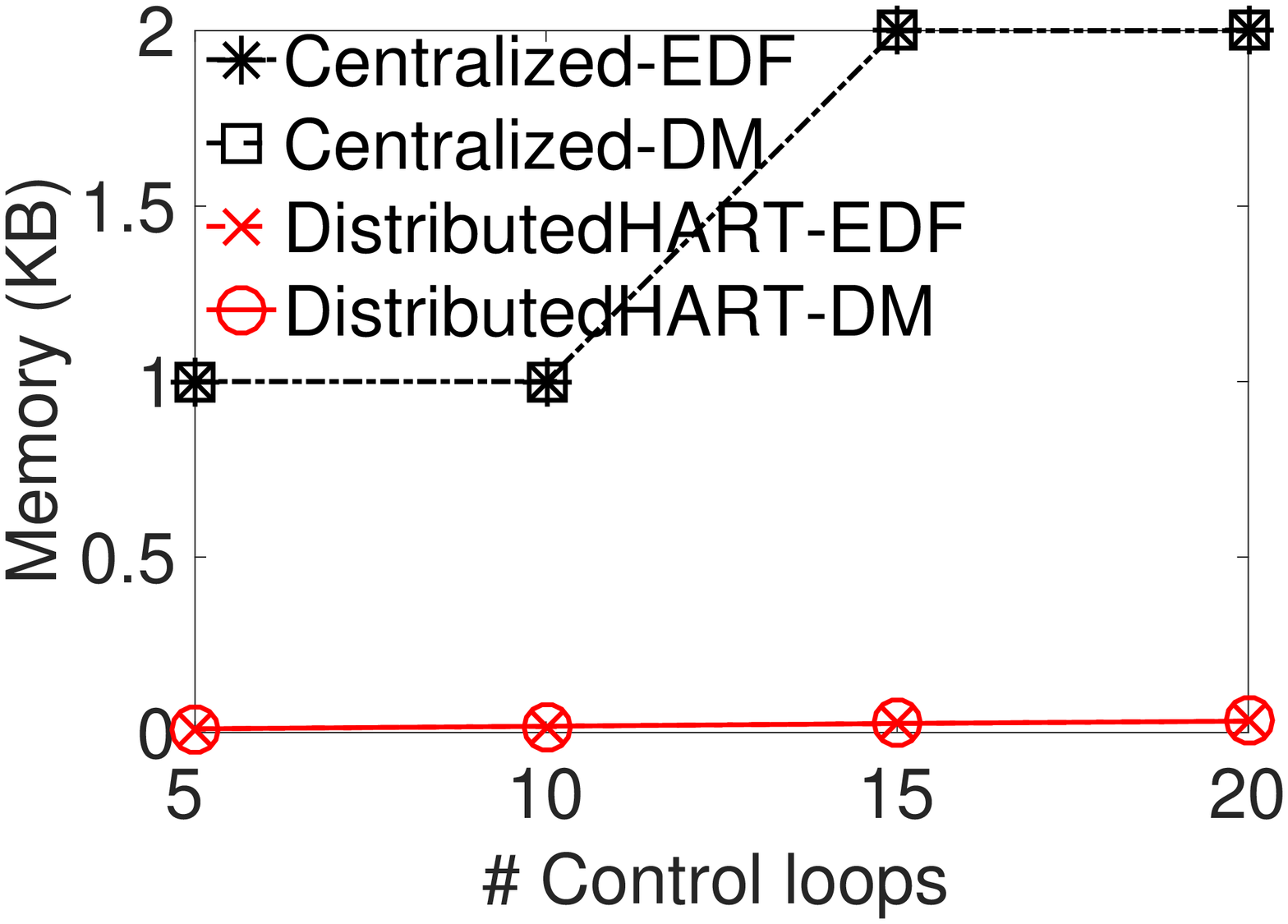}
		\caption{Memory Consumption}
		\label{hart_fig:EXPloops_memory}
	\end{subfigure}
	\quad
	\begin{subfigure}[b]{0.35\textwidth}
		\includegraphics[width=\textwidth]{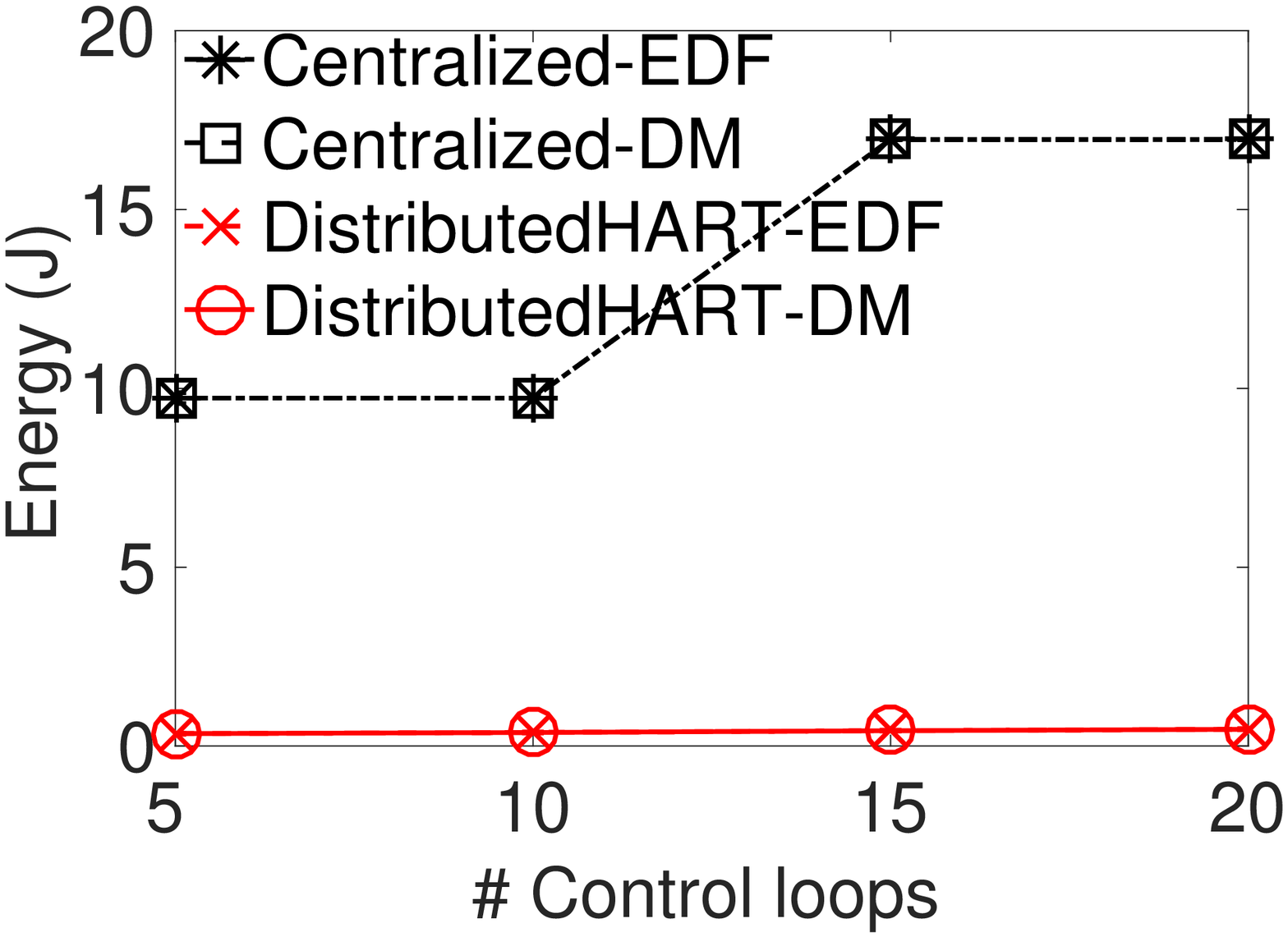}
		\caption{Energy consumption}
		\label{hart_fig:EXPloops_energy}%
	\end{subfigure}
        	\quad
	\begin{subfigure}[b]{0.35\textwidth}
		\includegraphics[width=\textwidth]{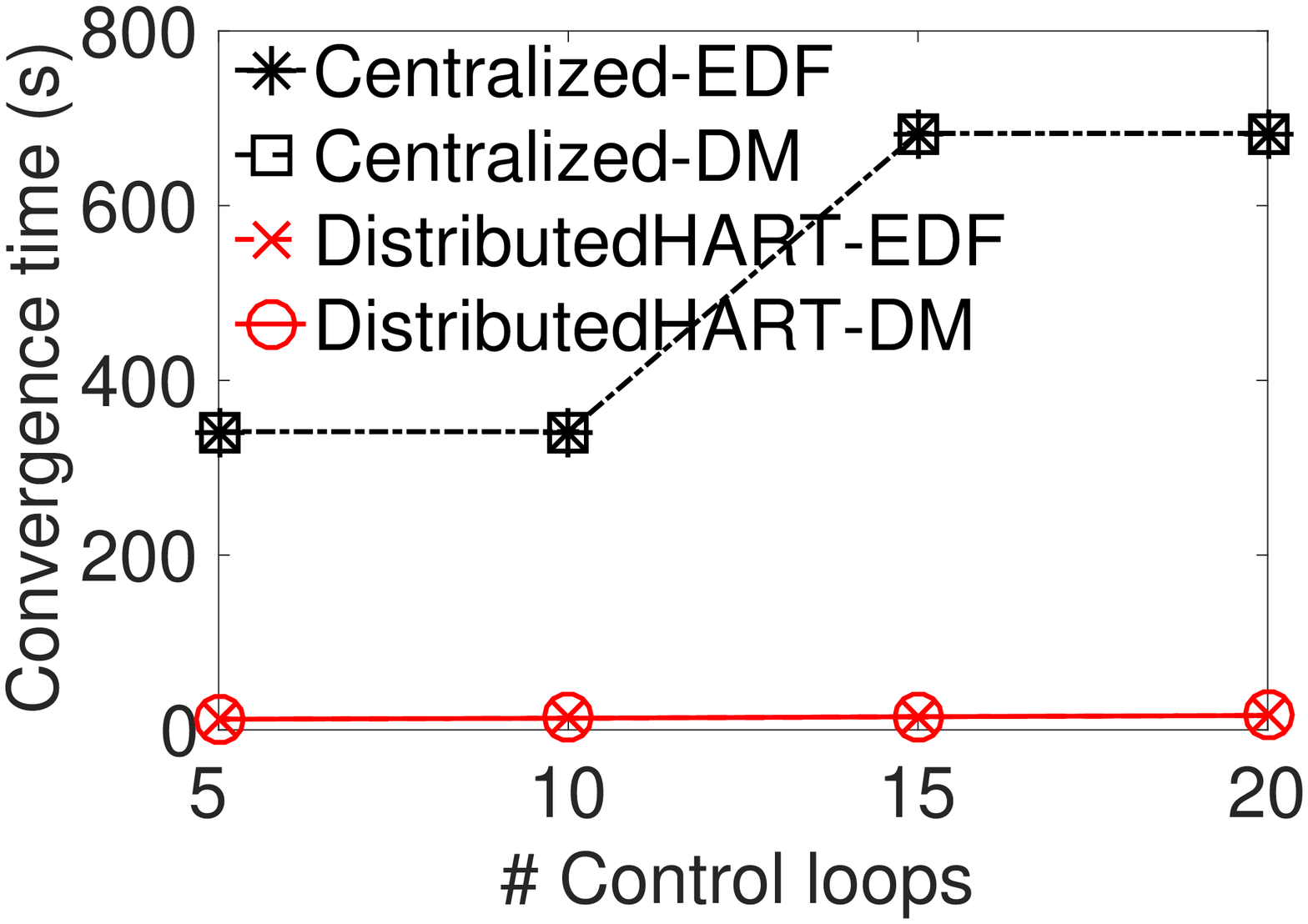}
		\caption{Convergence Time}
		\label{hart_fig:EXPloops_time}
	\end{subfigure}
	\quad
	\begin{subfigure}[b]{0.35\textwidth}
		\includegraphics[width=\textwidth]{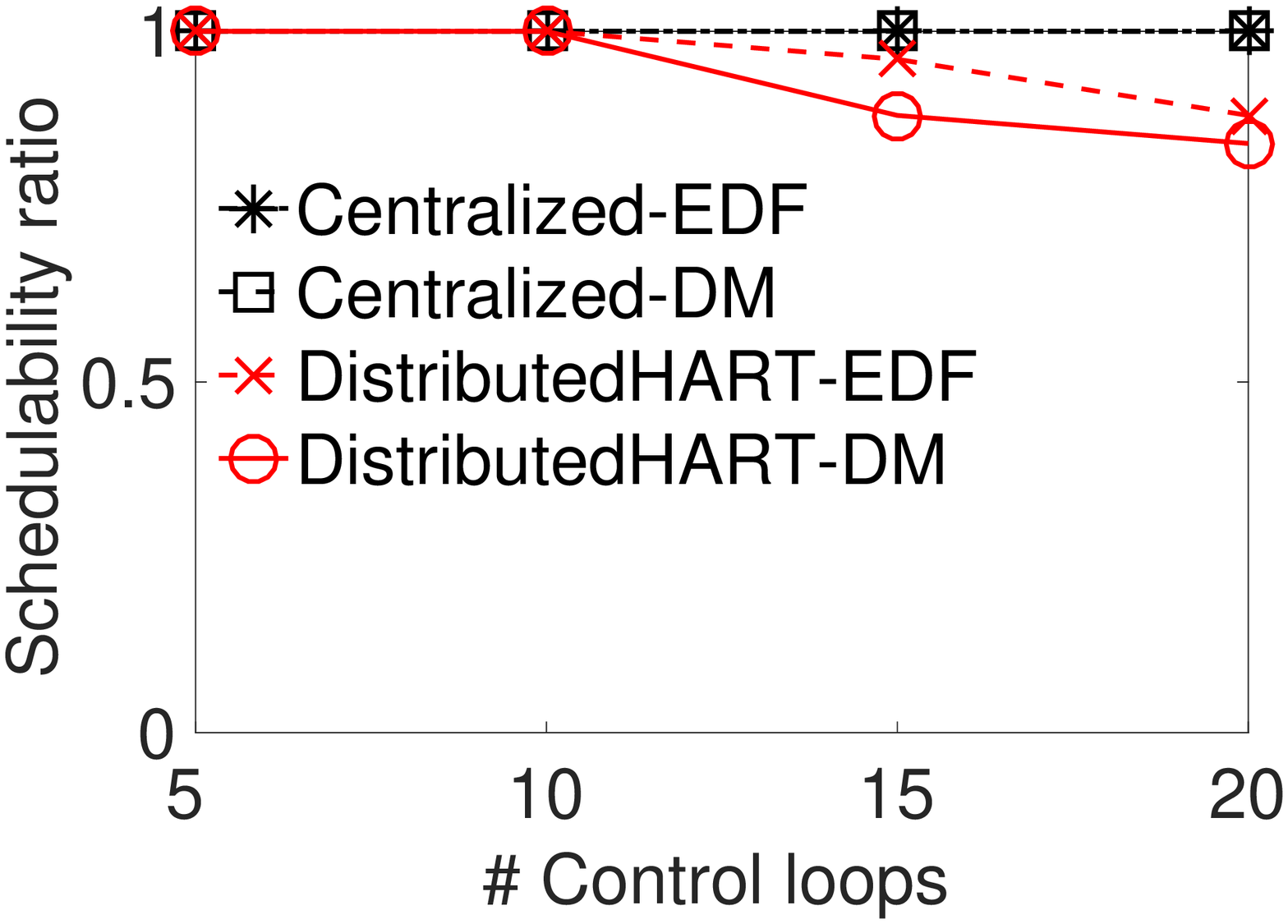}
		\caption{Schedulability Ratio}
		\label{hart_fig:EXPloops_sched}
	\end{subfigure}
	\caption{Experimental Result under Varying Number of Flows Considering Harmonic Periods}
	\label{hart_fig:EXPloops}
\end{figure*}

Fig.~\ref{hart_fig:EXPloops} and Fig.~\ref{hart_fig:EXPloops_sched} shows the performance comparison between DistributedHART and  centralized scheduler proposed in WirelessHART standard. The fixed priority and dynamic priority centralized scheduler are labeled "Centralized-DM" and "Centralized-EDF", respectively. Similarly, DistributedHART with fixed priority and dynamic priority local schedulin are labeled "DistributedHART-DM" and "DistributedHART-EDF", respectively.  As the performance of Orchestra~\cite{duquennoy2015orchestra} and DiGS~\cite{shidigs} is expected to be worse, we evaluated them only in simulation.

Fig.~\ref{hart_fig:EXPloops} and Fig.~\ref{hart_fig:EXPloops_sched} shows the aggregate result from $25$ test cases under varying number of flows in the network. In this experiment, we varied the number of flows between $5$ and $40$. For each test case, we generated flows by randomly selecting source and destination nodes. For test cases with $5$ flows, we assign harmonic periods in the range $2^{13\sim 16}$ time slots. To decrease the workload on the network, we double the range after adding every $10$ flows. Note that the average energy, time and memory consumption results shown in this chapter, specifically Fig.~\ref{hart_fig:EXPloops}, report the confidence interval. Since the confidence interval is very close to the average, the confidence intervals are not visible in the figures.

\subsubsection{Memory Consumption}
Typically, memory consumption in centralized algorithms is proportional to the hyper-period. In some special cases, a compact schedule may be feasible. However, we considered a general scenario where memory consumption is proportional to the length of the hyper-period. Fig.~\ref{hart_fig:EXPloops_memory} shows a step increase in memory consumption since we double the hyper-period for every 10 control loops. Since the transmission schedule repeats after every time epoch, time window information during the first epoch and time epoch length is sufficient. This information is subsequently smaller than centralized transmission schedule. We observed a small increase in worst case chromatic number with the increase number of control loops. We observed that DistributedHART consumes at least $75\%$ less memory than both centralized EDF and DM.

\subsubsection{Energy Consumption and Convergence Time}
The centralized algorithms use a dissemination protocol to broadcast schedules to all nodes in the network. Hence, average energy consumption at a node is dependent on the length of the schedule. Thus,  Fig.~\ref{hart_fig:EXPloops_energy} shows a step increase in average energy consumption similar to memory consumption result. In this experiment, for the sake of simplicity, we computed channel and time window allocation at the central manager for DistributedHART and disseminate the information. In DistributedHART, the length of the schedule was only dependent on the worst case chromatic number and hence, Fig.~\ref{hart_fig:EXPloops_energy} shows the energy consumption is close to constant.  We observed that DistributedHART consumes at least $95\%$ less energy than EDF. Similar to energy consumption, convergence time for centralized algorithms is also dependent on hyper-period (has a step increase) while convergence time for DistributedHART is almost constant. Fig.~\ref{hart_fig:EXPloops_time} shows that DistributedHART consumes at least $95\%$ less convergence time than EDF. 

\subsubsection{Schedulability Ratio}
Centralized algorithms rely on global knowledge of channel/link quality information and harmonic periods which made it feasible for EDF to achieve high schedulability ratio (this may not be feasible with arbitrary periods), as shown in Fig.~\ref{hart_fig:EXPloops_sched}. In this experiment, we used a very dense deployment. Thus, the worst case chromatic number or $\gamma$ for DistributedHART was very high. We also observed that $\gamma$ increases with an increase in the number of flows since more nodes require time windows. This increase in $\gamma$ increases the end-to-end delay and decreases schedulability ratio.

While a distributed scheduler handles networks/workload dynamics and save energy, it is expected to perform poorly under schedulability ratio due to the lack of global information. However, DistributedHART is highly competitive in terms of schedulability ratio when compared to centralized algorithms. From this experiment, we can conclude that DistributedHART is a practical choice for wirelessHART as it offers a competitive schedulability ratio and consumes less energy, convergence time and takes very low memory to store a schedule.

    	\section{Simulation}
\label{hart_sec:simulations}
\subsection{Simulation Setup}
We perform evaluations considering a topology \cite{sha2015implementation} of 148 nodes through simulations in TOSSIM \cite{tossim}. We use a testbed topology of 74 nodes \cite{sha2015implementation} deployed over a wider area to compliment the experiment results. To scale with the number of nodes, we assume all nodes of
that topology are placed in a grid structure and replicate this grid. We add edges between neighboring grids to generate a connected bigger topology for large scale simulation.
For the simulations, we follow the fully distributed approach for allocating channel and time windows, as mentioned in DistributedHART. 
We evaluated the performance of DistributedHART under varying number of control loops with harmonic periods, number of nodes. We present the performance of DistributedHART under varying time window lengths, number of control loops with non-harmonic periods, hyper-periods, and workload dynamics. We evaluated the performance of DistributedHART when compared to a compact WirelessHART scheduler.  We also evaluated the performance of the proposed schedulability analysis. 
For the simulation results, we presented the aggregate result from 50 random test cases.  For each test case, we randomly selected sensor and actuators and assigned random harmonic periods in the range of $2^{11\sim13}$ time slots. To decrease the workload on the network, we doubled the range after adding every $10$ flows.

\subsection{Performance under Varying Window Lengths}
\label{hart_sec:sim_window}

 \begin{figure}[h]
	\centering  
	\includegraphics[width=0.35\textwidth]{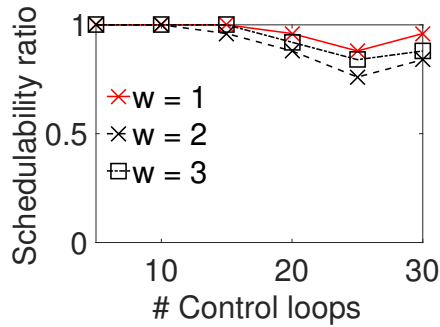}
   	\caption{Schedulability ratio under Varying Window Lengths}
	\label{hart_fig:wSched}
\end{figure}

Fig. \ref{hart_fig:wSched} shows the schedulability ratio under different $w$. Our results showed that $w=1$ offered ta better schedulability ratio when compared to $w=2$ or $w=3$. For $w=2$ (or $w=3$), a packet waited for $2\gamma$ (or $3\gamma$) time units between its arrival and the availability of the first transmission window, at each node. Such long delays increased the total latency of the packet and decreased the schedulability ratio. We also observed that, most often, packets from all flows arrived at $v_i$ during different time epochs. Therefore, the second slot in a transmission window (in most cases) remained unused. We determine $w=1$ as the good setting for this network and traffic pattern. This result also implies that longer time windows may not necessarily increase the schedulability ratio.

\subsection{Performance under Varying Number of Control Loops considering Harmonic Periods}
\label{hart_sec:harmonicperiods}

\begin{figure*}[t]
    	\centering
	%\vspace{-0.1in}
	\begin{subfigure}[b]{0.35\textwidth}
		\includegraphics[width=\textwidth]{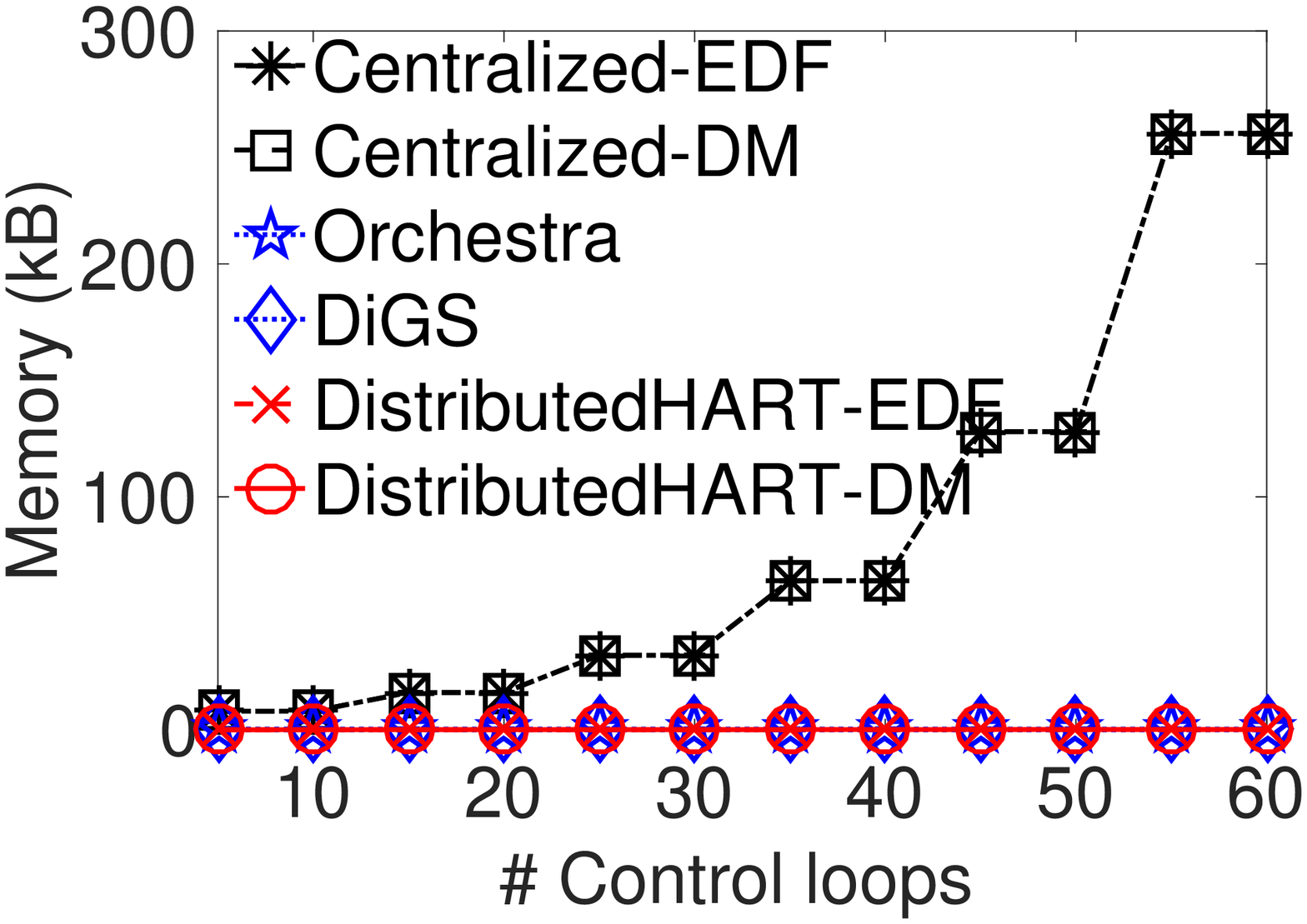}
		\caption{Memory Consumption}
		\label{hart_fig:loops_memory}
	\end{subfigure}
	\quad
	\begin{subfigure}[b]{0.35\textwidth}
		\includegraphics[width=\textwidth]{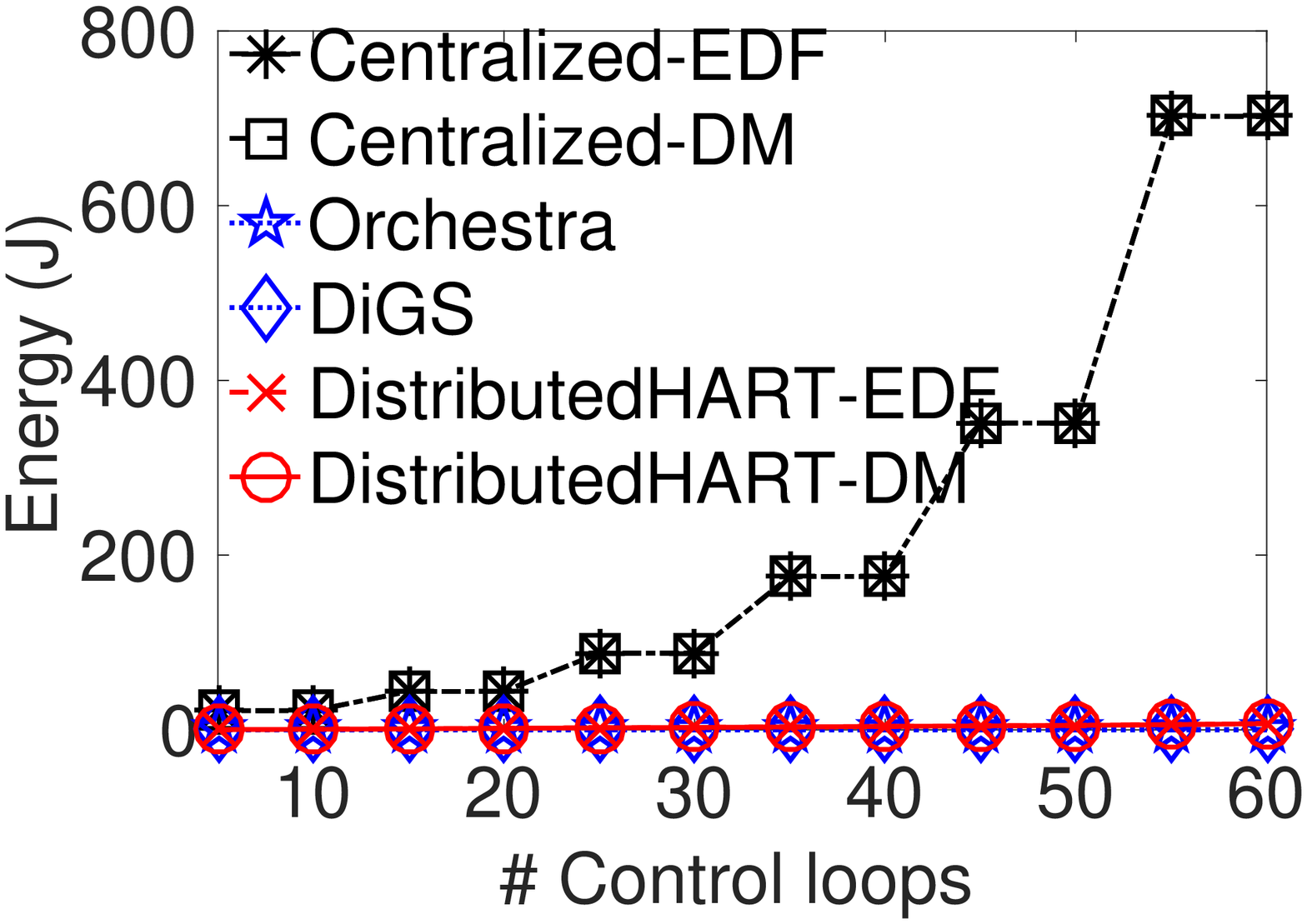}
		\caption{Energy consumption}
		\label{hart_fig:loops_energy}%
	\end{subfigure}
        	\quad
	\begin{subfigure}[b]{0.35\textwidth}
		\includegraphics[width=\textwidth]{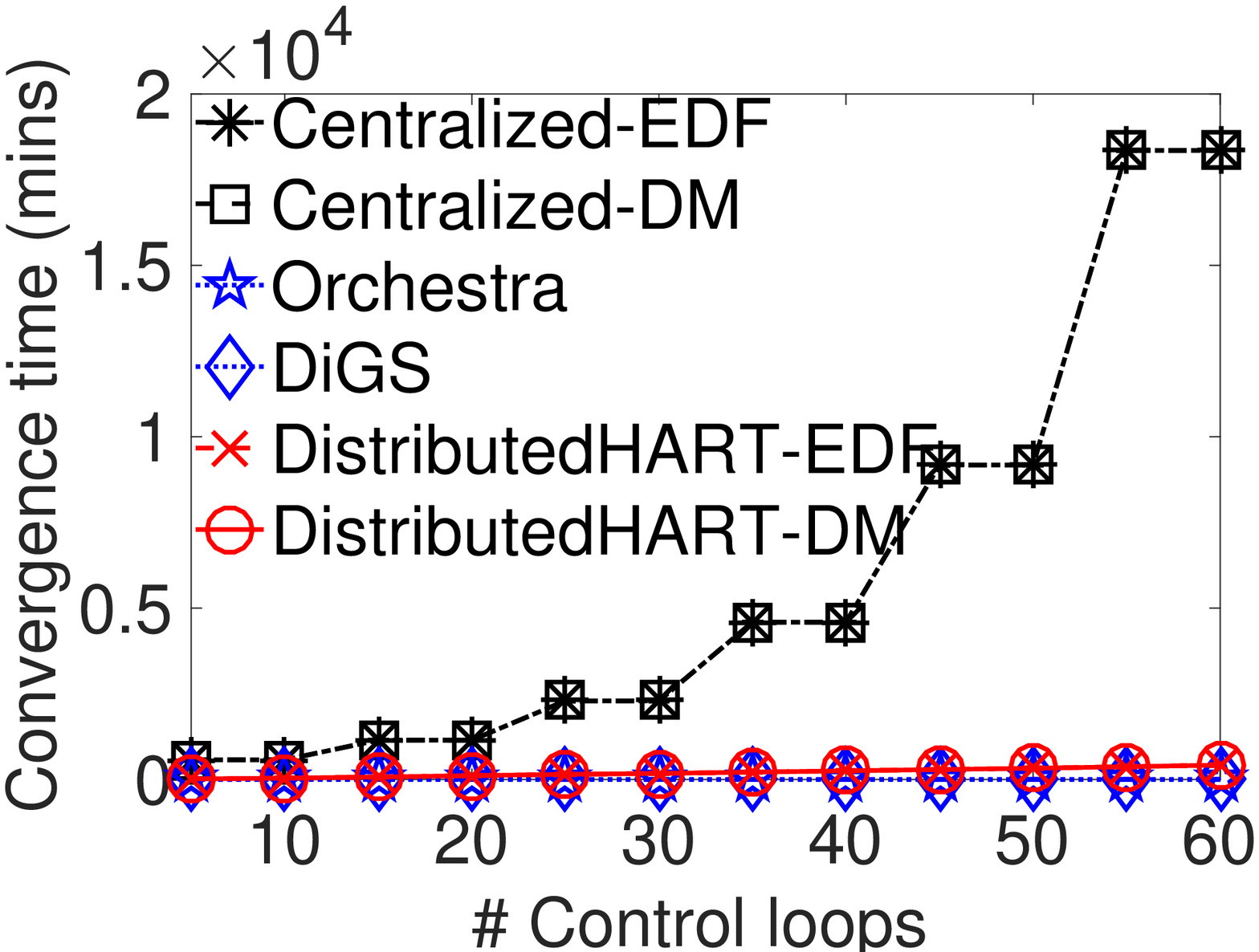}
		\caption{Convergence Time}
		\label{hart_fig:loops_time}
	\end{subfigure}
	\quad
	\begin{subfigure}[b]{0.35\textwidth}
		\includegraphics[width=\textwidth]{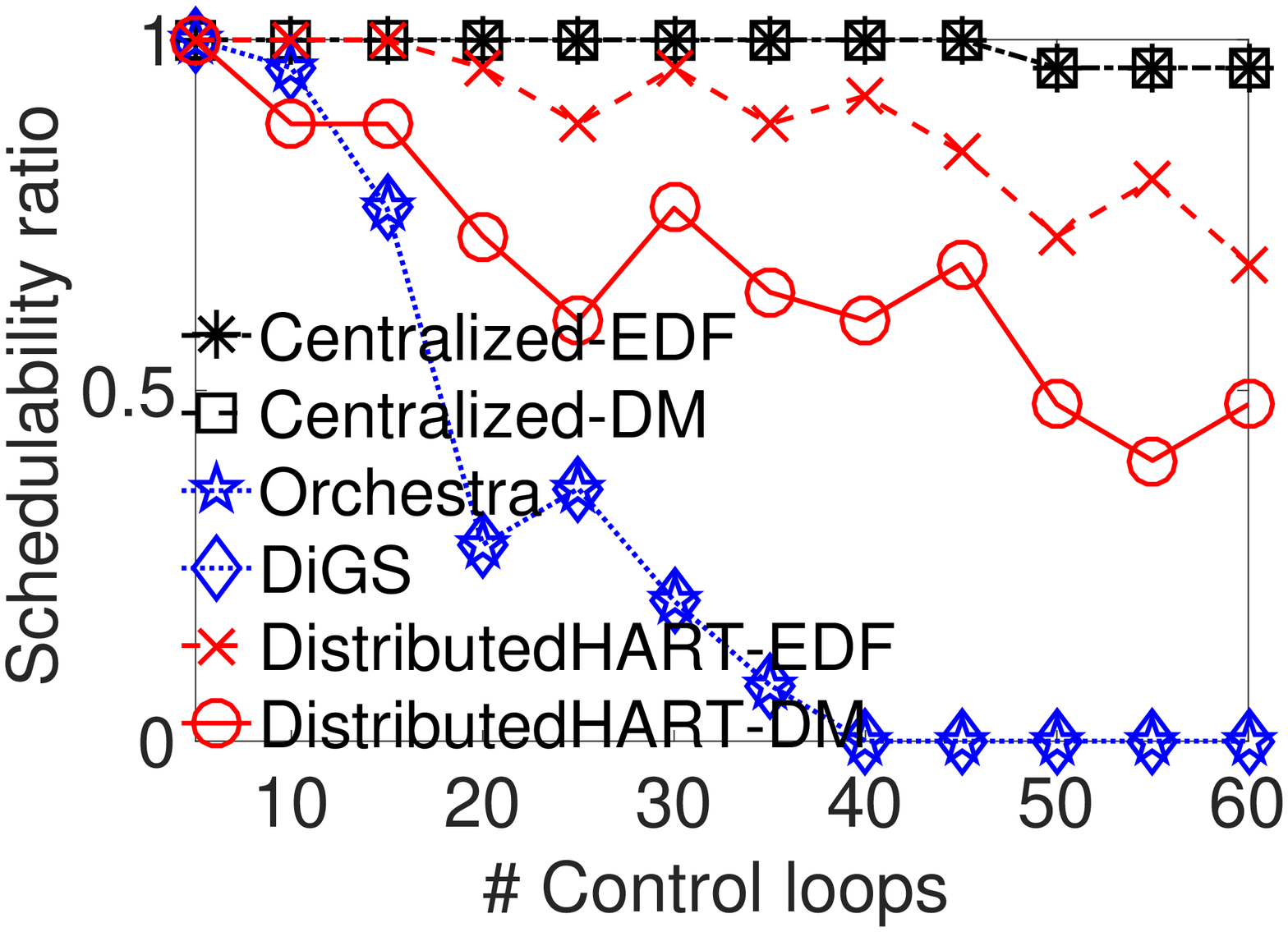}
		\caption{Schedulability Ratio}
		\label{hart_fig:loops_sched}
	\end{subfigure}
	\caption{Performance under Varying Number of Control Loops Considering Harmonic Periods}
	\label{hart_fig:loops}
\end{figure*}

Fig. \ref{hart_fig:loops} shows the performance of DistributedHART (under both fixed \& dynamic priority scheduling) and compares it with centralized EDF and DM under varying number of control loops with harmonic period assignments. We also compare the performance of DistributedHART with Orchestra \cite{duquennoy2015orchestra} and DiGS \cite{shidigs}. Orchestra and DiGS assign time window based on the nodeId. We use the same routing protocol and local scheduling algorithm (EDF) for Orchestra and DiGS as DistributedHART, for a fair comparison. Since Orchestra does not have a dedicated and shared slot assignment, we consider that both transmissions happen within the time window. We used $w = 1$ for DistributedHART, Orchestra, and DiGS. We varied the number of control loops from $5$ to $60$. Simulation results for this setup are shown in Fig. \ref{hart_fig:loops}.

  \textbf{Memory Consumption.} 
For centralized EDF and DM, memory consumption at a node is dependent on the hyper-period. Fig. \ref{hart_fig:loops_memory} shows a step increase in memory consumption since we double the hyper-period for every 10 control loops.  For DistributedHART, memory consumption depends only on worst case chromatic number $\gamma$. Thus, there is a very small increase in the memory consumption of DistributedHART. In this simulation, we have observed that DistributedHART consumes, a minimum of, $95\%$ less memory than centralized algorithms.

  \textbf{Energy Consumption and Convergence Time. } 
In centralized EDF and DM, nodes consume energy during schedule dissemination. Since the number of messages transmitted by each node in centralized algorithms is proportional to the hyper-period, Fig. \ref{hart_fig:loops_energy} shows an exponential increase in average energy consumption.
However, for DistributedHART, each node has to communicate only with its neighboring nodes in the conflict graph. We use controlled flooding to communicate with them since routes to all nodes are not available. Thus, the average energy consumption of a node only increases linearly. From these simulations, we observed that DistributedHART consumes $95\%$ less energy when compared to centralized algorithms. Similar to energy, convergence time for DistributedHART also increases linearly. DistributedHART consumes $90\%$ less time than centralized algorithms. Orchestra and DiGS use an autonomous approach where each node computes its schedule locally and does not require any communication between nodes.

  \textbf{Schedulability Ratio. }
In this simulation, we consider harmonic periods which make it feasible for centralized EDF and DM to achieve very high schedulability ratio, close to optimal, because they assume all local information is available at the network manager. Thus, Fig. \ref{hart_fig:loops_sched} shows a high schedulability ratio for centralized algorithms. For DistributedHART, smaller $\gamma$ for initial conditions results in similar schedulability as centralized algorithms.  With the increase in the number of control loops, $\gamma$ increases linearly, which increases end-to-end delay and decreases schedulability ratio. We selected sensors, actuators, and periods randomly which attributes to an increase in schedulability ratio with an increase in the number of control loops. Although a distributed scheduler is expected to perform poorly under schedulability ratio due to the lack of global information, DistributedHART is highly competitive in terms of schedulability ratio when compared to centralized algorithms.
For Orchestra and DiGS,  $\gamma$ is equal to the number of nodes in the network, which causes a large delay at each node of the flow. Thus, the schedulability ratio is very low compared to DistributedHART. Due to the poor schedulability ratio, we do not present results of Orchestra and DiGS in other evaluations. 
From these results, we can conclude that, DistributedHART outperforms Orchestra/DiGS under schedulability ratio, and centralized algorithms under energy, memory and convergence time while achieving similar schedulability ratio. 

\subsection{Performance under Varying Number of Control Loops considering Non-Harmonic Periods}
\label{hart_sec:sim_nonharmonic}

 \begin{figure*}[t]
    	\centering
	\begin{subfigure}[b]{0.35\textwidth}
		\includegraphics[width=\textwidth]{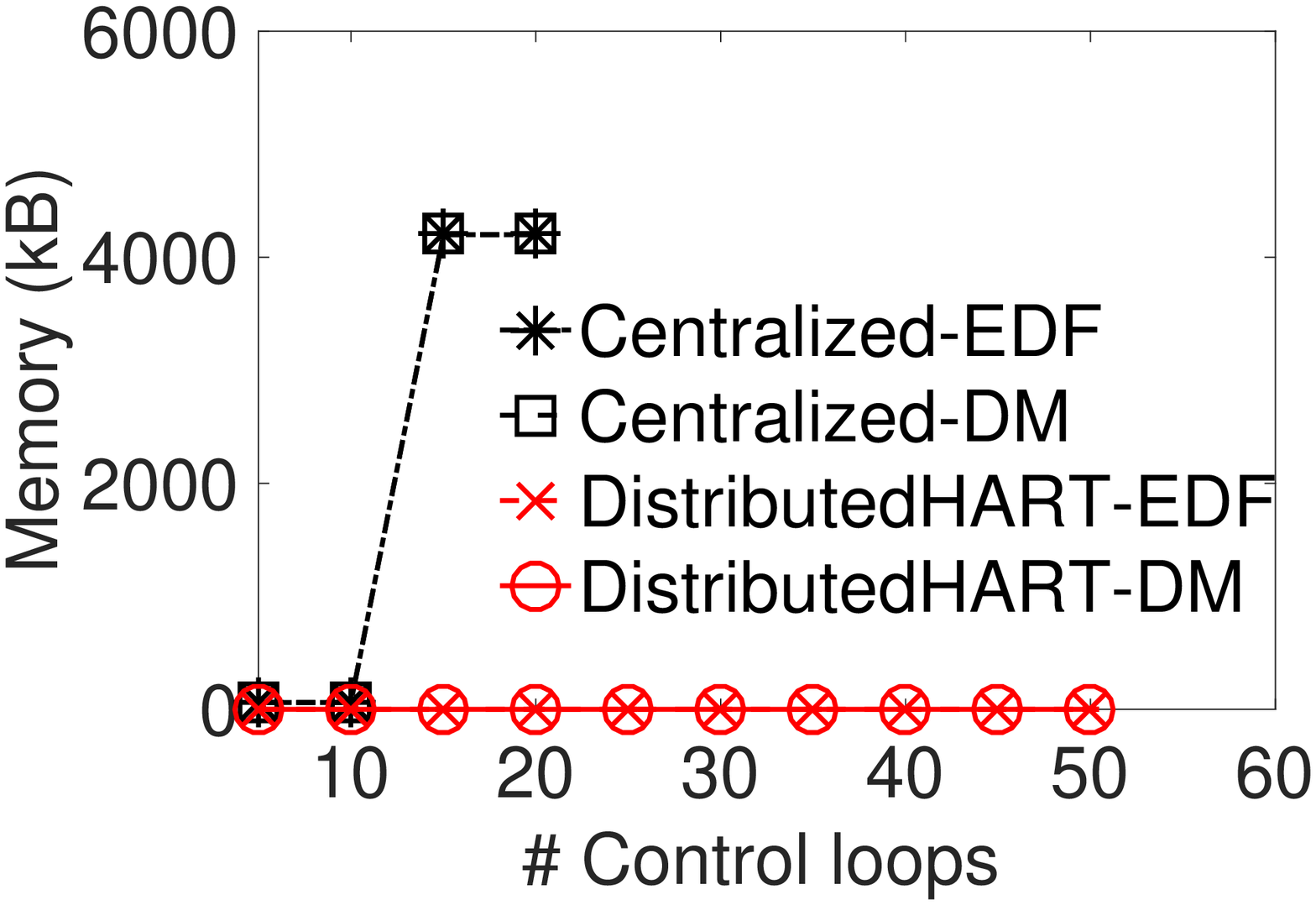}
		\caption{Memory Consumption}
		\label{hart_fig:NHloops_memory}
      	\end{subfigure}
	\quad
       \begin{subfigure}[b]{0.35\textwidth}
		\includegraphics[width=\textwidth]{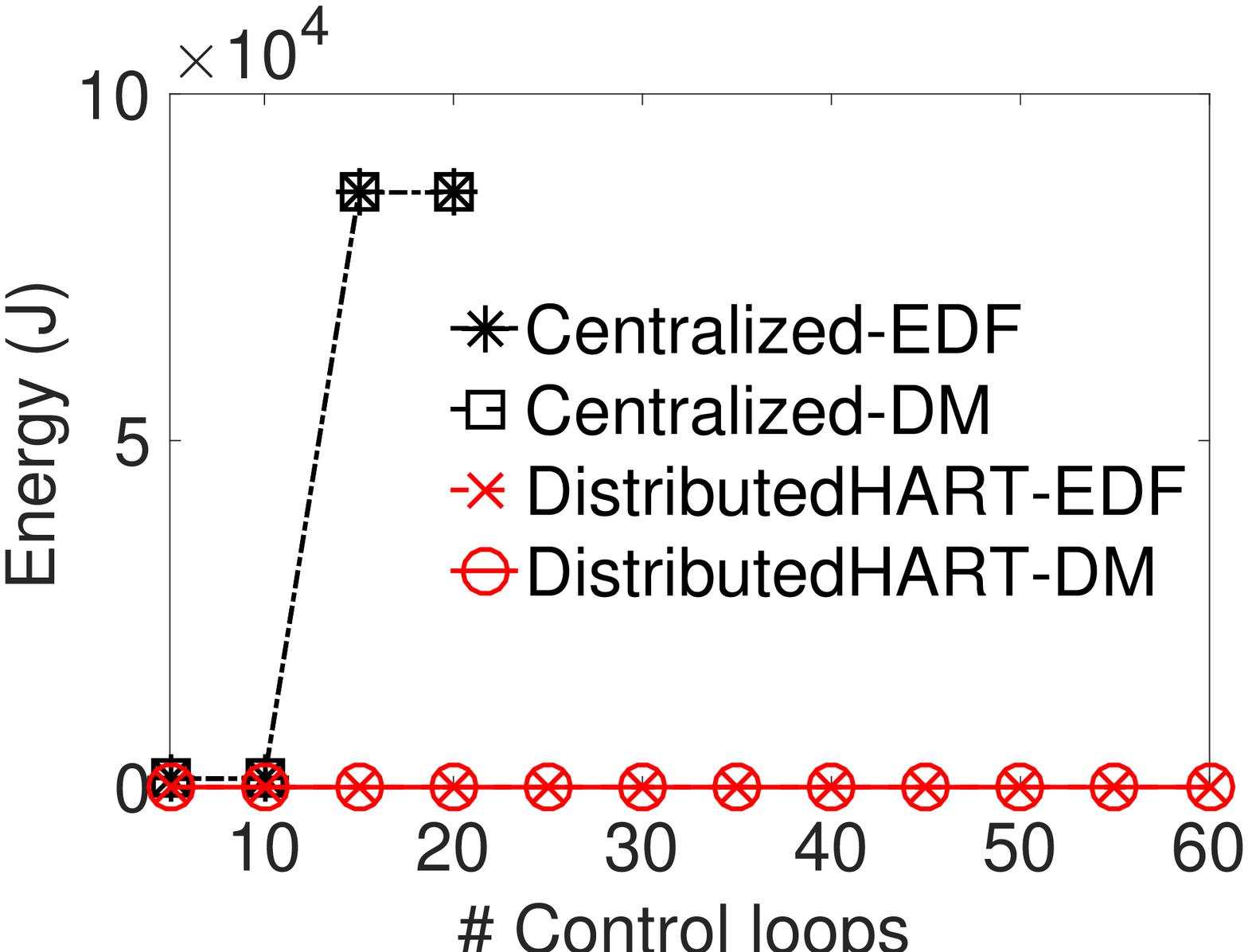}
		\caption{Energy consumption}
		\label{hart_fig:NHloops_energy}%
	\end{subfigure}
	\quad
	\begin{subfigure}[b]{0.35\textwidth}
		\includegraphics[width=\textwidth]{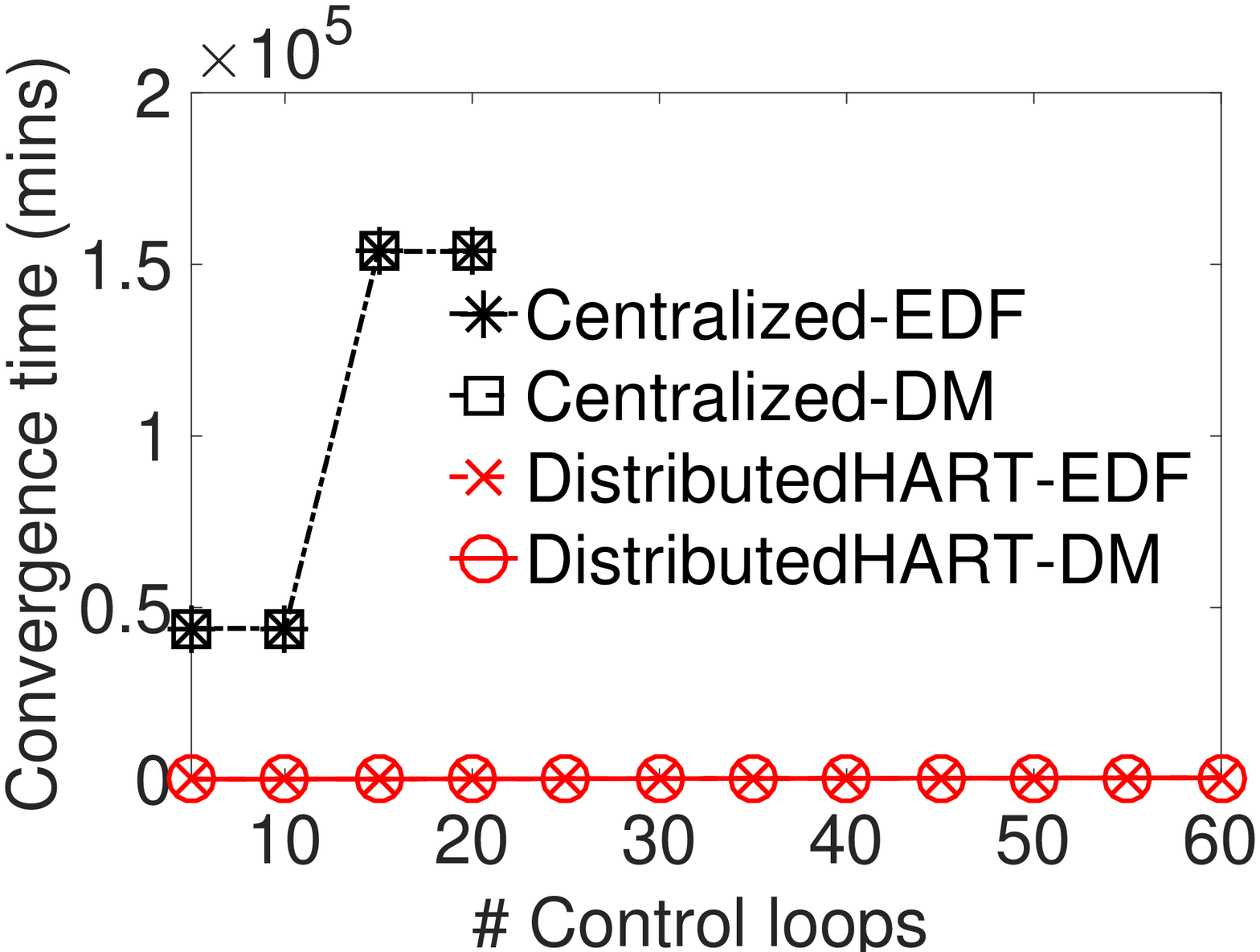}
		\caption{Convergence Time}
		\label{hart_fig:NHloops_time}
      	\end{subfigure}
	\quad
	\begin{subfigure}[b]{0.35\textwidth}
		\includegraphics[width=\textwidth]{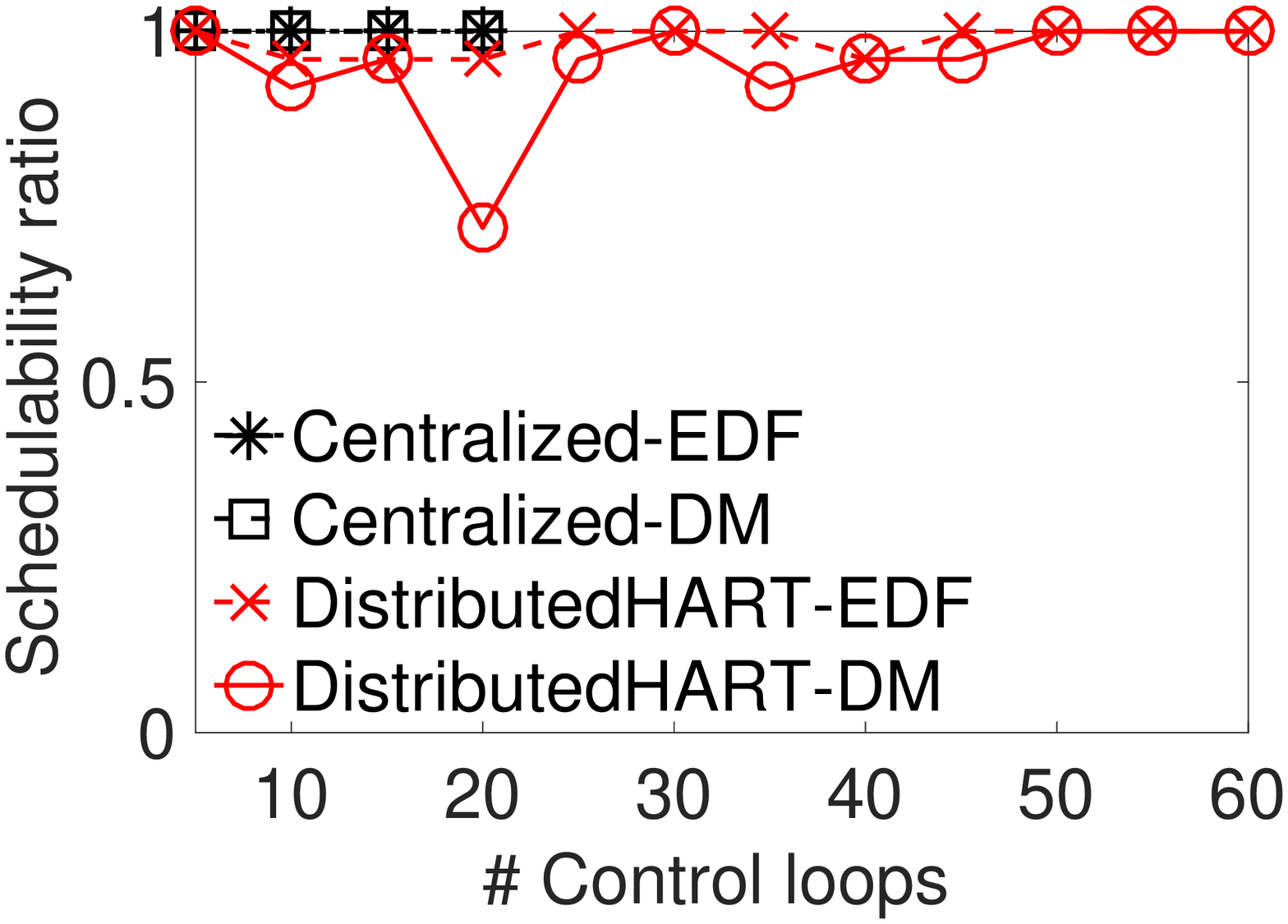}
		\caption{Schedulability Ratio}
		\label{hart_fig:NHloops_sched}	
      	\end{subfigure}
	\caption{Performance under Varying Number of Control Loops Considering Non-Harmonic Periods}
	\label{hart_fig:NHloops}
\end{figure*}

Fig. \ref{hart_fig:NHloops} shows the performance of DistributedHART (under both fixed \& dynamic priority scheduling) and compares it with centralized EDF and DM under varying number of control loops with non-harmonic period assignments. We used $w = 1$ for DistributedHART. We varied the number of control loops from $5$ to $60$. We presented the aggregate result from 50 random test cases.  For each test case, we randomly selected sensor and actuators and assigned random harmonic periods in the range of $2^{11\sim13}$ time slots. To decrease the workload on the network, we doubled the range after adding every $10$ flows.

 \textbf{Memory Consumption.} 
Memory consumption for centralized EDF and DM depends on the hyper-period of the test case. For non-harmonic periods, the hyper-period grows exponentially. Thus, as shown in Fig. \ref{hart_fig:NHloops_memory}, memory consumption of EDF and DM increases exponentially. For $25$ control loops, memory consumption surpasses the limited storage capacity of WirelessHART and TelosB devices \cite{energyTelosb}. However, for DistributedHART, memory consumption is not dependent on the hyper period. This simulation shows DistributedHART consumes at least $99\%$ less memory.

 \textbf{Energy Consumption and Convergence Time. } 
Since energy consumption and convergence time for EDF and DM are proportional to the hyper-period, they increase exponentially, as shown in Fig. \ref{hart_fig:NHloops_energy}, \ref{hart_fig:NHloops_time}. However, for DistributedHART, energy consumption and convergence time increase linearly with an increase in the number of nodes. We observed that DistributedHART saves a minimum of $99\%$ of energy and $99\%$ of convergence time for control loops with non-harmonic periods. We observed that DistributedHART saves a minimum of $99\%$ of energy and $99\%$ of convergence time for control loops with non-harmonic periods. 

 \textbf{Schedulability Ratio. } 
For centralized algorithms, we could compute schedulability ratio up to $20$ control loops. Beyond $20$ control loops, the running time of schedule generation was too large let alone simulation. For DistributedHART, we observed that schedulability ratio is better than that of harmonic periods. From this simulation, we can conclude that for non-harmonic periods DistributedHART outperforms centralized algorithms.

\subsection{Performance under Varying Number of Nodes}
\label{hart_sec:numberOfNodes}

\begin{figure*}[t]
    	\centering
	%\vspace{-0.1in}
	\begin{subfigure}[b]{0.35\textwidth}
		\includegraphics[width=\textwidth]{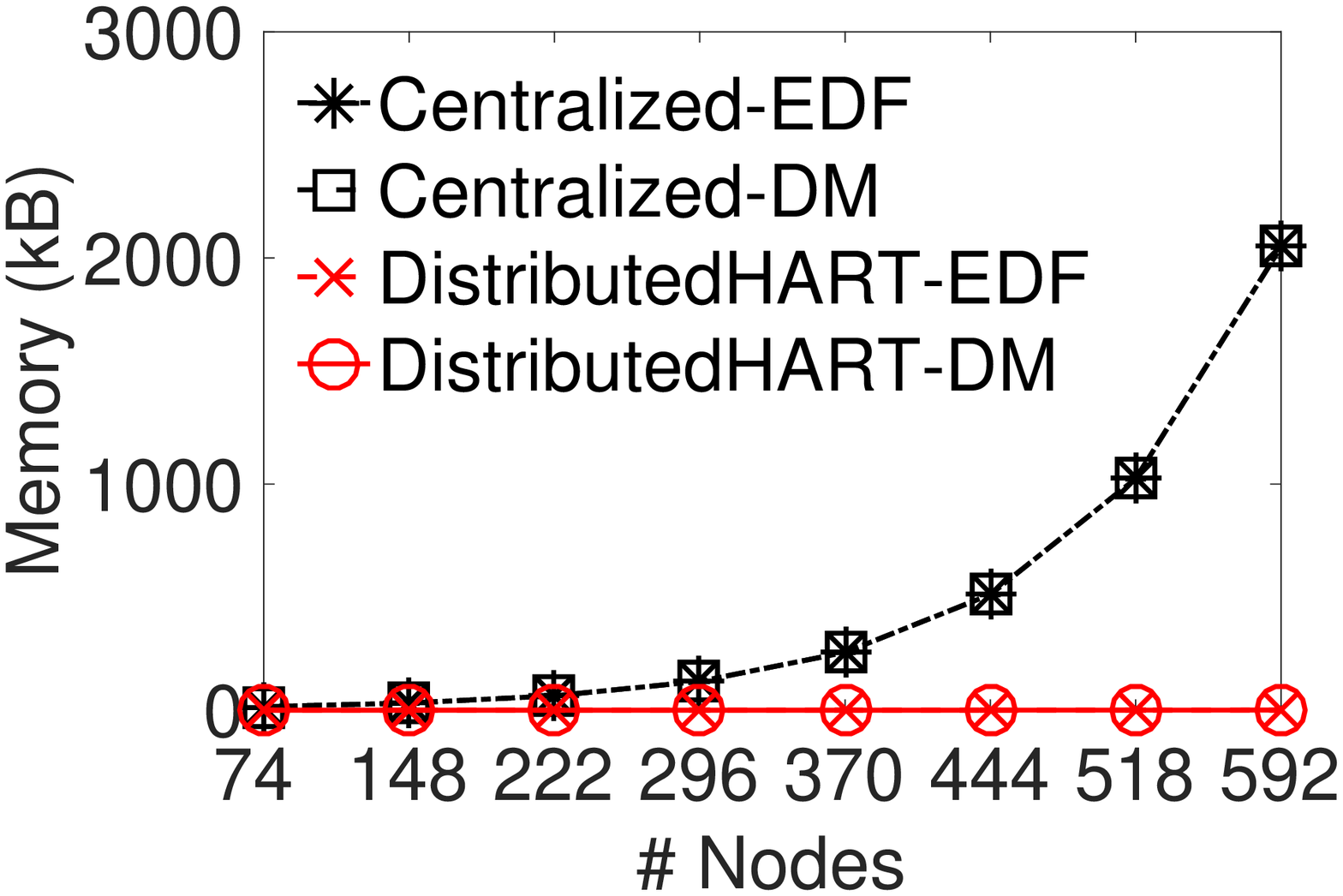}
		\caption{Memory Consumption}
		\label{hart_fig:nodes_memory}
	\end{subfigure}
	\quad
	\begin{subfigure}[b]{0.35\textwidth}
		\includegraphics[width=\textwidth]{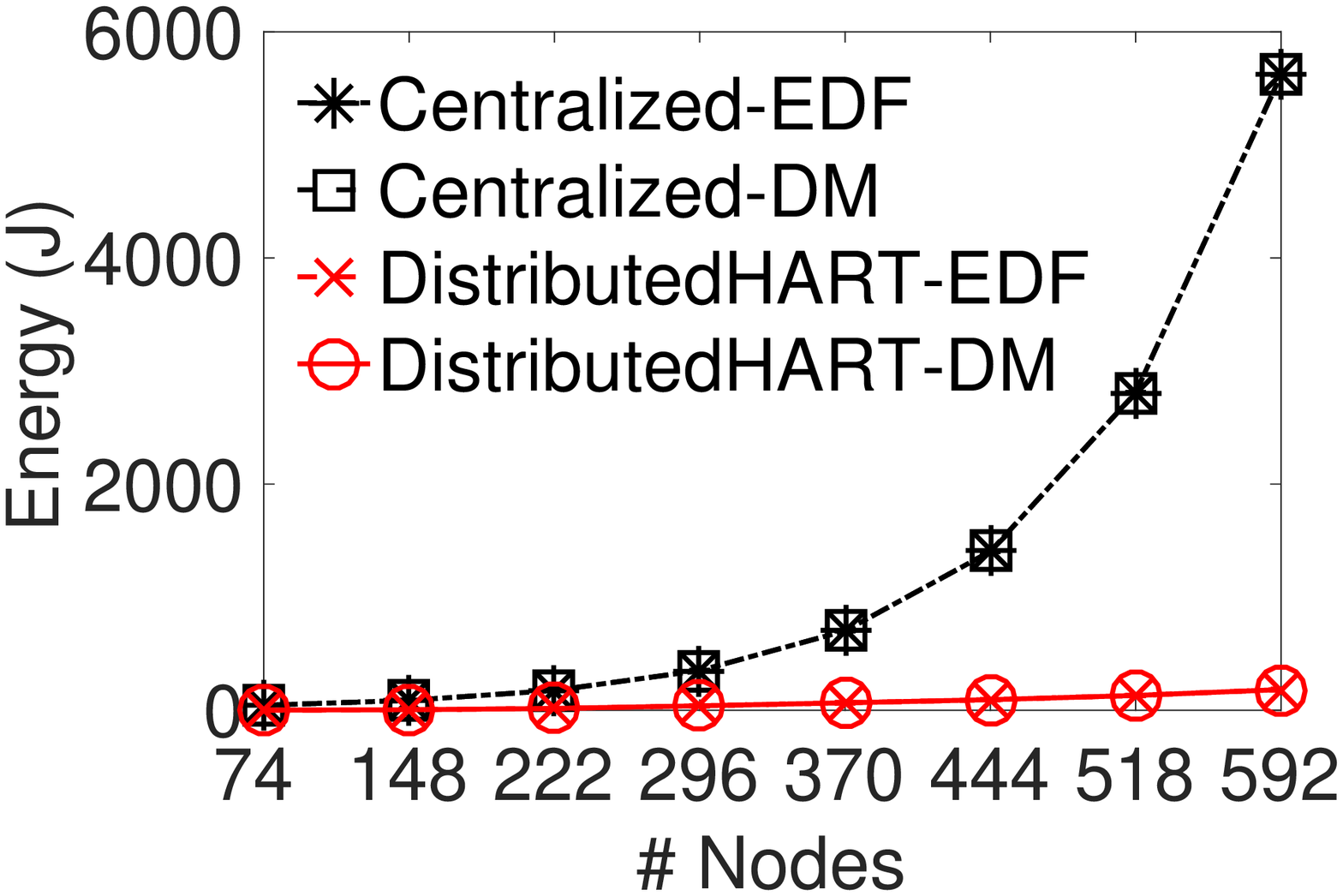}
		\caption{Energy consumption}
		\label{hart_fig:nodes_energy}%
	\end{subfigure}
        	\quad
	\begin{subfigure}[b]{0.35\textwidth}
		\includegraphics[width=\textwidth]{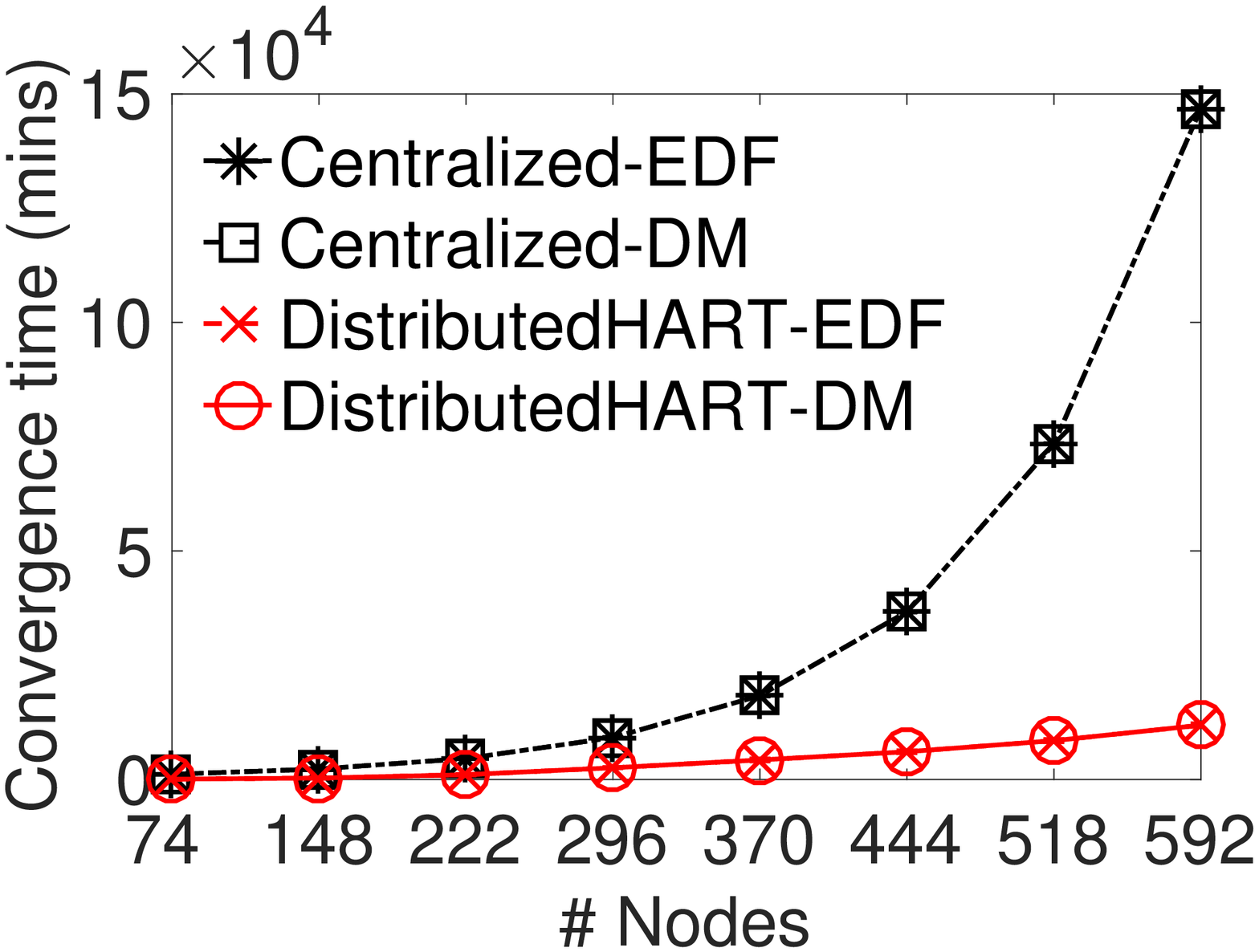}
		\caption{Convergence Time}
		\label{hart_fig:nodes_time}
	\end{subfigure}
	\quad
	\begin{subfigure}[b]{0.35\textwidth}
		\includegraphics[width=\textwidth]{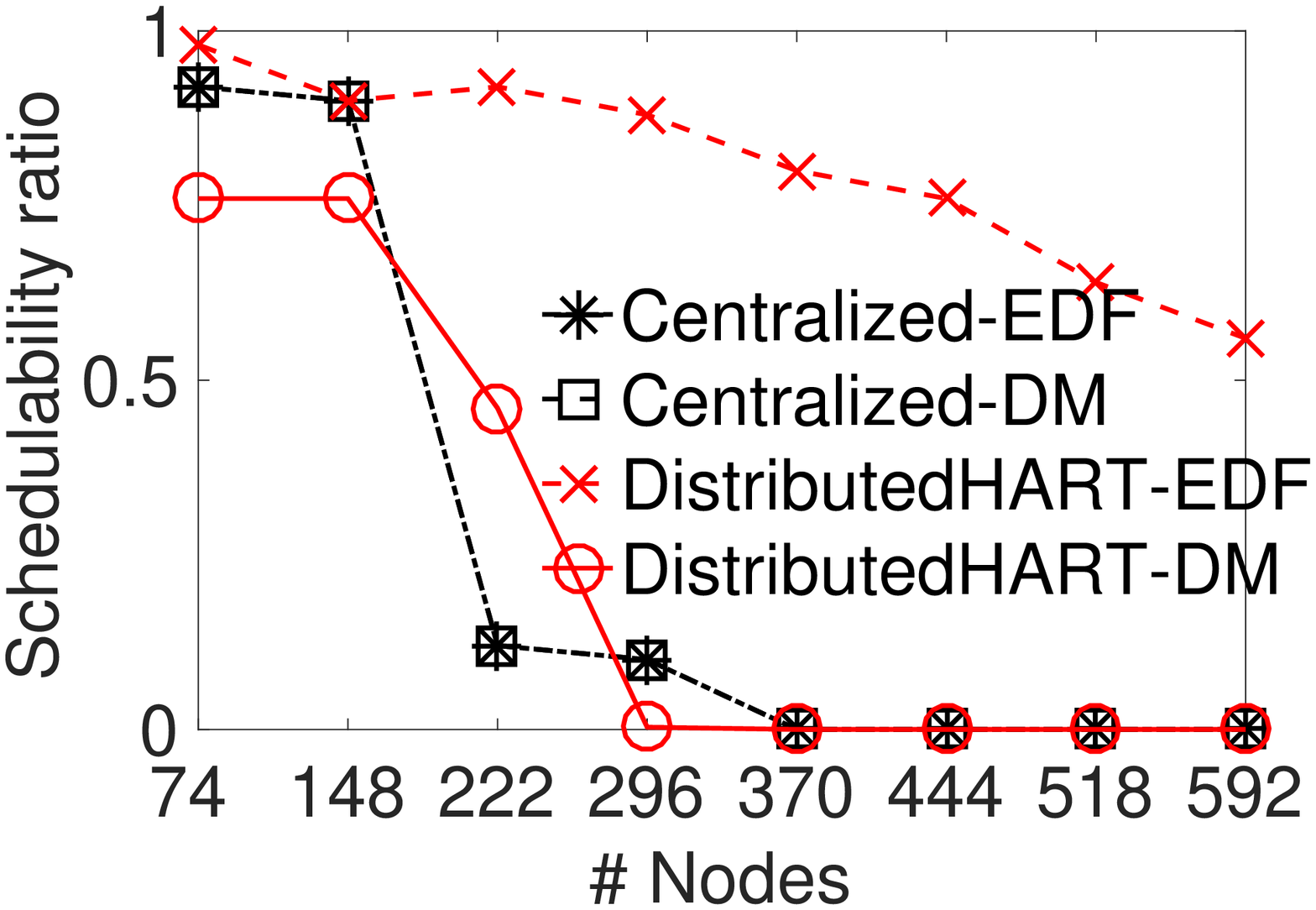}
		\caption{Schedulability Ratio under Varying Number of Nodes}
		\label{hart_fig:nodes_sched}
	\end{subfigure}
	\caption{Performance under Varying Number of Nodes}
	\label{hart_fig:nodes}
\end{figure*}

Here, we show the performance of DistributedHART under varying number of nodes. Fig. \ref{hart_fig:nodes} shows the simulation results when number of control loops is $20\%$ of the number of nodes.

  \textbf{Memory Consumption.} 
In this simulation, the number of control loops increases linearly with the increase in the number of nodes, which exponentially increases the hyper-period. Thus, Memory consumption for both centralized and distributed algorithms follow the same result as varying number of control loops with harmonic periods.  Fig. \ref{hart_fig:nodes_memory} shows DistributedHART consumes $99\%$ less memory than EDF.

  \textbf{Energy Consumption and Convergence Time. }
Both energy and convergence time follow the similar curves as memory consumption as they are also dependent on hyper-period. Fig. \ref{hart_fig:nodes_energy} and Fig. \ref{hart_fig:nodes_time} show DistributedHART consumes at least $94\%$ less energy and $85\%$ less convergence time.

  \textbf{Schedulability Ratio.}
As shown in Fig \ref{hart_fig:nodes_sched}, centralized algorithms are not scalable with the number of nodes due to the huge wastage of time slots. In contrast, DistributedHART offers better schedulability ratio when compared to centralized algorithms. From this result, we can conclude that DistributedHART scales with number of nodes.

\subsection{Performance under Varying Hyper-Periods}
\label{hart_sec:varyingHyperPeriods}

\begin{figure*}[t]
    	\centering
	\begin{subfigure}[b]{0.35\textwidth}
		\includegraphics[width=\textwidth]{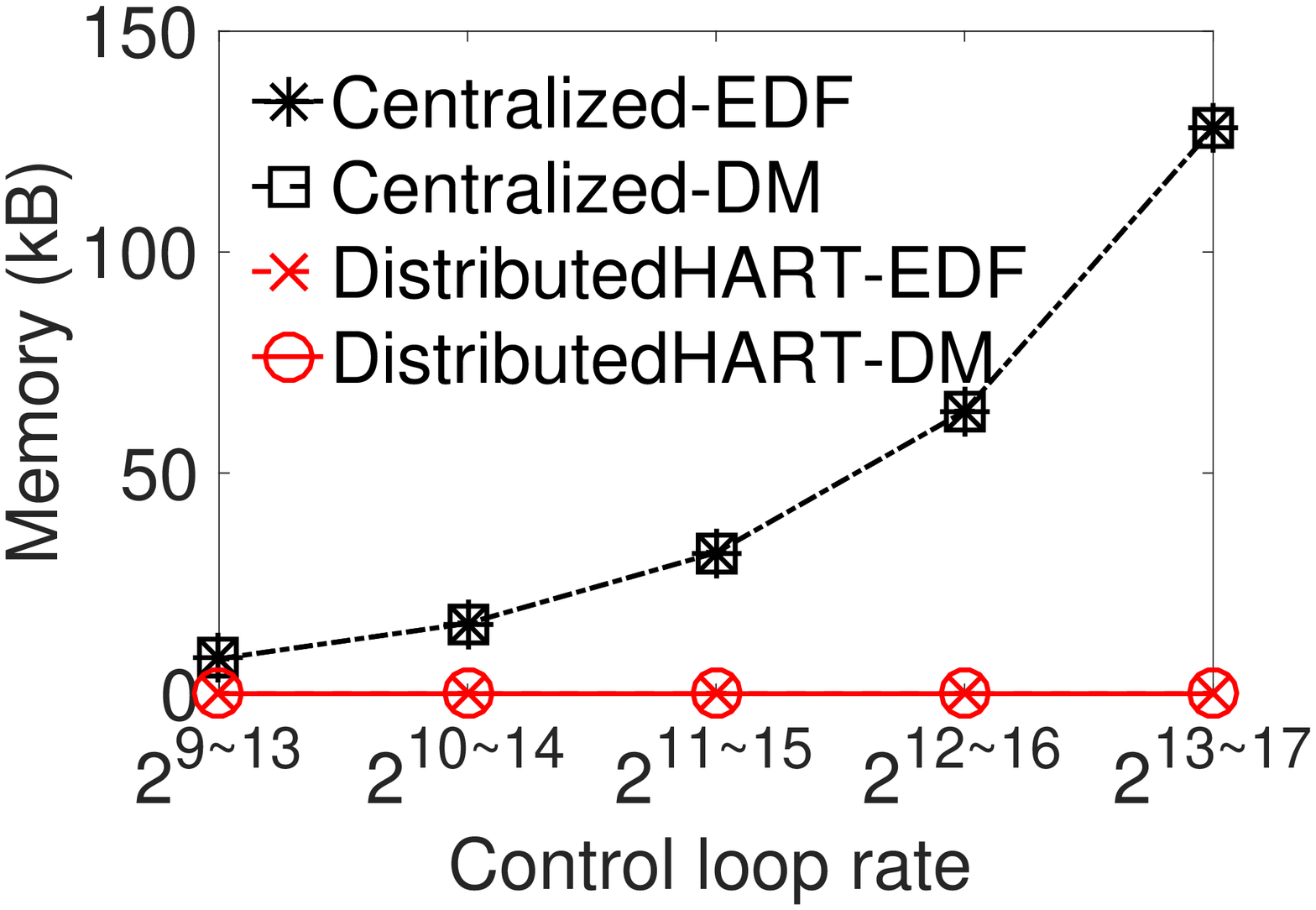}
		\caption{Memory Consumption}
		\label{hart_fig:flow_memory}
      	\end{subfigure}
	\quad
      	\begin{subfigure}[b]{0.35\textwidth}
		\includegraphics[width=\textwidth]{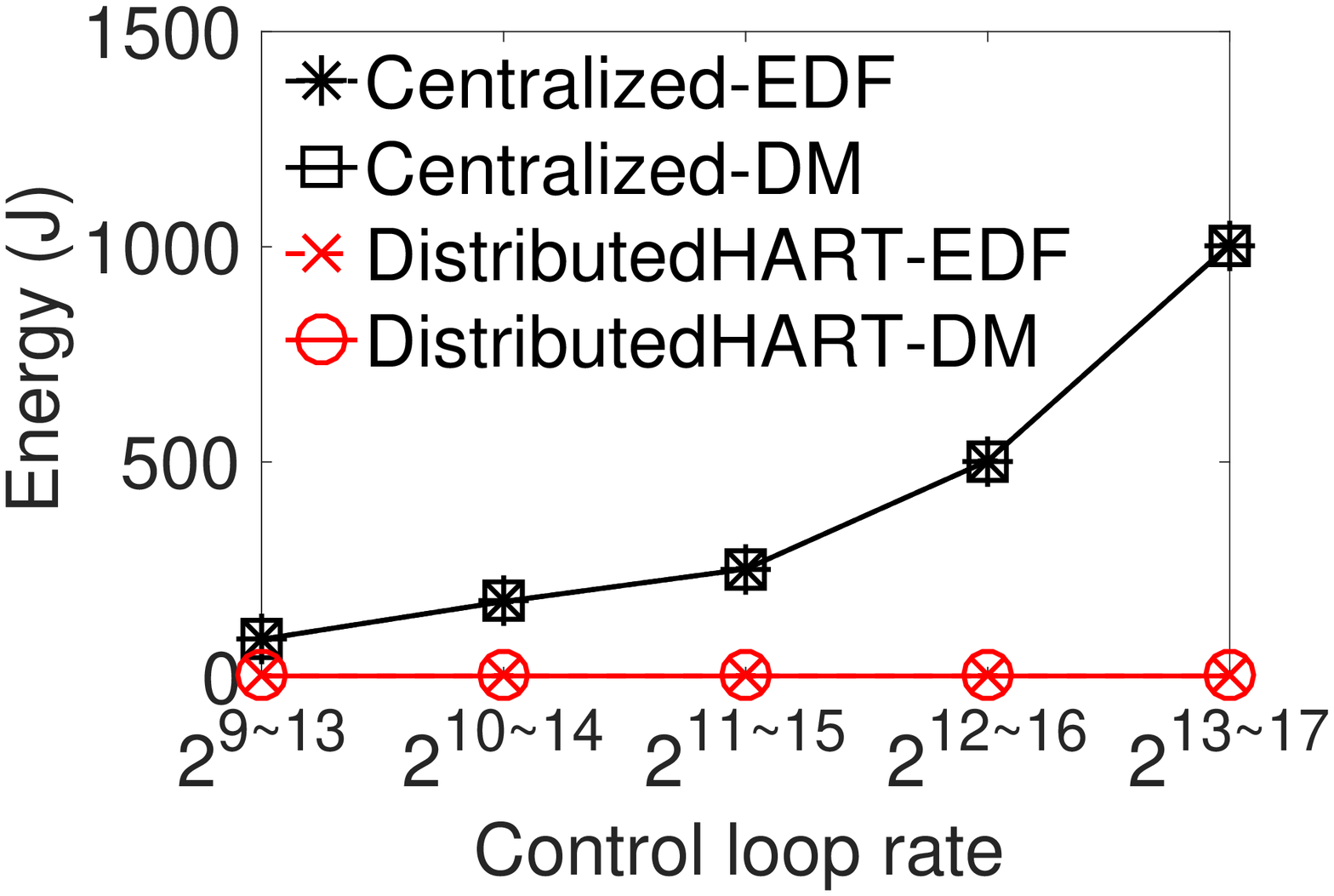}
		\caption{Energy consumption}
		\label{hart_fig:flow_energy}%
	\end{subfigure}
	\quad
	\begin{subfigure}[b]{0.35\textwidth}
		\includegraphics[width=\textwidth]{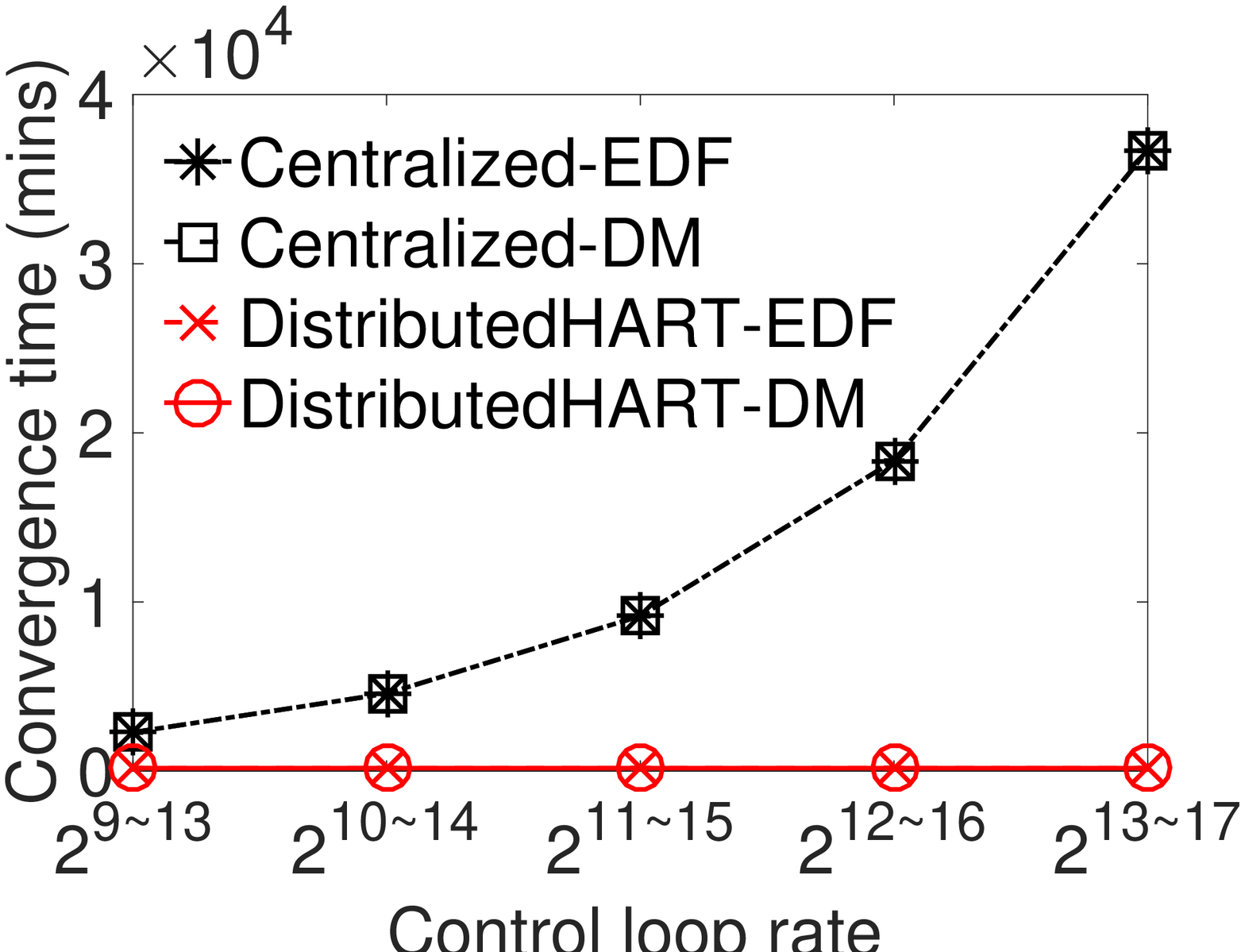}
		\caption{Convergence Time}
		\label{hart_fig:flow_time}
      	\end{subfigure}
	\quad
	\begin{subfigure}[b]{0.35\textwidth}
		\includegraphics[width=\textwidth]{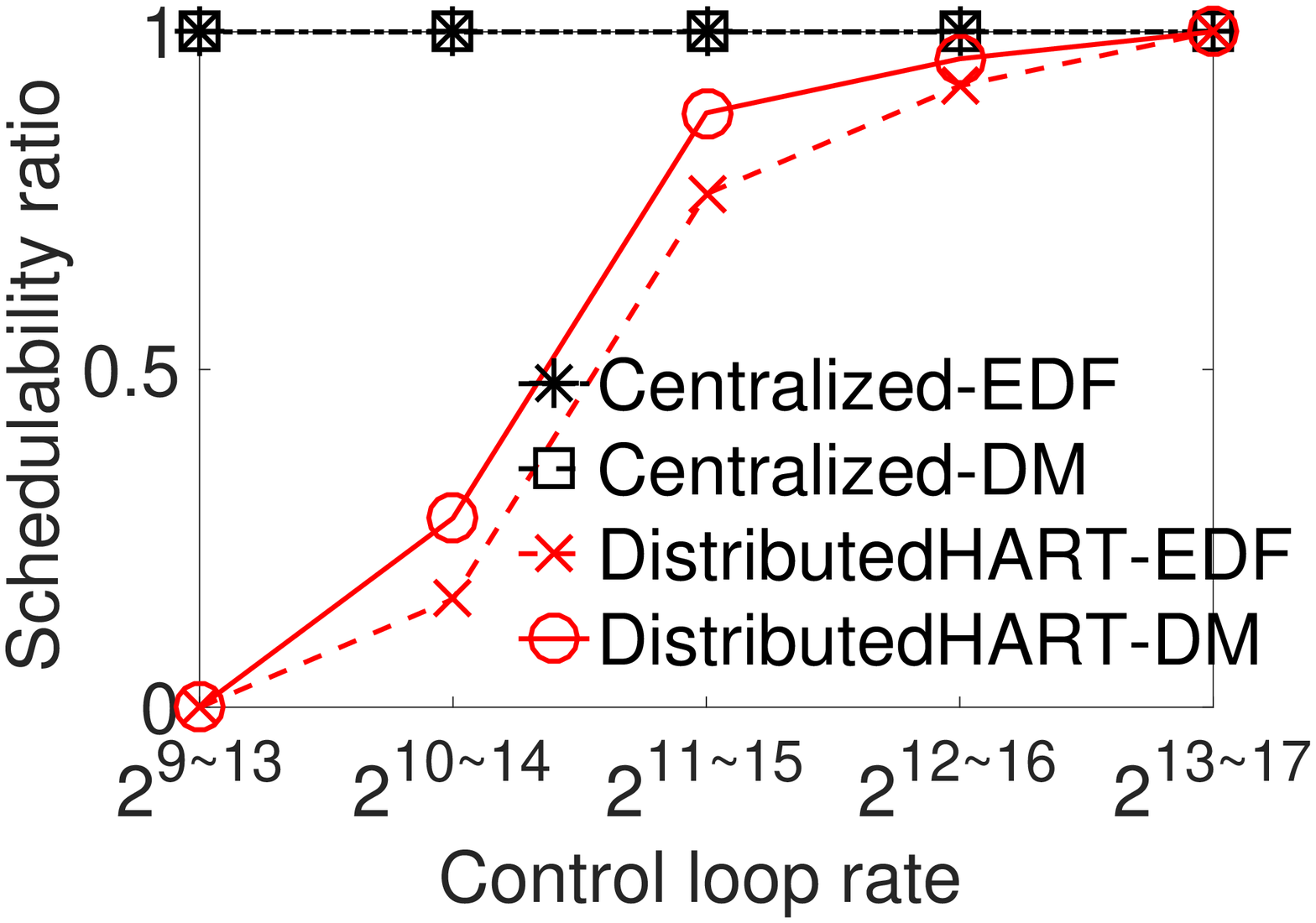}
		\caption{Schedulability Ratio}
		\label{hart_fig:flow_sched}
      	\end{subfigure}
	\caption{Performance under Varying Hyper-Periods}
	\label{hart_fig:flow}
\end{figure*}

Here, we show the performance of DistributedHART under varying ranges of hyper-periods. We used the same setup as performance under varying number of control loops with non harmonic periods (Appendix \ref{hart_sec:sim_nonharmonic}). However, we kept the number of control loops constant at $30$ and varied the range of periods from $2^{9\sim 13}$ to $2^{13\sim 17}$.

  \textbf{Memory Consumption. }
Fig. \ref{hart_fig:flow_memory} shows memory consumption increases exponentially for centralized EDF and DM. This exponential increase in memory consumption is due to the exponential increase in the hyper-period. However,  memory consumption for DistributedHART remains constant since the memory consumption is not a function of the hyper-period. We have observed that DistirbutedHART consumes $99\%$ less memory than centralized algorithms.  

  \textbf{Energy Consumption and Convergence Time. }
As expected, energy and convergence time of centralized EDF and DM increases exponentially. However, energy (as shown in Fig. \ref{hart_fig:flow_energy}) and convergence time (as shown in Fig. \ref{hart_fig:flow_time}) for DistributedHART is constant and remain unaffected by varying hyper-period lengths.

  \textbf{Schedulability Ratio. }
As shown in Fig. \ref{hart_fig:flow_sched}, with an increase in the range of periods, schedulability ratio remains constant at $1$ for centralized algorithms and DistributedHART. For DistributedHART, we observed that increasing the period improved the schedulability ratio due to the increase in deadline, while total delay remained constant. From these results, we can conclude that under DistributedHART outperforms centralized algorithms in terms of energy, convergence time and memory while achieving similar schedulability ratio.

\subsection{Performance under Varying Workload Dynamics}
\label{hart_sec:varyingWorkload}

\begin{figure*}[t]
    	\centering
	%\vspace{-0.1in}
	\begin{subfigure}[b]{0.35\textwidth}
		\includegraphics[width=\textwidth]{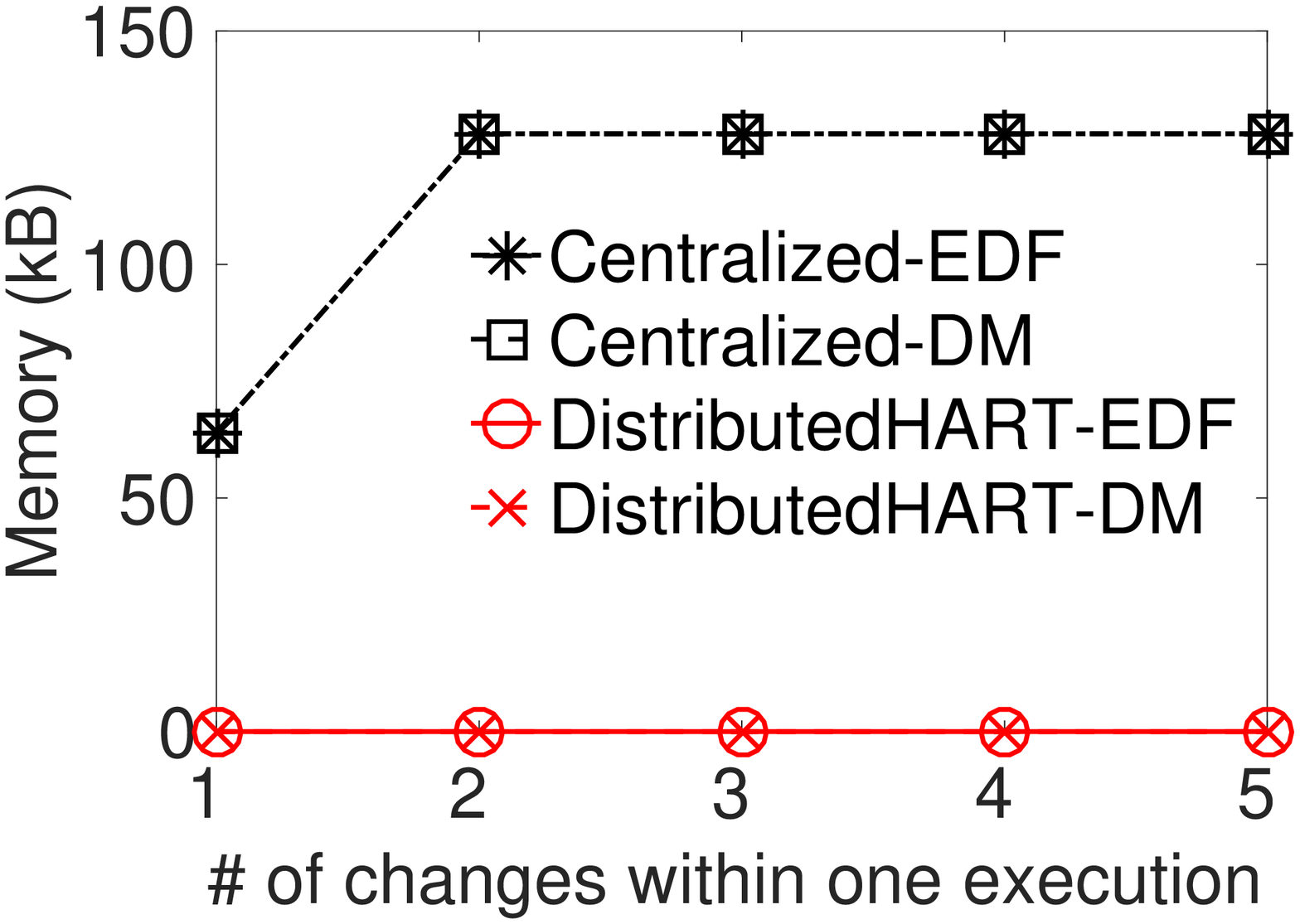}
		\caption{Memory Consumption}
		\label{hart_fig:VaryingFlow_memory}
      	\end{subfigure}
	\quad
       \begin{subfigure}[b]{0.35\textwidth}
       		\includegraphics[width=\textwidth]{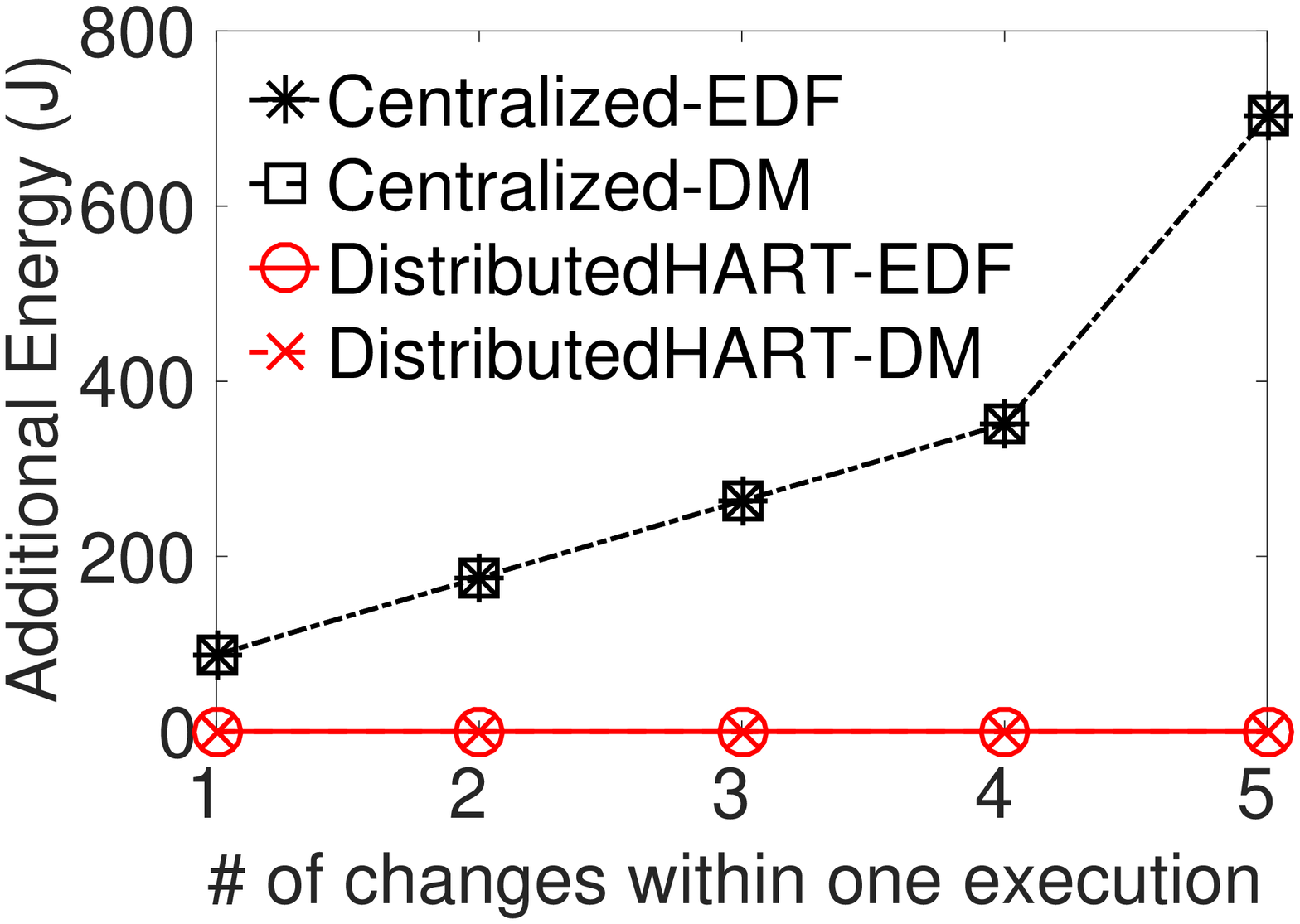}
		\caption{Energy consumption}
		\label{hart_fig:VaryingFlow_energy}%
	\end{subfigure}
	\quad
	\begin{subfigure}[b]{0.35\textwidth}
		\includegraphics[width=\textwidth]{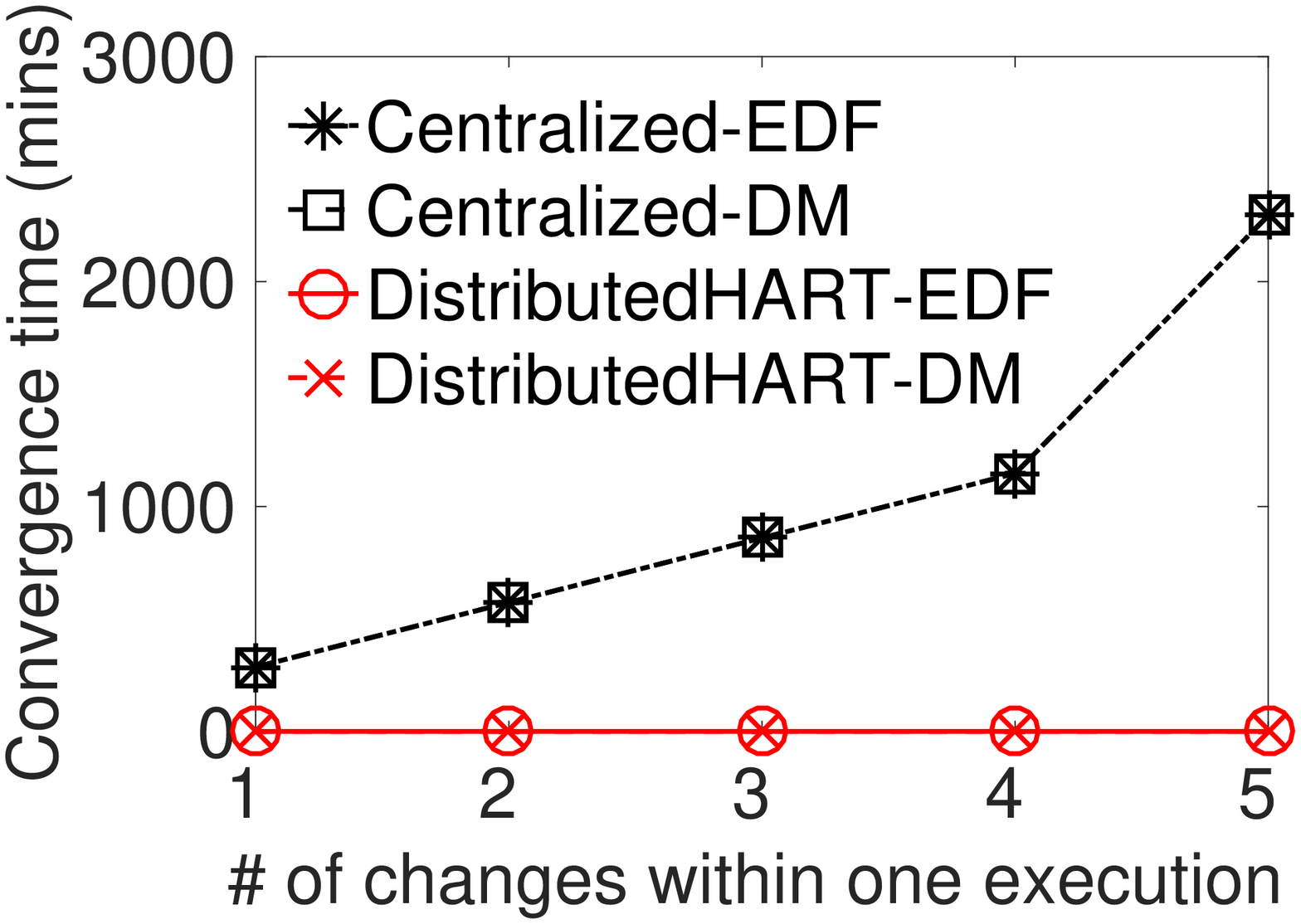}
		\caption{Convergence Time}
		\label{hart_fig:VaryingFlow_time}
      	\end{subfigure}
	\begin{subfigure}[b]{0.35\textwidth}
		\includegraphics[width=\textwidth]{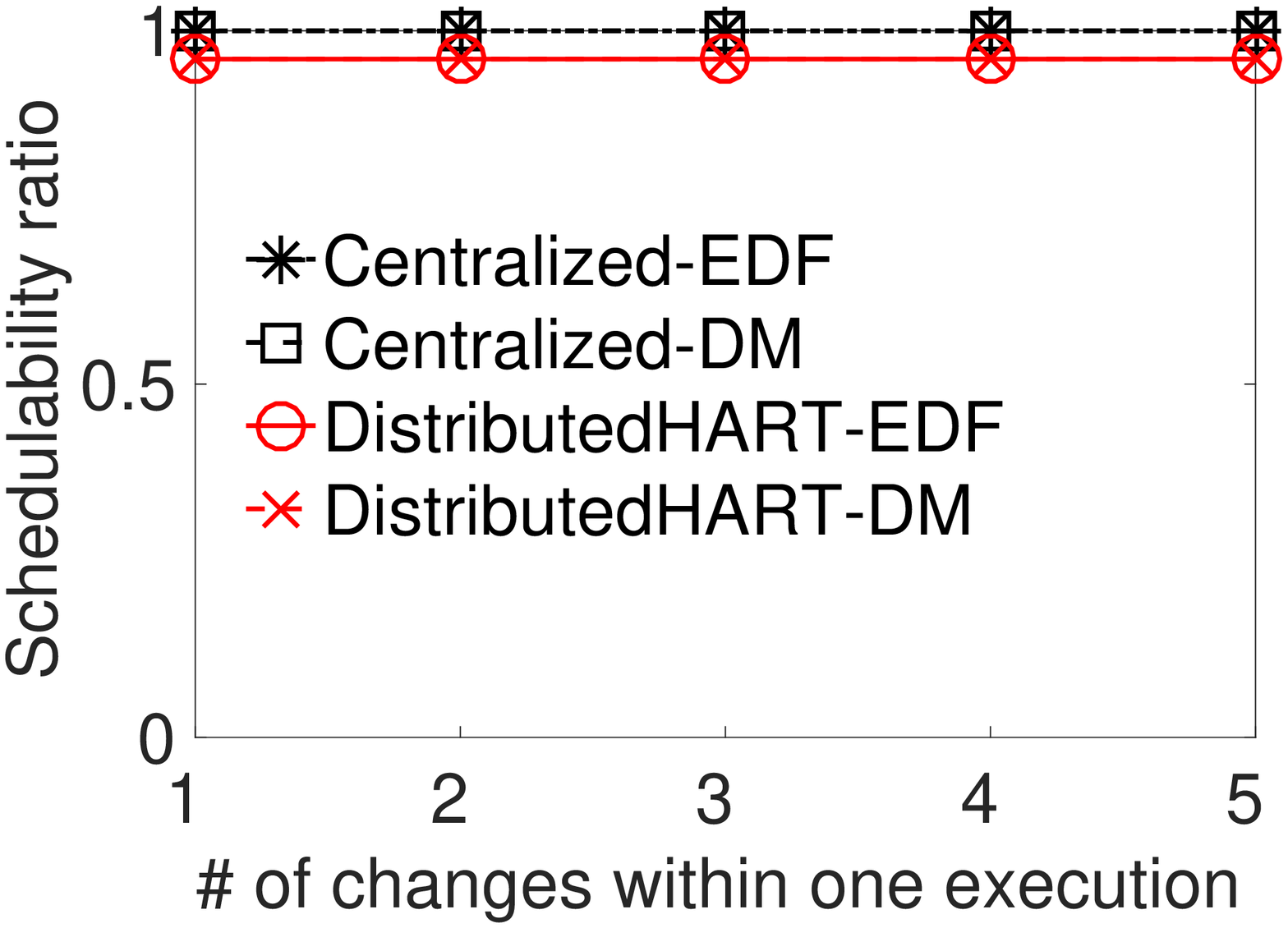}
		\caption{Schedulability ratio}
		\label{hart_fig:VaryingFlow_sched}
	\end{subfigure}
	%\vspace{-0.1in}
	\caption{Performance under Varying Workload Dynamic}
	\label{hart_fig:VaryingFlow}
	\vspace{-0.2in}
\end{figure*}
 
Fig. \ref{hart_fig:VaryingFlow} shows the performance of DistributedHART under different network dynamics. In this simulation, we kept the number of flows constant at $30$ and varied (increased or decreased) the period of random $5$ flows per workload change, while ensuring hyper-period is in between $2^{14 \sim18}$ time slots. We use the number of workload changes in one execution as a parameter for comparing the performance of DistributedHART and centralized algorithms. 

  \textbf{Memory Consumption.}
For $1$ workload change, hyper-period at most doubles. For more than one change, hyper-period increases by at most 4 times as shown in Fig \ref{hart_fig:VaryingFlow_memory}. For DistributedHART, change in the hyper-period does not affect the memory consumption and hence, it remains constant.

  \textbf{Energy Consumption and Convergence Time. }
For centralized algorithms, every change in workload necessitates an update in the schedule. A centralized approach has to collect the entire topology, re-create schedules and distribute the new schedules to all nodes in the network for each network/workload dynamic. This new schedule has to re-disseminated to the nodes. As shown in Fig. \ref{hart_fig:VaryingFlow_energy}, energy consumed by centralized algorithms (for changes in workload) increases linearly with the increase in the number of workload changes. Thus, centralized approaches are inefficient in large networks with frequent network dynamics. In DistributedHART, the local scheduler at each node handles workload dynamics. Thus, DistributedHART requires $0J$ of additional energy and $0s$ of convergence time to generate a schedule (for every workload change). Therefore, Fig \ref{hart_fig:VaryingFlow_time} shows a linear increase for centralized algorithms and $0$ for DistributedHART.

  \textbf{Schedulability Ratio. }
In the event of a workload change, we define a test case to be schedulable if all flows in the test case meet the deadline before and after the workload change.
In our simulation, we observed that centralized algorithms and DistributedHART have similar schedulability ratio, where centralized algorithms perform $4\%$ better than DistributedHART. A centralized approach has to collect the topology, re-create and distribute new schedules to all nodes upon dynamics. Network dynamics are quite frequent in industrial environments, making a centralized approach inefficient in large-scale networks. From this simulation, we can conclude that for varying workload requirements DistributedHART outperforms centralized algorithms in terms of energy and convergence time while offering similar schedulability ratio.

 \subsection{Performance of End-to-End Delay Analysis}
 \label{hart_sec:sim_delayanalysis}
Here, we show the performance of the proposed schedulability analysis. We compare the schedulability ratio obtained by the schedulability analysis and the simulation result (which provides a conservative upper bound). We use a similar setup as Appendix \ref{hart_sec:sim_nonharmonic} with periods in range $2^{12\sim 16}$ time slots. We present the schedulability ratio when end-to-end delay is estimated with a probability of $0.95$.

\begin{figure}[h]
\centering
\includegraphics[width=0.35\textwidth]{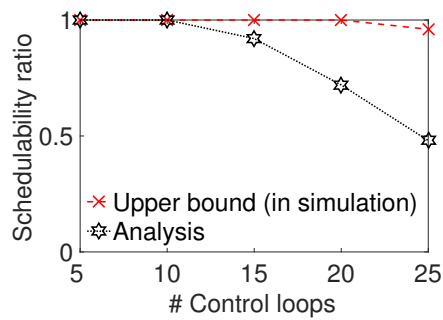}
%\vspace{-0.1in}
\caption{Schedulability Ratio Comparison of Delay Analysis with an Upper Bound}
\label{hart_fig:SchedTest}
%\vspace{-0.1in}
\end{figure}

As shown in Fig. \ref{hart_fig:SchedTest}, when the number of control loops is $\leq$ 10, all test cases were deemed to be schedulable under our schedulability analysis and simulations.  When the number of control loops was greater than 10, fewer test cases were deemed schedulable by our schedulability analysis.  This difference in schedulability ratio is because simulation results show a conservative upper bound on schedulability ratio while our schedulability analysis considers a pessimistic scenario (where each node requires two transmission time slots). Note that, our analysis is only a sufficient test and not an exact test. From this result, we can conclude that our schedulability analysis is close to the upper bound and can be used to determine schedulability.

\subsection{Comparison in terms of Percentage of Idle Slots }
\label{hart_sec:unused}

Here, we show the performance comparison between DistributedHART and centralized EDF in terms of percentage of idle slots and percentage of idle slots available for new flows. We define {\slshape percentage of idle slots} as the percentage of slots in a hyper-period that are unassigned or not used for packet transmission. We define {\slshape percentage of idle slots available for flows} as the percentage of slots in a hyper-period that unassigned or not used for packet transmission, which can be used to schedule packets of existing/new flows.

\begin{figure}
    \centering
    \includegraphics[width=0.35\textwidth]{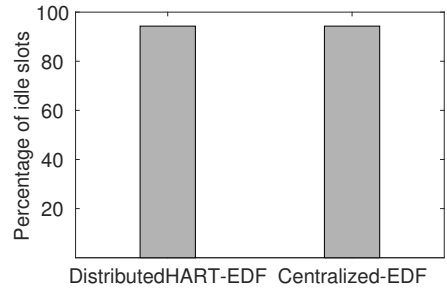}
    \caption{Performance Comparison in Terms of Percentage if Idle Slots}
    \label{hart_fig:unusedSlots}
\end{figure}

\begin{figure}
    \centering
    \includegraphics[width=0.35\textwidth]{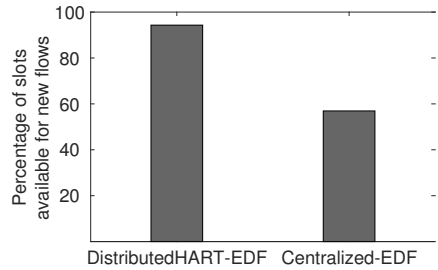}
    \caption{Performance Comparison in Terms of Percentage of Idle Slots Available for Flows}
    \label{hart_fig:usableSlots}
\end{figure}

To evaluate DistributedHART, we used a $148$ node network with $30$ flows and used a period assignment of $2.5s$ for all the flows. For this setup, Fig. \ref{hart_fig:unusedSlots} compares the percentage of idle slots between DistributedHART and centralized EDF algorithm. We observed that, in a single hyper-period, DistributedHART used approximately $250$ transmissions to transmit all packets to the destination, which leaves all other slots ($95\%$ approximately) idle. Under the assumption that the number of transmission failures is the same, both centralized WirelessHART and DistributedWirelessHART use the same number of time slots to transmit a packet, i.e., $95\%$ the slots are idle.  For DistributedHART, nodes can locally choose to transmit a packet from any flow during its time window. Thus, nodes can reuse all the idle time slots of DistributedHART for transmission, as shown in Fig. \ref{hart_fig:usableSlots}. However, for the centralized algorithms, the scheduler assigns time slots to flows, and nodes cannot use these time slots to transmit packets from any existing/new flows. We have observed from our evaluation shows that $56\%$ of the total slots are available for scheduling packets from existing/new flows. From this simulation, we can conclude that DistributedHART scales well by utilizing its network resources.

 \subsection{Performance Comparison between DistributedHART and Centralized Scheduler with Compact Schedules}
 \label{hart_sec:sim_compact}
 \begin{figure}[h]
	\centering  
	%\vspace{-0.2in}
	\includegraphics[width=0.35\textwidth]{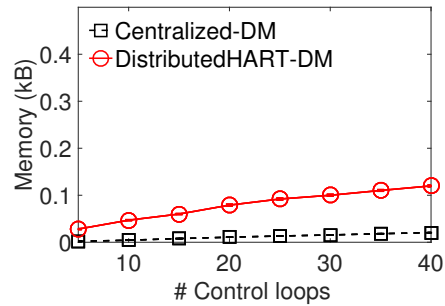}
   %\vspace{-0.15in}
	\caption{Memory Consumption Comparison between DistributedHART and Centralized - compact scheduler}
	\label{hart_fig:MemoryCompact}
    %\vspace{-0.1in}
\end{figure}

Here, we show the performance of DistributedHART when compared to the centralized scheduler with compact schedules. For special scenarios, centralized schedulers can create a compact schedule when periods are harmonic, and scheduling policy is rate monotonic. In the compact schedules, a node can use the first packet's time slot to generate time slots for subsequent packets within a hyper-period.  Note that, a central scheduler should assign each node the time slot and period of the first packet for each flow passing through it. For generic scenarios, centralized schedulers have to consider interference from all packets hence this approach will not work. Fig \ref{hart_fig:MemoryCompact} shows the memory consumption comparison between DistributedHART-DM and centralized-DM with a compact schedule. Our simulation results show that DistributedHART and centralized-EDF compact scheduler consume a similar amount of memory and we observed that DistributedHART consumes 0.08KB of additional memory compared to compact centralized schedulers.

    \section{Summary}\label{hart_sec:Conclusion}
     We have proposed DistributedHART - a local and online real-time scheduling system for industrial IoT. DistributedHART enables local scheduling at nodes through time window allocation at the nodes. Within a time window, a node locally schedules a packet for transmission based on any existing real-time scheduling policy. Thus, DistributedHART obviates the need for creating and disseminating a central global schedule, thereby reducing resource waste and enhancing scalability. Furthermore, DistributedHART would lead to higher network utilization in the network at the expense of a slight increase in the end-to-end delay of all control loops. Through experiments on a 130-node testbed as well as large-scale simulations, we observe at least 85\% less energy consumption in DistributedHART compared to an existing centralized approach.
      	
	The performance of our schedulability test suggests that there is still room for improvement. In the future, we will derive an improved schedulability test by deriving tighter delay bounds. Nevertheless, in DistributedHART, we use existing centralized or distributed routing protocols to generate routes between sensors, actuators, and a controller. In DistributedHART, the schedule generation is not optimized for the routing protocol. A distributed joint routing and scheduling algorithm is highly challenging for WirelessHART networks and is an open problem.

\chapter{Low-Latency In-Band Integration of Multiple Low-Power Wide-Area Networks}
\label{chapter:low_latency_integration}
Today, industrial and agricultural Internet of Things (IoT) are emerging in very large-scale  and wide-area applications (e.g., oil-field management, smart farming) that may spread over hundreds of square kms (e.g., 74x8km$^2$ East Texas Oil-field). Although a single Low-Power Wide-Area Network (LPWAN)  covers several kms, it faces coverage challenge in such extremely large-area IoT applications, specially in rural or remote areas with no/limited infrastructure, requiring an in-band integration of multiple LPWANs. 
To avoid the crowd in the limited ISM band and the cost of licensed band and infrastructure,  {\slshape  SNOW (Sensor Network Over White spaces)} is an LPWAN architecture  over the TV white spaces.  It offers high scalability through concurrent asynchronous and bi-directional communication between a base station and numerous nodes. 

We consider a seamless integration of multiple SNOWs. Existing approach does not consider minimizing network latency and is less suitable for delay-sensitive and real-time applications. We propose the first scalable in-band integration of  multiple SNOWs that minimizes network latency. By taking into account the impact of  bandwidth on latency and  base station power dissipation, we formulate low-latency integration of multiple SNOWs as a constrained optimal spectrum allocation problem.  The formulated problem is solved through a greedy algorithm by analyzing network latency and by adopting a traffic-aware and latency-aware bandwidth allocation along the links to achieve an integrated network. We have implemented the proposed integration both on SNOW hardware and in NS-3 simulator. Both physical experiments and simulations show a significant reduction ($50\%$  and $84\%$, resp.) in network latency under our approach compared to existing approach.  

\section{Introduction}
\label{sec:integration_intro}

%%Define LPWAN and SNOW
As an emerging Internet-of-Things (IoT) technology, {\slshape Low-Power Wide-Area Network (LPWAN)} enables low-power (milliwatts) wireless devices (sensors) to transmit at low data rates (kbps) over long distances (kms) using narrowband (kHz) \cite{oursurvey, lorawan_iiot}. With the fast growth of IoT, multiple  LPWAN technologies have recently emerged such as LoRa~\cite{lorawan}, SigFox~\cite{sigfox}, IQRF \cite{iqrf}, RPMA \cite{rpma}, DASH7~\cite{dash7}, Weightless-N/P~\cite{weightless}, Telensa~\cite{telensa} in the ISM band,  and  EC-GSM-IoT~\cite{ecgsmiot}, NB-IoT~\cite{nbiot},  LTE Cat M1 \cite{cat, LTE_advancedpro} in the licensed cellular band. To avoid the {\slshape crowd} in the {\slshape limited} ISM band and the {\slshape cost} of licensed band,  {\slshape  SNOW (Sensor Network Over White spaces)} is an LPWAN architecture  to support scalable wide-area IoT  over the TV white spaces \cite{ton_snow, saifullah2016snow, saifullah2017snow2}.  {\slshape White spaces} are the allocated but locally unused TV spectrums where unlicensed devices can operate as secondary users~\cite{FCC_first_order, fcc_second_order}. To learn about white spaces at a location,  a device needs to either sense the medium before transmitting, or consult with a cloud-hosted geo-location database, either periodically or every time it moves 100 meters~\cite{fcc_second_order}. Compared to the ISM band, the TV white spaces have lower frequencies (e.g., 54 -- 862kHz in USA) and much wider, less crowded spectrum in both rural and most urban areas, with an abundance in the former \cite{ws_sigcomm09}.

%% Why we need integration
Although a single LPWAN covers several kilometers (kms),  it faces coverage and scalability challenges in very large-area (e.g., city-wide) deployment \cite{Charm, Choir, lorascale, lorascale2, loralimitation, bor2016lora}. Today, industrial and agricultural IoT and cyber-physical systems are emerging in large-scale  and wide-area applications. Specifically, agricultural fields \cite{agri1, agri2, oregonFarming}   and oil/gas fields \cite{oilgas}  may extend over hundreds of square kms.  For example,  the East Texas Oil-field  extends over an area of $74\times 8$km$^2$   requiring tens of thousands of sensors for management \cite{texasof}. Emerson  is targeting to deploy 10,000 nodes for managing an oil-field in Texas \cite{WH10000, WirelessHART_Emerson}.  To cover such large-areas for agricultural/industrial IoT, we need to integrate multiple LPWANs.  
 LPWANs are usually limited to star topology, and rely mostly on wired infrastructure (e.g., cellular LPWANs) or Internet (e.g., LoRa) to integrate multiple networks to cover large areas. However, lack of infrastructure hinders their adoption to rural/remote area applications such as agricultural/industrial IoT.  In this chapter, we propose a scalable and low-latency in-band integration of multiple LPWANs under no/limited infrastructure. % by minimizing network latency. 

%% Why we consider SNOW for integration
We consider integrating multiple SNOWs whose conceptual notion can also be extended to other LPWANs. SNOW offers high scalability due to concurrent asynchronous and bi-directional communication between a base station (BS) and numerous nodes \cite{saifullah2016snow, ton_snow, saifullah2017snow2}.  Its implementation is available as open-source  \cite{opensource}. Due to their rapid growth,  LPWANs in the ISM band will suffer from crowded spectrum, making it  critical to exploit white spaces. 
Compared to cellular LPWANs, SNOW does not need wired infrastructure making it suitable in both rural and urban areas. Due to abundant white spaces, it is a promising platform for  smart farming,  a global need and recommendation by the United Nations to meet the 70\% more food demand by 2050~\cite{forbes, Beecham}.   Study shows that smart farming powered by IoT can double the produce at low cost by better measuring soil nutrients, moisture, fertilizer, seeds, and storage temperature through dense sensor deployment~\cite{farm, chemistry, nature}.  Industries like Microsoft \cite{vasisht17farmbeats, farmbeat} and  Monsanto \cite{monsanto} are hence promoting agricultural IoT.

%%Novelty and challenges compared to existing approaches + Why we need latency-aware integration
\begin{figure}[!htb]
\centering 
%\vspace{-0.1in}
\includegraphics[width=0.35\textwidth]{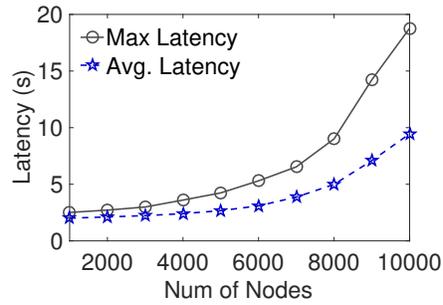}
%\vspace{-0.05in}
\caption{Latency  under Existing SNOW Integration.}
\label{fig:motivation_results}
%\vspace{-0.1in}
\end{figure} 
In-band integration considered in this chapter is conceptually different from and more challenging than traditional tiered or clustered wireless sensor network (WSN) \cite{leach2002} or 802.11 mesh \cite{80211mesh}. Unlike these networks,  integrating multiple SNOWs needs to find proper bandwidths for all links which are inter-dependent and subject to BS power dissipation.  Besides, the unique features of SNOW including massive parallel communication require a new approach for a seamless integration that will enable concurrent inter- and intra-SNOW communication. While a seamless integration of SNOWs was studied in a recent work \cite{integration, ton_integration}, its objective is to trade the scale (in terms of number of total nodes) for inter-SNOW interference. Its bandwidth allocation for inter-SNOW communication does not take into account the link traffics. As a result, it performs poorly in terms of network latency, eventually affecting the scale. A simulation using NS-3 \cite{ns3} in such an integrated network of five SNOWs shows that the maximum latency for collecting a packet increases exponentially and deviates sharply from average latency with the number of nodes (Fig. \ref{fig:motivation_results}). Thus, the existing approach is less suitable for delay-sensitive and real-time applications.  Note that many WSN applications are time-sensitive. A recent survey on 311 industries conducted by ON World and the International Society of Automation shows that 57\% of industrial IoT professionals are targeting LPWAN for industrial WSN applications  \cite{industrialLPWAN1, industrialLPWAN2}.

%Our contribution
We propose the first scalable in-band integration of  multiple SNOWs that minimizes network latency. The integration is media access control (MAC) protocol-agnostic, i.e., in the integrated network any MAC protocol can be used in each single SNOW. Our contributions are listed as follows. 
\begin{itemize}
\item By taking into account the impact of  bandwidth on latency and  BS power dissipation, we formulate latency minimizing integration of multiple SNOWs as a constrained optimal spectrum allocation problem.  
\item The formulated problem is solved through a greedy algorithm by analyzing network latency under two different MAC protocols and by adopting a traffic-aware and latency-aware bandwidth allocation along the links to achieve an integrated network. This approach can be extended to other future LPWANs in white spaces (e.g., based on upcoming 802.15.4m standard \cite{802154m}).

\item We have implemented the proposed integration on SNOW hardware platform and observed, through physical experiments, up to $50\%$ decrease in network latency compared to the existing approach \cite{integration, ton_integration}. We have also performed simulations in NS-3 \cite{ns3} and have observed up to $84\%$ decrease in network latency in an integrated network of  5000 nodes (of 5 SNOWs) based on our approach compared to the existing approach. 
\end{itemize}

The rest of the chapter is organized as follows. Section \ref{sec:snow_overview} presents an overview of SNOW. Section \ref{sec:integration} gives a high-level overview  of our proposed approach for integrating multiple SNOWs along with related work, challenges, and the problem formulation. Section \ref{sec:proposed_method} details the proposed algorithm for solving the latency-minimizing integration problem. Section \ref{sec:experiment} presents  hardware experimental results. Section \ref{sec:simulation} presents simulation results. Section \ref{sec:conclusion} summarizes the chapter.

\section{The SNOW Architecture}
\label{sec:snow_overview}
 
 Here we provide a brief overview of the SNOW LPWAN architecture. Its full description is available in \cite{ton_snow}. Due to long transmission (Tx) range, the nodes (sensor nodes) in SNOW are directly connected to the BS over up to several kms, forming a star topology as shown in Fig.~\ref{fig:arch}.  The BS periodically determines white spaces by providing locations of the devices in a cloud-hosted database through the Internet. The nodes are power-constrained, and do not determine white spaces.  The BS uses wide white space spectrum as a single wide channel that is split into narrowband orthogonal subcarriers ($f_1$, $f_2$, $\cdots$), each of equal spectrum width (bandwidth). Each node has a single half-duplex narrowband radio. It sends/receives on a subcarrier.  An example of a SNOW node is a Texas Instruments CC1310 board \cite{snow_implementation}. As shown in Fig.~\ref{fig:arch}, the BS uses two radios --  one for only transmission (called {\bf\slshape Tx radio})  and the other for only reception (called {\bf\slshape Rx radio}) -- to facilitate concurrent bidirectional communication.  A BS is a line-powered device with sufficient processing capabilities like a Raspberry Pi equipped with a dual-radio USRP (universal software radio peripheral) \cite{usrp} device. 
 The SNOW implementation is available as open-source \cite{opensource}. 
 
  \begin{figure}[!htb]
\centering
%\vspace{-0.05in}
\includegraphics[width=0.5\textwidth]{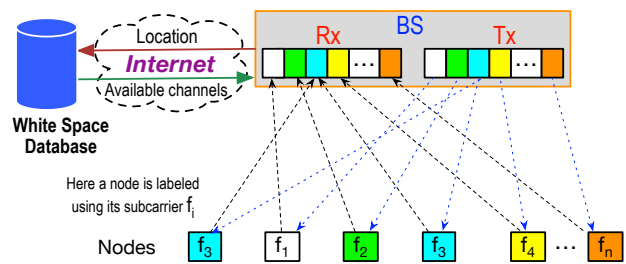}
%\vspace{-0.02in}
\caption{\small SNOW architecture with dual radio BS and subcarriers \cite{ton_snow}.}
\label{fig:arch}
%\vspace{-0.05in}
\end{figure}

To enable many simultaneous transmissions/receptions, the physical layer is designed based on a 
{\bf D}istributed implementation of {\bf OFDM} (Orthogonal Frequency Division Multiplexing), named {\bf D-OFDM},  for multi-user access. D-OFDM  splits a wide spectrum into many narrowband orthogonal subcarriers enabling parallel data streams to/from numerous distributed nodes from/to the BS.  A subcarrier bandwidth is in kHz (e.g., 50kHz, 100kHz, 200kHz, or so depending on packet size and needed bit rate). The nodes transmit/receive on orthogonal subcarriers, each using one. A subcarrier is modulated using Binary Phase Shift Keying (BPSK) or Amplitude Shift Keying (ASK).  If the BS spectrum is split into  $m$  subcarriers,  it can receive from $m$ nodes simultaneously using a single antenna. Similarly, it can transmit different data on different subcarriers through a single transmission. 

SNOW can adapt to any MAC protocol that is suitable for the energy-constrained nodes. In its default design, nodes use a lightweight CSMA/CA (carrier sense multiple access with collision avoidance) based MAC protocol for transmission that uses a static interval for random back-off like the one used in TinyOS~\cite{tinyos}. In this chapter, we adopt several energy-efficient MAC protocols for SNOW. The nodes can autonomously transmit, remain in receive (Rx) mode, or sleep. Since D-OFDM allows handling asynchronous Tx and Rx, the link-layer can send an acknowledgment (ACK) for any transmission in either direction. Both radios of the BS use the same bandwidth and subcarriers - the subcarriers in the Rx radio are for receiving while those in the Tx radio are for transmitting.

\section{Integrating Multiple SNOWs: An Overview of the Proposed Approach}
\label{sec:integration}

{\bf System model. }
Our objective is to enable scalability and extended coverage  in a wide area through a seamless integration of multiple SNOWs whose BSs are interconnected. We consider that the SNOWs form an in-band SNOW-tree (Fig. \ref{fig:integration_mesh}) like a  cluster tree  of IEEE 802.15.4m~\cite{802154m}. That is, their BSs  are connected through white spaces as a tree.  Let there be $N$ BSs, denoted by BS$_0$, BS$_1$, $\cdots$, BS$_{N-1}$, where BS$_i$ is the base station of SNOW$_i$. BS$_0$ is the {\slshape root BS} and is connected to the white space database through the Internet. BS$_0$  finds white spaces for all SNOWs and allocates spectrum among all SNOWs and tree links. To enable agricultural IoT (in farms) or industrial IoT (in oil/gas field), an Internet-connected root BS may so connect other BSs to cover the large area having no Internet. In urban areas, such an integration can overcome the reduced range problem  experienced by most LPWANs  \cite{Charm, Choir, lorascale, lorascale2, loralimitation, bor2016lora}. An integration will span a large number of nodes across large areas using multiple SNOWs, while all data need to be delivered to root base station BS$_0$. This delivery {\slshape latency} impacts the scalability through integration. Hence, we propose to minimize such latency.

\begin{figure}[!htb]
\centering %\vspace{-0.22in}
\includegraphics[width=0.45\textwidth]{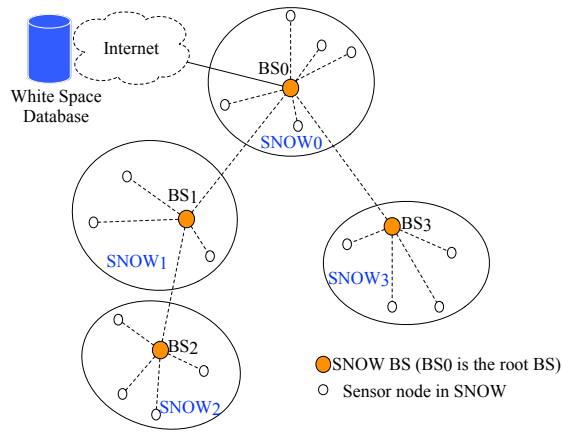}
%\vspace{-0.33in}
\caption{Integrating Multiple SNOWs to Build a SNOW-tree.}
\label{fig:integration_mesh}
%\vspace{-0.2in}
\end{figure}

BS$_0$ collects data from its nodes directly while data from other SNOWs are delivered to it through the tree links.  Through a BS's parent link, data from its descendant SNOWs are delivered. Latency along a tree link can be minimized by assigning a large number of subcarriers for it. However,  subcarrier assignment on a tree link affects that on other links and SNOWs. Also, by transmitting on a large number of subcarriers simultaneously the BS suffers from a traditional OFDM problem called {\bf\slshape peak to average power ratio (PAPR)}. PAPR is the ratio of the maximum power of a signal to its average power, and is equal to the number of subcarriers~\cite{ofdmbook2, ofdmbook}. PAPR can be high after inverse FFT (Fast Fourier Transform) done at the Tx radio of the BS during downlink transmission due to a large number of subcarriers forcing the non-ideal power amplifier operate non-linearly. A high PAPR increases power dissipated on the amplifiers resulting in higher power consumption and lower efficiency. Therefore, in integration, we have to minimize both latency and PAPR. Note that the (sensor) nodes do not suffer from PAPR as each of them transmits on a single subcarrier.

We minimize both latency and PAPR by assigning an appropriate number of subcarriers to each tree link (for BS-BS communication) and to each SNOW for intra-SNOW communication. This is done at BS$_0$ that knows the tree topology. We assign subcarriers such that communication along a  tree link is not interfered by others.   To reduce PAPR, we exclude ACK of transmissions along the tree links.  In intra-SNOW communication, when a large number of nodes transmit simultaneously to the BS, the BS needs to send ACK to all nodes which may cause high PAPR. To avoid this problem, we combine multiple ACKs into one frame. To handle multi-objective (latency and PAPR) of our approach, we will minimize latency while keeping PAPR as a constraint.

\subsection{Related Work and New Challenges} 
To cover a wide area, LoRa integrates multiple gateways through the Internet \cite{lorawan}. Cellular networks do the same relying on wired infrastructure~\cite{channel_cellular}. Rural and remote areas lack such infrastructure. Our proposed in-band integration is conceptually different from traditional tiered or clustered WSN \cite{leach2002} or 802.11 mesh \cite{80211mesh}. First, in these traditional networks, every link has an equal and known bandwidth. In contrast, link bandwidths in SNOW integration  are variable and inter-dependent. 
Second, SNOW integration needs to handle the PAPR of BS transmitter which is absent in those networks. Third,  the unique features of SNOW including massive parallel communication require a new approach for a seamless integration that will enable concurrent inter- and intra-SNOW communication. Hence, traditional channel allocation for wireless networks \cite{audhya2011survey, incel2011survey}) or cognitive radio networks \cite{chen2012game, tragos2013spectrum, cognitive_channel_survey2} cannot be used for bandwidth allocation in SNOW integration.

%%%%%%%%%%%%%%%%%%%

While SNOW integration was studied in a recent work \cite{integration, ton_integration}, it trades scale for inter-SNOW interference. Its bandwidth allocation for inter-SNOW communication does not take into account the link traffics. As a result, it performs poorly in terms of network latency, eventually affecting the scale. We propose the first scalable in-band integration of  multiple SNOWs that minimizes network latency. 

%%%%%%%%%%%%%%%%%%%%%%

\subsection{Formulation}
\label{subsec:prob_formulation}
We want to minimize the maximum latency among all SNOWs for data delivery to root base station BS$_0$. 
Considering a uniform subcarrier bandwidth and spacing across all SNOWs, let $Z_i$ be the set of orthogonal subcarriers available at BS$_i$. Let $V_i$ be the set of nodes in SNOW$_i$ and 
$n_i= |V_i|$. Each node $u \in V_i$ generates packets periodically with a period $T_{i}(u)$.
Time needed for one packet transmission  is considered a time unit. 
A subset $S_i \subset Z_i$ will be assigned for SNOW$_i$ for its nodes $V_i$ to deliver their data to BS$_i$. Let $\rho(i) \in \{0, 1, \cdots, N-1\}$ be such that BS$_{\rho(i)}$ is the parent of BS$_i$ in the tree. The subcarriers allocated for communication along tree link BS$_i$ $\rightarrow$ BS$_j$, where BS$_j$ is the parent of BS$_i$, is denoted by set $S_{i,j}$. 
Thanks to the capability of D-OFDM to enable parallel data streams to/from numerous distributed nodes from/to the BS, the proposed integration becomes  {\slshape seamless} meaning that 
  inter-SNOW and intra-SNOW communication can happen in parallel. To achieve this, subcarriers allocated on the tree link do not overlap with those for intra-SNOW communication, i.e., $S_{i,j} \subset Z_i$ and $S_{i,j} \subset Z_j$, and $S_{i,j} \cap S_{i} = S_{i,j} \cap S_{j} = \emptyset$.

Let $A(i), D(i) \subseteq  \{0, 1, \cdots, N-1\}$ be such that each $B_k, k\in A(i)$, is an ancestor and each $B_j, j\in D(i)$, is a descendant of BS$_i$ in the SNOW-tree. 
For a packet of $u$ of SNOW$_i$, let $\lambda(u, u, \text{BS}_i)$  be its estimated intra-SNOW latency under the MAC protocol used in SNOW$_i$ (i.e., the latency to collect the packet at  $\text{BS}_i$),  and  $\lambda(u, \text{BS}_j, \text{BS}_{\rho(j)})$ be its estimated  latency along tree link BS$_j$ $\rightarrow$ BS$_{\rho(j)}$. Then, its total estimated latency  $\Lambda(u)$ is given by Equation (\ref{eq:total_lat}).
\begin{equation}
\label{eq:total_lat}
\Lambda(u) = \lambda(u, u, \text{BS}_i)+ \sum_{j \in A(i) - \{0\}}  \lambda(u, \text{BS}_j, \text{BS}_{\rho(j)})
\end{equation}
Let $L_i$ denote the maximum $\Lambda(u)$ in SNOW$_i$. That is, 
$$L(i) = \max \{ \Lambda(u) | \forall u \in V_i \}.$$
Note that $L(i)$ is an estimate of maximum latency for collecting a packet at root base station BS$_0$ from SNOW$_i$.  For fairness, it is important to minimize   $\max  \{ L(i) |  0\le i  < N \} $ 
so that data collection from a SNOW does not take overly long time in an integrated network. It also implies minimizing data collection time (e.g., in convergecast scenario) in the integrated network. 
Therefore, we want to minimize this metric.

\begin{figure}[ht]
%{r}{5.3cm} %5.7
\centering 
%\vspace{-0.15in}
%
\includegraphics[width=0.42\textwidth]{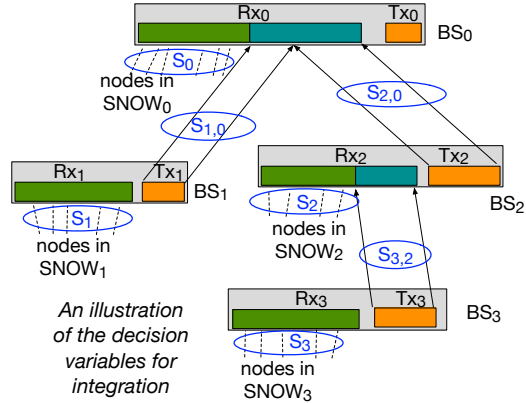} %35
%\vspace{-0.13in}
\caption{Integration using BS dual-radios.}
\label{fig:integration}
%\vspace{-0.15in}
\end{figure}

If some communication in SNOW$_i$ is interfered by another in SNOW$_j$, then SNOW$_j$ is its {\bf \slshape interferer.} Let $I_i\subset \{0, 1, \cdots, N-1\}$ be such that each SNOW$_j, j\in I_i$,  is an interferer of SNOW$_i$. Let $J_i\subset \{0, 1, \cdots, N-1\}$ be such that a transmission along link BS$_i \rightarrow$BS$_{\rho(i)}$ can be interfered by that along BS$_k \rightarrow$BS$_{\rho(k)}$ or by some node's transmission in SNOW$_k, k\in J_i$.  In the following, Constraints (\ref{const:nointerfere1}) and (\ref{const:nointerfere2}) ensure that tree links are not interfered by any other transmission. We set constraint $\sum_{j\in I_i} | S_i \cap  S_j |  \le   \sigma_i |S_i|$ to define any allowed spectrum overlap between two interfering SNOWs where $0\le \sigma_i \le 1$.  Let $\beta_i$ be the maximum number of subcarriers on which BS$_i$ can transmit simultaneously having tolerable PAPR. The value of $\beta_i$ depends on the saturation point of the BS's transmission amplifier. Considering the BS acknowledges multiple transmissions on a single channel, $\beta_i - 1$ limits the number of concurrent transmissions by a BS$_i$ on the tree link. Thus, we use  constraint  $1 \le  |S_{i, \rho(i)}| < \beta_i $.
Our objective is to determine $S_i$ for $0\le i < N$  and  $S_{i, \rho(i)}$ for $0< i < N$  so as to 
%\vspace{-0.2in}
\begin{align} 
\centering
\text {Minimize~~~}  & \max  \{ L(i) |  0\le i  < N \}   \nonumber \\
 \text{subject to~~~}& S_i \subseteq Z_i, ~~  S_{i, \rho(i)}  \subseteq Z_i, ~~  S_{i, \rho(i)}  \subseteq Z_{\rho(i)}     \label{const:availability}\\
                     & S_{i, \rho(i)}  \cap  S_j   = \emptyset, \forall j \in I_{\rho(i)}   \cup \{ i, \rho(i) \}  \label{const:nointerfere1} \\       
                     & S_{i, \rho(i)}  \cap  S_{j, \rho(j)}   = \emptyset, \forall j \in J_i  \cup   \{ \rho(i)\}  \label{const:nointerfere2}        \\
%                     &\sum_{j\in I_i} | S_i \cap  S_j |  \le   \sigma_i |S_i|         \label{const:overlap}        \\
%                     & 1 \le   |S_{i, \rho(i)}|     \le \beta_i  \label{const:papr}   
		    &\sum_{j\in I_i} | S_i \cap  S_j |  \le   \sigma_i |S_i|; ~~ 1 \le   |S_{i, \rho(i)}|     < \beta_i \label{const:papr}      
\end{align}

Fig. \ref{fig:integration} shows an illustration of decision variables for the tree of Fig.~\ref{fig:integration_mesh}.  
Note that the above formulation is applicable for any MAC protocol used inside each SNOW. However, network latency expressions usually take after multi-processor response-time analysis and become non-linear and often non-differentiable and so does the above optimization problem \cite{TC, RTAS2012}.  Adopting a subgradient approach or some global optimization framework such as  
Simulated Annealing based penalty method \cite{Chen2010} for an optimal solution can be extremely time consuming specially due to spectrum dynamics of white spaces which would require to frequently re-run the optimization.  Instead, we propose some highly intuitive greedy approach that provides a fast heuristic solution through a traffic-aware and latency-aware spectrum allocation along the links to achieve an integrated network. In the following section, we first derive latency expressions considering several MAC protocols and then describe the spectrum allocation algorithm.

% !TEX root = rtss_integration.tex

\section{Description of the Low-Latency Integration of Multiple SNOWs}
\newcommand{\stage}[1]{stage{#1}} %fragment or segment
\label{sec:proposed_method}

In this section, we first derive the expressions for latency and then describe our proposed greedy heuristic  to perform a traffic-aware and latency-aware spectrum allocation along the links to achieve an integrated network.   he intuition behind the greedy heuristic is to iteratively assign subcarriers to links that cause a maximum delay on the packet.

The greedy heuristic relies on a good estimate of the maximum latency for a MAC. In this section, we first derive an expression for the maximum latency for two MAC protocols. Note that  the existing SNOW integration \cite{integration, ton_integration} uses a CSMA/CA based  MAC protocol, where a node transmits packets when clear channel assessment (CCA) determines the channel is idle. Otherwise, nodes use a random backoff in a fixed range to try at a later instant. However, CCA is less effective in avoiding collisions due to the following reasons, (1) the unprecedented number of hidden terminals in SNOW, (2) the increased transmission time of packets due to the use of a narrowband subcarriers for communication which increases the chances of collisions, and (3) the use of harmonic periods for simplicity warrants many time-triggered nodes in the network may generate packets around the same time (with minor differences arising due to clock offsets). Thus, a CSMA/CA based MAC protocol is prone to an unprecedented number of collisions, and predicting an upper bound on the number of collisions is less practical. Furthermore, it is less suitable for low-latency communication in SNOW integration. Besides, in many cases, a latency expression may not be derived as the latency is non-deterministic under a CSMA/CA based MAC protocol. 
Hence we consider two non-CSMA/CA MAC protocols for each SNOW and derive their latency expressions as described in the following section.

\subsection{Deriving an Expression of Latency}
 Here, we derive an expression of latency for a generic scenario where nodes in the SNOW integration can have different periods. Since deriving a latency expression for CSMA is less plausible, we first develop a time division multiple access (TDMA) version of a receiver-initiated (RI) MAC called RI-TDMA MAC. We then derive the latency expression RI-TDMA MAC.

\subsubsection{RI-TDMA MAC for SNOW} In this section, we propose a RI-TDMA MAC, which addresses the energy and latency limitations for the SNOW integration. RI-TDMA is a TDMA adaptation of the prominent receiver-initiated MAC protocols. In a receiver-initiated MAC, the receiver (or base station in SNOW) requests for a message from the nodes. Upon receiving the request, Nodes transmit the data to the base station. Since nodes specifically request for data transmissions, receiver-initiated MAC mitigates packet collisions and thereby decreases latency. However, it is less pertinent in sharing subcarriers between substations, handling event-triggered nodes, achieving a predictable latency, and conserving energy at the nodes. To achieve a predictable latency, we propose RI-TDMA, a TDMA adaptation of the receiver-initiated MAC. Under RI-TDMA, the SNOW base station operates in time slots. Each time slot contains two \stage{s}: a request \stage, and a data transmission \stage.  During the request \stage, the base station schedules a subsets of nodes for transmission on the available subcarriers by transmitting a request message. During the data transmission \stage, nodes that receive a request, respond with data transmission.

An example of data transmission in RI-TDMA MAC between a base station and four nodes is shown in Fig \ref{integration_fig:ri-tdma-mac-example}. Node $u, v, w, x$ have packets to transmit at time $0$, and the base station has two sub-carriers on which it can receive simultaneously. During the request \stage of time slot 1, the base station requests data from $u$ and $v$. During the data transmission \stage{}  of time slot 1, $u$, and $v$ respond with the data. Similarly, during time slot 2, the base station requests data from $w$ and $x$, and within the same time slot, $w$ and $x$ respond with the data. Since the nodes are not time synchronized, the nodes may start listening or transmitting at a different time, as shown in Fig. \ref{integration_fig:ri-tdma-schedule-example}.

\begin{figure}[ht!]
 	\centering
	%\vspace{-0.1in}
	\begin{subfigure}[b]{0.2\textwidth}
		\includegraphics[width=\textwidth]{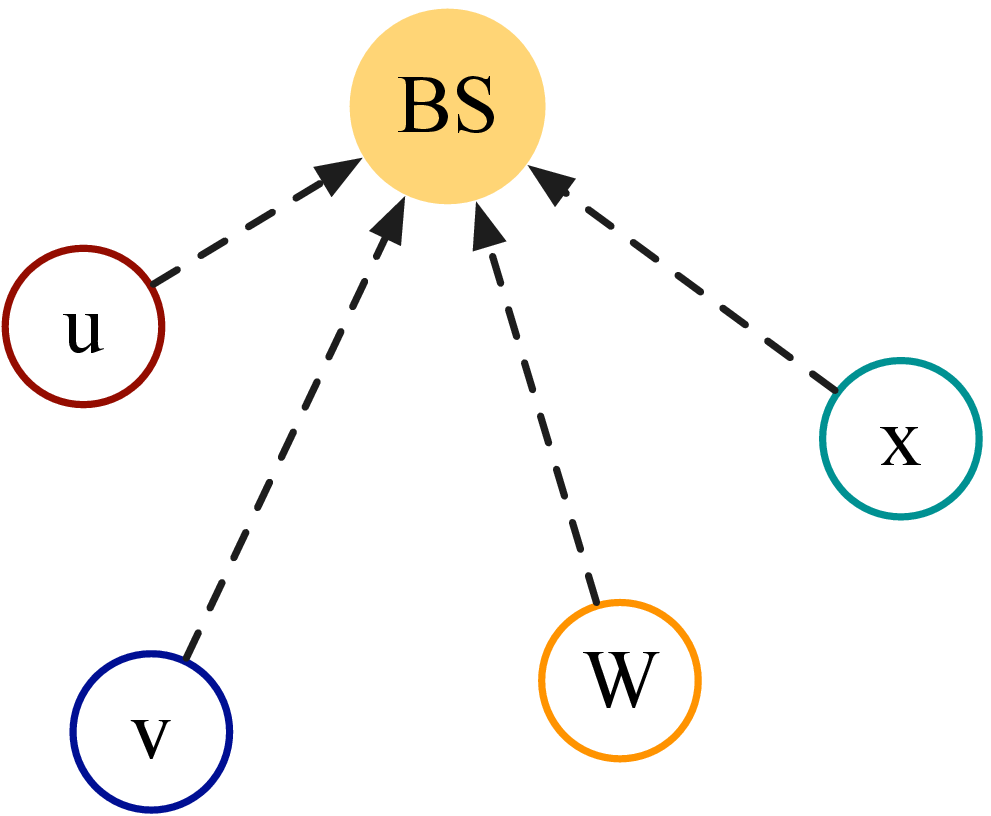}
		\caption{Example of a Single SNOW Network.}
		\label{integration_fig:ri-tdma-network-example}
	\end{subfigure}
	\quad
	\begin{subfigure}[b]{0.45\textwidth}
		\includegraphics[width=\textwidth]{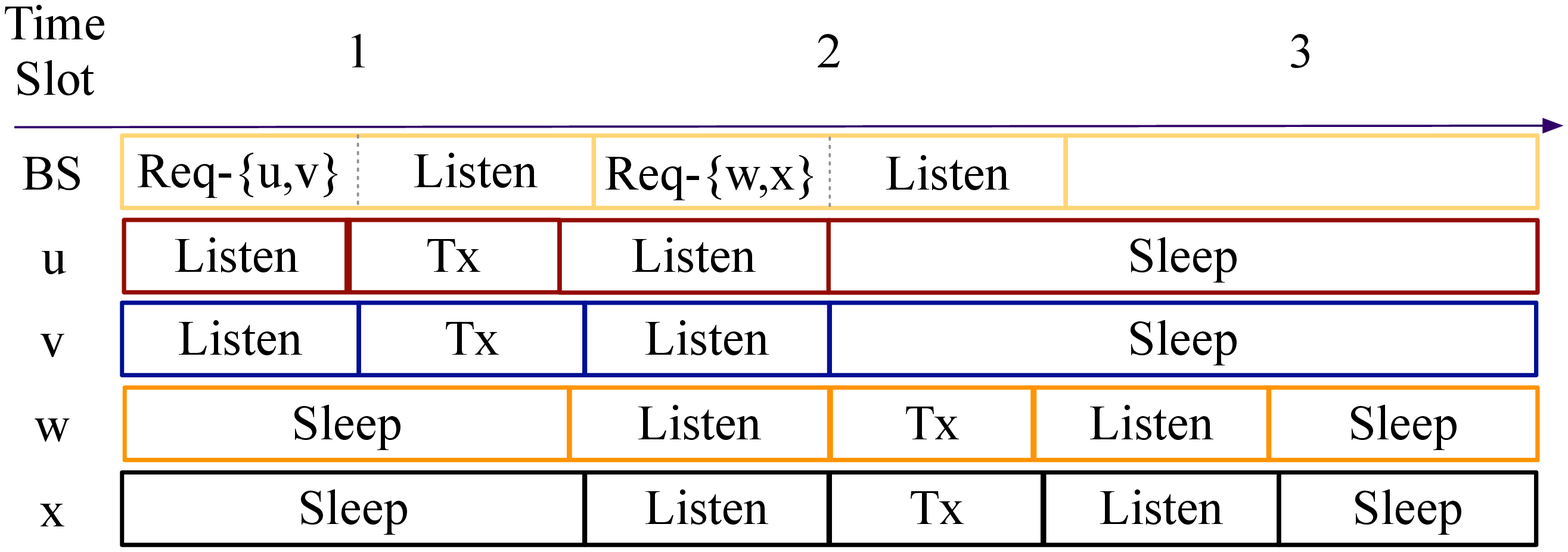}
		\caption{Transmission Schedule for the Example Network.}
		\label{integration_fig:ri-tdma-schedule-example}%
	\end{subfigure}
	\caption{An example of packet scheduling in RI-TDMA MAC.}
	\label{integration_fig:ri-tdma-mac-example}
\end{figure}

To schedule nodes for transmission in a time slot, a SNOW base station uses existing real-time multi-core CPU scheduling algorithms, such as RM (rate monotonic) scheduler. In an RM scheduler, the base station prioritizes packets with a smaller period for transmission. Base station breaks ties in priority using node ID, i.e., for two nodes with the same period, the node with the smaller node ID is prioritized for transmission. Note that the schedule is generated dynamically at the beginning of the time slot.

Upon generating the schedule for a time slot, the base station requests information from many nodes on a single subcarrier called the downlink subcarrier. During the request \stage, nodes listen to the downlink subcarrier for the request. Furthermore, the base station can use a single request packet to request packets from multiple nodes. The base station uses a bit vector of size equal to the number of nodes to identify the nodes transmitting in a time slot and another bit vector of size equal to the number of subcarriers to represent the available subcarriers, as shown in Fig. \ref{fig:payload_example}. In Fig. \ref{fig:payload_example}, $|V($BS$_i)|$ represents the number of nodes in SNOW$_i$ and $|S_i|$ represents the number of subcarriers assigned to SNOW$_i$. If the base station intends to request data from node with an ID $j$, it will set the $j^th$ bit of the bit vector to 1. One receiving the bit vector, node with ID $j$ checks the $j^th$ bit for transmission. If the bit is $1$, it calculates $\alpha$ the sum of values at indices in the range [0, i-1]. It then chooses the $\alpha^{th}$ subcarrier from the available subcarriers bit vector. For example, if base station sets bits  $0 - 4$ of the bit vector to 1 and $0 - 3$bits and $5^{th}$bit of the subcarrier bit vector are set to 1, then node $4$ sends the packet on the subcarrier with ID $5$. Note that the mapping between the subcarrier ID and the actual frequency is communicated during network deployment, similar to SNOW MAC in \cite{ton_integration}.

\begin{figure}[ht]
\centering
\includegraphics[width=0.3\textwidth]{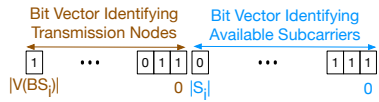}
\caption{Payload from BS$_i$'s Request Message.}
\label{fig:payload_example}
\end{figure}

Although RI-TDMA MAC minimizes packet collisions, the base station may not receive some packets due to bad link qualities. To increase the reliability in communication, we use request message of the next slot as an ACK.  If the base station requests a packet from a node in the next slot, then the node assumes that the base station did not receive the packet and re-transmits it.  Otherwise, it assumes the packet transmission was successful, as shown in Fig. \ref{integration_fig:ri-tdma-schedule-example}. Since the receiver initiation message also serves as a ACK and uses one subcarrier for transmission, it does not pose any additional impact on the PAPR of the BS.

To conserve energy, nodes in RI-TDMA MAC wake up for packet generation/transmission and sleep for the rest. Since RM generates a compact schedule consisting of a transmission time slot for the first packet, nodes can use the first transmission time slot and period to predict the next transmission time slot. Nodes in RI-TDMA MAC are not synchronized with the base station, and hence, the nodes can suffer from variable clock drifts that can interfere with the communication. Precisely, nodes may wake up early, thereby draining energy, or late, thereby wasting time slots and energy. To compensate for the clock drifts, each node locally computes the clock drift based on the time it receives a request message from the base station. The goal of the clock drift estimation is only to maximize the nodes' sleeping duration but not for clock synchronization. Note that the use of receiver-initiation eliminates any need for time synchronization in RI-TDMA.

\vspace{0.1in}
\subsubsection{Scheduling Event-Triggered Nodes}
Typically, nodes periodically sense the state and transmits the sensed information to its BS. However, some nodes in a SNOW may generate packets on the occurrence of a specific event. We refer to these nodes as {\slshape event-triggered nodes}. When an event is generated, nodes transmit the information to the BS. In this chapter, we consider that most nodes use time-triggered sensing, and some nodes use event-triggered sensing. Typically, a base station cannot predict such occurrences, and hence, cannot schedule an event-triggered node for a packet transmission. 

To address this challenge, we leverage on the concurrent transmission and reception capabilities of the base station. Specifically, during the request \stage{} of a time slot, the base station is not listening for any packet transmission, and all subcarriers are unused except the downlink subcarriers. Event-triggered nodes transmit packets to the base station on any unused subcarriers during the request \stage{}. 

Since nodes are not time-synchronized with the base station, event-triggered nodes can not predict the start of a request \stage{} of a time slot. To overcome this challenge, event-triggered node first listens to one packet transmission request from the base station and uses the length of the time slot to predict the start of the next request \stage{}. Thus, an event-triggered node can communicate with the base station without interfering with the network's time-triggered transmission. Furthermore, the base station can acknowledge the packet transmission during the data transmission segment while time-triggered nodes transmit packets to it. 

Typically, a single SNOW base station may communicate with multiple event-triggered nodes. However, the probability of all event-triggered nodes transmitting a packet at the same time is very low. If they transmit at the same time and collide, the nodes use random back-off for a re-transmission.

\vspace{0.1in}
\subsubsection{Scheduling Communication between Two Base Stations}
To facilitate a seamless integration, base station forwards packet to its parent on dedicated subcarriers, that are not interfered by any other communication in the network. Since a base station is equipped with two radios, it can transmit and receive packets simultaneously. In RI-TDMA MAC, a base station is using at most one subcarrier on both request \stage{} and data transmission \stage{}. Thus, it can transmit packets on the tree link subcarriers during both \stage{s} of the time slot. Note that, do not use receiver-initiated communication on the tree links. Instead, the base station forwards the packets on the next available subcarrier and time slot. 

Similar to intra-SNOW communication, packets in inter-SNOW communication are prioritized based on the RM scheduling policy and scheduled for transmission locally by a base station. A base station locally determines the next available packets for transmission based on the packets' priority in its queue. Note that, a packet generated by an event-triggered node has the lowest priority compared to a packet generated by a time-triggered node.

Since base stations operate in time slots, all base stations in the topology need to time-synchronized. Base stations are line-powered devices, and hence, synchronizing time at all base stations does not cause additional energy overhead.  Note that, only the base stations are time-synchronized while the nodes are not. 

Typically, a SNOW base station shares a subset of subcarriers with its neighboring SNOW. An effective sharing of subcarriers between SNOWs increases the availability of subcarriers and potentially decreases the maximum latency. To enable effective sharing, we propose to use a round-robin approach. In the round-robin approach, each node is allocated one time slot for a subcarrier usage in an epoch of length equal to the number of nodes sharing a subcarrier.  For example, if two base stations (BS$_0$ and BS$_1$)  are sharing one subcarrier, then BS$_0$ will use the subcarrier for time slots ${1, 3, 5, 7, \cdots}$ and BS$_1$ will use the subcarrier for time slots ${2, 4, 6, 8, \cdots}$. For the sake of simplicity and correctness of the RM transmission time slot prediction, we use the closet upper bound on the epoch length that is harmonic with the packet generation periods.

\vspace{0.1in}
\subsubsection{Estimating the Latency in RI-TDMA}
Typically, estimating an upper bound on the maximum latency of a packet in a wireless environment is less plausible since poor link conditions may cause many consecutive packet drops at a receiver. Existing works on delay analysis in a wireless network assume a fixed number of re-transmissions \cite{distributedhart}. In this section, we assume that a node/BS transmits each packet at most once. 

The first step in estimating the maximum latency of packet from $u$ of SNOW$_i$ is to determine the intra-SNOW latency from $u$ to BS$_i$, $\lambda(u, u, \text{BS}_i)$. To determine the intra-SNOW latency, we rely on the concept of critical instant, similar to processor scheduling. The critical instant, in RI-TDMA, is the arrival of a packet such that it experiences the maximum delay from all higher priority packets. Note that the high priority packets in RI-TDMA MAC are packets generated by a node $v$ with periods smaller than or equal to the period of $u$, i.e., $v \in hp_i(u) \text{ iff }  v \in V_i \text{ and } T_i(v) \le T_i(u)$, where $T_i(v)$ represents the period of node $v, v \in$ SNOW$_i$. Since BSs use RM scheduler, the maximum delay occurs when all high priority packets arrive at the same time as the packet from $u$. During a critical instant, multiple packets of a high priority node can delay one packet of $u$. The maximum number of packets generated by a high priority node with an interval of length $\lambda(u, u, \text{BS}_i)$ is given by the following expression: $$\sum_{v \in hp_i(u)} \bigg \lceil \frac{\lambda(u, u, \text{BS}_i)}{T_i(v)} \bigg\rceil.$$
Each packet of a high priority node delays the packet of node $u$ by one time slot. 

In RI-TDMA, BS$_i$ can receive from at most $|S_i|$ nodes concurrently, where $S_i$ represents the set of subcarriers assigned to BS$_i$. However, $\text{BS}_i$ may share a subset of $S_i$ subcarriers with other base stations, and sharing allows $\text{BS}_i$ to transmit packet on these shared subcarriers for a fraction of the time. We represent the sum of availabilities of all subcarriers in $S_i$ as $\psi_i$. In an interval of length $L$, BS$_i$ can receive $ \lceil \frac{L}{\psi_i} \rceil$ packets. 
Furthermore, the concept of subcarrier sharing between base stations adds an additional latency, as shown in Fig \ref{integration_fig:ri-channel-availability}. In Fig. \ref{integration_fig:ri-latency-cont-channel}, one subcarrier is available for use by four nodes. In Fig. \ref{integration_fig:latency-fractional-channel},  four subcarriers are available for one in four time slots, i.e., the repetition epoch is 4. The subcarrier availability in both cases is 1, but the latency is not. The worst case latency is observed when all the fractional channels are available at the end of their repetition epoch. Thus, fractional availability of a channel adds $\varsigma$ units of additional latency, where $\varsigma$ represents the maximum number of SNOW BS that share one subcarrier.
\begin{figure}[ht!]
    	\centering
	%\vspace{-0.1in}
	\begin{subfigure}[b]{0.2\textwidth}
		\includegraphics[width=\textwidth]{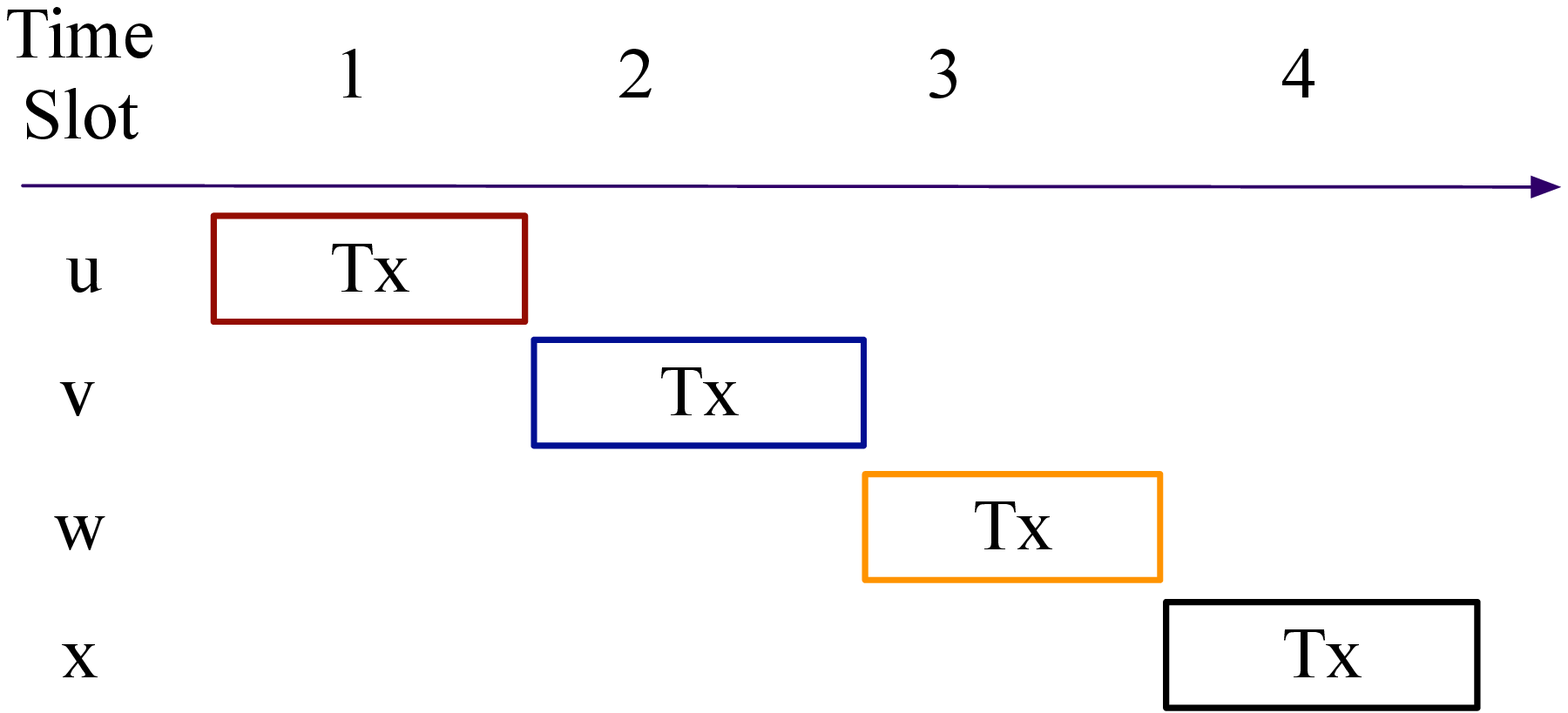}
		\caption{Continuous Availability of 1 Channel}
		\label{integration_fig:ri-latency-cont-channel}
	\end{subfigure}
	\quad
	\begin{subfigure}[b]{0.2\textwidth}
		\includegraphics[width=\textwidth]{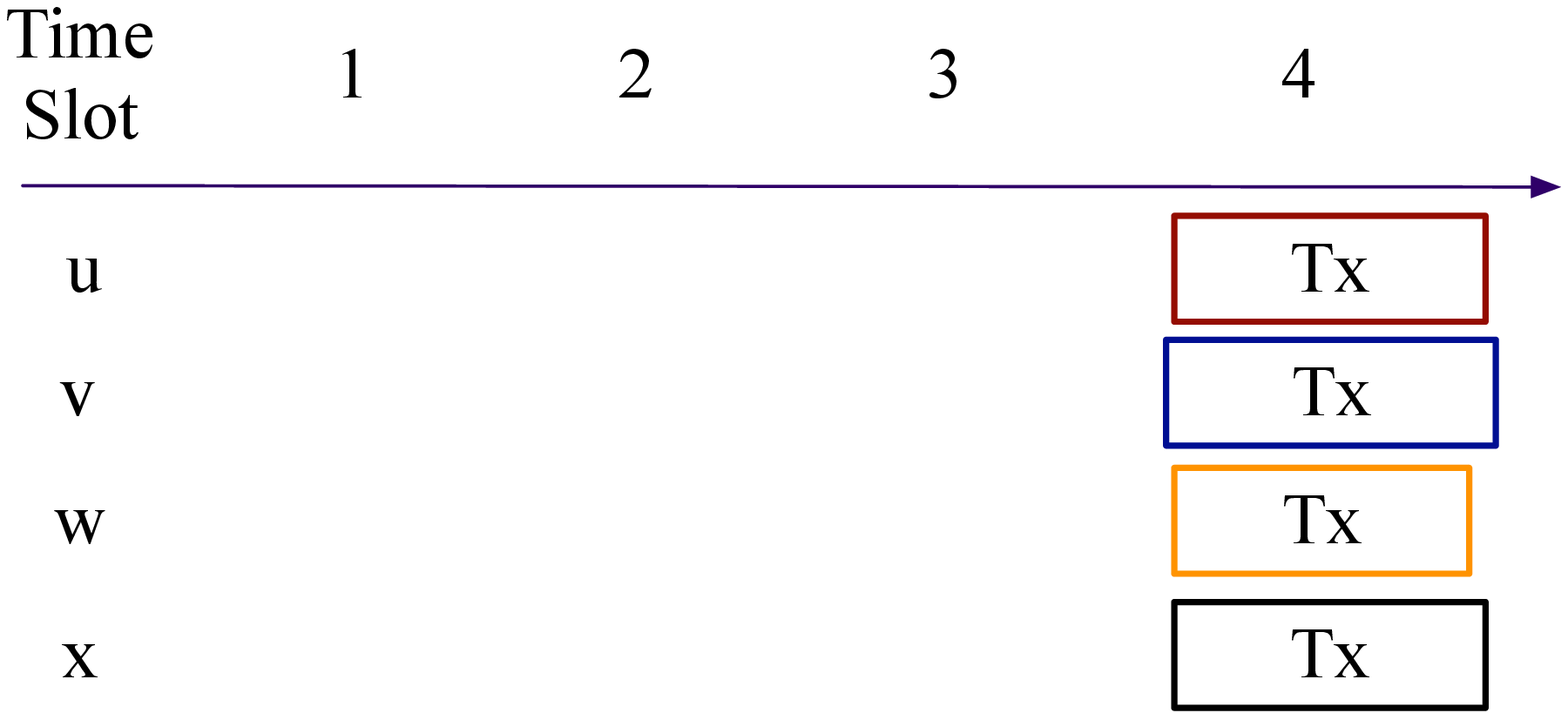}
		\caption{Fractional - $\frac{1}{4}^{th}$ - Availability of 4 Channels}
		\label{integration_fig:latency-fractional-channel}%
	\end{subfigure}
	\caption{Comparison of Latency under Continuous and Fractional Availability of a Channel }
	\label{integration_fig:ri-channel-availability}
\end{figure}

The maximum latency experienced by a packet of $u$ is the sum of the number of transmission attempts from $u$ (which is one) and the total delay caused by all high priority packets. The solution to the expression in Equation (\ref{eq:latency_node_BS}) gives the maximum latency experienced by a packet from node $u$.
\begin{equation}
\label{eq:latency_node_BS}
\lambda(u, u, \text{BS}_i) = 1 + \varsigma + \Bigg\lceil \frac{1}{\psi_i} \sum_{v \in hp_i(u)} \bigg\lceil \frac{\lambda(u, u, \text{BS}_i)}{T_i(v)} \bigg\rceil \Bigg \rceil
\end{equation} 

Note that scheduling packets for intra-SNOW communication in RI-TDMA is a special case of a task scheduling on a multi-core processor, where each packet transmission (or equivalent task) executes for precisely one time slot. Here, increasing the period of a node $u$ by one time slot, facilitates a lower priority packet from $v$ to start and complete transmission within this time slot. The transmission of a packet from $v$ causes a ripple effect on other lower priority tasks to finish earlier or at the same time as before. Thus, increasing the period of a node does not increase the latency of any lower priority packet, and scheduling anomalies discussed in \cite{andersson2000some} are not applicable here. Similarly, packet scheduling on a fractional availability of a processor, does not introduce any scheduling anomalies for packet scheduling in RI-TDMA MAC. Thus, the expression in Equation (\ref{eq:latency_node_BS}) provides a the maximum latency experienced by a packet from $u$.

The next step in estimating the latency of a packet is to compute the inter-SNOW latency. We first describe the latency experienced by a packet (originating at node $u$) to reach BS$_{\rho(i)}$ from BS$_i$, where node $u$ is in SNOW$_i$ (i.e., $u \in V_i$). 
We then extend the description for any base stations BS$_j$ and BS$_{\rho(j)}$, along the tree path of BS$_i$. 

In inter-SNOW communication, a packet experiences delays from two sources: (1) high priority packets that previously delayed this packet, and (2) new high priority packets that arrive at BS$_i$. 
The first source of delay emanates from the mismatch in the subcarrier assignment for intra-SNOW communication ($|S_i|$) and inter-SNOW communication ($|S_{i,\rho(i)}|$). Specifically, $|S_i|$ can be higher than $|S_{i,\rho(i)}|$, i.e., BS$_i$ can receive packets at a higher rate than it can transmit to its parent. Thus, high priority packets that have delayed the packet from $u$ can further delay the packet at BS$_i$.  The latency caused by high priority packets that previously delayed a packet of $u$ is given by the following expression:
$$ \lambda(u, u, \text{BS}_i)\bigg \lceil \frac{\psi_{i}}{2|S_{i, \rho(i)}|}\bigg \rceil $$

The second source of delay emanates from the arrival of new high priority packets at BS$_i$. The number of high priority packets arriving at BS$_i$ depends on the generation rate of the high priority packets and the packet reception rate of these packets at BS$_i$. For example, if BS$_k$ receives 10 high priority packets but can transmit only 6 in one time slot, implying that only 6 packets can interfere a packet from $u$ per time slot. 
Let $C(i) \subseteq \{0,1,\cdots ,N-1\}$ be such that each $B_k, k \in C(i)$, is a child of BS$_i$ in the SNOW tree. An upper bound on the number of new high priority packets arriving at BS$_i$ during the interval $\lambda(u, \text{BS}_i, \text{BS}_{\rho(i)})$ is given by the following Equation:
\begin{equation*}
\begin{split}
\bigg\lceil   \sum_{\substack{v \in hp_g(u) \, \& \, g \in D(k) \\ k \in C(i)}} & \frac{\lambda(u, \text{BS}_i, \text{BS}_{\rho(i)})}{T_g(v) \, |S_{k,i}|}  \\& +  \sum_{\substack{ w \in hp_i(u)}} \frac{\lambda(u, \text{BS}_i, \text{BS}_{\rho(i)})}{T_i(w) \, \psi_{i}} \bigg\rceil. 
\end{split}
\end{equation*}
In the above expression, the first summation takes into account all high priority packets that are generated by a successor of BS$_i$ and second summation takes into account all the high priority packets generated by nodes of SNOW$_i$.

In RI-TDMA, BS$_i$ concurrently transmits on $2|S_{i,\rho(i)}|$ subcarriers in each times slot since it transmits in both \stage{s} of the time slot. However, if the total number of subcarriers on which it receives a packet in one time slot is less than $2|S_{i,\rho(i)}|$, then it cannot transmit $2|S_{i,\rho(i)}|$ packets in a time slot. Thus, we represent the maximum number of concurrent inter-SNOW transmissions by BS$_i$ as $\phi_{i,\rho(i)}$, and is given by Equation (\ref{eq:max_concurrent_tx_BS}).
\begin{equation}
\label{eq:max_concurrent_tx_BS}
\phi_{i,\rho(i)} = \min (2|S_{i,\rho(i)}| \,, \, \psi_i + \sum_{k \in C(i)} |S_{k,i}| )
\end{equation}
The maximum latency experienced by a packet of $u$ to reach BS$_{\rho(i)}$ from BS$_i$ is given by the sum of two sources, and the solution to the expression in Equation (\ref{eq:latency_BSparent_BS}) gives the maximum latency.
\begin{equation}
\label{eq:latency_BSparent_BS}
\begin{split}
 \lambda  (u, \text{BS}_i, &\text{BS}_{\rho(i)}) =\lambda(u, u, \text{BS}_i)\bigg \lceil \frac{\psi_{i}}{2|S_{i,{\rho(i)}}|}\bigg \rceil  + \\& \Bigg\lceil \frac{1}{\phi_{i,\rho(i)}}  \bigg\lceil   \sum_{\substack{v \in hp_g(u) \, \& \, g \in D(k) \\ k \in C(i)}} \frac{\lambda(u, \text{BS}_i, \text{BS}_{\rho(i)})}{T_g(v) \, |S_{k,i}|}  \\& +  \sum_{\substack{ w \in hp_i(u)}} \frac{\lambda(u, \text{BS}_i, \text{BS}_{\rho(i)})}{T_i(w) \, \psi_{i}} \bigg\rceil \Bigg\rceil
\end{split}
\end{equation}

The maximum latency experienced by at any BS$_j$ (that receives the packet of $u$ from child BS$_i$) in the tree structure is given by the following expression.
\begin{equation*}
\begin{split}
 \lambda  (u, \text{BS}_j, &\text{BS}_{\rho(j)}) =\lambda(u, u, \text{BS}_j)\bigg \lceil \frac{\psi_{j}}{2|S_{j,{\rho(j)}}|}\bigg \rceil  + \\& \Bigg\lceil \frac{1}{\phi_{j,\rho(j)}}  \bigg\lceil   \sum_{\substack{v \in hp_g(u) \, \& \, g \in D(k) \\ k \in C(j)}} \frac{\lambda(u, \text{BS}_j, \text{BS}_{\rho(j)})}{T_g(v) \, |S_{k,j}|}  \\& +  \sum_{\substack{ w \in hp_j(u)}} \frac{\lambda(u, \text{BS}_j, \text{BS}_{\rho(j)})}{T_j(w) \, \psi_{j}} \bigg\rceil \Bigg\rceil
 \end{split}
\end{equation*}
Note that the above expression is a generalization of the expression in Equation (\ref{eq:latency_BSparent_BS}), where the transmission from node $u$ to reach BS$_i$ was replaced with with the transmission from BS$_i$ to BS$_j$.

The maximum latency experienced by a packet generated by node $u$ to reach the root BS is the summation of latency experienced to reach $u$'s base station BS$_i$ and all links from on the tree from BS$_i$ to root BS. It is given by the following Equation.
\begin{equation*}
\Lambda(u) = \lambda(u, u, \text{BS}_i)+ \sum_{j \in A(i) - 0}  \lambda(u, \text{BS}_j, \text{BS}_{\rho(j)})
\end{equation*}

The above latency formulation can be used to generate a solution for the bandwidth allocation problem formulation in Section \ref{subsec:prob_formulation}. 
Since the latency computation takes a pseudo-polynomial time to generate an accurate result, generating an optimal solution using any existing optimazation solvers like genetic algorithm takes a significantly long time. Hence, we propose a heuristic solution to generate an efficient solution.
Note that the latency derivation for any MAC can be used with the bandwidth allocation problem formulated in Section \ref{subsec:prob_formulation}. To show the feasibility of using different MAC protocol, we derive the latency formulation for a simple MAC in the following subsection.

\subsection{Deriving a Latency Expression for a special case}
Here, we consider a simple TDMA MAC protocol, where time in the network is divided into time slots. Each node in the SNOW is assigned a time slot and a subcarrier for transmission. At the end of the time slot, the BS acknowledges all packet transmissions via a single ACK message. The ACK message consists of a bit vector of size equal to the number of available subcarriers. In the next time slot, the BS broadcasts the received messages to its parent BS in the tree, without interrupting the transmissions from other nodes.  In TDMA scheduling, the transmission schedule is generated well in advance and broadcasted to all nodes during network deployment. 

Considering that all the nodes have the same periods and BSs do not share subcarriers, a packet from node $u$ experiences the maximum delay when packets from all nodes in SNOW$_i$ interfere it. The maximum latency experienced by a packet from node $u$ to reach BS$_i$ is given as follows.
$$\left\lceil \frac{n_i}{|S_i|} \right\rceil$$
A packet generated by $u$ experiences the maximum delay along the tree link  BS$_j$ $\rightarrow$ BS$_{\rho(j)}$ when packets from all nodes that pass through the BS$_j$ interfere it.  The maximum latency experienced by a packet generated by $u$ along the tree link  BS$_j$ $\rightarrow$ BS$_{\rho(j)}$ is given by the following equation:
 $$\left \lceil \frac{ \sum_{k\in D(j)} n_k}{|S_{j, \rho(j)}|} \right\rceil.$$

The maximum latency packet origination from SNOW$_i$ is given by Equation (\ref{eq:latency_tdma}).
\begin{equation}
\label{eq:latency_tdma}
 L(i) =  
\begin{cases}
~\left \lceil \frac{n_i}{|S_i|} \right\rceil, & \text{ if } i =0; \\
\left \lceil \frac{n_i}{|S_i|} \right\rceil  +  \sum_{j \in A(i) - \{0\}}  \left \lceil \frac{ \sum_{k\in D(j)} n_k}{|S_{j, \rho(j)}|} \right\rceil, &\text{ otherwise.}
\end{cases}
\end{equation}

The above latency formulation is a special case of the maximum latency derived in the previous section, without an efficient method of subcarrier sharing and assuming all nodes have the same periods. Although these assumptions restrict the applicability, it provides a tighter bound on the maximum latency computation and provides a better solution to the subcarrier allocation problem.

\newcommand{\gbasi}[1]{LT-SASI{#1}}
\subsection{Latency and Traffic aware Spectrum Allocation for SNOW Integration (\gbasi{})}
Here, we present heuristic solution to the subcarrier optimization problem for latency minimizing integration. The intuition behind the greedy heuristic is to assign additional subcarriers to a bottleneck link. Here, we refer to any link that causes the largest delay on the packet as the bottle neck link. Since allocating additional subcarriers to a link decreases the latency experienced by all packets along that link, iteratively repeating the intuition can result in a subcarrier assignment that minimizes the maximum latency in the network. We refer to this greedy heuristic as \gbasi{} and is presented in Algorithm \ref{alg:integration_gbasi}.

The greedy heuristic starts by assigning two unique subcarriers (one for intra-SNOW and another for inter-SNOW) to each SNOW base station, as shown in Lines 1-4. The function \texttt{Unique\_Subcarrier} first computes the list of subcarriers not used by any interfering SNOW and returns one subcarrier from the unused subcarrier list. Then, the code maximum latency of all packets in the network. Computing the latency of a packet via a fixed point algorithm is a pseudo-polynomial time algorithm. \gbasi{} uses a polynomial-time simplification by replacing the critical window length by the period of the packet. Aspirant links are then calculated by eliminating links that do not have any feasible subcarrier assignment left that minimize the latency, as shown in Line 7. Among the aspirant links, the bottleneck link is identified and assigned a subcarrier, as shown in Lines 11-16. The function \texttt{Add\_Feasible\_Channel} finds a an available subcarrier that minimizes latency and meets the Constraints (\ref{const:availability}), (\ref{const:nointerfere1}), (\ref{const:nointerfere2}), and (\ref{const:papr}).

\begin{algorithm}[h]
\small
\caption{\gbasi{}} 
\label{alg:integration_gbasi}
  \SetAlgoLined 
  \SetKwInOut{Input}{Input}
  \SetKwInOut{Output}{Output}
  \SetKwRepeat{Do}{do}{while}
  \DontPrintSemicolon
  \Input{$\text{Tree Structure}, Z_i, I_i, J_i, \, \forall i \in \{0, N\} $ \\ $ T_u \forall \, u \in \text{BS}_i, \, 0 \le i < N$}
  \Output{$S = \cup_{ \forall i} \{S_i, S_{i,\rho(i)} \} $}
  \vspace{0.03in}
  
  \For{$i \in \{0, N\}$} { 
  \vspace{0.03in}
   $S_{i} = \text{Unique\_Subcarrier} (Z_i, I_i, S)$ \\
   \vspace{0.03in}
   $S_{i, \rho(i)} = \text{Unique\_Subcarrier} (Z_i, Z_{\rho(i)}, J_i, S)$ \\
   }
   
   \Do{$\text{True}$}{
	\vspace{0.03in}
	$L = \text{Compute\_Latency} () $  \\
	$A = \text{Exclude\_infeasible\_links} (L, S) $ \\
	\If{$|A| = 0$} { 
		break
	}
	$\upsilon, \nu = \text{Identify\_Bottleneck\_Link} (A)  $\\
	\vspace{0.03in}
	\eIf{$\upsilon  \in \text{BS}$} { 
		%intra-SNOW communication
		 $S_{\upsilon, \nu} = \text{Add\_Feasible\_Channel} (S, Z)$
	}
	{
		$S_{\nu} = \text{Add\_Feasible\_Channel} (S, Z)$ 
	}
  }
  
\end{algorithm}

The time complexity of the \texttt{Unique\_Subcarrier} function is $O(|Z|)$, where $|Z|$ represents the maximum number of subcarriers available in a SNOW. Iterating over $N$ SNOW base stations, the time complexity for the initial subcarrier assignment is $O(N|Z|)$. The time complexity of computing the maximum latency for all nodes is $O(N |V|)$, where $|V|$ represents the maximum number of nodes in a SNOW. The time complexity of identifying the bottleneck link is $O(N)$ since the maximum length of the tree structure is $N$. The time complexity of \texttt{Exclude\_infeasible\_links} function is O(1), considering that \texttt{Add\_Feasible\_Channel} function stores the list of links with infeasible assignments in the. The time complexity of a brute force algorithm to find an available subcarrier that minimizes the latency of the bottleneck link is $O(|Z|)$. Iterating over $N |Z|$ times to exhaustively search through all aspirant links, the time complexity of \gbasi{} algorithm is $O(N|Z| + N|Z| (N + N|V|)) = O(N^2 |V| |Z|)$.

\section{Experiment}
\label{sec:experiment}
Here we evaluate the latency-minimizing spectrum allocation capabilities of \gbasi{} through real experiments. We compare the performance of the proposed \gbasi{} with the greedy heuristic for integration proposed in~\cite{integration, ton_integration}, which we refer to in this section as Greedy-SOP. Note that, generating a latency bound for a CSMA MAC protocol is less feasible due to the unprecedented number of collisions in an LPWAN network \cite{oursurvey}. Thus, we use a SNOW network with TDMA MAC for \gbasi{} approach and CSMA/CA MAC for the Greedy-SOP (as proposed in \cite{integration, ton_integration}).

\begin{figure}[h]
	\centering
	\includegraphics[width=0.45\textwidth]{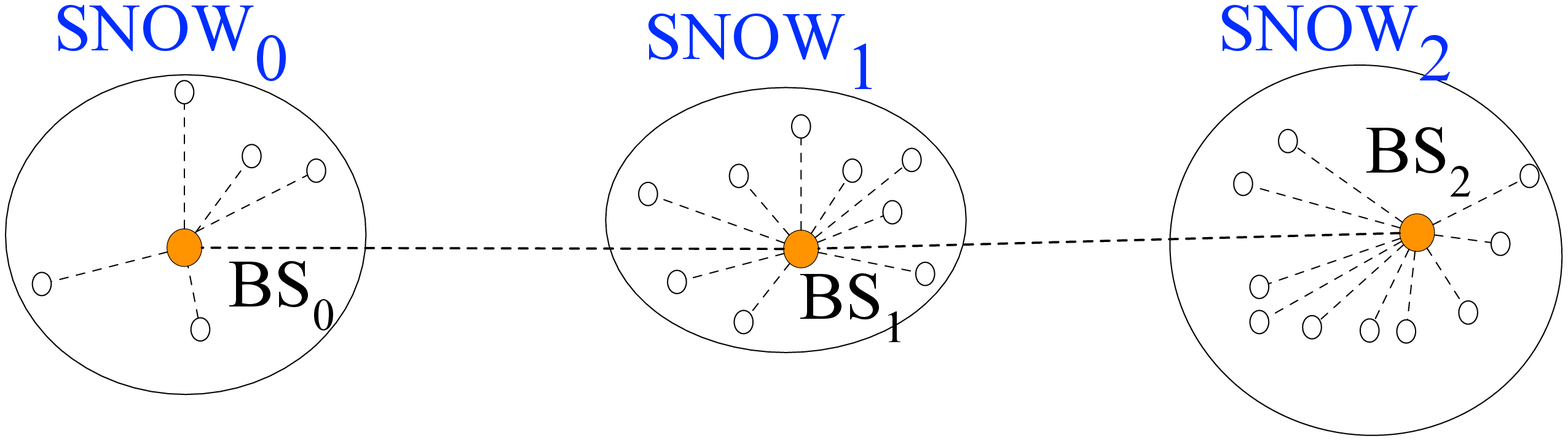}
	\caption{Experiment Setup}
	\label{fig:experiment_setup}
	\vspace{-0.1in}
\end{figure}

We create a SNOW integration with 3BSs, as shown in Fig. \ref{fig:experiment_setup}. Each base station is connected with 2 USRP devices, one for Tx radio and another for Rx radio. We use 25 CC1310 devices to emulate the communication pattern of 500 devices. Each CC1310 device is emulating the traffic of 20 nodes by hopping into the next subcarrier and transmitting a packet at the next available time. Note that the emulation of nodes shows the working of \gbasi{} under a large scale integration, as proposed in this chapter. Of the available white space spectrum, we used 2 TV channels in the range $500$MHz to $512$MHz. Both CC1310 and USRP devices use a transmission power of 0dBm. We use OOK modulation for communication between any two devices. Since CC1310 devices are built for a maximum bandwidth of $39$KHz for OOK modulation, nodes use a $39$kHz bandwidth for transmission. The base stations use oversampling at $400$KHz bandwidth to receive the packet successfully from a node. Furthermore, base stations use a $400$KHz bandwidth for inter-SNOW communication. For the experiment, we use a random payload of 10bytes and a packet generation interval of 2s. For the TDMA MAC protocol, the time in the network is slotted, and each slot is 50ms long.

We evaluate the performance of \gbasi{} using 2 metrics (1) maximum latency and (2) average latency. We consider {\slshape maximum latency} as the longest latency experienced any packet in the SNOW integration during a 10-minute interval. We consider {\slshape average latency} as the average latency of all packets generated in the SNOW integration during a 10-minute interval. 

The experiment result in Fig. \ref{ntegration_fig:results-experiment} shows that the spectrum allocation with \gbasi{} results in the maximum and average latencies for all the BSs to be similar to each other. Notably, the difference between the maximum latency of two BSs at most $0.08$s, while that of the Greedy-SOP is $0.4$s. This result shows that \gbasi{} ensures fairness in the spectrum usage of all BSs. It also demonstrates that \gbasi{} algorithm decreases the latency up to $50\%$.
 
\begin{figure}[t]
 	\centering
	\begin{subfigure}[b]{0.35\textwidth}
		\includegraphics[width=\textwidth]{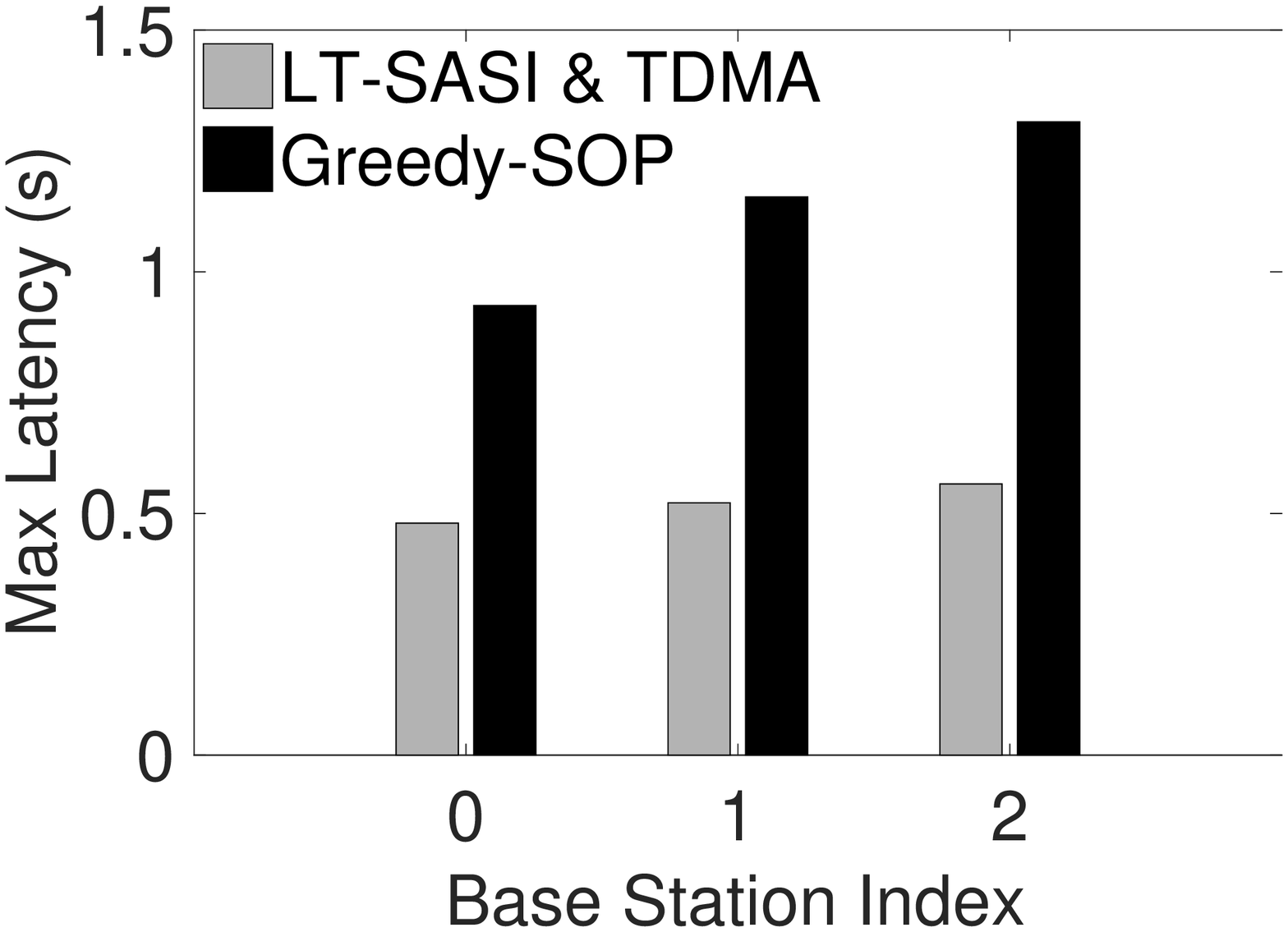}
		\caption{Max Latency of Each BS}
		\label{integration_fig:results-experiment-max-latency}
	\end{subfigure}
	\quad
	\begin{subfigure}[b]{0.35\textwidth}
		\includegraphics[width=\textwidth]{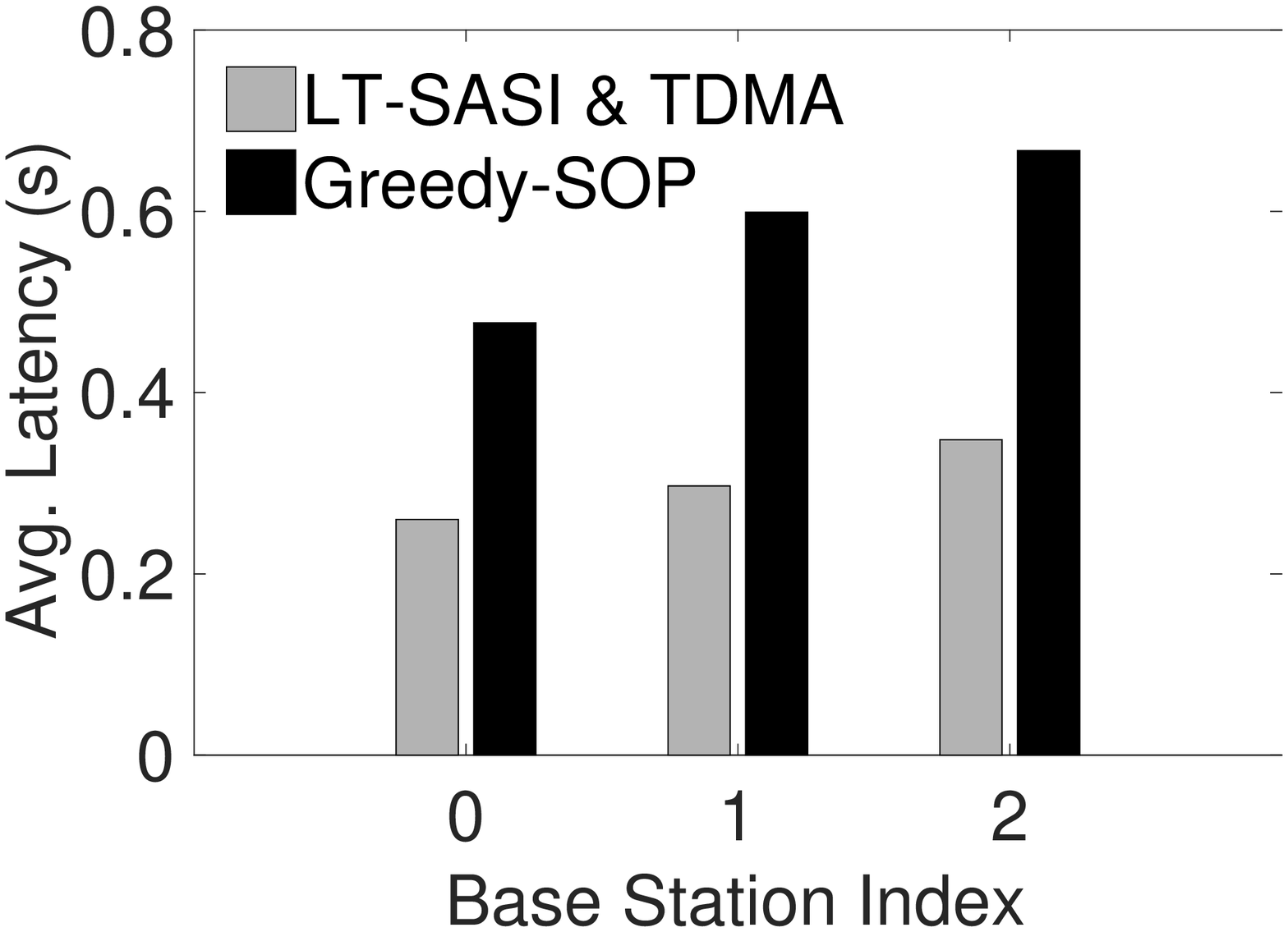}
		\caption{Average Latency of Each BS.}
		\label{integration_fig:results-experiment-avg-latency}
	\end{subfigure}
	\caption{Experimental Result of \gbasi{}  showing Maximum and Average Latency for Each SNOW}
	\label{ntegration_fig:results-experiment}
\end{figure}

\section{Simulation}
\label{sec:simulation}

We perform simulations in NS3\cite{ns3} to evaluate the performance of \gbasi{} under large scale integrations with many nodes and BS. We use a similar setup as the experiment. In the simulation, we create a random tree topology. We use a $400$KHz subcarrier bandwidth for both intra-SNOW and inter-SNOW communication. In the simulation, we evaluate the performance of  \gbasi{} under both RI-TDMA MAC and TDMA MAC, which we refer to in this section as \gbasi{} \& RI-TDMA and \gbasi{} \& TDMA, respectively.

\subsection{Maximum and Average Latencies for Each SNOW}
In this simulation, we compare the maximum and average latencies for each SNOW BS. We perform simulation on a $2000$ node SNOW integration with nodes distributed across $5$ BSs. We consider 50 subcarriers are available for use by a BS. We use a fixed period of $4$s for all nodes in the period.

\begin{figure}[ht]
	\centering
	\begin{subfigure}[b]{0.35\textwidth}
		\includegraphics[width=\textwidth]{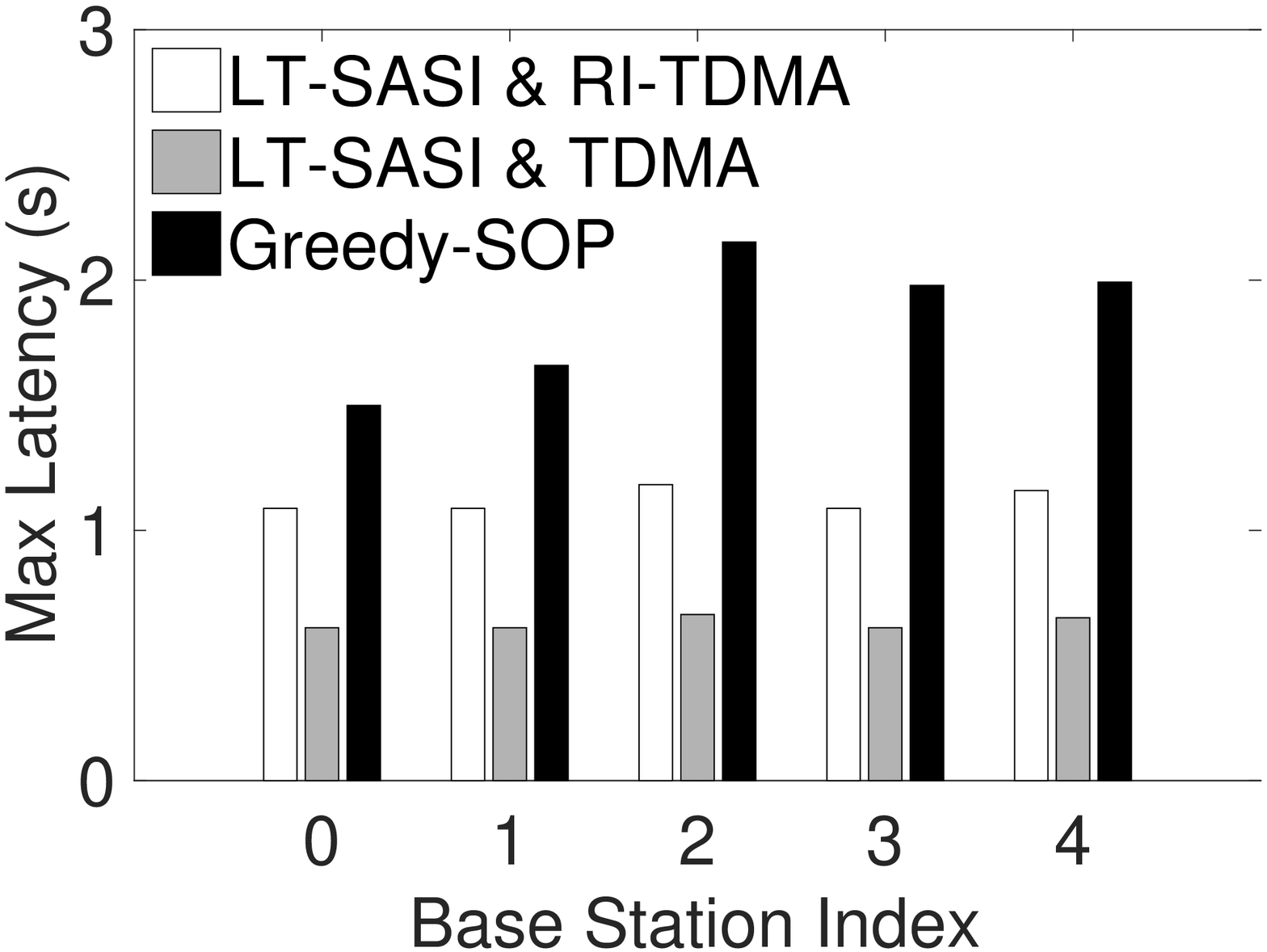}
		\caption{Maximum Latency of Each BS}
		\label{integration_fig:results-distribution-max-latency}
	\end{subfigure}
	\quad
	\begin{subfigure}[b]{0.35\textwidth}
		\includegraphics[width=\textwidth]{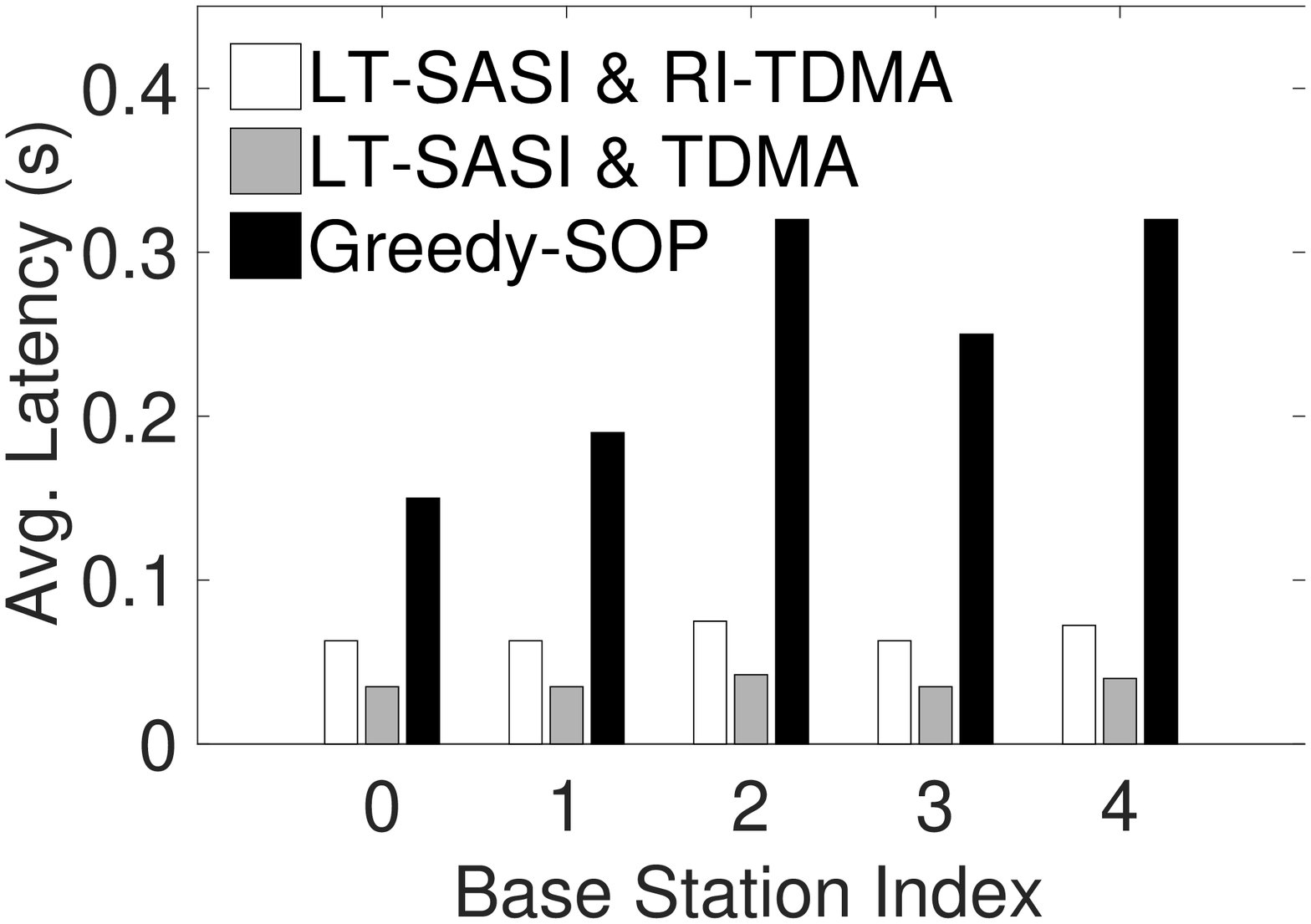}
		\caption{Avg. Latency of Each BS}
		\label{integration_fig:results-distribution-avg-latency}%
	\end{subfigure}
	\caption{Performance Evaluation of \gbasi{} under Maximum and Average Latencies for Each SNOW}
	\label{interation_fig:results_distribution}
\end{figure}

This simulation result complements the experiment result. Under \gbasi{}, the maximum and average latency are similar for each BS, while the maximum and average latency of the Greedy-SOP algorithm varies significantly. In this simulation, we observed that the maximum latency for the \gbasi{} \& TDMA was lower than that of \gbasi{} \& RI-TDMA. However, the spectrum allocation under both settings resulted in a similar assignment policy. We also observed that the difference in the latency arises from the length of the time slot, with RI-TDMA MAC's time slot being double the length of TDMA's time slot. Furthermore, a fixed period of all node generates a uniform traffic, and favors the latency computation of the TDMA MAC. However, for a harmonic period assignment, the latency computation for the TDMA MAC is not accurate since it does not take into account that multiple packets of a flow can delay the same packet.

 \begin{figure}[ht]
 	\centering
		\includegraphics[width=0.35\textwidth]{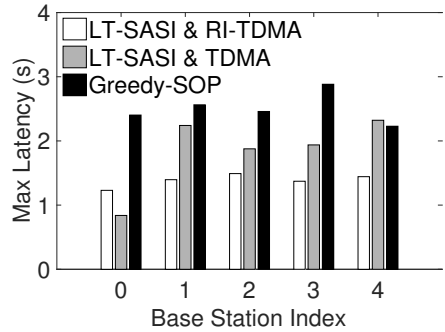}
		\caption{Maximum Latency of Each BS for 2000 nodes.}
		\label{integration_fig:results-distribution-max-latency-flowrate}
\end{figure}

To observe the impact of latency estimation accuracy on \gbasi{}, we perform another simulation with the same setup but with harmonic period assignments in the range $[1s, 4s]$. This simulation shows that the maximum latency for \gbasi{} \& RI-TDMA is similar across all BSs, while the latency of the \gbasi{} \& TDMA varies significantly, as shown in Fig. \ref{integration_fig:results-distribution-max-latency-flowrate}. From this result, we can conclude that the \gbasi{} algorithm generates good results for an accurate estimate in latency. Furthermore, we can also conclude that \gbasi{} decreases the latency up to $50\%$ when compared to Greedy-SOP.

\subsection{Performance of \gbasi{} under Scalability of Number of Nodes}
In this simulation, we evaluate the performance of \gbasi{} under the scalability of the number of nodes. For a BS tree topology consisting of $5$BSs, we vary the number of nodes from $500$ to $5000$. For each node, we select harmonic periods in the range $[1s, 4s]$. We consider that 50 subcarriers are available for use by each BS. 

\begin{figure}[ht]
	\centering
	\includegraphics[width=0.35\textwidth]{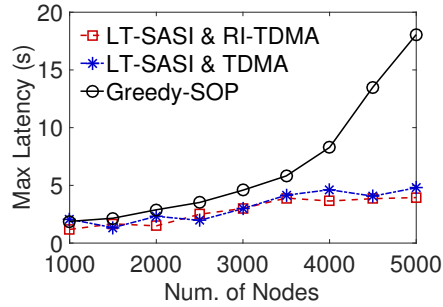}
	\caption{Maximum Latency under Scalability of Number of Nodes.}
	\label{integration_fig:Max Latency under Scalability of Number of Nodes}
	%\vspace{-0.1in}
\end{figure}

In this simulation, we observed that the increase in latency for \gbasi{} \& RI-TDMA is mainly due to the increase in time slot length. However, the increase in latency for Greedy-SOP algorithm arises from subcarrier contention. We observed that the performance of \gbasi{} \& TDMA fluctuates due to the randomness of the period assignment. \gbasi{} \& TDMA approach performs better when the periods are almost similar to each other, but performs poorly otherwise. From this simulation, we can conclude that the \gbasi{} performs well with the increase in the number of nodes in a SNOW. We have observed that \gbasi{} achieves up to $84\%$ decrease in latency when compared to Greedy-SOP.

\subsection{Performance of \gbasi{} under Scalability of Number of Base Station}
In this simulation, we evaluate the performance of \gbasi{} under the scalability of the number of BSs. We randomly generate a tree structure but keep the number of nodes in each tree constant at $500$ nodes. For each node, we select harmonic periods in the range $[1s, 4s]$. We consider that 100 subcarriers are available for use by each BS. 
\begin{figure}
	\centering
	\includegraphics[width=0.35\textwidth]{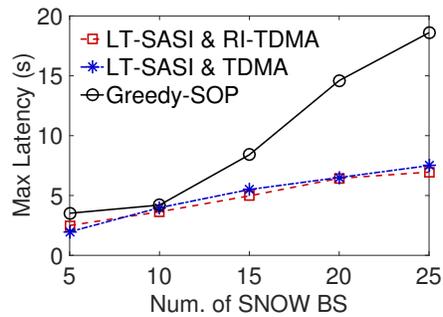}
	\caption{Peformance Evaluation of \gbasi{} under Scalability of Number of SNOW BSs with Harmonic Period Assignment}
	\label{fig:bs_max_latency}
\end{figure}
In this simulation, we observed that an increase in distance from the root BS significantly increases the maximum latency observed under Greedy-SOP. However, \gbasi{} minimizes the impact of distance by allocating a higher number of subcarriers. From this simulation, we can conclude that the \gbasi{} scales well with the increase in the number of BSs in SNOW integration. We have observed that \gbasi{} achieves up to $60\%$ decrease in latency when compared to Greedy-SOP.

\section{Summary}\label{sec:conclusion}
LPWAN is a promising IoT technology for communicating over long distances at low power. Despite their promise,  LPWANs face challenges in covering very wide areas making their adoption challenging for IoT applications in agriculture, oil, and gas fields that may extend over hundreds of square kms and are in rural/remote areas with no/limited infrastructure. To cover such large-areas, we have proposed to scale up LPWAN through a seamless in-band integration of multiple SNOWs. SNOW is an LPWAN architecture over the TV white spaces. We have proposed the first latency minimizing in-band integration of  multiple SNOWs. We have implemented the proposed integration on SNOW hardware platform and observed, through physical experiments, up to $50\%$ decrease in network latency compared to the existing approach. We have also performed simulations in NS-3 and have observed up to $84\%$ decrease in network latency in an integrated network of  5000 nodes (of 5 SNOWs) based on our approach compared to the existing approach. In the future, we plan to extend this work to support a closed loop communication between sensors and actuators for industrial IoT. we plan on leveraging on edge-based control of plants in wide area deployments to provide a low-latency control for applications while also collecting data to a central location for monitoring the control performance of the entire system.

\chapter{A Utilization-Based Approach for Schedulability Analysis in Wireless Control Systems}
\label{ch:utilization}

Recent advancements in industrial Internet-of-Things (IoT), more specifically, the development of industrial wireless standards such as WirelessHART and ISA100, are paving the way for the fourth industrial revolution, Industry 4.0. These wireless standards specify highly reliable and real-time communications as key requirements in industrial wireless sensor-actuator networks. Schedulability analysis remains the cornerstone for  analyzing the real-time performance of these networks. While it is well-explored in the domain of CPU scheduling, schedulability analysis for multi-hop wireless networks has seen little progress till date. Existing work mostly focuses on worst-case delay analysis that runs in pseudo-polynomial time, making it is less suitable under frequent network dynamics which are quite common in industrial IoT. To address this, in this chapter, we develop a schedulability analysis based on utilization bound for multi-hop, multi-channel industrial wireless sensor-actuator networks. Because of its extremely low runtime overhead, utilization-based schedulability test is considered to be one of the highly efficient and effective schedulability analyses. However, no work has been done yet on utilization-based analysis for multi-hop wireless network. The key challenge for a utilization-based test for multi-hop wireless network arises from the fact that wireless network is subject to transmission conflict and network dynamics which are not present in CPU scheduling. We address this challenge by bridging the gap between wireless domain and CPU task scheduling. We have evaluated our result through simulations using TOSSIM that shows that our schedulability analysis is safe and effective in practice.

\section{Introduction}
\label{utilization_sec:introduction} 

Recent advancements in industrial Internet-of-Things (IoT), more specifically the development of industrial wireless standards such as WirelessHART \cite{WirelessHART2007_standard} and ISA100 \cite{ISA}, are paving the way for the fourth industrial revolution, Industry 4.0 \cite{iiot}. These wireless standards offer a closed loop communication between the sensors and actuators, where sensors measure process variables and deliver to a controller. The controller generates control commands based on the measured process variables and then sends the control commands to the actuators through the network. In order to ensure the stability of the industry, these wireless standards should offer reliable and real-time communication, i.e., a control command should reach the actuator before a given deadline. For example, in oil refineries,  the spilling of oil tanks has to be avoided by controlling the level measurement in real-time. However, industry settings pose a harsh environment for wireless communication causing frequent transmission failures due to noisy channels, limited bandwidth,  obstacles, multi-path fading, and interference that make it difficult to meet these requirements \cite{lu2016real}.  

Industrial wireless standards such as WirelessHART mitigate frequent transmission failures through channel hopping and multi-channel communication. These networks, therefore, provide the feasibility of achieving reliable and real-time communication over wireless for critical process control applications. Nevertheless, unlike the wired counterpart, real-time scheduling theory for the wireless network is still not well-developed. 

\emph{Schedulability analysis} remains the cornerstone of real-time scheduling theory for industrial IoT \cite{iiot}. Schedulability analysis is used to determine, both at design time and for online admission control, whether a set of real-time control loops/flows (i.e., end-to-end communication between a sensor and an actuator) can meet their deadlines. It thus helps the network manager to plan in advance and adjust workloads in response to network dynamics for real-time process control applications. For example, during channel blacklisting or a route change, the analysis is used to promptly decide the rate of a control loop/s to maintain real-time guarantee. In addition to design time and online admission control, a schedulability analysis is used in scheduling-control codesign \cite{RTAS12_extended}, real-time routing, and priority assignment \cite{ECRTSpaper}. However, existing work on schedulability analysis focuses on worst-case delay analysis \cite{RTAS11paper, RTSS15paper} which runs in exponential time. Hence, these techniques are less suitable for Industry 4.0 architectures, using industrial IoT, which require frequent checking due to channel/link/node failures and changes to plant operating conditions.

In this chapter, we develop a schedulability analysis based on \emph{utilization bound} which is yet an unexplored problem for multi-hop wireless networks. In this approach, we ascertain the maximum possible utilization of all flows in the network and determine the flows as \emph{schedulable} if the total utilization does not exceed the maximum possible utilization in the network. Because of its extremely low runtime overhead, a utilization-bound based schedulability test is considered one of the most efficient and effective schedulability tests. Therefore, it was extensively studied in CPU scheduling \cite{buttazo}. However, no work has been done yet on utilization-based analysis of multi-hop wireless network.   The key challenge  arises from the fact that wireless networks are subject to transmission conflict and dynamics which are not present in CPU scheduling. We address this challenge by bridging between wireless domain and CPU task scheduling. We characterize simultaneous transmission on multiple channels as processors in a multi-processor environment and transmission conflict as task blocking in traditional non-preemptive scheduling. 

We evaluated our schedulability analysis in simulations using TOSSIM \cite{tossim} for the earliest deadline first (EDF) and the deadline-monotonic (DM) scheduling algorithms. Simulations results show that our schedulability analysis is safe and effective in practice. Our analysis hence can be used as an effective schedulability test for admission control of real-time flows.

The rest of the chapter is organized as follows. Section \ref{utilization_sec:bck_related} describes the background on schedulability analysis. Section \ref{utilization_sec:stateoftheart} reviews related work on schedulability analysis for industrial IoT. Section \ref{utilization_sec:sysmodel} describes the system model and Section \ref{utilization_sec:problem} presents the problem formulation. Section \ref{utilization_sec:proposal} presents the schedulability analysis. Section \ref{utilization_sec:disc} extends the proposed schedulability analysis to a general industrial IoT environment. Section \ref{utilization_sec:evaluation} presents the simulation results. Section \ref{utilization_sec:conclude} summarizes the chapter.
\section {Background}
\label{utilization_sec:bck_related} 
In general, \emph{end-to-end delay bound analysis} and \emph{utilization bound analysis} are the two broad approaches for schedulability analysis. A utilization bound analysis specifies the maximum possible utilization of all flows in the network and determines the flows as \emph{schedulable} if the total utilization does not exceed the maximum possible utilization in the network. Because of its extremely low runtime overhead, a utilization-based schedulability test is considered one of the most efficient and effective schedulability tests. The end-to-end delay bound based analysis \cite{RTAS11paper, Chengjie, RTSS15paper} requires a separate schedulability test for each flow, which runs in \emph{pseudo-polynomial} time (i.e., \emph{exponential} in the length of the input). However, utilization bound based analysis can provide a single closed-form expression that can run in polynomial time (usually in \emph{linear} time). It thus greatly simplifies various scheduling-control optimization problems, for which pseudo-polynomial time delays bounds is a major hurdle due to its non-linearity, non-convexity, non-differentiability, long execution time, and a large number of constraints (at least $n$ constraints for $n$ flows) \cite{RTAS12_extended}.  In this research, we want to develop a schedulability analysis based on \emph{utilization bound} which is a yet unexplored problem for multi-hop wireless networks.

\section{Related Work}
\label{utilization_sec:stateoftheart}
Real-time scheduling for wireless networks was explored in many early \cite{Stankovic2003_Realtime}  and recent works \cite{li2005scheduling, Wang2009_FlowbasedRealTimeMultichannel, Kanodia2001_Distributed, Lu2002_RAP,  Karenos2006_real,  sensoractuator, Liu2006_JiTS,  GuRTSS, Pereira2007Broadcast, He2007robust}.  However, these works do not focus on schedulability analysis in the network.  The works in  \cite{RTQS, Tarek2004Capacity, Pereira2007Broadcast, capnet, capnetRTSS} discuss schedulability analysis for wireless sensor networks using end-to-end delay bound and they focus on data collection through a routing tree \cite{RTQS}  and/or do not consider multiple channels \cite{RTQS, Tarek2004Capacity}.  In contrast, we consider an industrial IoT based on multiple channels and our analysis is not limited to data collection towards a sink. Furthermore, our analysis is targeted for real-time flows between sensors and actuators for process control purposes and applies to multi-path routing with minimal changes.

Real-time scheduling for industrial IoT based on WirelessHART has received considerable attention in the recent past \cite{RTSS10paper, RTAS11paper, ECRTSpaper, Soldati2009_WirelessHARTevacuation, RTAS12_extended, Zhang2009_WirelessHARTRapid, RoamingHART}.  The works in \cite{Soldati2009_WirelessHARTevacuation, Zhang2009_WirelessHARTRapid} focus on data collection in a tree topology. The works in \cite{modekurthy2018distributed} address graph routing algorithms for WirelessHART networks and that in \cite{RoamingHART} propose a localization system using WirelessHART.  Priority assignment policies for WirelessHART are studied in \cite{ECRTSpaper} and rate selection algorithms are studied in \cite{RTAS12_extended}. 
The work in \cite{RTSS10paper} considers dynamic priority scheduling and does not address any schedulability analysis. To summarize, none of these works focus on schedulability analysis.  The work in \cite{RTAS11paper} presents the first step in establishing a rigorous delay analysis for industrial IoT. Nevertheless, this work does not consider the critical fault-tolerant mechanisms, for achieving reliable communication, like retransmissions and reliable graph routing.  The work in \cite{RTSS15paper} provides a suite of end-to-end delay analysis techniques for schedulability analysis under fixed priority scheduling in WirelessHART networks. In summary, these papers use the worst-case delay that can be obtained by a flow to determine its schedulability. Consequently, these algorithms are less suitable for admission control as many control optimization algorithms execute during run-time and processor cannot be overloaded with other jobs. 

In this chapter, we propose a utilization-based approach for schedulability analysis which provides the simplicity and efficiency in application. A utilization-based analysis was studied in \cite{tarekutilization}  for a single-hop wireless network. In contrast, we focus on multi-hop industrial IoT in which scheduling and analysis are significantly different and challenging as it has to deal with multiple concurrent transmissions on different channels, interferences, and transmission conflicts. Efficient schedulability analysis is particularly useful for online admission control and adaptation (e.g., when network route, topology, or channel condition change) so that the network manager can quickly reassess the schedulability of the flows.

\section{System Model}
\label{utilization_sec:sysmodel}
Because of the worldwide adoption of WirelessHART for process control in challenging industrial environments, we consider an industrial IoT based on the WirelessHART standard \cite{WirelessHART2007_standard}.  WirelessHART networks operate on the 2.4GHz band and are build on the physical layer of IEEE 802.15.4. They form a multi-hop mesh topology of nodes consisting of multiple field devices, multiple access points, and a gateway.  The \emph{network manager} creates routes and transmission schedules. \emph{Access points}   provide redundant paths between the wireless network and the gateway. The \emph{field devices}  are wirelessly networked sensors and actuators. The sensors periodically deliver sample data to the controller through the access points. The controller generates control commands based on the measured process variables and then sends the control commands to the actuators.  Each node has a \emph{half-duplex} omnidirectional radio transceiver,  and hence cannot both transmit and receive at the same time and can receive from at most one sender at a time. 

Transmissions in a WirelessHART network are scheduled based on a multi-channel TDMA (Time Division Multiple Access) protocol. The network employs global time synchronization protocols to synchronize time at all nodes in the network. Each time slot is of $10ms$ and each transmission needs one time slot. A receiver transmits an acknowledgment to notify the sender about a successful reception of a packet. Note that, both the transmission and acknowledgment happen in one 10ms time slot. The network uses the channels defined in IEEE $802.15.4$. It adopts {\slshape channel hopping} in every time slot to achieve high reliability. An excessively noisy channel is assumed to be {\slshape blacklisted} and not used for communication.  We also assume that the network does not allow spatial reuse of channels, i.e.,  a channel can be used by only one node to transmit packet in a time slot. We assume the network adopts a tree routing, where all nodes in the network form a tree rooted at the gateway. Sensor nodes forward data to the controller (located at the gateway) through the upward links. The controller sends control commands to the actuators through the downward links. Note that, we make this assumption only to provide a tight bound on the schedulability analysis and our method can be extended to graph routing, as will be described in Section \ref{utilization_Sec:GenericRouting},  that is typically adopted in industrial IoT and that provides redundant paths for packet delivery for enhanced reliability.

\section{Problem Formulation}\label{utilization_sec:problem} 
Each control loop, also called a \emph{flow},  involves one or more sensors and one or more actuators. Transmissions between access points, sensors, and actuators are scheduled on $m$ ($m\ge 1$) channels. We assume, there are $n$  control loops denoted as $F_1, F_2, \cdots, F_n$. The worst-case execution time of a flow, period (sampling rate of sensors), and the deadline of   $F_i$  are denoted by $C_i$, $T_i$, and  $D_i$, respectively. Note that, real-time wireless standards like WirelessHART and ISA-100 reserve a fixed $\omega$ number of slots for each link to handle both transmissions and re-transmissions. WirelessHART uses $\omega = 2$ for each link to successfully transmit a packet. Thus, if there are a total of $\ell_i$ links  on flow $F_i$'s route, then $C_i$ can be computed as $C_i=  \ell_i \times  \omega $. Note that, $C_i$ can change due to network dynamics and route change. In these situations, we need to run the schedulability test again to make sure the new routes are schedulable on the network.

The set of periodic flows $F$ is called \emph{schedulable}  if there is a transmission schedule such that no deadline is missed. A schedulability test  $\mathbb{S}$     is \emph{sufficient} if any set of flows deemed to be schedulable by $\mathbb{S}$   is indeed schedulable.  If flow $F_i$ involves a maximum of $C_i$ transmissions, then its utilization  $u_i$ is defined as $\frac{C_i}{T_i}$ and the total utilization of all $n$  flows is defined as $\sum_{i=1}^n \frac{C_i}{T_i}$.  In this chapter, our objective is to determine a sufficient schedulability analysis for EDF and DM schedulers based on utilization bound. Note that, utilization-based schedulability test for other schedulers is out of the scope of this chapter.

\section{Utilization-Based Schedulability Analysis}\label{utilization_sec:proposal}
Here, we first propose an approach for determining a schedulability analysis by bridging the gap between processor and network scheduling. We then discuss transmission conflict delay computation for a flow. Table \ref{utilization_tab:symbols} summarizes the notations used in this section.

\begin{table}[h]
\vspace{0.15in}
\begin{center}
\begin{tabular}{ cc } 
 \hline
 Symbol & Description \\ 
 \hline
 \hline
 $F_i$ & Flow $i$ \\ 
 $T_i$ & Period of $F_i$ \\ 
 $D_i$ & Deadline of $F_i$ \\ 
 $C_i$ & worst-case execution time of $F_i$ \\ 
 $\Delta_i$ & Transmission conflict delay on $F_i$ \\
 $\delta(i,j)$ & Transmission conflict delay caused on $F_i$ \\ & by high priority flow $F_j$ \\ 
 $hp(F_i)i$ & Set of flows that are higher priority than $F_i$ \\ 
$\alpha(i,j)$ & \# of common paths between routes of \\ & $F_i$ and $F_j$ \\ 
$\alpha_1(i,j)$ & \# of common paths with path length \\ & as one between routes of $F_i$ and $F_j$  \\
$\beta(\rho_i,j)$ & \# of common paths between $\rho_i$-th path \\ & of flow $F_i$ and all paths of $F_j$ \\ 
$\beta_1(\rho_i,j)$ & \# of common paths  between $\rho_i$-th path of flow $F_i$ \\ &  and all paths of $F_j$ with path length 1\\
 $\omega$ & \# of transmission slots assigned \\  & for each link in a flow \\ 
 \hline
\end{tabular}
\vspace{0.15in}
\caption{Notations}
 \label{utilization_tab:symbols}
 \vspace{-0.25in}
 \end{center}
\end{table}

\begin{figure}[t]
    \centering
    \begin{subfigure}[b]{0.3\textwidth}
        \includegraphics[width=\textwidth]{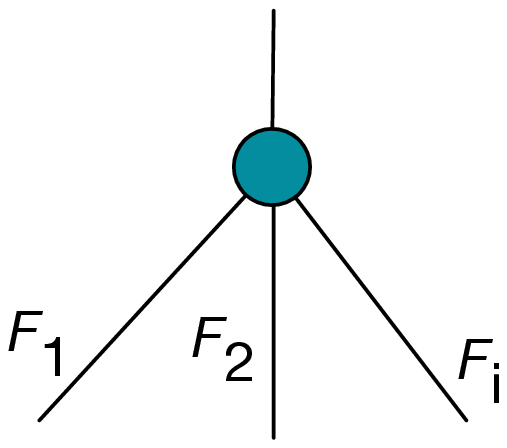}
        \caption{Transmission conflict}
        \label{utilization_fig:conflict}
    \end{subfigure}
    \quad
    \begin{subfigure}[b]{0.3\textwidth}
        \includegraphics[width=\textwidth]{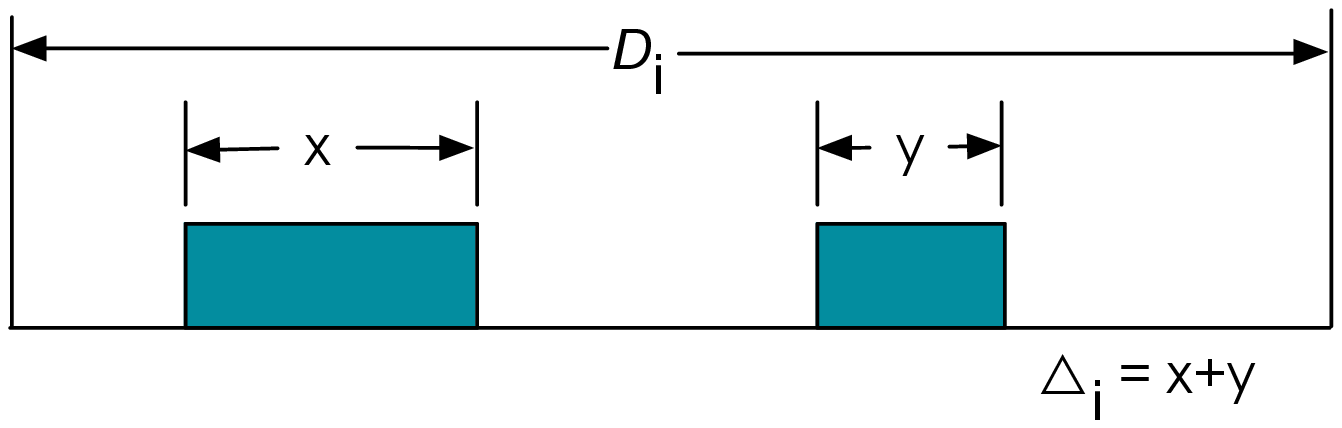}
        \caption{Time loss (cyan color) due to conflict}
        \label{utilization_fig:loss}
    \end{subfigure}
   \caption{Time Loss Due to Transmission Conflict}
   \label{utilization_fig:1ab}
\end{figure}

\subsection{Establishing a Utilization Bound Analysis}\label{utilization_sec:rta}
Channel contention and transmission conflict are the two sources of delays in multi-hop wireless networks. {\bf \emph{Channel contention}} is defined as the delay caused when all available channels in a time slot are assigned to higher priority flows. {\bf \emph{Transmission conflict}} is the delay when two flows share a common node, and the higher priority flow delays the low priority flow at the common node since the common node can transmit/receive a packet for only one flow in one time slot (due to single half-duplex radio). In this section, we establish the utilization-based analysis assuming we know the transmission conflict delay. In the following section, we discuss the transmission conflict delay computation.  

Channel contention delay in industrial IoT can be considered similar to execution delay experienced by a task running on a multi-processor platform when (i) the number of channels in industrial IoT is equal to the number of processors (we use $m$ interchangeably to denote both channels and processors); (ii) length of one time slot (10ms) is equal to the length of one time unit in process scheduling (where a task is non-preemptable); and (iii) period, deadline and worst-case execution time (WCET) of each flow is equal to its analogous task (in the analogous processor task set). This is because a flow (in industrial IoT) and its equivalent task (in processor) are contending for $m$ shared resources. For the same values of period, deadline, and execution time both a flow and its equivalent task generate the same access patterns. Thus, the delay due to channel contention observed by a flow in the network is the same as the delay observed by its equivalent task on a processor. Our technical approach leverages this bridge between multiprocessor scheduling and wireless transmission scheduling.

We first review the results on preemptive and non-preemptive scheduling on multiprocessors. In \emph{preemptive scheduling}, a task upon start can be preempted by any higher priority task any time. In \emph{non-preemptive scheduling}, a task once started can never be preempted by any other task. In non-preemptive scheduling, a higher priority task thus experiences priority inversion where a high priority task is \emph{blocked} by a lower priority task (as it cannot preempt if the lower priority task has already started). EDF is a dynamic priority scheduling policy where, at any time,  the task having the shortest absolute deadline is scheduled first. A set of $n$ real-time tasks with a constrained deadline (i.e. $D_i\le T_i$) is schedulable using preemptive EDF scheduling on $m$ processors \cite{Baruah} if 
\begin{equation}\label{utilization_edf}
\sum_{i=1}^{n} \frac{C_i}{D_i}  \le m - (m-1)  \left ( \max \left\{  \frac{C_i}{D_i} | 1\le i\le n \right\}     \right ).                   
\end{equation}

DM scheduling is a fixed priority scheduling policy where tasks are prioritized based on their relative deadlines. A set of $n$ real-time tasks with a constrained deadline (i.e. $D_i\le T_i$) is schedulable using preemptive DM scheduling on $m$ processors \cite{Baruah} if 
\begin{equation} \label{utilization_dm}
\sum_{i=1}^{n} \frac{C_i}{D_i}  \le  \frac{m}{2} \left (1 -   \max \left\{  \frac{C_i}{D_i} | 1\le i\le n \right\}      \right) + \max \left\{  \frac{C_i}{D_i} | 1\le i\le n \right\} .
\end{equation}
For non-preemptive scheduling the corresponding conditions are derived  by taking into account the maximum blocking time.

To adopt the similar results for industrial IoT, we present our technique as follows. We can use Equation (\ref{utilization_edf}) and Equation (\ref{utilization_dm}) for industrial IoT in the absence of transmission conflict, and when every transmission happens on a separate channel in each time slot (allowing at most $m$ concurrent transmission per time slot). However, transmission conflict poses an additional challenge in the wireless domain. We can model the transmission conflict delays as blocking time in non-preemptive scheduling. Assuming $\Delta_i$ (as computed in Section \ref{utilization_sec:deltaComputation}) denotes the transmission conflict delay caused on flow  $F_i$ by all higher priority flows.  
Then, $D_i -  \Delta_i$ represents the utilization loss due to transmission conflict, i.e., $F_i$ can use at most $D_i -  \Delta_i$ time slots to complete the end-to-end communication. For example, in Fig. \ref{utilization_fig:conflict}, the receiver can receive from at most one transmitter in a time slot, and hence only one flow is assigned a time slot to transmit and the other two flows are blocked during this period. In this example, $F_1$ has the highest priority and hence is assigned time slots $t$ and $t+1$ (let $t$ be the current system time). Similarly, $F_2$ is assigned time slots $t+2$ and $t+3$. During time slots $t$, $t+1$, $t+2$, and $t+3$, flow $F_i$ is blocked and waits for an idle transmission slot regardless of channel availability. Thus, we can consider a loss of four time slots from the relative deadline of $F_i$ (Fig. \ref{utilization_fig:loss}). Namely, from deadline $D_i$, the flow $F_i$ loses at most $\Delta_i$ slots, and hence its effective utilization (as shown in Fig. \ref{utilization_fig:loss}) $\mu_i$ becomes 
$\mu_i =   \frac{C_i}{D_i-  \Delta_i}$ .  Let us define
\begin{equation*}
\begin{split}
&\mu_{\max} = \max\{ \mu_i  | 1\le i\le n   \}; \\ &\qquad  \mu_{\text{sum}}  =  \underset{i=1}{   \overset{n}{\sum}   }   \mu_i.
\end{split}
\end{equation*}

Therefore, from Equations (\ref{utilization_edf}) and (\ref{utilization_dm}), any constrained deadline set of real-time flows  is schedulable on $m$ channels in an industrial IoT that allows at most $m$ concurrent transmissions under EDF scheduling  if 
\begin{equation} \label{utilization_edfwsan}
\begin{split}
\mu_{\text{sum}}  \le m - (m-1)\mu_{\max} \\  0 < \mu_{i} \le 1 \qquad  \forall  1\le i\le n
\end{split}
\end{equation}
and under  DM scheduling  if  
\begin{equation} \label{utilization_dmwsan}
\begin{split}
\mu_{\text{sum}}  \le  \frac{m}{2} \left (1 -\mu_{\max}\right) +  \mu_{\max}\ \\  0 < \mu_{i} \le 1 \qquad \forall 1\le i\le n.
\end{split}
\end{equation}

Note that, Equations (\ref{utilization_edfwsan}) and (\ref{utilization_dmwsan}) consider a preemptive scheduler where preemptions are allowed at the start of a time slot. For non-preemptive schedulers,  $\Delta_i$ also includes the maximum blocking time caused due to priority inversions.

\subsection{Transmission Conflict Delay Computation}
\label{utilization_sec:deltaComputation}
In this section, we first obtain a bound on the number of shared paths between two routes on a tree routing. We then obtain a bound on the number of time slots a packet of flow $F_j$ can delay a packet of flow $F_i$ on a shared path. We then use these two bounds to derive an upper bound on the transmission conflict delay that a flow can experience under the assumption that network routing happens on a bi-directional tree.

\begin{figure}[t]
    \centering
    \begin{subfigure}[b]{0.3\textwidth}
        \includegraphics[width=\textwidth]{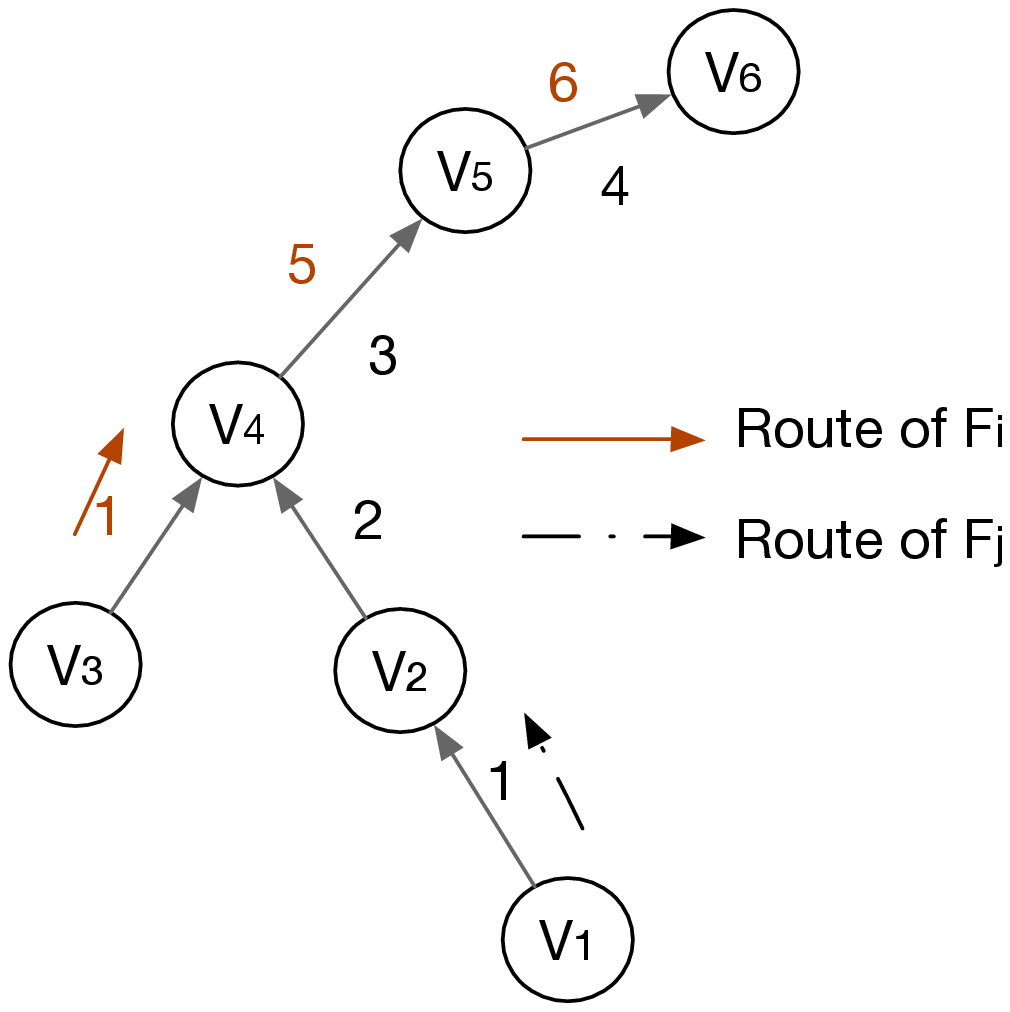}
        \caption{\normalsize $T_i = T_j = 8 ; \delta(i,j) = 3$}
        \label{utilization_fig:delay3}
    \end{subfigure}
    \quad
    \begin{subfigure}[b]{0.3\textwidth}
        \includegraphics[width=\textwidth]{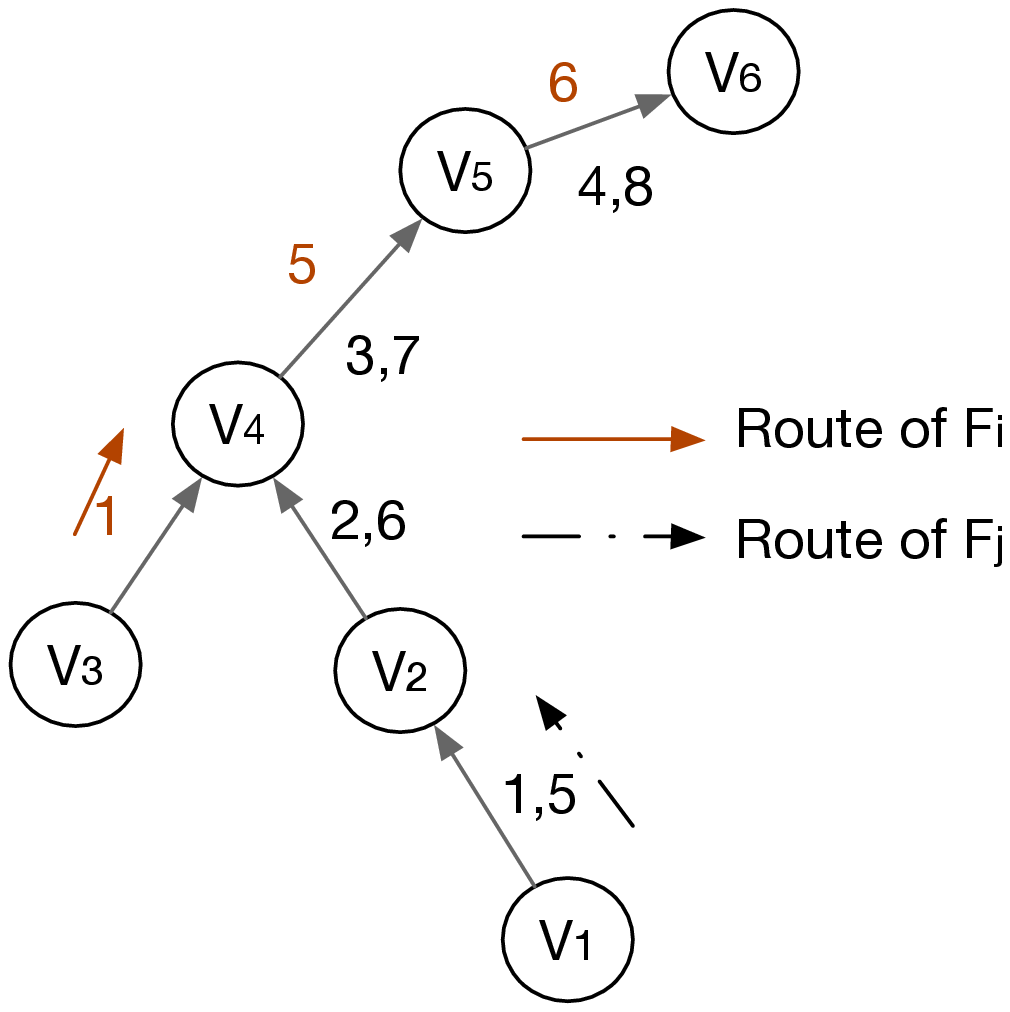}
        \caption{\normalsize $T_i = 8, T_j = 4 ; \delta(i,j) = 3$}
        \label{utilization_fig:delay4}
    \end{subfigure}
   \caption{Example of $F_j$ Delaying $F_i$ on a Tree Route}
   \label{utilization_fig:delay34}
\end{figure}

We define a {\slshape common path} as a set of nodes shared between routes of two flows. A flow $F_i$ experiences a transmission conflict delay by another flow $F_j$ on a common path between the routes of $F_i$ and $F_j$. To estimate the transmission conflict delay, we first estimate the number of common paths between $F_i$ and $F_j$. Specifically, we first bound the number of common paths on the uplink route (which connects a sensor to the controller) and the downlink route (which connects the controller to an actuator). In the uplink route, $F_i$ and $F_j$ transmit messages to the same destination, i.e., the controller, which resides at the root of the tree. In a tree, there can exist only one parent for every node. Therefore, a common path which starts at some node $V_k$ (where $V_k$ can be an intermediate node or the controller), only ends at the controller. Thus, we can conclude that in an uplink route there exists only one common path which starts at some node $V_k$ and ends at the controller. We can use similar reasoning for the downlink route. In a downlink route, there exists only one common path which starts at the controller and ends at an intermediate node $V_l$. Therefore, in a tree route, the number of common paths is limited to $1$ where the common path is said to start at $V_k$ and end at $V_l$. For example, Fig \ref{utilization_fig:delay3} shows that two flows $F_i$ and $F_j$ share a path from $V_4$ to $V_6$.

We now compute the maximum delay caused by one packet of a flow $F_j$ on a packet of flow $F_i$ considering tree routing. At a common node, $V_a$, on the common path, the packet of $F_i$ is delayed at most $3\omega$ times by a packet of $F_j$ \cite{RTSS15paper}. For example, as shown in Fig. \ref{utilization_fig:delay3} flow $F_i$'s transmission from $V_4 \rightarrow V_5$ conflicts with three transmissions of $F_j$, $V_2 \rightarrow V_4, V_4 \rightarrow V_5,$ and $V_5 \rightarrow V_6$.  At time $3\omega + 1$ (time slot after the blocking duration), packets of $F_i$ and $F_j$ have different destination nodes, and they can use different channels to make concurrent transmissions. Note that, if only one channel is available for transmission, then the packet of $F_i$ is blocked due to channel contention and not due to transmission conflict. Thus, we can say that on a common path between two flows, a low priority packet can be delayed due to transmission conflict by a high priority packet on one node. For example, in a common path from $V_a,V_b, V_c, V_d, \cdots V_k$, if at time $\tau$ a packet of $F_j$ delays a packet of $F_i$ by $\delta(i,j)$, then at time $\tau + 3\omega$ the packet of $F_j$ is at node $V_c$ and packet of $F_i$ is at node $V_a$. At time $\tau + 3\omega + 1$, $F_i$ and $F_j$ can make simultaneous transmission thereby keeping a non-decreasing distance between $F_i$ and $F_j$ flows. In summary, we can say that a packet of $F_j$ blocks a packet of $F_i$ by at most $3\omega$ time slots.

We now compute the maximum delay caused by a flow $F_j$ on a flow $F_i$. On a common path, a flow $F_i$ can be delayed by at most $\big\lceil \frac{T_i}{T_j} \big\rceil$ times by a high priority flow $F_j$. Note that, we use a pessimistic scenario in which all packets of flow $F_j$ that spawned in  the interval $[\alpha T_i, (\alpha + 1) T_i]$ interfere $\alpha^{th}$ packet of $F_i$. A more accurate bound can be obtained by using response time analysis which is complicated and takes exponential time to find a solution (which we are trying to avoid in this chapter). Considering a pessimistic value on the number of interfering packets, the total delay caused by flow $F_j$ on a flow $F_i$ is expressed as $$\delta(i,j) = 3 \omega \bigg\lceil \frac{T_i}{T_j} \bigg\rceil.$$ 
For a DM scheduler, a flow $F_i$ can only be delayed by a high priority flow. Therefore, the total delay experienced by flow $F_i$ under a DM scheduler considering all high priority flows (given by $hp(F_i)$) is shown by Equation (\ref{utilization_DMDelay}). \begin{equation}\label{utilization_DMDelay} \Delta_i^{DM} = \sum_{F_j \in hp(F_i)} \delta(i,j) \end{equation} 
In case of an EDF scheduler, the flows have dynamic priority based on their absolute deadlines. Therefore, a flow $F_i$ can be delayed by every other flow $F_j$ where $j \neq i$.  Note that, some flows interfere at most once due to very large periods and some flows interfere multiple times due to very short periods. To incorporate these additional delays, we can extend the total delay computation for DM schedulers to EDF schedulers by considering all flows interfere $F_i$. Under an EDF scheduler, an upper bound of the total delay experienced by a flow $F_i$ is given by Equation (\ref{utilization_EDFDelay}). \begin{equation}\label{utilization_EDFDelay}\Delta_i^{EDF} = \sum_{j \in [1, n] \, \text{and} \,  \neq i} \delta(i,j) \end{equation}

\begin{figure}[t]
    \centering
    \begin{subfigure}[b]{0.3\textwidth}
        \includegraphics[width=\textwidth]{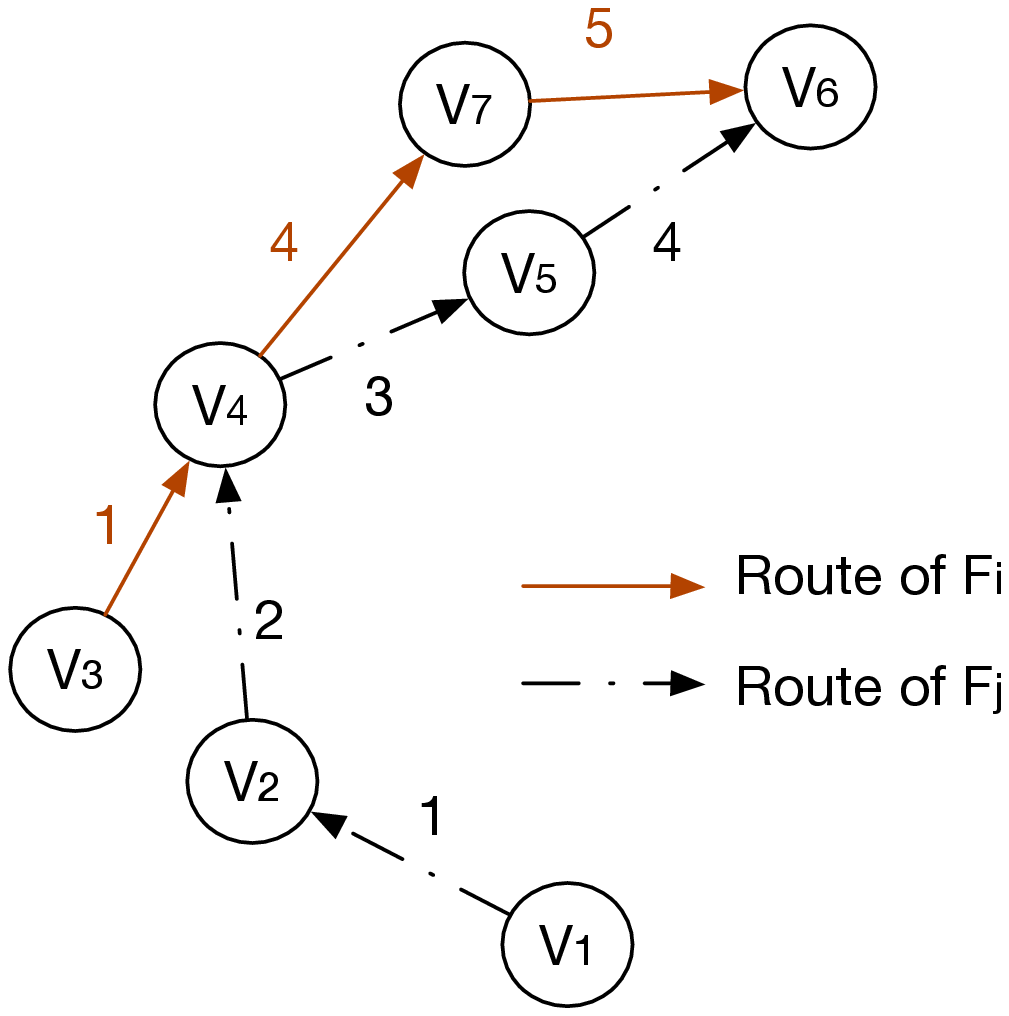}
        \caption{\normalsize $T_i = T_j = 8 ; \delta(i,j) = 2$}
        \label{utilization_fig:delay1}
    \end{subfigure}
    \quad
    \begin{subfigure}[b]{0.3\textwidth}
        \includegraphics[width=\textwidth]{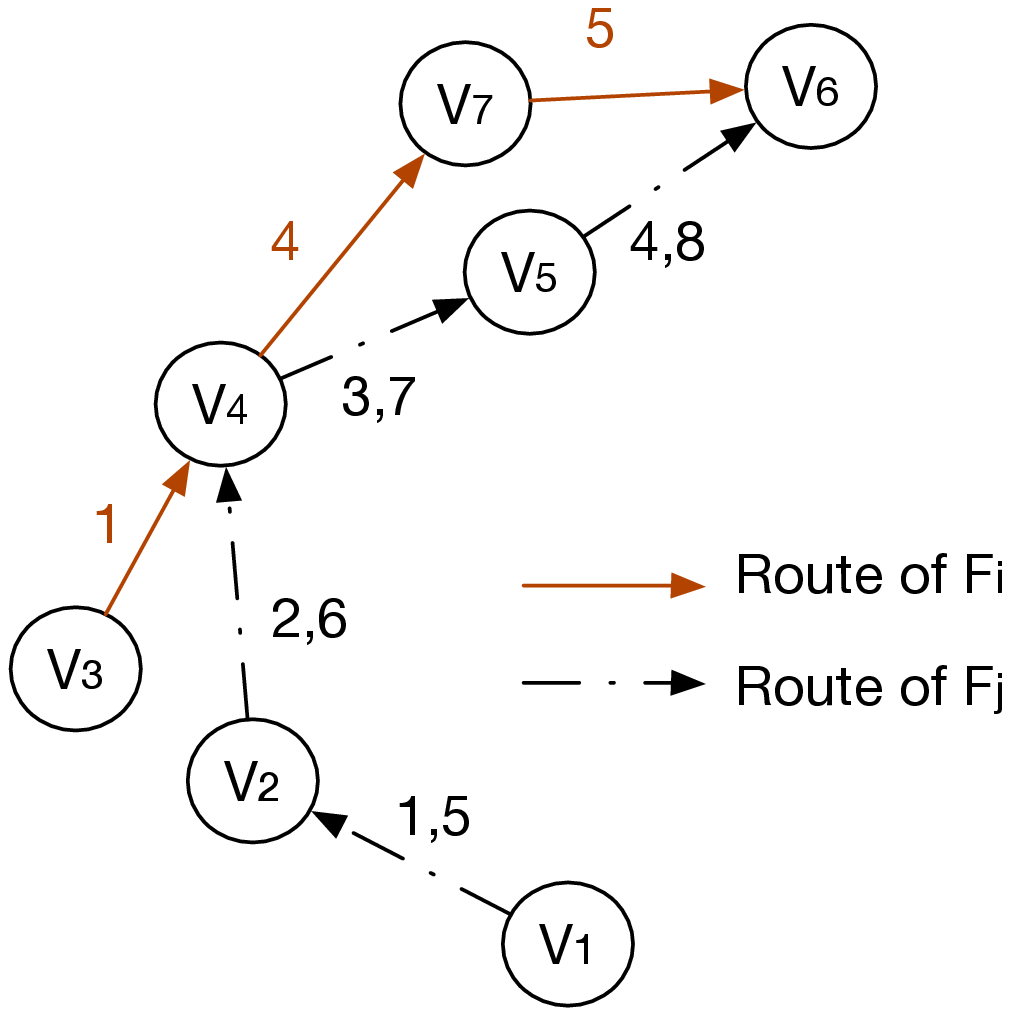}
        \caption{\normalsize $T_i = 8, T_j = 4 ; \delta(i,j) = 2$}
        \label{utilization_fig:delay2}
    \end{subfigure}
   \caption{Example of $F_j$ Delaying $F_i$ on a Graph Route}
   \label{utilization_fig:delay12}
\end{figure}

\section{Extending the Utilization Bound Analysis to Graph Routing and Hierarchical Networking}
\label{utilization_sec:disc}
In this chapter, we consider a network model that supports only tree routing and does not consider the spatial reuse of channels. In this section, we present an approach to handle graph routing algorithms and spatial reuse of channels.

\subsection{Transmission Conflict Delay Computation for Graph Routing Algorithms}
\label{utilization_Sec:GenericRouting}
Routing in industrial IoT is broadly classified into two types: source routing and graph routing. Source routing provides a single route for each flow in the network.  We define, a {\em routing graph} as a directed list of loop-free paths between a source and a destination. Each node in a routing graph must have a minimum of two unique outgoing paths from itself to the destination. Graph routing allows to schedule a packet on multiple links using multiple channels on multiple time slots through multiple paths to deliver a packet to a destination, thereby ensuring high reliability in highly unreliable environments. A routing graph consists of an uplink graph and multiple downlink graphs. An uplink graph connects all sensors to controllers while each downlink graph connects a controller to an actuator. Note that, some recent work on industrial wireless network focuses on route-less approach by relying on a network-wide Glossy flooding [35]. However, such approaches consider very sparse traffic and rely on single channel. Such a model does not have to handle transmission conflict or channel contention. Therefore, our model and approach are significantly different, more challenging, and more general.

In a source or graph routing algorithm, packets can experience different degrees of conflict during communication. For example, a flow can experience a delay at multiple nodes, as shown in Fig. \ref{utilization_fig:delay1}. These different degrees of conflict cause different transmission conflict delay on flows. In the following discussion, we derive an upper bound on the delay that a lower priority flow can experience from the higher priority ones due to conflicts for a network with source routing. Here, we first discuss the transmission conflict delay caused by a packet of $F_j$ on a packet of $F_i$. We then extend the analysis to consider transmission conflict delay caused by all packets of $F_j$ on $F_i$ under a source routing algorithm. We then extend this result to graph routing algorithms.

In a generalized routing, routes of two flows can intersect each other more than once and can generate multiple common paths since there is no limitation on the number of parent nodes. One packet of flow $F_i$ can be delayed at all common path by the same packet of flow $F_j$ since $F_j$ and $F_i$ experience a different delay pattern on the non-common paths. However, on a common path, a packet of $F_j$ can delay a packet of $F_i$ only for $3\omega$ time slots (the same reasoning as discussed in the previous section applies here). Therefore, if routes of two flow $F_i$ and $F_j$ intersect each other $\alpha(i,j)$ times then maximum delay a low priority packet experiences due to a high priority packet is expressed as $3\omega \times \alpha(i,j)$. However, as shown in Fig. \ref{utilization_fig:delay1} routes that intersect on only one node can have a maximum delay of $2\omega$ since the destination node is not the same. Therefore, a packet of flow $F_j$ delays a packet of flow $F_i$ by $3\omega \times \alpha(i,j) - \omega \times \alpha_1(i,j) $  where $\alpha_1(i,j)$ represents the number of intersection in two routes with only one node common to both of them.

We now compute the maximum delay caused by a flow $F_j$ on flow $F_i$. The delay caused by a flow $F_j$ on a flow $F_i$ is given by the Equation (\ref{utilization_delayoneflow}).
 \begin{equation} \label{utilization_delayoneflow} \delta(i,j) = (\alpha(i,j) + \bigg\lceil \frac{T_i}{T_j} \bigg\rceil - 1) 3\omega - \omega \times \alpha_1(i,j)\end{equation}
To prove this, let us assume that there are $\alpha(i,j)$ common paths between the routes of $F_i$ and $F_j$, of these common paths $\alpha_1(i,j)$ common paths consist of only one node, and $\big\lceil \frac{T_i}{T_j} \big\rceil$ packets of $F_j$ interfere one packet of $F_i$. A low priority flow experiences a maximum delay when the first packet of the $F_j$ interferes with the packet of $F_i$ on the first $\alpha(i,j) - 1$ common paths and $\big\lceil \frac{T_i}{T_j} \big\rceil$ packets of $F_j$ interfere $F_i$ on last common path (and the last common path has more than 2 nodes). In this scenario, the delay experienced by $F_i$ at the first $\alpha(i,j) - 1$ nodes is given by $(\alpha(i,j) - 1) 3\omega - \omega \times \alpha_1(i,j)$ and the delay experienced at the last common path is given by $\big\lceil \frac{T_i}{T_j} \big\rceil 3\omega$. Now combining the delays experienced at each node, we can compute the total delay as given in Equation (\ref{utilization_delayoneflow}). Note that, this is one scenario that leads to the maximum transmission conflict delay. However, no other scenario will result in a greater delay. To prove this, let us assume, without loss of generality, the first packet of $F_j$ interferes the packet from $F_i$ on $1$-st to $\eta$-th common path, where $0 < \eta < \alpha(i,j)$, and $\big\lceil \frac{T_i}{T_j} \big\rceil$ packets of $F_j$ interferes at the $\eta$-th common path and last packet interferes from the $\eta$-th common path to the last common path. In this scenario, first packet of $F_j$ cannot interfere after $\eta$-th and second packet of $F_j$ cannot interfere before $\eta$-th. Therefore, applying the same reasoning, a packet of $F_i$ experiences a transmission conflict delay of $\big\lceil \frac{T_i}{T_j} \big\rceil \times 3\omega$ time slot only at the $\eta$-th common path and at all other common paths it experiences a transmission conflict delay of $(\alpha(i,j) - 1) 3\omega - \omega \times \alpha_1(i,j)$ time slots. This delay does not change with the value of $\eta$. Consequently, the delay caused by a flow $F_j$ on a flow $F_i$ is given by the Equation (\ref{utilization_delayoneflow}). For a DM scheduler only high priority flows interfere a low priority flow. Therefore, the total transmission conflict delay for a low priority flow $F_i$ is given as $$\Delta_i^{DM} = \sum_{F_j \in hp(F_i)} \delta(i,j).$$ For an EDF scheduler, all flows interfere all other flows. Therefore, the total transmission conflict delay for a flow $F_i$ is given as $$\Delta_i^{EDF} = \sum_{j \in [1, n] \, \text{and} \, j \neq i} \delta(i,j).$$

We now use the transmission conflict delay result of source routing algorithm and extend it to compute the transmission conflict delay for a graph routing algorithm. In graph routing, we need to consider common paths generated due to multiple paths of the high priority flow. Let us assume the flow $F_i$ has  $\epsilon_i$ number of paths in the graph route, $\rho_i$ denotes one path of the graph route, $\beta(\rho_i,j)$ denotes the number common paths between $\rho_i$-th path of $F_i$ all paths of flow $F_j$ and $\beta_1(\rho_i,j)$ denotes the number of common paths  between $\rho_i$-th path of $F_i$ all paths of flow $F_j$ with only one common node. On a path $\rho_i$ of flow $F_i$, the worst-case transmission conflict delay can be experienced when a packet of $F_j$ interferes on $\beta_1(\rho_i,j)$ common paths between path $\rho_i$ and all paths of $F_j$. Considering that on a path, a packet is at most delayed for $3\omega$ (or $2\omega$ depending on common path length), we can extend transmission conflict delay computation from source routing to transmission conflict delay experienced by a packet on the path $\rho_i$th to be $$\zeta (\rho_i,j) = (\beta(\rho_i,j) + \bigg\lceil \frac{T_i}{T_j} \bigg\rceil - 1) 3\omega  - \beta_1(\rho_i,j) \omega.$$ The total transmission conflict delay caused by $F_j$ on a packet of $F_i$ (considering all paths of $F_i$) is given by  $$\delta(i,j) = \sum_{\rho_i = 1}^{\epsilon_i} \zeta (\rho_i,j).$$ For a DM scheduler, the total transmission conflict delay experienced by a flow $F_i$ is given by $$\Delta_i^{DM} = \sum_{F_j \in hp(F_i)} \delta(i,j).$$ Similarly, for an EDF scheduler, the total transmission conflict delay experienced by a flow $F_i$ is given by $$\Delta_i^{EDF} = \sum_{ j \in [1, n] \, \text{and} \, j \neq i} \delta(i,j).$$

\subsection{Adopting the Utilization Based Analysis through Hierarchical Networking}
Because we derived the above results considering at most $m$ concurrent transmissions in the network, we now propose a hierarchical network-based analysis where this constraint is relaxed for the global network.  Specifically, we consider the network as a collection of subnetworks, where each subnetwork has its own subnetwork manager. Each manager adopts the above result at the subnetwork level. A global network manager coordinates with the subnetwork managers to manage the entire network in a hierarchical fashion. Every subnetwork will involve a unique channel for every transmission in a time slot. Thus if there are $m' (\le m)$ channels used in a subnetwork, then there will be at most $m'$ concurrent transmissions in the subnetwork. Therefore, we can use the results of Eq. \ref{utilization_edfwsan} and Eq. \ref{utilization_dmwsan} in each subnetwork directly.  
 
An important technical challenge in our proposed hierarchical architecture is to deal with the interdependencies among the subnetworks.  For example, if the subnetwork manager of a subnetwork needs to create a local  TDMA schedule (i.e., for the links inside the subnetwork), it may need to wait for its neighboring subnetworks (or some of the neighboring subnetworks) to finish their schedule, this is because of the dependency created by a packet. A packet routed through multiple subnetworks should be scheduled in the earlier subnetworks first. Because feedback flows involve both upwards and downward communication in the WSAN, such dependencies can be cyclic. For example, let us consider $2$  packets $p$  and $q$ such that $p$  needs to be scheduled first in  subnetwork $C_1$ and then in subnetwork $C_2$, and  that   $q$  needs to be scheduled first in  subnetwork $C_2$ and then in subnetwork $C_1$.  In such a scenario, $C_1$  needs to create a schedule after $C_2$  creates, and $C_2$  needs to create a schedule after $C_1$  creates, thereby creating a cyclic dependency.  

Our proposed method to remove these dependencies is to assign sub-deadlines and release offsets for each flow among the subnetworks.  Specifically, for every flow $F_i$ that passes through a subnetwork   $C_j$, we assign a release offset $r_{i,j}$  and a sub-deadline $d_{i,j}$  in the subnetwork. The deadline of flow $F_i$ is equally divided into sub-deadlines $d_{i,j}$ for each subnetwork a flow passes through. The release offset $r_{i,j}$   is equal to the sub-deadline of $F_i$  in the subnetwork where it needs to be scheduled immediately before $C_j$. Thus, subnetwork  $C_j$  needs to schedule $F_i$  within the time window $[r_{i,j},  d_{i,j}]$, thereby requiring no knowledge of the schedule (for $F_i$)  in other subnetworks.

\section{Evaluation}
\label{utilization_sec:evaluation}

\subsection{Simulation Setup}
We evaluated our utilization-based schedulability analysis results through simulations on TOSSIM \cite{tossim}. We used the topology collected from a wireless sensor network testbed \cite{sha2015implementation} of 74 nodes. Fig. \ref{utilization_fig:nodePosition} shows the collected topology where black lines are transmission links and red dotted lines show interference. Each node is equipped with Chipcon CC2420 radio which is compliant with the IEEE 802.15.4 standard. We implemented a multi-channel TDMA medium access control (MAC) protocol with channel hopping and a network layer to support tree routing. Time in the network was divided into 10ms slots, and clocks were synchronized across the entire network using the Flooding Time Synchronization Protocol (FTSP) \cite{FTSP}. For the sake of simplicity, we used Dijkstra's shortest path algorithm to generate a directional tree with root as the base station. We assumed all links to be bi-directional. We used packet reception ratio (PRR) as a metric for generating the routes. PRR values used in the simulation are obtained from real experiments. Links with a PRR higher than 90\% are used to determine the topology of the network. 

\begin{figure}
	\centering
    %\vspace{-0.05in}
    \includegraphics[width=0.75\textwidth]{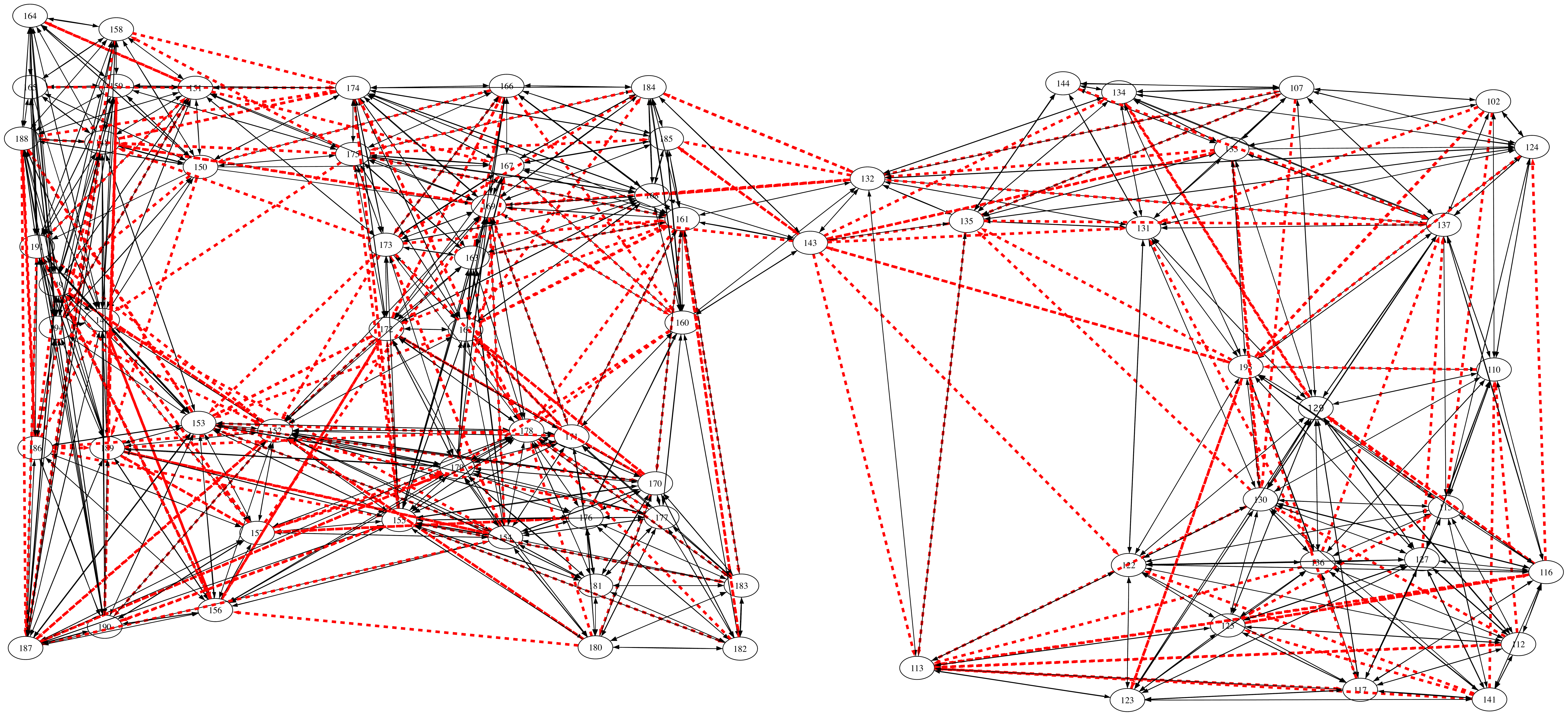}
    %\vspace{-0.1in}
    \caption{Network topology used in simulation}
    
    \label{utilization_fig:nodePosition}
    %\vspace{-0.25in}
\end{figure}

We used 5 IEEE 802.15.4 channels 12, 15, 17, 20, 21 for our simulations and the rest are assumed to be blacklisted. For the sake of simplicity, we assumed that spatial diversity of channels is not allowed, and at most one transmission is allowed on a channel.  We used an optimal channel assignment algorithm proposed in \cite{channelallocation} for channel assignment. We used either DM or EDF scheduling algorithm to allocate transmission time slots to each flow. We assigned 2 transmission time slots for each link on a flow. Second transmission slot is provided for redundancy and to account for transmission failures occurring due to channel noise. 

We evaluated our analysis in terms of {\slshape Schedulability ratio} defined as the fraction of the test cases that are deemed schedulable. We used 100 random test cases to obtain the schedulability ratio. We used the number of flows in the network as a parameter for comparison. We generate flows by randomly selecting sources and destinations, and simulate their schedules. One node with the highest degree of connectivity in the topology was selected as an access point. We assigned a random harmonic period in the range $2^{10\sim15}$ms. The deadlines are equal to periods. Priorities of the flows are assigned based on the DM policy. 

\subsection{Results} 
We analyze the effectiveness of our analysis by simulating the complete schedule of transmissions of all flows released within the hyper-period. In Fig. \ref{utilization_fig:resultDM} and Fig. \ref{utilization_fig:resultEDF}, ``Simulation" indicates the fraction of test cases that have no deadline misses in the simulations, and represents conservative upper bounds of schedulability ratios; ``Analytical" indicates the schedulability ratio based on our utilization-based schedulability analysis.

\begin{figure}
	\centering
    %\vspace{-0.05in}
    \includegraphics[width=0.35\textwidth]{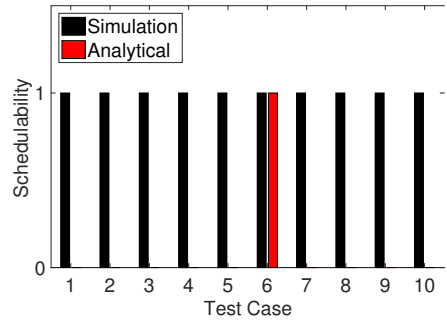}
    %\vspace{-0.1in}
    \caption{Schedulability for 10 test cases}
    
    \label{utilization_fig:resultDMBar}
    %\vspace{-0.25in}
\end{figure}

\begin{figure}
	\centering
    %\vspace{-0.05in}
    \includegraphics[width=0.35\textwidth]{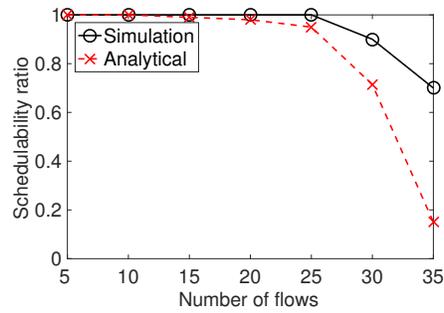}
    %\vspace{-0.1in}
    \caption{Schedulability ratio under DM scheduling}
    
    \label{utilization_fig:resultDM}
    %\vspace{-0.25in}
\end{figure}

Fig. \ref{utilization_fig:resultDM} shows the schedulability ratio for 100 test cases under varying number of flows for deadline monotonic scheduling algorithm. In our results, we observed that the analytical result shows a small decrease in schedulability ratio when compared to the simulation result. For 30 control loops, the simulation result shows that 73 test cases were schedulable and analytical results show that only 14 test cases were schedulable. Fig. \ref{utilization_fig:resultDMBar} shows 10 such test cases. We observed that every test case that was said to be schedulable by the analytical result was, in fact, schedulable in simulation. We observed that the early decrease in schedulability ratio for analytical analysis for deadline monotonic scheduling is due to the loose upper bound for schedulability ratio in a multi-processor environment.

\begin{figure}
	\centering
    %\vspace{-0.05in}
    \includegraphics[width=0.35\textwidth]{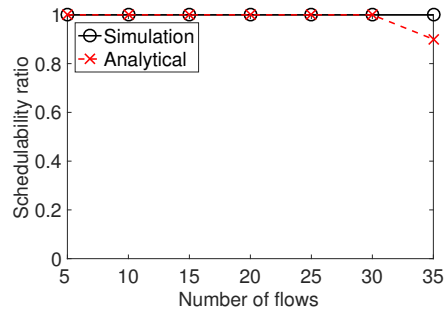}
    %\vspace{-0.1in}
    \caption{Schedulability ratio under EDF scheduling}
    
    \label{utilization_fig:resultEDF}
    %\vspace{-0.25in}
\end{figure}

Fig. \ref{utilization_fig:resultEDF} shows the schedulability ratio for 100 test cases with the earliest deadline first scheduling algorithm. For 30 control loops, we observed that the simulation results show a 100\% schedulability and analytical results show 90\% schedulability. Similar to DM scheduling, all test cases that were deemed schedulable under analytical analysis were schedulable under simulation. We observed that delay computation for tree routing gives a tight upper bound and the gap between the analytical schedulability ratio and that based on simulations stems from the pessimism in the multi-processor utilization bounds. We observed a slow decrease in schedulability ratio for analytical analysis for the EDF scheduler. This phenomenon was due to the tighter bounds for schedulability ratio in a multi-processor environment. We also observed that for some cases, the scheduling algorithm also improved the schedulability ratio.

\section{Summary}\label{utilization_sec:conclude}
We have developed a schedulability analysis based on \emph{utilization bound} for multi-hop wireless networks. This approach determines the maximum total utilization of all flows in the network and determines those as \emph{schedulable} if the total utilization does not exceed the maximum possible utilization in the network. Because of its extremely low runtime overhead, a utilization-based schedulability test is considered one of the most efficient and effective schedulability tests.  In this chapter, we show the utilization bound for a WirelessHART application with tree routing. We have also discussed the computation of a utilization bound for WirelessHART applications with source and graph routing. 

This work is the inception of a new horizon on utilization-based analysis for industrial IoT, which can direct the wireless community in the same way the real-time systems research today evolved from Liu and Layland's utilization bound.  Our result can trigger many research directions in the line of real-time scheduling, scheduling-control co-design, control performance optimization, routing, priority assignment, and mixed-criticality real-time wireless sensor and actuator networks. Our future work involves analyzing the effects of assigning sub-deadlines for large networks, packet loss, and the trade-offs among various control performance metrics.
\chapter{Online Period Selection for Wireless Control Systems}
\label{ch:period}
Recent advancements in Industrial Internet-of-Things and cyber-physical systems through the development of wireless standards like WirelessHART and ISA100 for real-time and reliable communication paved the way for a new industrial trend called {\slshape Industry 4.0}. Industry 4.0 proposes to improve production efficiency by employing smart factories. One approach to improve the production efficiency is to predict the external disturbances and adjust the sampling periods accordingly. In a wireless control system, an online adjustment of the sampling periods can decrease the energy consumption of nodes, especially when the system is within a stable state. Although adjusting the sampling period is beneficial, the stability of the system may be compromised due to external disturbances through an increase in sampling period while decreasing the sampling period can impact the real-time performance and energy of the wireless network. Existing work on online sampling period selection assumes that the controller computes the periods and schedules, and repeatedly broadcasts the new sampling periods and schedules to all the nodes. Such an approach is highly energy consuming and can impact control performance. In contrast, to handle online period selection, we propose an autonomous scheduler based on game theory, where each node in the network generates the schedules locally and without any communication with the others. Our approach can handle any changes in the period and link qualities locally and on the fly. We also propose a heuristic for online sampling period assignment, where each node predicts the state and adjust the period locally. Our evaluation on a case study shows that the proposed approach consumes at least $59\%$ less energy when compared to the state-of-the-art approach.

	%!TEX root = ICII19_DynamicPeriod.tex

\section{Introduction}\label{online_sec:introduction}
Industrial Internet-of-Things (IoT) and industrial Cyber-Physical Systems (CPS) evolved from wireless standards like WirelessHART \cite{WirelessHART2007_standard} and ISA100 \cite{isa100_standard} that facilitate low-power, flexible, and cost-efficient real-time communication for a broad range of applications like process control \cite{songRTAS11}, smart manufacturing \cite{RTSS10paper}, smart grid \cite{gungor2011smart}, and data center power management \cite{capnet, capnetRTSS}. 
These wireless standards facilitate closed loop real-time communication between sensors and actuators, where each sensor measures the state of a plant and delivers it to a controller through a wireless network. The controller generates control commands based on the measured state and then sends the control commands to each actuator through a wireless network.  To facilitate reliable and real-time communications in a shared wireless environment, industrial wireless standards employ a Time Division Multiple Access (TDMA) based Media Access Control (MAC) protocol with a high degree of redundancy.

Industrial IoT and CPS paved the way for a new industrial trend under the specification of Industry 4.0 \cite{industry4}. It is a new concept of Internet technologies for industries to improve production efficiency by employing smart factories \cite{iiot}. One approach to enhance the production efficiency of the plants is to predict the external disturbances and to adjust the sampling period accordingly. Such an online sampling period selection enables the controller to counteract the disturbances promptly, efficiently, and without affecting the stability. Furthermore, an online sampling period selection reduces the energy consumption of the sensors and actuators. 

Adjusting the sampling rate of a plant can have undesirable consequences in a wireless control system in which many control loops share the same wireless communication medium. Particularly, selecting a high sampling rate for one control loop may affect latency, real-time performance, and stability of the others. Selecting a low sampling rate can affect the stability of the plant. Therefore, sampling rates should balance real-time communication and control performance, requiring a control and communication co-design approach.

An online sampling rate assignment has to take into account the stability of each plant and the schedulability of each control loop in the network. Typically, existing work formulates this as a non-linear optimization problem with network schedulability and stability constraints \cite{li2018triggering, henriksson2015multiple, peng2016higher}. The non-linear optimization problem is solved at the controller. The controller then disseminates the sampling period of each plant and transmission schedule of all plants. Such an approach can lead to high energy consumption and disrupt the sensing and actuation messages \cite{distributedhart}. To address this issue, the work in \cite{ma2018efficient} placed restrictions on the increase in the sampling period to avoid re-dissemination of a new schedule. Although intuitive, this approach can consume more energy due to the unavailability of a period meeting the restrictions. Another approach to avoid re-dissemination of a schedule is to use any existing distributed scheduler \cite{distributedhart, song2018distributed, zhang2018fd, duquennoy2015orchestra}. Although these approaches can handle changes in the sampling period, they have the following limitations. First, they assume a simple link quality model, where the link quality at the current slot is independent of that observed in the past time slots. Second, frequent dissemination of the sampling period incurs energy overheads at the nodes.

To address the above limitations, we propose an autonomous period selection algorithm for wireless control systems. Our approach is based on selecting a maximum period to ensure a decrease in the Lyapunov function for stability. To handle online period selection, we also propose an autonomous scheduler for wireless control systems using game theory. The objective of the game is to maximize the number of successful transmissions by avoiding collision and transmissions during poor link quality. Our evaluation on a case study shows that the proposed approach consumes at least $59\%$ less energy compared to the holistic controller proposed in \cite{ma2018efficient}, which is the state-of-the-art approach. 

In summary, we make the following contributions in this chapter.
\begin{itemize}
     %\item \revise{We formulate the online sampling rate selection by taking into account as a non-linear optimization problem by taking into  considers stability and network schedulability constraints.}
    \item We formulate the autonomous TDMA scheduler as a non-cooperative game. The players choose a strategy that yields maximum payoff for all players, and hence, each node converges to a schedule without communicating with the others. The payoff of the game represents the number of successful transmissions, and the objective of the game is to maximize the payoff. We also provide a linear-time sufficient schedulability analysis for the proposed autonomous scheduler.
    \item We propose an online sampling period assignment algorithm, where each node autonomously selects the sampling period of all control loops by predicting the state of the system. Based on the current state, it predicts the sampling period required to ensure a non-increasing Lyapunov function for stability.
\end{itemize}

%% Organization
The rest of the chapter is organized as follows. Section \ref{online_sec:RelatedWorks} reviews related work. Section \ref{online_sec:sysModel}   describes  the system model. %Section \ref{sec:problem} describes the problem formulation. 
Section \ref{online_sec:scheduler} describes the autonomous scheduler. Section \ref{online_sec:periodSelection} describes the dynamic period selection. 
Section \ref{online_sec:evaluation} presents the evaluation and Section \ref{online_sec:Conclusion} summarizes the chapter.
	\section{Related Work} \label{online_sec:RelatedWorks}

Real-time routing, real-time scheduling, and schedulability analysis are investigated in \cite{songRTAS11, modekurthy2018distributed, RTSS15paper, modekurthy2018utilization, palattella20146tisch, dujovne20146tisch, zimmerling2017adaptive, RTSS10paper, jin2017reliability, distributedhart, song2018distributed, zhang2018fd}. They assume a centralized or distributed scheduler, which consumes energy at the nodes. The work in \cite{duquennoy2015orchestra} proposes an autonomous scheduler for sparse traffic scenarios to address this issue. However, it assumes a fixed sampling period. In contrast, our work proposes an autonomous period selection and an autonomous scheduler based on game theory. Although there exist several works that use game theory for wireless networks, they do not consider real-time scheduling \cite{samian2015cooperation}.

Scheduling-control co-design for wired networked control systems was studied in  \cite{hscc2015, kishida2018event, liu2018robust, akashi2018self, zhang2017overview}. In contrast, we consider a wireless network. Wireless control co-design was investigated for single-hop networks in \cite{xu2013distributed, sh2} (also see survey \cite{park2018wireless}), and for multi-hop networks in \cite{multihopcodesign}. They assume each sensor samples the plant at fixed periods. Such static sampling is not a good approach for dynamic systems. The work in \cite{xenofon} proposes a self-triggered control with network schedulability constraints based on a virtual link capacity margin. It uses a simplistic network model where it does not avoid collisions in the network. In contrast, our autonomous scheduler avoids collisions.

The work in \cite{li2018triggering, henriksson2015multiple} proposes stability under a self-triggered control with a centralized scheduler. A centralized schedule has to broadcast the entire schedule after every change in period. Thus, a centralized scheduler usually causes 1) high energy overheads and 2) frequent disruptions to the control operation.   To address this limitation, the work in \cite{ma2018efficient} proposes a self-triggered control where the periods only increase in integer multiples of the initial sampling period. In cases where the actual increase in the period needed for stability is small, the approach in \cite{ma2018efficient} does not change the period, thereby making no improvement in energy consumption. It does not decrease the sampling period when the network is underutilized.

All the approaches discussed above use a centralized period selection algorithm, which still requires frequent broadcast of sampling periods. In contrast, we propose an autonomous period selection algorithm where each node selects a period locally and without communicating with others. Furthermore, we propose an autonomous scheduler where nodes can adapt to the sampling period changes without any energy overheads.
	\section{System Model}  \label{online_sec:sysModel}
\subsection{Network Model}
In this chapter, we consider an industrial wireless sensor-actuator network (WSAN), which consists of field devices, an access point, and a gateway. The {\slshape field devices} are wirelessly networked sensors and actuators. Each node contains a {\slshape half-duplex} omnidirectional radio transceiver that can receive from at most one sender at a time.  Furthermore, a node cannot both transmit and receive at the same time. {\slshape Access point} provides a path between the wireless network and the gateway. The {\slshape network manager} and the controller remain at the {\slshape gateway}. The network employs feedback {\slshape control loops} between sensors and actuators. Sensors measure process variables and deliver them to the controller through the network. The controller sends control commands to the actuators,   which then operate the control components to adjust physical processes. In this chapter, we assume that the role of the network manager is to create network topology, observe the network operation, and determine offline admission control for new control loops/flows. Note that the network manager does not generate a schedule. 

We consider a WSAN where all sensors/actuators directly communicate with the access point. That is, sensors, actuators, and the access point form a star topology. Many existing WirelessHART and ISA100 deployments use such single-hop topology for technical simplicity. Furthermore, recent low-power wide-area network technologies like LoRa  \cite{petajajarvi2015coverage}, SigFox \cite{centenaro2016long} and SNOW \cite{ton_snow, snow_implementation, saifullah2017snow2} all adopt a single-hop network topology.
Transmissions in the network are scheduled based on a single channel TDMA protocol. Time in the network is globally synchronized and slotted. A transmission and its acknowledgment happen in one slot. Transmissions are scheduled based on the link quality. We use the signal to noise ratio (SNR) to determine the link quality.

\subsection{Control System Model}
\label{online_sec:controlmodel}
We consider there are $n$ real-time control loops in the system denoted by $F = \{F_1, F_2, . . . , F_n\}$. For the rest of the chapter, we use the terms flow and control loop interchangeably.  Each flow corresponds to a linear time-invariant (LTI) control system, which is expressed in its discrete state space expression, as shown in Equation (\ref{online_eq:statespace}).
\begin{equation} \label{online_eq:statespace} \begin{split} x_i(t+1) =& A_i x_i(t) + B_i u_i(t) + v_i(t) \\ y_i(t) =& C_i x_i(t) \end{split} \end{equation}
In Equation (\ref{online_eq:statespace}), $x_i(t)$ represents the state of control system $i$ at $t$-th  time slot , $u_i(t)$ represents the control input, $y_i(t)$ represents the observed output, and $v_i(t)$ represents the discrete time white Gaussian noise. We assume the pair ($A_i$, $B_i$) is controllable, and the pair ($A_i$, $C_i$) is observable. 

The period and deadline of the flow $F_i$ are represented by $T_i$ and $D_i$, respectively. We consider an implicit deadline system, where the deadlines are equal to the periods, i.e., $D_i = T_i$. Since we consider the sampling period changes dynamically with the system, we also consider that the deadline of the packet also changes dynamically. Nevertheless, the relative deadline of the packet is the same as the current sampling period of the plant. For the rest of the chapter, the periods and deadlines are specified in number of time slots.

For each plant, we use a model predictive controller with a cost function given by Equation (\ref{online_eq:cost}), and a final state $x_f \equiv 0$. 
\begin{equation} \begin{aligned} \label{online_eq:cost} J_i(t) = &\frac{1}{2} x_i(t + N)^T P_i^f x_i (t + N) + \\& \sum_{z= t}^{t + N} \frac{1}{2} \big( x_i^T(z) P_i x_i(z) +  u_i^T(z) Q_i u_i(z) \big) \end{aligned} \end{equation}
The cost function is the sum of the weighted average of the state and control input for every time slot in the duration of horizon ($N$), where $P_i$ and $Q_i$ represent the weight for the state and control input, respectively. In Equation (\ref{online_eq:cost}), $P_i^f$ is a positive definite matrix that satisfies the Lyapunov equation $A_i^t P_i^f A_i + Q_i^f = 0$, where $Q_i^f$ is also a positive definite matrix. The cost function used in Equation (\ref{online_eq:cost}) ensures closed loop asymptotic stability of the control loop $i$ \cite{mayne2000constrained}. Note that the value of $u_i(t)$ changes only after $T_i$ time slots and is constant in between.

We assume each control loop $i$ has a \textit{weight} $w_i$. The weight of control determines its importance in ensuring stability. The total cost (inclusive of all control loops) is given by Equation (\ref{online_eq:totalCost}).
\begin{equation}
\label{online_eq:totalCost}
J(t) = \sum_{i \in [i,n]} w_i J_i(t)
\end{equation}

Given the state space equations and control cost for all plants, we propose an online sampling period assignment that changes dynamically with the state of the system. To handle online sampling period assignment, we first propose an autonomous scheduler that does not incur energy overhead.

	\section{Link Quality Based Autonomous Scheduler} \label{online_sec:scheduler}
Existing centralized and distributed schedulers for industrial standards like WirelessHART and ISA100 use real-time scheduling policies like EDF (earliest deadline first) and DM (deadline monotonic). However, these schedulers typically assume a simple link quality model where the packet reception rate (PRR) is constant at each time slot. Such a model is not applicable in practice since link quality varies with time, and link quality at a current time slot depends on that of the previous time slots. In this chapter, we consider a general link quality model where link quality varies with time. We use a small neural network to predict the link quality.

For the general link quality model, we develop a new autonomous scheduler where each node determines when to transmit a packet based on the deadline of the packet and link quality. Designing such an autonomous scheduler where all packets meet their deadlines is challenging since all nodes in the network have to compute the best time to transmit all packets, and agree on the schedule without communication. Another important challenge is that the best transmission time slot depends on the dynamically changing link quality.

We model the scheduling problem as a non-cooperative game where each node in the network chooses the Nash equilibrium strategy autonomously and without any coordination. The objective of each node playing the game is to maximize the number of successful transmissions by avoiding deadline misses, collisions, and transmissions on poor links. This formulation is flexible with changing periods while ensuring all nodes converge to the same schedule without communicating with other nodes in the network. Typically, a game can have multiple Nash equilibrium strategies. In the proposed approach, we can break the ties by choosing the strategy which schedules packets with the shortest absolute deadline first. Thus, the game-theoretic autonomous scheduler converges to a solution without any communication. 

The proposed game-theoretic scheduler requires link quality information of all links and period/deadlines of all flows to determine the Nash equilibrium strategy (or transmission schedule). In this section, we assume that all nodes are aware of the sampling rate for all flows. We discuss the autonomous sampling rate selection in Section~\ref{online_sec:periodSelection}. In this section, we first discuss the link quality prediction. We then discuss our game formulation and the existence of Nash equilibrium strategies for a single-slot-game. We then discuss how the nodes compute Nash equilibrium strategies in a multi-slot game. Finally, we present the schedulability analysis for the proposed network scheduler. 

\subsection{Link Quality Prediction}
We use a data-driven prediction using neural networks (ANN), similar to the approach in~\cite{liu2014data}. A node uses ANN to predict the link quality of $\eta$ subsequent time slots between itself and a receiver. In this section, we consider $\eta$ is a design parameter. The choice of an ANN approach arises from the fact that the ANN approach can estimate link quality with greater accuracy~\cite{liu2014data}. The experimental evaluation performed in~\cite{liu2014data} using training information from $20000$ packets show that the average prediction error for a neural network approach is $0.18$. Note that the work in~\cite{liu2014data} only compares the performance of different approaches but does not develop an autonomous scheduler or autonomous period selection. 

The neural network takes the signal strength at the beginning of the time slot as the input. It generates the SNR values of packet transmission and of its acknowledgment for $\eta$ subsequent time slots as output. A node considers the link quality at a time slot to be good if the SNR for packet transmission and its acknowledgment are above a pre-determined threshold, and bad otherwise. Note that, the neural network generates the link qualities for $\eta$ time slots of one link. In a single-hop network, most of the links are positively or negatively correlated to one another~\cite{srinivasan2010k}. The nodes use $\kappa$-factor~\cite{srinivasan2010k}, which is cross-correlation index that depends on channel and power levels, to estimate the link quality of $\eta$ time slots for all links in the network. Thus, the proposed approach is not computation-intensive and can execute on a low power embedded device.

We model the ANN as a three-layer neural network with one input layer, one hidden layer, and one output layer.  The input layer consists of 1 neuron. The output layer consists of $2 \eta$ neurons (one to estimate the SNR of packet reception and another to estimate the SNR of the acknowledgment). The hidden layer consists of $10\eta$ neurons. Note that the proposed neural network setup is similar to that used in~\cite{liu2014data}. We only increase the number of nodes in the hidden layer and the output layer to predict link quality information for $\eta$ time slots. 

To train the ANN, we use 1) signal strength observed on a channel before making a transmission, 2) SNR obtained at the receiver for each packet reception, and 3) SNR of the acknowledgment packet. Signal strength value at the beginning of the time slot is used as the input to the neural network.  SNR values obtained at the receiver and SNR values of packet acknowledgment are used as reference outputs during training. The network manager computes and then broadcasts the weights of the neural network (after the training phase) and the correlation between links to all nodes. 

During the data collection phase, a node collects the signal strength information for a short duration at the beginning of a time slot. It then makes $\eta$ consecutive transmissions to the receiver. Each transmission is followed by an acknowledgment by the receiver. Nodes in the network use a round-robin approach to avoid interference from collisions. Since the autonomous scheduler is designed to avoid collisions, the SNR values from simultaneous transmissions are not required during data collection. The nodes use packet transmissions or piggybacked acknowledgments to send signal strength and the SNR information to the base station. 

In this link quality prediction model, computation-intensive operations take place on the controller, and hence, nodes in the network do not incur significant computation overheads. Furthermore, the controller can train the ANN during design time, thereby not interfering with the control operation.

\subsection{Game Formulation for Autonomous Scheduler}
We formulate the game similar to the prisoner's dilemma where we assume all nodes to be non-cooperative. That is, nodes do not communicate with each other to generate a schedule. They use local information to come up with a strategy to maximize their payoff.  In our game, we consider each node with a packet as a player of the game. 

In a time slot, each node has two strategies: 1) \textit{to transmit} and 2) \textit{not to transmit}. The payoff obtained at each time slot is defined in Equation~(\ref{online_eq:payoff}). The objective of the game is to maximize the payoff, which corresponds to the number of successful transmissions. The payoff function rewards a successful transmission and penalizes the wastage of energy due to transmission failures in each time slot. It also penalizes a deadline miss for each packet. If a node chooses \textit{not to transmit} a packet, then the payoff is $0$. If it chooses \textit{to transmit} a packet and the packet transmission is successful, the payoff is $1$. If it chooses \textit{to transmit} a packet and the packet transmission is not successful, the payoff is $-1$. A packet transmission may fail due to two reasons: 1) collision from simultaneous transmissions or 2) transmission on a poor link (low SNR).
\begin{singlespace}
{
\begin{equation}
	\small
	\label{online_eq:payoff}
    \textit{payoff} = 
\begin{cases}
    1,& \text{ if a packet transmission is successful} \\
    0,& \text{if node decides not to transmit a packet}\\
    -1, & \text{if a packet transmission is not successful} \\
    -2, & \text{if a packet misses its deadline}         
\end{cases}
\end{equation}
}
\end{singlespace}
Given the link quality information and sampling period for all control loops, we present the Nash equilibrium strategy for a single-slot-game. We then extend the analysis to a multi-slot-game and show how a node computes its strategy.

\begin{figure}[t]
    \centering
    \begin{subfigure}[b]{0.3\textwidth}
        \includegraphics[width=\textwidth]{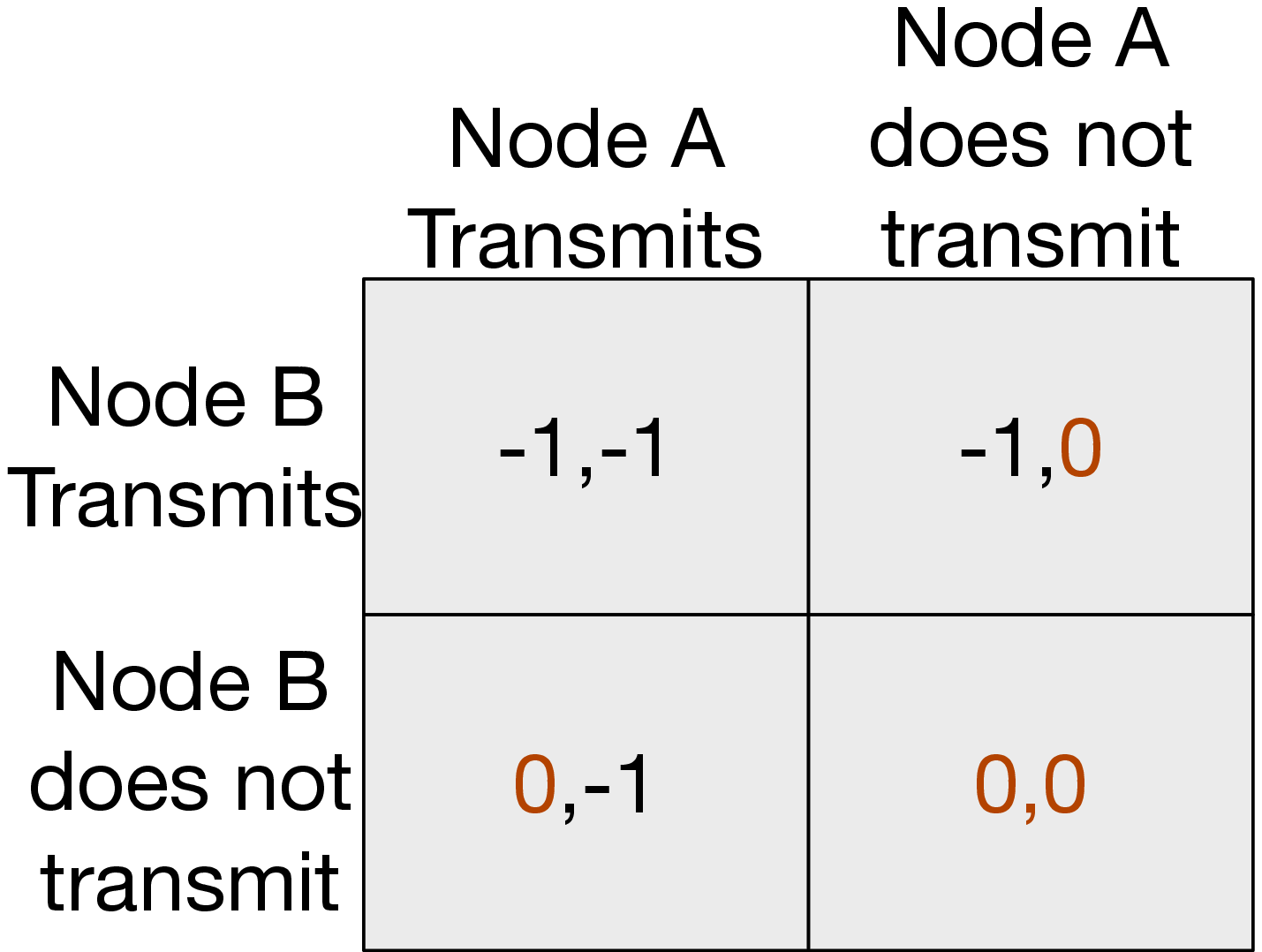}
        \caption{Link qualities for both nodes are bad}
        \label{online_fig:NoNodeHasPacket}
    \end{subfigure}
    \quad
    \begin{subfigure}[b]{0.3\textwidth}
        \includegraphics[width=\textwidth]{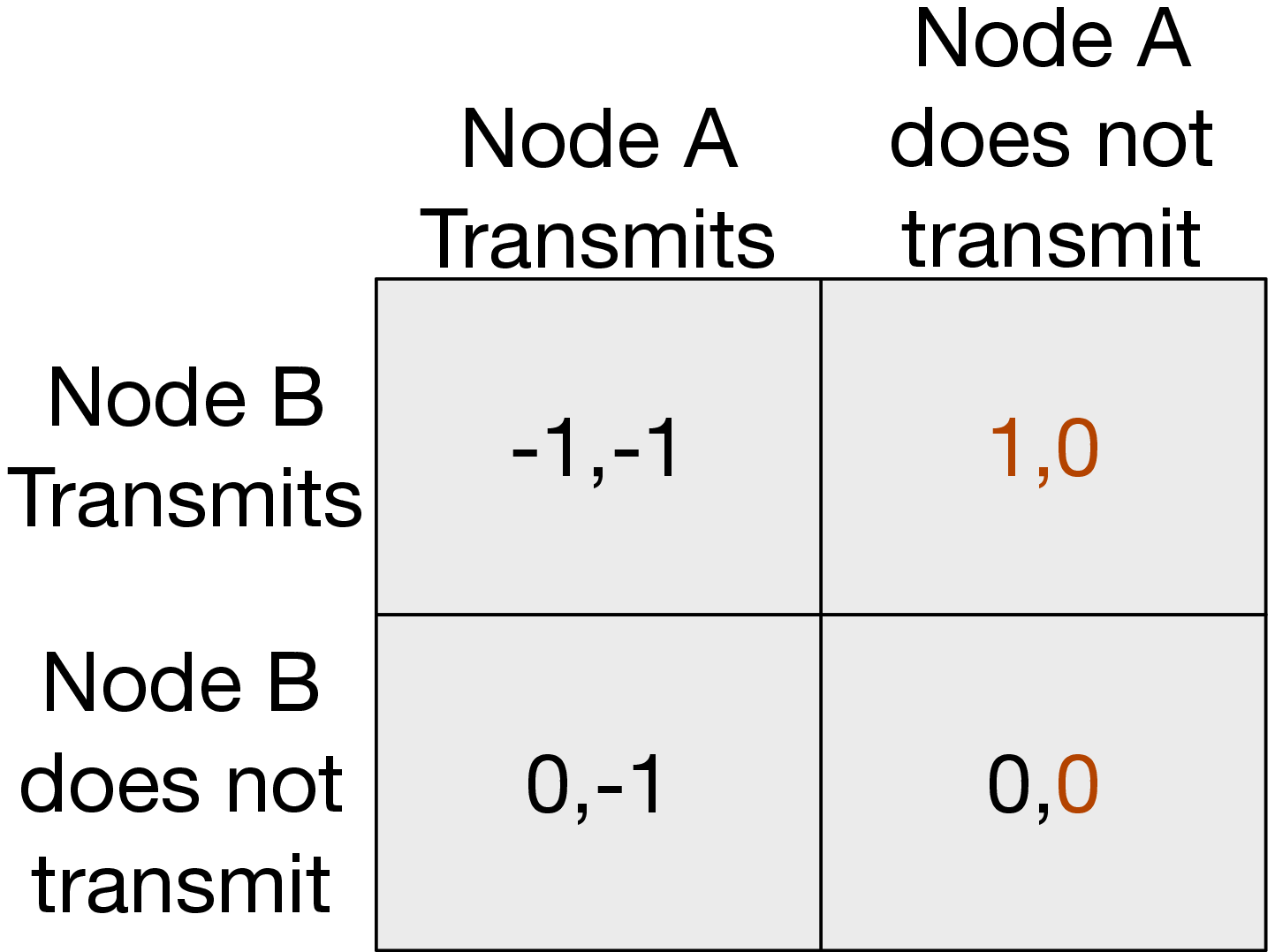}
        \caption{Link quality for node A is bad}
        \label{online_fig:OneNodeHasPacket}
    \end{subfigure}
     \quad
    \begin{subfigure}[b]{0.3\textwidth}
        \includegraphics[width=\textwidth]{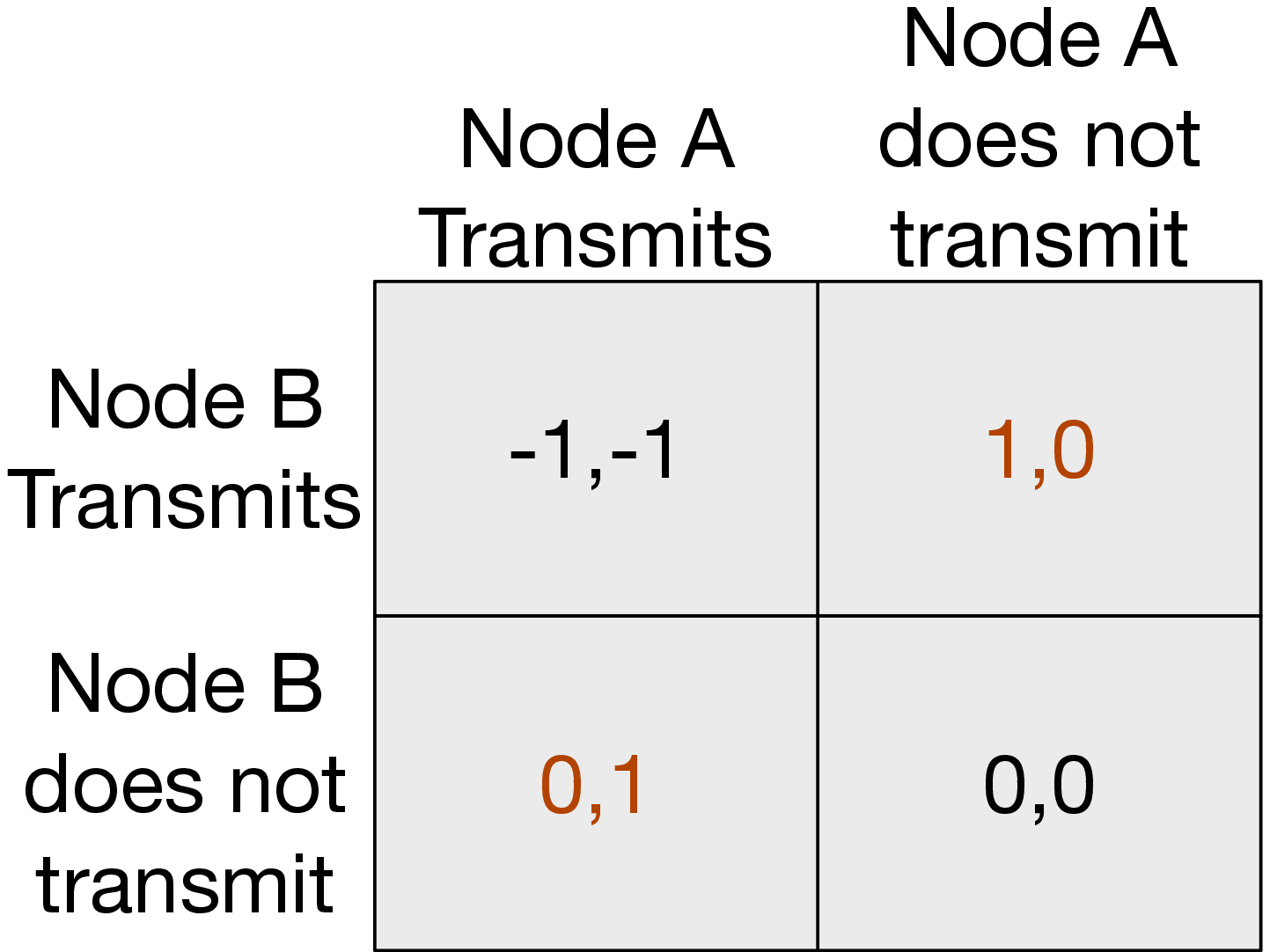}
        \caption{Link qualities for both nodes are good}
        \label{online_fig:BothNodesHavePacket}
    \end{subfigure}
   \caption{Payoff Values for Each Strategy in a Game with Two Players.}
   \label{online_fig:payoff_packet}
\end{figure}

\subsection{Single Time Slot Game}
To better explain the payoff function and Nash Equilibrium for a single time slot game, we provide a simple example with two nodes: node A and node B. A similar approach can be used to compute Nash equilibrium when the number of nodes is greater than 2. Based on the link quality, we can consider three different scenarios. 1) Link quality for node A (the bi-directional link between the access point and node A) and link quality for node B (the bi-directional link between the access point and node B) are bad. 2) Link quality for node A is bad, but link quality for node B is good. 3) Link quality for node A and link quality for node B are good. 

For the first scenario, 
Fig.~\ref{online_fig:NoNodeHasPacket} shows the payoff values obtained for each strategy played by node A and B. If node A chooses \text{to transmit}, the destination does not receive the packet due to poor link quality, and hence, it receives a payoff of $-1$. However, if node A chooses \text{not to transmit}, it receives a payoff of $0$. Thus, the best strategy for node A is \textit{not to transmit} a packet. Similarly, the best strategy for node B is \textit{not to transmit} a packet. Thus, the Nash Equilibrium strategy in this case for both nodes is \textit{not to transmit} a packet. 

For the second scenario, when link quality for A is bad,  Fig.~\ref{online_fig:OneNodeHasPacket} shows the payoff values obtained for each strategy. As shown in Fig.~\ref{online_fig:OneNodeHasPacket}, the best strategy for node A is \textit{not to transmit} a packet, and hence obtain a payoff of $0$. Since node A always chooses \textit{not to transmit} a packet, the best strategy for node B is \text{to transmit} a packet and receive a payoff of $1$. The Nash equilibrium strategy, in this case, is for node B \textit{to transmit} a packet and for node A is \textit{not to transmit} a packet. 

For the third scenario, when link quality for node A and B is good, Fig.~\ref{online_fig:BothNodesHavePacket} shows the payoff values obtained for each strategy.  As shown in Fig.~\ref{online_fig:BothNodesHavePacket}, when both nodes choose \textit{to transmit}, it results in a collision, and hence, node A and B receive a payoff of $-1$. Similarly, when both nodes choose \textit{not to transmit}, it results in a wastage of time slot for both nodes, and hence, they receive a payoff of $0$. When node B chooses \textit{to transmit} and node A chooses \textit{not to transmit}, node B can make a successful transmission. Here, node B gets a payoff of $1$, and node A gets a payoff of $0$ (since its time slot is wasted). Thus, the best strategy of node B is \textit{to transmit} when node A selects \textit{not to transmit}, and the best strategy of node B is \textit{not to transmit} when node A selects \textit{to transmit}.  Thus, there are two Nash equilibria for this scenario. A tie can be broken here based on the shortest absolute deadline. 
The node with the smallest absolute deadline chooses \textit{to transmit} a packet while the other chooses \textit{not to transmit} a packet.

To explain the effect of deadlines and link qualities on the Nash equilibrium strategies, we consider a fourth scenario. In this scenario, we consider node B's link quality is good, and its packet deadline is 2. We consider node A's link quality is bad, and its deadline is 1. Node A has to transmit in the first slot to meet the deadline. In this situation, if node A chooses to transmit a packet, there is a chance the packet transmission is not successful, and it receives a payoff of $-3$ ($-2$ for a deadline miss and $-1$ for a unsucessful transmission). If the packet transmission is successful, then node A gets a payoff of $1$. However, if node A decides not to transmit a packet, it gets a payoff of $-2$. In this case, node A chooses \textit{to transmit} and node B chooses \textit{not to transmit}. Since a node receives a payoff of $-2$ for every deadline miss, it tries to make at least one transmission for every packet. 

In a two player game where both nodes have the same deadlines, there are scenarios where there are two Nash equilibrium strategies. In these scenarios, a mixed strategy Nash equilibrium analysis shows that the probability of switching between the 2 Nash equilibrium strategies is $0.5$. We can extend this result to a $n$ player game where the probability of switching between the $n$ Nash equilibrium strategies is $\frac{1}{n}$. An equal probability for all nodes shows that the choice of the payoff values results in a fair usage of network resources between all the players.

\subsection{Multiple Time Slot Game}
In our autonomous scheduler, each node can predict the link quality of $\eta$ time slots. Therefore, each node plays the game for $\eta$ time slots. The total payoff is computed as the sum of payoffs received in each time slot. For the multi-slot-game, we assume that each node plays a sub-game perfect equilibrium strategy (which is a Nash equilibrium strategy in every time slot). In case of multiple sub-game perfect equilibrium strategy, a tie can be broken here based on the shortest absolute deadline. In a multi-slot-game, we assume that each flow requires a total of $\gamma$ (where $\gamma \ge 2$) time slots for both uplink and downlink communications. Of the available $\gamma$ slots, two are used for transmissions and the rest are used for re-transmissions in case of failures.

In our game, the subgame perfect equilibrium for a $\eta$ time slot game depends on the link quality conditions and deadlines for each packet. Due to a large number of scenarios (considering different link qualities and deadlines), it can be difficult to theoretically analyze all possible scenarios and come up with a single algorithm that can always generate a Nash equilibrium strategy. Therefore, we propose to use the insights from the single-slot-game to reduce the number of strategies explored.

\subsection{Schedulability Analysis}
This section presents the schedulability analysis for the proposed autonomous scheduler. The schedulability analysis determines if the selection of a set of periods will result in all packets meeting their deadlines or not. 

\begin{thm}
Under the proposed autonomous scheduler, the set of flows $F = \{F_i, F_2 \cdots F_n\}$ with periods (equal to deadlines) $T_1$, $T_2$, $\dots$ $T_n$ can meet their deadlines given $p_iT_i$ represents the maximum number of poor link qualities experienced by a flow $F_i$  when Equation~(\ref{online_eq:test}) is satisfied, where $\mu_i = \frac{\gamma}{T_i (1 - p_i)}$. \begin{equation}\label{online_eq:test} \sum_{i = i}^{n} \mu_i \le 1\end{equation}
\begin{proof}
We first show the schedulability analysis when the link quality is good for all links and all slots. We then model the time slots with poor link qualities as blocking time in non-preemptive uni-processor scheduling. We use the non-preemptive schedulability test for the proposed autonomous scheduler.

When the link quality is good for all time slots, each node selects a unique transmission time slot for each packet of a flow based on the shortest absolute deadline (since all link qualities are good). This schedule is the same as the EDF scheduler. Since preemptive uni-processor schedulers consider that a task preempts at the boundaries of the time unit and not in between, the autonomous scheduler (when the link quality is good) maps to a preemptive scheduler. The schedulability test for an implicit deadline task set under preemptive EDF scheduler is $\sum_{i = 1}^{n} \mu_i^G \le 1$, where $\mu_i^G = \frac{c_i}{T_i} = \frac{\gamma}{T_i}$. 

In the proposed autonomous scheduler, a transmission happens only on a good link quality to avoid wastage of energy. When the link quality for a flow with a shorter absolute deadline is bad, then its transmission is blocked until the link quality is good. Meanwhile, a flow with larger absolute deadline with a good link quality can make a transmission.  A similar blocking can be observed in a non-preemptive scheduler where a task with a shorter absolute deadline is blocked by the execution of a larger absolute deadline. Thus, we can map the proposed autonomous scheduler to a non-preemptive EDF scheduler. The schedulability test for non-preemptive uni-processor scheduler is given by $\sum_{i=1}^{n} \mu_i \le 1$, where $\delta$ is the maximum blocking time, and $$\mu_i = \frac{C_i}{T_i - \delta} = \frac{\gamma}{T_i - p_iT_i} = \frac{\gamma}{T_i(1 - p_i)}$$
Thus, the schedulability test for the proposed autonomous scheduler under the assumption that $p_iT_i$ represents the maximum number of poor link qualities experienced by a flow $F_i$ is given by Equation~(\ref{online_eq:test}).
\end{proof}
\end{thm}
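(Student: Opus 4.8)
The plan is to establish the schedulability test by reducing the autonomous scheduler to a well-understood uniprocessor scheduling model in two stages, exploiting the game-theoretic structure already developed. First I would analyze the idealized case in which every link is good in every time slot. Under this assumption, the Nash equilibrium strategy with the shortest-absolute-deadline tie-breaking rule forces each node to transmit its packet in a unique slot precisely when it holds the earliest deadline, since any simultaneous transmission yields a collision (payoff $-1$) and abstaining yields $0$. This is exactly the access pattern produced by EDF: among all contending packets, the one with the smallest absolute deadline is served first. Because preemptions in the TDMA model may only occur at slot boundaries (a packet occupying a slot is never interrupted mid-slot, but the scheduler re-decides at each boundary), the correspondence is with \emph{preemptive} EDF on a single processor. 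Invoking the classical Liu--Layland-style implicit-deadline EDF bound, the idealized system is schedulable whenever $\sum_{i=1}^{n} \mu_i^G \le 1$ with $\mu_i^G = \gamma / T_i$, where $\gamma$ is the per-flow slot requirement.

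The second and central step is to account for poor link qualities. The key idea is to reinterpret a run of bad-link slots as \emph{blocking time}, analogous to priority inversion in non-preemptive scheduling. When a flow with the shortest absolute deadline experiences a bad link, the equilibrium payoff function (which penalizes a failed transmission by $-1$ while rewarding abstention by $0$) dictates that it \emph{not} transmit; meanwhile a lower-priority flow on a good link may transmit and thereby ``block'' the higher-priority packet, mirroring exactly how a non-preemptive lower-priority task blocks a higher-priority one once it has started. I would therefore bound the maximum blocking a flow $F_i$ can suffer by $\delta = p_i T_i$, the maximum number of bad-link slots it encounters within its period, and substitute this into the non-preemptive uniprocessor EDF utilization test. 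This yields the inflated utilization $\mu_i = C_i / (T_i - \delta) = \gamma / (T_i - p_i T_i) = \gamma / \big(T_i(1 - p_i)\big)$, and the schedulability condition $\sum_{i=1}^{n} \mu_i \le 1$ of Equation~(\ref{online_eq:test}).

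The heart of the argument, and the step I expect to be the main obstacle, is justifying rigorously that the game-theoretic equilibrium dynamics genuinely reproduce the EDF/non-preemptive blocking behavior rather than merely resembling it heuristically. In particular, I must argue that (i) the subgame-perfect equilibrium of the multi-slot game, after the deadline tie-break, always schedules the ready packet with the earliest absolute deadline on a good link, and (ii) the total time lost to bad links over a period is correctly upper-bounded by $p_i T_i$ and acts \emph{exactly} like a single contiguous blocking interval in the worst case. The subtlety is that bad-link slots may be scattered rather than contiguous, and that a node's abstention decision depends on a probabilistic link-quality prediction; I would need to argue that the worst case for latency arises when the unavailable slots align adversarially against the deadline, so that treating $p_i T_i$ as blocking time is a safe (if pessimistic) over-approximation. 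Once the two reductions are on firm footing, the two displayed inequalities follow directly from the cited multiprocessor/uniprocessor utilization bounds specialized to the single-channel ($m=1$) case, and the proof closes by combining the good-link EDF mapping with the bad-link blocking inflation.
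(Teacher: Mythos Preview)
Your proposal is correct and follows essentially the same two-stage reduction as the paper: first map the good-link case to preemptive uniprocessor EDF via the shortest-absolute-deadline tie-break, then treat bad-link slots as blocking time to obtain the non-preemptive EDF utilization bound with the inflated $\mu_i = \gamma/(T_i(1-p_i))$. The concerns you flag about rigor (that the equilibrium genuinely reproduces EDF and that scattered bad-link slots are safely over-approximated by a single blocking interval $p_iT_i$) are real, but the paper itself treats these steps heuristically and does not address them in any greater depth than you do.
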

	\section{Online Sampling Period Selection} \label{online_sec:periodSelection}
Here, we first formulate the centralized version of an online sampling period selection problem. We then present the autonomous sampling period acclimation algorithm, where each node selects the sampling period without communicating with others.

\subsection{Formulation of a Centralized Online Sampling Period Selection Problem}
In this chapter, we change the sampling period after every $\tau$ time slots instead of changing the sampling period after every interval to ensure low overhead.
Given the state space equations for each flow and schedulability condition, the online sampling period assignment problem is formulated in Problem (\ref{online_eq:dynamicperiodassignment}).  To capture the stability of all control loops, we use the controller cost function given by Equation (\ref{online_eq:totalCost}) as the objective.
We use the schedulability analysis given by Equation (\ref{online_eq:test}) as a constraint. Note that the period of flow $F_i$ changes after every $\tau$ time slots but remains constant within an interval $(k\tau, k\tau + \tau)$, where $k = 1, 2, \cdots \infty$. Thus, any period selected by Problem (\ref{online_eq:dynamicperiodassignment}) results in a schedulable network scenario for the interval of length $\tau$. 
\begin{equation}
\label{online_eq:dynamicperiodassignment}
\begin{aligned}
& {\arg\min_{T_i \forall i}}
& & J(\tilde{t}) \\
& \text{subject to} 
& & \sum_{i = 1}^{n} \mu_i \le 1
\end{aligned}
\end{equation}

Due to the high complexity of the objective function, finding a solution to Problem (\ref{online_eq:dynamicperiodassignment}) can take a long time. Furthermore, the controller must disseminate the sampling periods obtained by solving Problem \ref{online_eq:dynamicperiodassignment} to all the nodes in the network, which causes high energy overhead at the nodes. To address this problem, we propose a sampling period acclimation heuristic, which is presented in the following sections. 

\subsection{Sampling Period Acclimation}
This section presents an approach for sampling period acclimation, where the sampling periods of all plants are varied every $\tau$ time slots. Algorithm \ref{online_alg:overall} presents the sampling period acclimation algorithm. In the proposed approach, the central manager initializes all flows with a sampling period. The initial period assignment has to take into account the schedulability and stability of the plants. Since an optimization function considering both constraints can be computation intensive for a low power embedded device, we choose to execute the initial sampling period assignment at the controller. The initial sampling period assignment can act as a fail-safe assignment when all the flows experience high disturbance. Note that the controller executes the initial sampling period assignment only once (i.e., during the system initialization).

\begin{algorithm}[h]
\footnotesize

	\caption{Sampling Period Acclimation Algorithm} \label{online_alg:overall}
    \SetAlgoLined 
    \SetKwInOut{Input}{Input}
    \SetKwInOut{Output}{Output}
    \DontPrintSemicolon
    \Input{$ T_i[initial]$ = Initial-Sampling-Period-Assignment()}
    \Output{Sampling Period $T_i \, \forall i \text{ in } [0,n]$}
   \vspace{0.025in}
   $x_i(t)$ = State-Estimation() \\
   $x_i(t + \tau)$ = State-Estimation() \\
   $V_i(t) = x_i^T(t)P^L_ix_i(t)$ \\
   $V_i(t+\tau) = x_i^T(t + \tau)P^L_ix_i(t + \tau)$ \\ \vspace{0.025in}
   
   $Inc$, $Dec$ = Segregate-Flows()
   
   \For{i in $Inc$} {
    $T_i$ = Increase-Sampling-Period()
   }
   
   \For{i in $Dec$} {
    $T_i$ = Decrease-Sampling-Period()
   }

\end{algorithm}
\vspace{-0.1in}

After every $\tau$ time slots, the sampling periods for all plants are re-computed by each node autonomously. A node first estimates the current states and future states after $\tau$ time slots (with the current sampling period) of all plants in the network. Based on the current states and the estimated future states, the plants and their respective flows are segregated based on whether their sampling periods increase or decrease. Each sensor first computes the sampling period of all plants with an increase in their sampling period. Each sensor then computes the sampling period of all plants with a decrease in their sampling period. The intuition behind computing increase and decrease in the sampling period separately is to reduce the overhead of computation. An increase in the sampling period may impact the stability of a system. However, it does not impact the schedulability of the flows on the network. Similarly, a decrease in the sampling period impacts the schedulability of the flows, but it may improve the stability of the plant. Thus, splitting the computation of sampling periods can reduce the overhead of calculations on each sensor.   The initial sampling period assignment, state estimations, segregation of plants, and the approaches for increasing/decreasing sampling period are described in detail in the following sections.

\subsection{Initial Sampling Period Assignment}
The initial sampling period assignment must ensure the stability of plants and schedulability of their respective flows on the network. Stability of a plant can be maintained when the sampling period is within the bounded range of maximum ($T_i[\max]$) and minimum ($T_i[\min]$) sampling periods. Schedulability of all flows can be ensured when the total utilization is no greater than $1$, i.e., $ \sum_i^{N} \mu_i \le 1$. We use these constraints in our heuristic to assign initial sampling periods.

Algorithm \ref{online_alg:initial} illustrates our heuristic for sampling period assignment. It takes as input the minimum and maximum sampling periods for each flow that ensures acceptable control performance. It starts by assigning the minimum sampling period to all flows. If all flows in the network are schedulable with the selected sampling periods, it stops.  If a flow is not schedulable, it iterates over the ordered set of flows (decreasing order of the weighted individual control cost). The maximum sampling period $T_i[\max]$ is selected as the new sampling period for a flow. The algorithm stops when a sampling period assignment leads to a schedulable network. In some rare cases, the algorithm can stop when all flows are assigned their maximum sampling periods, and the network is still unschedulable. In this case, the plants are not stabilizable as there is no feasible sampling period assignment with acceptable control performance. Once the algorithm stops, the central manager broadcasts the sampling period to all nodes.

\begin{algorithm}[ht]
\footnotesize

	\caption{Initial Sampling Period Assignment} \label{online_alg:initial}
    \SetAlgoLined 
    \SetKwInOut{Input}{Input}
    \SetKwInOut{Output}{Output}
    \DontPrintSemicolon
    \Input{$ T_i[\min] \, \& \, T_i[\max] \, \forall i \text{ in } [0,n]$}
    \Output{Sampling Period $T_i \, \forall i \text{ in } [0,n]$}
   
   \For{i in [0,n]} {
   	$T_i = T_i[\min]$ \\
	$\mu_i = \frac{\gamma}{T_i}$
   }
   $i = n$ \\
   \While{$\sum_{i = 1}^{n} \mu_i > 1$} {
   	$T_i = T_i[\max]$ \\
	$\mu_i = \frac{\gamma}{T_i}$ \\
	$ i = i -1$
    }    
\end{algorithm}
 \vspace{-0.1in}

%do not use plant i

\subsection{State Estimation and Correction}
\label{online_sec:stateEstimation}
%cite a survey
State estimation has been extensively studied in control theory \cite{simon2006optimal}. Kalman filtering is a popular approach for state estimation, which can predict the state of the plant with high accuracy \cite{simon2006optimal}. Kalman filter uses the past state estimation error to accurately predict the current and future state of the plant. Such an algorithm requires the state space representation, current state of the system, control input applied at each time slot, and the noise experienced at each sample to compute the error in state estimate. Since Kalman filtering is mainly used by the controller to compensate for packet delay or loss, the assumption about the availability of the aforementioned data is valid. However, in the proposed approach, a sensor estimates the state of all plants, which does not have all the required data.  A sensor can obtain the initial state of each plant and the state space representation during the network initialization phase. Nevertheless, a sensor does not have the control input applied, and the noise experienced at each time slot. To address this issue, we propose the following approach to estimate the state of the plant.

Each sensor assumes that all plants use a linear controller given by $u_i(t) = k_ix_i(t)$, where $k_i$ is a constant value. The value of $k_i$ satisfies the Lyapunov stability constraint \begin{equation} \label{online_eq:lyconstraint}(A _i+ B_iK_i)^T P_i^L (A_i + B_iK_i)  - P_i^{L} + Q_i^{L} = 0\end{equation} where, $P_i^{L}$ and $Q_i^{L}$ are positive definite matrices.
A sensor can obtain the values of $K_i$ for all flows, during network initialization. Thus, a sensor can predict the control input applied during the previous step. We use the variance of white Gaussian noise as the past noise in the system. Since the variance is fixed, all nodes use the same value of noise to estimate the state of the plant.  

In the proposed approach, the noise estimation in a plant is assumed to be constant. However, in practice, the noise experienced by each plant can be random. The random nature of noise can lead to errors in state estimation.  A sensor monitoring the plant associated with flow $F_i$ can compute the state estimation error for the plant based on Equation (\ref{online_eq:stateesterror}), where $x_i^{\text{obt}}(t)$ is the measured valued of the state of plant associated with flow $F_i$, and $x_i^{\text{est}}(t)$ is the estimated state of plant associated with flow $F_i$. 
\begin{equation} 
\label{online_eq:stateesterror}
	\xi (t) = x_i^{\text{obt}}(t) - x_i^{\text{est}}(t)
\end{equation}

In some cases, the error in state estimation can be significantly large, and plant associated with flow $F_i$ may require a lower sampling period. Since all nodes are not aware of the state estimation errors for all flows, a sensor that monitors the plant (with large errors) must broadcast the state estimation error to all nodes. To enable this communication, we reserve $\nu$ (where $\nu << \tau$) number of time slots in every $\tau$ time slots (typically the last $\nu$ slots). A sensor with a positive error can use this $\nu$ time slots to transmit the error using CSMA-CA protocol. We choose $\xi (t) > 0$ as the condition to transmit the state estimation error for its simplicity. However, the system designer is free to choose any policy. In practice, the random noise should not be high for all plants at the same time, and hence, a small value of $\nu$ should be sufficient to communicate the errors without collision. Note that CSMA-CA protocol is only used within the $\nu$ time slots to transmit the observed states of the plants with high state estimation errors. All other communications happen using the autonomous TDMA scheduler proposed in Section \ref{online_sec:scheduler} and their real-time performance is not affected by these $\nu$ time slots. For the schedulability test, the reserved slots can be considered as a flow with period $\tau$, execution requirement of $\nu$, and blocking time of $0$.

Here, all sensors receive the state space equations, initial states, and control inputs of all plants from the controller. Since the initial state, control input calculation, and noise calculation for all plants are the same at all sensors, all nodes converge to the same the state estimation for all plants. 

\subsection{Segregating the Flows}
Once the state estimation for current time slot ($t$) is estimated, a similar approach as mentioned in Section \ref{online_sec:stateEstimation} can be used to estimate the state at $t + \tau^{\text{th}}$ time slot using the current sampling period. Given the state of the system $x(t + \tau)$ and $x(t)$, we compute the Lyapunov function for the linear controller at time $t$ and $t + \tau$ (given by $V(t)$ and $V(t + \tau)$, respectively) using Equation (\ref{online_eq:lyafn}). We use the summation over the interval $\tau$ to accurately represent the trend of the Lyapunov function.
\begin{equation} 
\label{online_eq:lyafn}
	V(t) = \sum_{z = t - \tau}^{t} x_i^T(z) P_i^{L} x_i(z)
\end{equation}

We want $V(t)$ to be non-increasing in time for stability. To ensure a decreasing Lyapunov function, we place a tighter restriction on $V(t)$, given by Equation (\ref{online_eq:segineq}).
\begin{equation} \label{online_eq:segineq} V_i(t + \tau)  \le \beta V_i(t) \end{equation}

If the above equation is satisfied with the initial sampling period or the last used sampling period, then there is a steady decrease in the Lyapunov function, and the plant should be reaching the final state. In these situations, the sampling period can be higher than the last used sampling period or the initial sampling period.  However, if the above equation is not satisfied, then it implies that the plant may experience some noise which can affect the stability of the system. In this case, the sampling period of the system should be equal to or lower than the initial sampling period. 

Given the two outcomes of Inequality (\ref{online_eq:segineq}), the plants can be segregated into two categories: plants with an increase in the sampling period and plants with a decrease in the sampling period. The sampling period is first computed for all plants with an increase in the sampling period and followed by all plants with a decrease in the sampling period. The approach to increase and decrease sampling period is discussed in the following sections.

For the simplicity in explanation, we use a uniform value of $\beta$ in this chapter. However, the system designer can choose to use multiple values of $\beta$ depending on the state of the plant. One example is to choose a value of $\beta$ close to $0$ when the system is in an unstable region or approaching a stable region, and close to $1$ when the system is within a stable region. Note that our approach only provides a best-effort approach to ensure the stability and closed loop performance of the plants. We choose to avoid imposing restrictions on the plant model to ensure stability as the restrictions reduce the applicability of the proposed approach. 

\subsection{Increasing Sampling Period}
When the state of a plant is within (or close to) a stable region, then the sampling period of that plant can be increased beyond the initial sampling period. However, a substantial increase in the sampling period may cause the state of the plant to move to an unstable region. Therefore, we formulate the increase in sampling period as an optimization problem. The objective of the problem is to find the maximum sampling rate, which satisfies the constraint specified in Equation (\ref{online_eq:segineq}). Equation (\ref{online_eq:maxTopt}) formulates the sampling period maximization problem. 
\begin{equation}
\label{online_eq:maxTopt}
\begin{aligned}
& {\text{maximize}}
& & T_i \\
& \text{subject to}
& & V_i(t + \tau) \le \beta V_i(t) \\
\end{aligned}
\end{equation}

Our heuristic for solving the optimization problem is to use multiplicative increase and additive decrease of $T_i$ (starting from the initial sampling period) and select the maximum $T_i$ that satisfies the constraint. Note that the sampling period $T_i[\max]$ or $T_i[\min]$ is required when the plant is in an unstable region and moving towards a stable region. When the plant is within a stable region, sampling the state of the plant at $T_i[\max]$ can cause energy overheads. When the plant is in a stable region, the plant can be sampled at a period higher than the $T_i[\max]$ while ensuring the plant does not enter an unstable region. Thus, we use $\tau$ as an upper bound of $T_i$ instead of $T_i[\max]$.

Since all flows in the network are schedulable with the initial period assignment, any increase in the period will only lower the utilization and does not affect the schedulability of the system.  Thus, we do not consider schedulability as a constraint for the optimization problem.

\subsection{Decreasing Sampling Period}
The decrease in the sampling period of a plant affects the schedulability of a system. However, it does not affect the stability of a plant. Therefore, to counteract the plant noise for the next $\tau$ time slots, each node first assigns initial period to the flow. Since the initial period is within the $T_i[\min]$ and $T_i[\max]$, the sampling period should always ensure stability. After the first assignment, a node computes the total utilization for all flows in the network. If the total utilization is no greater than $1$, then periods can be decreased from the current period assignment. The available slack in the utilization (i.e., $1 - \sum_{i =1}^{n} \mu_i$) is divided equally among all the flows with a decrease in the sampling period. For example in a system with two flows with decreasing periods and the total utilization with $T_i[\max]$ assignment is $0.8$, the utilization of each flow ($\frac{\gamma}{T_i}$) is incremented by $0.1$.   
	\section{Evaluation} \label{online_sec:evaluation}
We have evaluated the proposed sampling period acclimation (SPA) algorithm and autonomous scheduler against the holistic controller proposed in~\cite{ma2018efficient}. The holistic controller computes the maximum increase in the sampling period that ensures a decreasing Lyapunov function. We have compared the two approaches in terms of average energy consumed per node in the network.  We define \textit{energy consumption} as the average energy consumed per node in milli-Joules required to transmit the state and control information from sensor to actuator. We used the product of the number of transmissions and energy consumed for each communication (0.22mJ with 5.5ms Tx time) to estimate energy consumption per node. We also show the system response, i.e., the state of the system and control cost over time under SPA algorithm.

We executed the proposed SPA algorithm on Matlab and the proposed autonomous scheduler on TOSSIM~\cite{tossim}. Our setup consisted of 5 plants. The state space equation of a plant is expressed as $x(t + 1) = Ax(t) + Bu(t) + w$, where $A = 1.05$, $B = 1$, $x(0) = 50$, and $w = 0.05$. For simplicity, we assume all plants have the same coefficients. For a fair comparison between the proposed approach and holistic controller, we choose a linear controller with $K = -0.95$. The controller satisfies the Lyapunov stability constraint, expressed in Equation~\ref{online_eq:lyconstraint}, where $P^{L} = 1$, and $Q^L = 3$. However, the proposed approach can also work with other controllers like the model predictive controller.  We choose two values of $\beta$. When the state of the system is above $1$, $\beta = 0.5$, and when state of the system is below $1$, $\beta = 1$. The initial sampling period of all the plants was fixed at $100ms$. Since the period was selected to be $100ms$, we selected $\tau = 500ms$ since it allows for a few iterations between the initial assignment and first sampling period change.

\begin{figure}[t]
    \centering
    \begin{subfigure}[b]{0.35\textwidth}
        \includegraphics[width=\textwidth]{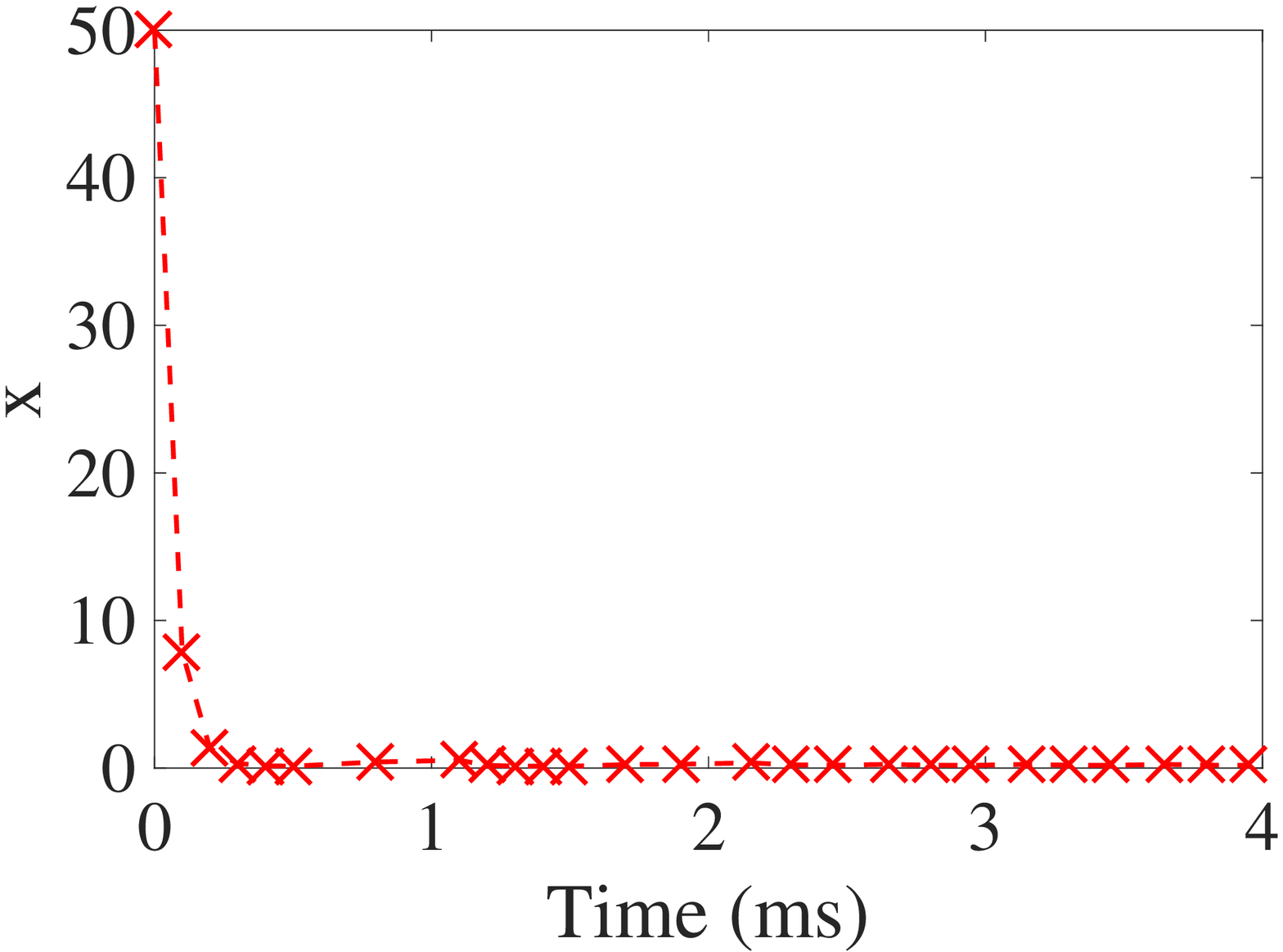}
        \caption{State of a plant}
        \label{online_fig:state}
    \end{subfigure}
    \quad
    \begin{subfigure}[b]{0.35\textwidth}
        \includegraphics[width=\textwidth]{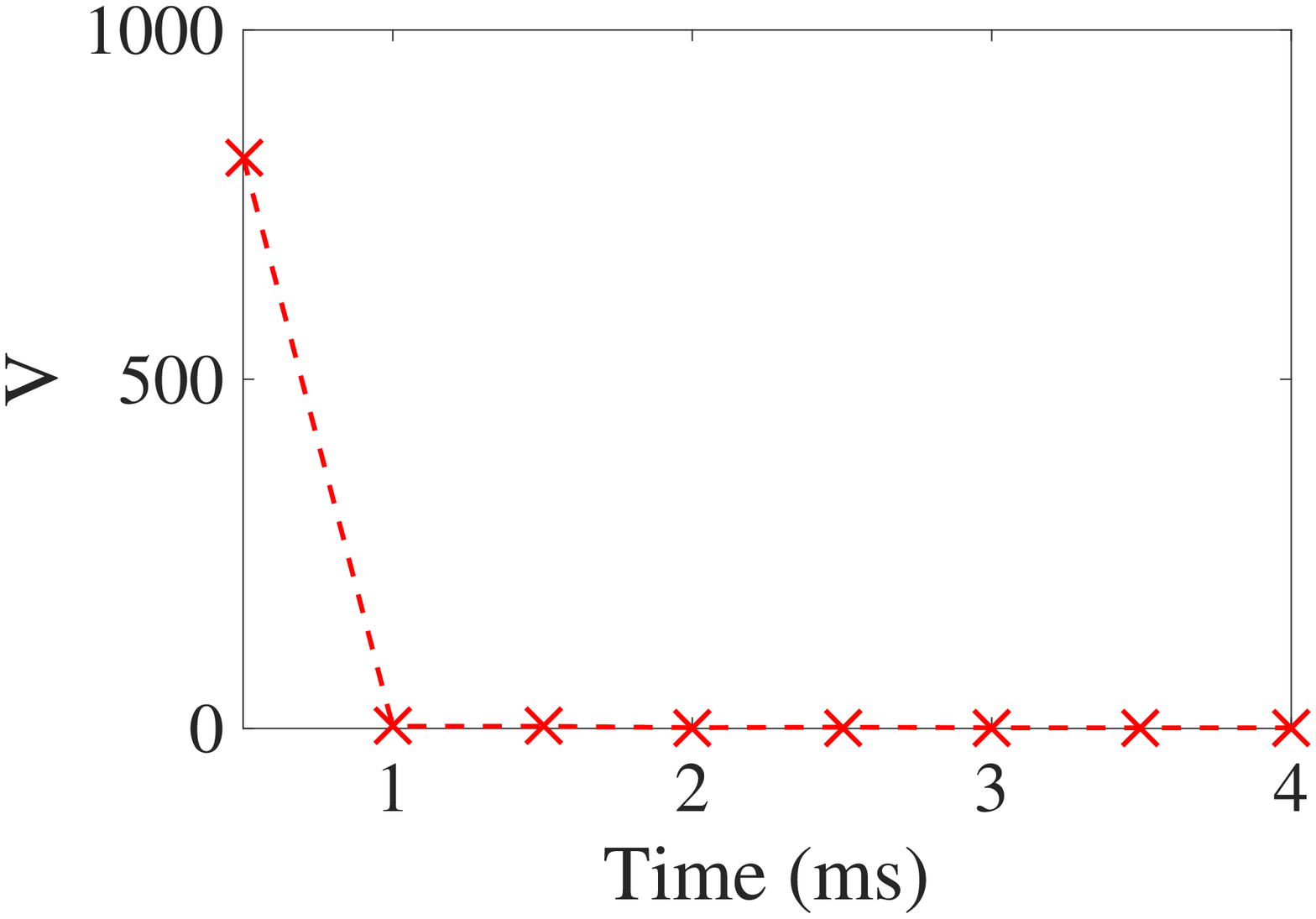}
        \caption{Control cost}
        \label{online_fig:controlcost}
    \end{subfigure}
   \caption{System Response under SPA}
   \label{online_fig:state_control_cost}
\end{figure}

We used a star topology of $11$ nodes; $5$ sensors, $5$ actuators, and $1$ controller, with the controller in the center of the star topology. We used the proposed autonomous TDMA medium access protocol.  Time in the network was divided into $10ms$ time slots. For a fair comparison between the proposed approach and the holistic controller, we assume the link quality to be good for all links, and each node uses $2$ time slots to make a successful transmission. The MAC protocol of both the proposed approach and the holistic controller is similar to that in WirelessHART without graph routing and redundant transmisisons. Since the link quality was assumed to be good, the utilization of each flow was given by $\mu_i = \frac{2}{T_i}$, and the schedulability test was given by $\sum_{i} \mu_i \le 1$.

Fig.~\ref{online_fig:state_control_cost} shows the control cost and state of the system as time increases from $10ms$ to $4s$. We have observed that under SPA, there was a steady decrease in the control cost as well as the state of the system. We have observed that the stabilized around $0.8$ due to the presence of a fixed noise. We have observed that at $0.5s$, the state of the system was below $1$, and the control cost was close to $800$. Since the control cost was so high, SPA algorithm selected a period of $300ms$ for all flows.  The high period of $300ms$ resulted in a significant accumulation of noise. Due to the high noise, we observed that the plant could not be stabilized by any period greater than $100ms$. Thus, there is a sharp decrease in the period at $1s$. A similar effect was observed from $1.5s$ to $2s$, which accounts for the sharp increase and decrease in the period. After $2s$, we observed that the system was in a stable state, and control cost remained constant, and hence, the period was constant.  

\begin{figure}[t]
    \centering
    \begin{subfigure}[b]{0.35\textwidth}
        \includegraphics[width=\textwidth]{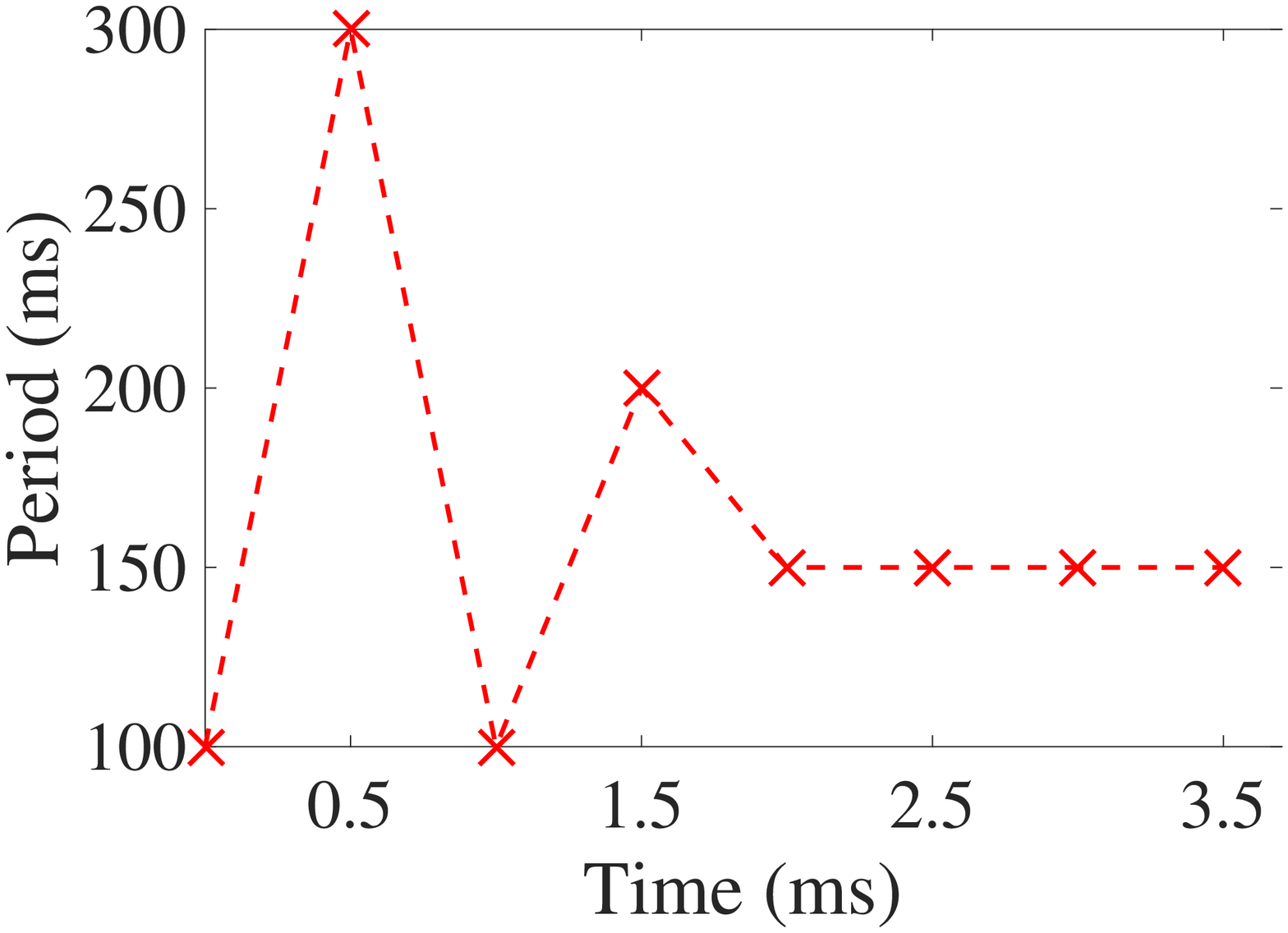}
        \caption{Period selection by SPA}
        \label{online_fig:period}
    \end{subfigure}
    \quad
    \begin{subfigure}[b]{0.35\textwidth}
        \includegraphics[width=\textwidth]{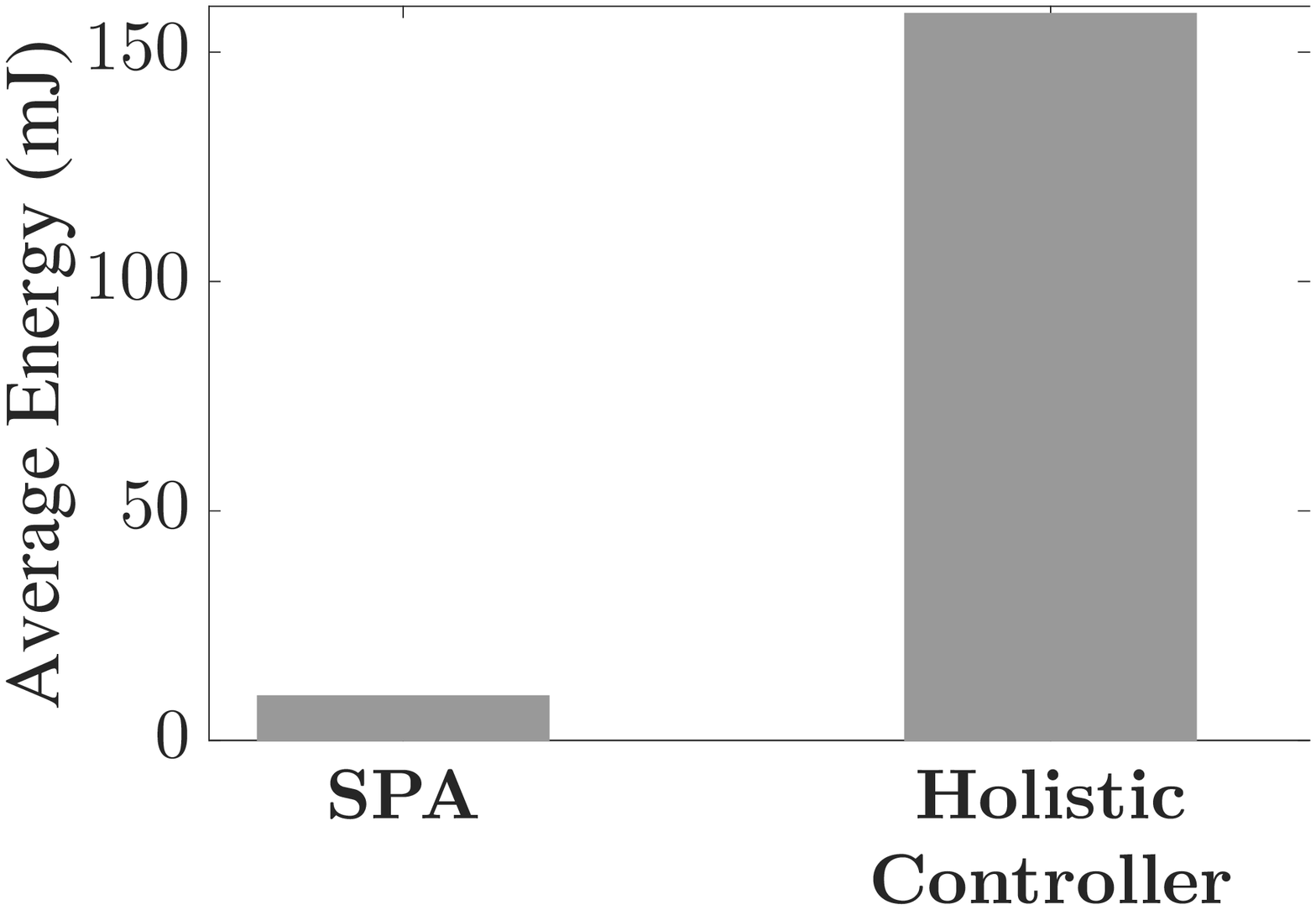}
        \caption{Average energy consumption}
        \label{online_fig:energy}
    \end{subfigure}
   \caption{Performance of SPA}
   \label{online_fig:energy_period}
\end{figure}

For a fair comparison with the holistic controller, we assume the holistic controller also changes the period every $\tau = 500ms$, and a similar cost metric was used. Note that, in both the approaches, the future control cost ($V(t + \tau)$) is intended to decrease by a fraction of the previous control cost. Since the holistic controller can only increase the period in multiples of $2$, the closest value of the period is selected and used for the schedule. The holistic scheduler uses a flooding mechanism to transmit packets in the network, and hence the average energy consumption at the nodes is around $158.4 mJ$ after a $4s$ interval. However, in the proposed approach,  the average energy consumption is $9.68 mJ$, which is a $92\%$ decrease in energy consumption when compared to the holistic controller. If the holistic controller used a single hop communication without any flooding, the average energy consumption is $23.7 mJ$, which shows that SPA consumes $59\%$ less energy. These results show that SPA algorithm minimizes the energy consumption by at least $59\%$ when compared to the holistic controller while ensuring a decrease in control cost.

    \section{Summary}\label{online_sec:Conclusion}
   In summary, we have proposed an autonomous scheduler which maximizes the number of successful transmission in the network using game theory. In the proposed autonomous scheduler, each node avoids collisions of packets and transmissions under poor link quality.   We have also derived an efficient utilization based schedulability test to determine the schedulability of all flows on the network. We have then proposed a heuristic for sampling period acclimation in wireless control systems. Our evaluation on a case study showed that the proposed approach consumes at least $59\%$ less energy when compared to the state-of-the-art approach. Our autonomous scheduler and sampling period selection algorithm are proposed for a single hop network. In the future, we plan to extend these algorithms for multi-hop networks.

\chapter{Conclusion}\label{ch:conclusion}
Industrial Internet-of-Things (IoT) facilitate low-power, flexible, and cost-efficient communication for a broad range of industrial applications. Communication in industrial IoT must meet stringent real-time performance and reliability requirements of the applications. Existing works on real-time wireless networks use centralized wireless stack design. Such a design consumes high energy, is less scalable, and is less suitable for handling changes to the network or operating conditions of the application. To address these challenges, the following contributions are made in this dissertation: (1) A scalable and distributed routing algorithm for industrial IoT which generates graph routes with a high degree of redundancy, (2) A local and online scheduling algorithm that is scalable, energy-efficient, and supports network/workload dynamics while ensuring reliability and real-time performance, (3) A fast and efficient test of schedulability that determines if an application meets the real-time performance requirement for given network topology, and (4) A distributed scheduling and control co-design that balances the control performance requirement and real-time performance for industrial IoT. Future work is planned to extend real-time scheduling for a low-power wide-area network. 

The scalable and distributed routing algorithm for industrial IoT is an adaptation of the Bellman-Ford algorithm. We have implemented our protocol in TinyOS and evaluated its effectiveness through both experiments and TOSSIM simulations. Our simulation results using TOSSIM show that it consumes about 86.4\% less energy with 66.1\% reduced convergence time at the cost of 1kB of additional memory compared to the state-of-the-art centralized approaches, thereby, to our knowledge, demonstrating it as the first practical distributed reliable graph routing for WirelessHART networks.

We have proposed DistributedHART - a local and online real-time scheduling system for industrial IoT. DistributedHART enables local scheduling at nodes through time window allocation at the nodes. Within a time window, a node locally schedules a packet for transmission based on any existing real-time scheduling policy. Thus, DistributedHART obviates the need of creating and disseminating a central global schedule thereby reducing resource waste and enhancing scalability. In addition, DistributedHART would lead to higher network utilization in the network at the expense of a slight increase in the end-to-end delay of all control loops. Through experiments on a 130-node testbed as well as large-scale simulations, we observe at least 85\% less energy consumption in DistributedHART compared to existing centralized approach. The performance of our schedulability test suggests that there is still room for improvement. In the future, we will derive an improved schedulability test by deriving tighter delay bounds. Nevertheless, in DistributedHART, we use existing centralized or distributed routing protocols to generate routes between sensors, actuators and controller. Although the routing is distributed, the schedule generated is not optimized for the routing protocol. On the contrary, our schedule generation is based on random vertex coloring. A distributed joint routing and scheduling algo- rithm is highly challenging for WirelessHART networks and we plan to address this problem in the future.

We have proposed the first latency minimizing in-band integration of  multiple SNOWs. We have implemented the proposed integration on SNOW hardware platform and observed, through physical experiments, up to $50\%$ decrease in network latency compared to the existing approach. We have also performed simulations in NS-3 and have observed up to $84\%$ decrease in network latency in an integrated network of  5000 nodes (of 5 SNOWs) based on our approach compared to the existing approach. In the future, we plan to extend this work to support a closed loop communication between sensors and actuators for industrial IoT. we plan on leveraging on edge-based control of plants in wide area deployments to provide a low-latency control for applications while also collecting data to a central location for monitoring the control performance of the entire system.

We have developed a schedulability analysis based on \emph{utilization bound} for multi-hop wireless networks. This approach determines the maximum total utilization of all flows in the network and determines those as \emph{schedulable} if the total utilization does not exceed the maximum possible utilization in the network. Because of its extremely low runtime overhead, a utilization-based schedulability test is considered one of the most efficient and effective schedulability tests. This work is the inception of a new horizon on utilization-based analysis for industrial IoT, which can direct the wireless community in the same way the real-time systems research today evolved from Liu and Layland's utilization bound.  Our result can trigger many research directions in the line of real-time scheduling, scheduling-control co-design, control performance optimization, routing, priority assignment, and mixed-criticality real-time wireless sensor and actuator networks. Our future work involves analyzing the effects of assigning sub-deadlines for large networks, packet loss, and the trade-offs among various control performance metrics.

We have proposed a distributed scheduling and control co-design that balances the control performance requirement and real-time performance for industrial IoT. We have proposed an autonomous scheduler which maximizes the number of successful transmission in the network using game theory. In the proposed autonomous scheduler, each node avoids collisions of packets and transmissions under poor link quality.   We have also derived an efficient utilization based schedulability test to determine the schedulability of all flows on the network. We have then proposed a heuristic for sampling period acclimation in wireless control systems. Our evaluation on a case study showed that the proposed approach consumes at least $59\%$ less energy when compared to the state-of-the-art approach. Our autonomous scheduler and sampling period selection algorithm are proposed for a single hop network. In the future, we plan to extend these algorithms for multi-hop networks.

\addcontentsline{toc}{chapter}{References}
\bibliographystyle{abbrv}
\bibliography{references/references_final}

\addcontentsline{toc}{chapter}{Abstract}
%\chapter*{Abstract}
%\vspace{-0.35in}
\begin{center}
 {\bf ABSTRACT}
 \\ {\bf REAL-TIME CONTROL OVER WIRELESS NETWORKS} \\
 by \\ {\bf VENKATA PRASHANT MODEKURTHY} \\ {\bf August 2020}

\end{center}
\vspace{0.1in}
\noindent {\bf Advisor:} Dr. Abusayeed Saifullah \\
\noindent {\bf Co-Advisor:} Dr. Sanjay Madria \\
\noindent {\bf Major:} Computer Science \\
\noindent {\bf Degree:} Doctor of Philosophy\\

Industrial internet of Things (IIoT) are gaining popularity for use in large-scale applications such as oil-field management (e.g., $74\times 8$km$^2$ East Texas Oil-field), smart farming, smart manufacturing, smart grid, and data center power management. These applications require the wireless stack to provide a scalable, reliable, low-power and low-latency communication. To realize a predictable and reliable communication in a highly unreliable wireless environment, industrial wireless standards use a centralized wireless stack design. In a centralized wireless stack design, a central manager generates routes and a communication schedule for a multi-channel time division multiple access communication (TDMA) based medium access control (MAC). However, a centralized wireless stack design is highly energy consuming, not scalable, and does not support frequent changes to networks or workloads.
To address these challenges, the following contributions are made in this dissertation: (1) A scalable and distributed routing algorithm for industrial IoT which generates graph routes, which offer a high degree of redundancy, (2) A local and online scheduling algorithm that is scalable, energy-efficient, and supports network/workload dynamics while ensuring reliability and real-time performance, (3) An approach to minimize latency for in-band integration of multiple low-power networks, (4) A fast and efficient test of schedulability that determines if an application meets the real-time performance requirement for given network topology, and (5) A distributed scheduling and control co-design that balances the control performance requirement and real-time performance for industrial IoT.

\chapter*{Autobiographical Statement}
\addcontentsline{toc}{chapter}{Autobiographical Statement}

Venkata Prashant Modekurthy is a Ph.D. candidate in the Department of Computer Science at Wayne State University. He received his master's degree from Missouri University of Science and Technology in December 2015 and his undergraduate degree from Raghu Engineering College, Visakhapatnam, India, in May 2011. His research interests include Real-Time Wireless Network, Cyber-Physical Systems (CPS), Real-Time and Embedded Systems, Parallel and Distributed Computing, and Low-Power Wide Area Networks. He authored a paper that was awarded the best paper award at IEEE ICII 2018.
\end{document}